\crefname{ineq}{inequality}{inequalities}
\newtheorem{theorem}{Theorem}
\newtheorem*{namedtheorem}{\theoremname}
\newcommand{\theoremname}{testing}
\newtheorem{lemma}[theorem]{Lemma}
\newtheorem{fact}[theorem]{Fact}
\newtheorem{corollary}[theorem]{Corollary}
\newtheorem{assumption}{Assumption}
\theoremstyle{definition}
\newtheorem{definition}[theorem]{Definition}
\newtheorem{remark}[theorem]{Remark}
\newtheorem{example}[theorem]{Example}
\newcommand{\la}{\left \langle}
\newcommand{\ra}{\right \rangle}
\newcommand{\calK}{\mathcal{K}}
\newcommand{\calL}{\mathcal{L}}
\newcommand{\calX}{\mathcal{X}}
\newcommand{\calY}{\mathcal{Y}}
\newcommand{\bone}{\boldsymbol{1}}
\newcommand{\ignore}[1]{}
\newcommand{\defined}{\coloneqq}
\newcommand{\lp}{\left(}
\newcommand{\rp}{\right)}
\newcommand{\lb}{\left[}
\newcommand{\rb}{\right]}
\newcommand{\mmdhat}{\mathrm{MMD}_n}
\newcommand{\var}{\mathbb{V}\xspace}
\newcommand{\muhat}{\widehat{\mu}}
\newcommand{\nuhat}{\widehat{\nu}}
\newcommand{\omegahat}{\widehat{\omega}}
\newcommand{\tildek}{\widetilde{k}}
\newcommand{\tildel}{\widetilde{\ell}}
\newcommand{\tildea}{\widetilde{a}}
\newcommand{\tildeb}{\widetilde{b}}
\newcommand{\tildeA}{\widetilde{A}}
\newcommand{\tildeg}{\widetilde{g}}
\newcommand{\tildeh}{\widetilde{h}}
\newcommand{\tildef}{\widetilde{f}}
\newcommand{\convprob}{\stackrel{p}{\longrightarrow}}
\newcommand{\convdist}{\stackrel{d}{\longrightarrow}}
\newcommand{\reals}{\mathbb{R}}
\newcommand{\iid}{\text{i.i.d.}\xspace}
\newcommand{\Sobolev}{\mathcal{W}^{\beta,2}}
\newcommand{\ind}{\mathbbm{1}}
\newcommand{\mc}[1]{\mathcal{#1}}
\newcommand{\hsichat}{\mathrm{HSIC}_n\xspace}
\newcommand{\hsic}{\mathrm{HSIC}\xspace}
\newcommand{\mmd}{\text{MMD}\xspace}
\newcommand{\crosshsic}{\mathrm{x}\hsichat}
\newcommand{\crossmmd}{\mathrm{x}\mmdhat^2}
\newcommand{\cshsic}{\overline{\mathrm{x}}\hsichat}
\newcommand{\data}{\mathcal{D}_{1}^{2n}}
\newcommand{\dataone}{\mathcal{D}_{1}^{n}}
\newcommand{\datatwo}{\mathcal{D}_{n+1}^{2n}}
\newcommand{\term}{\texttt{term}}
\newcommand{\dcov}{\mc{V}^2}
\newcommand{\distX}{\rho_{\mc{X}}}
\newcommand{\distY}{\rho_{\mc{Y}}}
\newcommand{\dkernelX}{k_{\mc{X}}}
\newcommand{\dkernelY}{\ell_{\mc{Y}}}
\newcommand{\crossdcov}{\mc{V}^2_n}
\newcommand{\csdcov}{\overline{\mc{V}}^2_n}
\newcommand{\testdist}{\Psi^{\rho}}
\newcommand{\nullclass}{\mc{P}_n^{(0)}}
\newcommand{\altclass}{\mc{P}_n^{(1)}}
\newcommand{\cov}{\mathrm{Cov}}
\newcommand{\dof}{\texttt{dof}\xspace}
\newcommand{\ww}{\boldsymbol{w}}
\newcommand{\wtilde}{\widetilde{\boldsymbol{w}}}
\pgfplotsset{compat=newest}
\pgfplotsset{scaled y ticks=false}
\tikzstyle{every node}=[font=\small]
\pgfplotsset{
    yticklabel style={/pgf/number format/fixed},  
}
\pgfplotsset{compat=1.11,
 /pgfplots/ybar legend/.style={
 /pgfplots/legend image code/.code={
 \draw[##1,/tikz/.cd,yshift=-0.25em]
 (0cm,0cm) rectangle (3pt,0.8em);},
 },
}
\title{A Permutation-Free Kernel Independence Test}
\author{%
  Shubhanshu Shekhar$^1$, Ilmun Kim$^{34}$ and Aaditya Ramdas$^{12}$ \\
  \texttt{shubhan2@andrew.cmu.edu, ilmun@yonsei.ac.kr, aramdas@stat.cmu.edu} \vspace{0.1in}\\
   $^1$Department of Statistics and Data Science, Carnegie Mellon University\\
   $^2$Machine Learning Department, Carnegie Mellon University\\
   $^3$Department of Statistics and Data Science, Yonsei University \\
   $^4$Department of Applied Statistics, Yonsei University \\
}
\date{}
\begin{document}

\maketitle

\begin{abstract}
    In nonparametric independence testing, we observe i.i.d.\ data $\{(X_i,Y_i)\}_{i=1}^n$, where $X \in \mc{X}, Y \in \mc{Y}$ lie in any general spaces, and we wish to test the null that $X$ is independent of $Y$.
    Modern test statistics such as  the kernel Hilbert-Schmidt Independence Criterion (HSIC)  and Distance Covariance (dCov) have intractable null distributions due to the degeneracy of the underlying U-statistics. Thus, in practice, one often resorts to using permutation testing, which provides a nonasymptotic guarantee at the expense of recalculating the quadratic-time statistics (say) a few hundred times. This paper provides a simple but nontrivial modification of HSIC and dCov (called  xHSIC and xdCov, pronounced ``cross'' HSIC/dCov) so that they have a limiting Gaussian distribution under the null, and thus do not require permutations. This requires building on the newly developed theory of cross U-statistics by~\cite{kim2020dimension}, and in particular developing several nontrivial extensions of the theory in~\cite{shekhar2022permutation}, which developed an analogous permutation-free kernel two-sample test. We show that our new tests, like the originals, are consistent against fixed alternatives, and minimax rate optimal against smooth local alternatives. Numerical simulations demonstrate that compared to the full dCov or HSIC, our variants have the same power up to a $\sqrt 2$ factor, giving practitioners a new  option for large problems or data-analysis pipelines where computation, not sample size, could be the bottleneck.
\end{abstract}

\setcounter{tocdepth}{2}
\tableofcontents

\section{Introduction}
    We consider the following problem: given observations $\data = \{(X_i, X_i): 1 \leq i \leq 2n\}$  drawn \iid from a distribution $P_{XY}$ on the observation space $\mc{X} \times \mc{Y}$, we wish to test whether $X$ and $Y$ are independent or not. Formally, this is stated as the following hypothesis testing problem: 
    \begin{align}
      H_0:P_{XY} = P_X \times P_Y, \quad \text{versus} \quad H_1: P_{XY} \neq P_X\times P_Y, 
    \end{align}
    where $P_X$ and $P_Y$ denote the marginals of the joint distribution $P_{XY}$. For general observation spaces, a popular approach for independence testing is based on the Hilbert Schmidt Independence Criterion~(HSIC), first introduced by~\citet{gretton2005measuring}. As we describe formally in~\Cref{def:population-hsic} in~\Cref{sec:preliminaries}, the (population) HSIC of a joint distribution $P_{XY}$ on $\mc{X} \times \mc{Y}$ is the sum of squared singular values~(i.e., the Hilbert-Schmidt norm) of a cross-covariance operator defined using $P_{XY}$ and positive definite kernels $k:\mc{X} \times \mc{X} \to \reals$ and $\ell:\mc{Y} \times\mc{Y} \to \reals$. Given the data $\data$, an unbiased empirical estimate of the population HSIC can be computed in quadratic time. 
    Introducing the notation $k_{ij} \equiv k(X_i, X_j)$ and $\ell_{lm} = \ell(Y_l, Y_m)$ for $1 \leq i,j,l,m \leq n$, and $(n)_i \defined \frac{n!}{(n-i)!}$, the empirical estimator of $\hsic$ can be defined as: 
   \begin{align}
       \label{eq:empirical-hsic} 
       \hsichat = \frac{1}{(2n)_2} \sum_{1 \leq i \neq j \leq 2n} k_{ij}\ell_{ij} + \frac{1}{(2n)_4}\sum_{\substack{1\leq i,j,l,m \leq 2n \\ \text{$i,j,l,m$ distinct}}} k_{ij} \ell_{lm} - \frac{2}{(2n)_3} \sum_{\substack{1\leq i,j,l \leq 2n \\ \text{$i,j,l$ distinct}}} \ell_{il}, 
   \end{align}
    For characteristic kernels, \citet{gretton2005measuring} showed that the (population) HSIC is equal to zero only if $P_{XY} = P_X \times P_Y$, and thus it can serve as a measure of independence between $P_X$ and $P_Y$.
    This  can be then used to define an independence test that rejects the null when the statistic $\hsichat$  is larger than an appropriately chosen threshold. The choice of the threshold is crucial in ensuring that the test achieves large power under the alternative, while ensuring that the type-I error is controlled~(at least asymptotically) at a specified level $\alpha \in (0,1)$. 
    
    The empirical HSIC criterion introduced above is an instance of a degenerate U-statistic~\citep{lee2019u}, and thus it has a complicated limiting null distribution --- it is an infinite weighted linear combination of independent chi-squared random variables, the exact expression of which was derived by~\citet[Theorem 2]{gretton2007kernel}. Due to the intractable nature of this null distribution, we cannot directly use this to calibrate the independence test based on the empirical HSIC statistic. Instead, in practice, the rejection threshold is often selected as the $(1-\alpha)$-quantile after recomputing the statistic $\hsichat$ $B$ times after permuting the indices of $(X_i)_{i=1}^{2n}$. Hence, this approach requires us to compute the quadratic time statistic $\hsichat$ a total of $B+1$ times, which might make this method infeasible for larger $n$ and $B$ values. 

    To address the high computational cost of the permutation test, several alternatives such as tests based on deviation bounds between empirical and population HSIC, or using parametric approximations of the null distribution have been proposed. We discuss them in details in~\Cref{subsec:related-work}. However, these existing approaches are either too conservative in practice or do not have theoretical guarantees on their performance. 

    In this paper, we propose a new and simple test that addresses the issues with existing kernel-based independence tests. In particular, we define a new unbiased empirical estimate of HSIC that has a tractable, standard normal, limiting null distribution under mild assumptions. We then use this statistic to define an independence test, that we call the cross-HSIC test, and show that it is consistent against arbitrary fixed alternatives and also achieves minimax rate-optimal power against smooth local alternatives. 

    \subsection{Overview of Results}
    \label{subsec:overview}
        We propose a new statistic, based on the ideas of sample splitting and studentization, and use it to define a new test of independence. 
        Given the observations $\data = \{(X_i, Y_i): 1 \leq i \leq 2n\}$, drawn \iid from a joint distribution $P_{XY}$ on $\mc{X} \times \mc{Y}$,  we first split it into two equal parts, $\dataone = \{(X_i, Y_i): 1 \leq i \leq n\}$ and $\datatwo = \{(X_i, Y_i): n+1 \leq i \leq 2n\}$. Using these two splits, we construct two independent empirical estimates of the cross-covariance operator~(denoted by $f_1$ and $f_2$), and define the statistic $\crosshsic$ as their Hilbert-Schmidt norm: 
        \begin{align}
            &\crosshsic = \langle f_1, f_2 \rangle_{HS},  \\
            &f_1 = \frac{1}{n(n-1)}  \sum_{i=1}^n \sum_{\substack{j=1 \\ j \neq i}}^n h_{ij}, \quad \text{and}  \quad f_2 = \frac{1}{n(n-1)}  \sum_{i=n+1}^{2n} \sum_{\substack{j=n+1 \\ j \neq i}}^{2n} h_{ij},  \\
            &h_{ij} \equiv h(Z_i, Z_j) = \frac{1}{2} \bigl\{ k(X_i, \cdot) - k(X_j, \cdot) \bigr\}\otimes \bigl\{ \ell(Y_i, \cdot) - \ell(Y_j, \cdot) \bigr\}. 
        \end{align}
        The final statistic, denoted by $\cshsic$, is obtained by normalizing $\crosshsic$ with an empirical standard deviation term, stated in~\eqref{eq:sn2-def}.  Using this statistic, we define the cross-HSIC test, $\Psi = \boldsymbol{1}_{ \cshsic > z_{1-\alpha} }$, which rejects the null when $\cshsic$ exceeds the $(1-\alpha)$-quantile of the standard normal distribution. 

        Our first set of results characterize the limiting null distribution of the $\cshsic$ statistic, and justify the rejection threshold used in defining the cross-HSIC test.  To motivate the more general results, we first consider the simple, but instructive, case of univariate observations~(that is, $\mc{X}=\mc{Y} = \reals$) and linear kernels $k(x, x') = xx'$ and $\ell(y, y') = yy'$. In this setting, we first show in~\Cref{thm:warmup-linear-1} that for a fixed null distribution $P_{XY} = P_X \times P_Y$, the existence of finite second moment is sufficient for $\cshsic$ to converge in distribution to $N(0,1)$. Then, in~\Cref{thm:warmup-linear-2}, we show that under stronger moment assumptions, the $\cshsic$ statistic converges to $N(0,1)$ uniformly over a composite class of null distributions. We then move to the case of general kernels and observation spaces, and derive analogous, but slightly more abstract, requirements for the  the asymptotic normality of the $\cshsic$ statistic. In particular, we identify sufficient conditions for the asymptotic normality of $\cshsic$   for the case of fixed $k, \ell$ and $P_{XY}$ in~\Cref{theorem:null-dist-0}, and for the more general case where $k_n, \ell_n$ and $P_{XY,n}$ are all allowed to change with the sample-size $n$ in~\Cref{theorem:null-dist-1}. Overall, our results imply that our $\cshsic$ statistic converges in distribution to a standard normal distribution, in most practically relevant scenarios, and thus the cross-HSIC test controls the type-I error asymptotically at the desired level $\alpha$. 

        Having established the type-I error control achieved by our cross-HSIC test, we next analyze its power. Again, we first consider the case of univariate observations with linear kernels to gain some intuition. In \Cref{thm:warmup-linear-1}, we show  that the cross-HSIC test is consistent against any fixed alternative when $P_X$ and $P_Y$ are (linearly) correlated, while in~\Cref{thm:warmup-linear-2}, we consider the case of local alternatives and show that the cross-HSIC test can consistently detect alternatives separated by a $\Omega(1/\sqrt{n})$ boundary. Similar results also hold for the case of more general kernels. In particular, we first show in~\Cref{prop:fixed-alternative}, that the cross-HSIC test with characteristic kernels $k$ and $\ell$ is consistent against any fixed alternative. Then in~\Cref{theorem:local-alternative}, we show that the cross-HSIC test  instantiated with Gaussian kernels achieves minimax rate-optimal power against smooth local alternatives. 

        Finally, in~\Cref{sec:dcov}, we consider a related class of statistics, called the distance-covariance~(dCov). Using the equivalence between distance-based and kernel-based statistics~\citep{sejdinovic2013equivalence},  we  introduce a new distance-based statistic, called the cross-dCov statistic. We then identify sufficient conditions for this statistic to have a standard normal limiting null distribution, and for it to be consistent against fixed alternatives.  
        
    \subsection{Related Work}
    \label{subsec:related-work}
        Nonparametric independence testing is a classical topic in statistics that still remains a subject of significant contemporary interest. A huge variety of methods have been developed recently for this problem, such as  those based on ranks~\citep{heller2013consistent, weihs2018symmetric, deb2021multivariate, shi2022distribution}, projections~\citep{zhu2017projection}, copula~\citep{dette2013copula} and mutual information~\citep{berrett2019nonparametric}. This paper focuses on a particular class of kernel (and distance) based nonparametric tests, which are very popular in machine learning and statistics. In the rest of this section, we present the details of the most closely related works to ours. 
    
        \textbf{Nonasymptotic tail inequalities for HSIC.}
            \citet{gretton2005measuring} introduced HSIC as a measure of dependence between two random variables, and obtained a high probability (nonasymptotic) deviation inequality between the empirical and population HSIC terms. The resulting test has finite sample validity, and is uniformly consistent against alternatives separated by a $\Omega (1/\sqrt{n})$ boundary (in terms of HSIC). However, this test is very conservative in practice. To address this, \citet{gretton2007kernel} suggested to calibrate the test based on the null distribution of HSIC, and due to the intractability of the  null distribution, they used parametric approximations of the null distribution to select the rejection threshold. This method, however, is a heuristic and doesn't have validity guarantees.

        \textbf{Modifications of the HSIC test statistic.} 
            Another class of permutation-free kernel independence tests take an approach similar to our paper, and modify the empirical HSIC statistic to make its null distribution tractable.  \citet{zhang2018large} developed three such tests, using block-averaged HSIC, random Fourier features~(RFFs) and Nystrom approximation. The block-averaged HSIC statistic is obtained by partitioning the data $\data$ into blocks of size $b$, and then computing the HSIC on these small blocks of data and taking their average; this takes $O(bn)$ time in total. Thus the computational cost varies from linear in $n$ for constant block sizes, to quadratic when $b = \Omega(n)$. Furthermore, when $b = o(n)$ the block-averaged HSIC statistic has a limiting Gaussian distribution under the null, which can be used for calibrating the test. The other two methods~(RFF and Nystrom) have a computational complexity of $\mc{O}(n b^2)$, where $b$ denotes the number of features used. For the RFF method, the null distribution is a finite linear combination of chi-squared random variables, while in the Nystrom method, the null distribution is based on a low-rank approximation of the kernel matrices. 

            \citet{jitkrittum2017adaptive} proposed a new measure of independence, called the finite set independence criterion~(FSIC), that is computed as the average of the squared differences between the mean-embedding values of the joint distribution $P_{XY}$, and the product of marginals $P_X \times P_Y$, at $J$ different locations in $\mc{X} \times \mc{Y}$. For the $J$ locations drawn randomly from a continuous distribution, they showed that FSIC is equal to $0$ if and only if $X$ and $Y$ are independent.  To obtain a more powerful test in practice, they also suggested to select the $J$ locations by optimizing a lower bound on the power of the test. While the resulting linear-time test performed comparably to the quadratic-time permutation test in practice, there are several important factors that distinguish our work from theirs. In particular, they analyze their tests in the regime where the kernels ($k$ and $\ell$), and the number of features $J$ are fixed; while the sample size $n$ goes to infinity. When these parameters are also allowed to change with $n$, the null distribution of their statistics may change. Furthermore, the consistency of their test relies on an accurate estimation of the $J \times J$ covariance matrix, which is not possible in the regime when $J$ also increases with $n$, and $\lim_{n \to \infty} J/n \approx 1$.  Our test does not suffer from these issues, and we prove its consistency and type-I error control in more general settings. 

            For the specific case of Gaussian kernels, \citet{li2019optimality} analyzed an independence test based on a studentized version of the empirical HSIC statistic. In particular, for Gaussian kernels with scale parameter increasing at an appropriate rate with $n$, they showed that the studentized empirical HSIC statistic has a standard normal limiting null distribution, and the resulting independence test has minimax rate-optimal power against smooth local alternatives. In contrast, our studentized cross-HSIC statistic has a limiting null distribution for a much larger class of kernels, including unbounded kernels as well as kernels induced by semi-metrics~\citep{sejdinovic2013equivalence}. Furthermore, when specialized to Gaussian kernels, our test also achieves minimax rate-optimal power, matching the performance of the test studied by~\citet{li2019optimality}.

         \textbf{Other kernel-based independence tests.}
            \citet{deb2020measuring} proposed a class of kernel-based nonparametric measures   of association~(called KMAc) between two random variables $X$ and $Y$ that satisfy the three desired criteria listed by \citet{chatterjee2021new}: they are equal to $0$ for independent $X$ and $Y$,  and are equal to $1$ when  $Y$ is a  measurable function of $X$; they admit simple empirical estimates, and finally, they have simple asymptotic theory when $X$ and $Y$ are independent. These measures significantly generalize the univariate measure of association introduced by \citet{chatterjee2021new}. However, \citet{deb2020measuring} study the asymptotics of the empirical KMAc statistics  only for the case of fixed kernels, and furthermore, they do not analyze the power of their independence tests against local alternatives. 
        
        \textbf{Distance-Covariance tests.}
            \citet{szekely2007measuring} proposed a measure of dependence between two random variables, called the distance-Covariance~(dCov) that is defined as the weighted $L^2$ norm between the characteristic function of the joint distribution, and the product of the marginal characteristic functions. For the case of euclidean spaces, \citet{szekely2007measuring} showed that the $(1-\alpha)$-quantile of a normalized version of the empirical dCov statistic can be upper bounded by that of a quadratic form of a Gaussian; thus providing a permutation-free test of independence. In practice, however, this test is quite conservative as the upper bound on the $(1-\alpha)$-quantile is tight only in the case of Bernoulli distributions. \citet{lyons2013distance} extended the definition and analysis of the dCov measure beyond euclidean spaces, to arbitrary metric spaces. This was further generalized to the case of semi-metric spaces by \citet{sejdinovic2013equivalence}, who also derived a precise equivalence between dCov and kernel-based HSIC measures~(we recall this in~\Cref{fact:dcov-hsic-equivalence}). In this paper, we use this equivalence to obtain new dCov based statistics, that also have a standard normal limiting distribution under the null. 

        \textbf{Cross U-statistics.}
            We note that the general design strategy used in this paper, based on sample-splitting and studentization, was first introduced by~\citet{kim2020dimension} for the case of one-sample U-statistics. The primary motivation of their work was to develop  inference techniques that are \emph{dimension-agnostic} --- i.e., methods based on statistics whose asymptotic distribution remains the same, irrespective of the dimension regime.  \citet{shekhar2022permutation}  applied the ideas of~\citet{kim2020dimension} to the case of two-sample testing, to develop a permutation-free two-sample test based on kernel-MMD metric. In this paper, we adapt the ideas from these two papers, to develop a permutation-free kernel independence test. As we describe in~\Cref{subsec:connections-to-mmd}, the analysis techniques developed by~\citet{shekhar2022permutation} for their cross-MMD test cannot be directly applied to our setting, and hence we develop new methods for establishing the properties of our cross-HSIC test. 

\section{Hilbert-Schmidt Independence Criterion~(HSIC)}
\label{sec:preliminaries}
    As mentioned earlier, we assume that the observations $(X, Y)$ lie in the space $\calX \times \calY$, where $\calX$ may be different from $\calY$. Let $\calK$ and $\calL$ denote reproducing kernel Hilbert Spaces~(RKHS) associated with positive-definite kernels $k:\calX \times \calX \to \reals$ and $\ell:\calY \times \calY \to \reals$ respectively. With $\phi(\cdot)$ and $ \psi(\cdot)$ denoting the associated feature maps $x \mapsto k(x, \cdot)$ and $y \mapsto \ell(y, \cdot)$, we formally introduce the Hilbert-Schmidt independence criterion~(HSIC).

    \begin{definition}
    \label{def:population-hsic}
        Let $P_{XY}$ denote a probability distribution on $\calX \times \calY$, and $\mc{H}$ and $\mc{G}$ denote RKHS associated with kernels $k:\calX \times \calX \to \mathbb{R}$ and $\ell:\calY \times \calY \to \mathbb{R}$ respectively. Then, the Hilbert--Schmidt Independence Criterion~(HSIC) is defined as the Hilbert--Schmidt norm of the associated cross-covariance operator:
        \begin{align}
            &\hsic(P_{XY}, k, \ell) \defined \|C_{XY}\|_{HS}^2, \quad \text{where}   \quad C_{XY} \defined \mathbb{E}_{XY}\lb (\phi(X) - \mu) \otimes (\psi(Y)-\nu) \rb.  \label{eq:def-population-hsic}
        \end{align}
        Here $\otimes$ denotes the tensor product, and the terms $\mu$ and $\nu$ denote $\mathbb{E}_X[\phi(X)]$ and $\mathbb{E}_Y[\psi(Y)]$ respectively. 
    \end{definition}
   
   As shown by~\citet[Lemma 1]{gretton2005measuring}, $\hsic(P_{XY}, k, \ell)$ can be expressed as follows, in terms of the kernel functions, with $(X,Y)$ and $(X', Y')$ denoting two independent draws from the  distribution $P_{XY}$.
   \begin{align}
        \label{eq:population-hsic-1}
       \hsic(P_{XY}, k, \ell) = \mathbb{E}_{XX'YY'}[k(X, X') \ell(Y, Y')] + \mathbb{E}_{XX'}[k(X,X')]\mathbb{E}_{YY'}[\ell(Y,Y')] - 2 \mathbb{E}_{XY}[\mu(X) \nu(Y)]. 
   \end{align}
   
    Given data $\data = \{(X_i, Y_i): 1 \leq i \leq 2n\}$  consisting of $2n$ independent draws from $P_{XY}$, the empirical estimate stated in~\eqref{eq:empirical-hsic} is constructed using the above expression for $\hsic$. 
    Under the null, the statistic $\hsichat$ is a degenerate one-sample U-statistic, and thus, for fixed kernels $k$ and $\ell$, its asymptotic null distribution is an infinite weighted combination of independent $\chi^2$ random variables, as shown by~\citet[Theorem~2]{gretton2007kernel}, where the weights depend on the distribution $P_{XY}$. 
   Due to the intractable nature of this distribution, practical independence tests based on the $\hsichat$ statistic are usually calibrated using the permutation distribution, that leads to a significant increase in computation. 
   
     The $\hsic$ metric is  also known to be equal to the squared MMD distance between $P_{XY}$ and the product of marginals, $P_X \times P_Y$, using the product kernel $K((x,y), (x', y')) = k(x, x') \ell(y, y')$. With the notation $\mu = \mathbb{E}_{P_X}[k(X, \cdot)]$, $\nu = \mathbb{E}_{P_Y}[\ell(Y, \cdot)]$ and $\omega = \mathbb{E}_{P_{XY}}[k(X, \cdot) \ell(Y, \cdot)]$, we can write $\hsic(P_{XY}, k, \ell)$ as 
        \begin{align}
            \hsic(P_{XY}, k, \ell) = \mmd^2(P_{XY}, P_X \times P_Y) = \langle \omega - \mu \times \nu,\; \omega - \mu \times \nu \rangle_{k \times \ell}, 
        \end{align}
    where  $\langle \cdot, \cdot \rangle_{k \times \ell}$ denotes the inner product in the RKHS associated with the product kernel $k \times \ell$. 
    We will use the above formulation in the next section to propose a new statistic with a tractable asymptotic null distribution. 
    
\section{The Cross-HSIC Test}
\label{sec:cHSIC}
    In this section, we develop a new kernel-based independence test that does not rely on permutations. To achieve this, we propose a new statistic, called cross-HSIC, that has a standard Gaussian limiting null distribution under mild conditions on the kernels and the distribution $P_{XY}$. 
    The new test statistic  relies on two key ideas of sample splitting and studentization, which we describe next. 
    
    Given $2n$ independent draws from the joint distribution $P_{XY}$, denoted by $\data = \{(X_1, X_2), \ldots, (X_{2n}, Y_{2n})\}$,  the studentized cross-HSIC statistic, $\cshsic$, is defined in two steps: 
    \begin{itemize}
        \item First, we split $\data$ into two equal parts, denoted by $\dataone = \{(X_1, Y_1), \ldots, (X_n, Y_n)\}$ and $\datatwo = \{(X_{n+1}, Y_{n+1}), \ldots, (X_{2n}, Y_{2n})\}$. With $Z_i$ denoting the paired observations $(X_i, Y_i)$, we introduce the following terms:
        \begin{align}
            &h_{ij} \equiv h(Z_i, Z_j) = \frac{1}{2} \bigl\{ \phi(X_i) - \phi(X_j) \bigr\}\otimes \bigl\{ \psi(Y_i) - \psi(Y_j) \bigr\}, \\
            &f_1 = \frac{1}{n(n-1)} \sum_{i=1}^n \sum_{\substack{j=1 \\ j\neq i}}^n h_{ij}, \quad \text{and} \quad f_2 = \frac{1}{n(n-1)} \sum_{t=n+1}^{2n} \sum_{\substack{u=n+1 \\ n\neq t}}^{2n} h_{tu}. 
        \end{align}
        These terms  can then be used to define the cross-HSIC statistic 
        \begin{align}
            &\crosshsic \defined \la f_1, f_2 \ra = \frac{1}{n(n-1)} \sum_{1 \leq i \neq j \leq n} \la h_{ij}, f_2 \ra. \label{eq:cross-hsic-def1}
        \end{align}
        It is easy to check that, for any $i \neq j$, we have $\mathbb{E}[h_{ij}] = \omega - \mu \times \nu$. Hence, $f_1$ and $f_2$ are two independent unbiased estimates of $\omega - \mu \times \nu$, which implies that $\crosshsic$ is an unbiased estimate of the HSIC metric.  
        As the expression in~\eqref{eq:cross-hsic-def1} indicates, conditioned on the second-half of the data~(i.e., $\datatwo$), $\crosshsic$ is a one-sample U-statistic. 
        
        \item Our final statistic, denoted by~$\cshsic$, is obtained by normalizing the cross-statistic $\crosshsic$ with the empirical standard deviation. Stated formally, the cross HSIC statistic is defined as 
        \begin{align}
            & \cshsic \defined \frac{ \sqrt{n} \; \crosshsic}{s_n}, \quad \text{where} \label{eq:chsic-def} \\
            & s_n^2 = \frac{4(n-1)}{(n-2)^2} \sum_{i=1}^n \lp \frac{1}{n-1} \sum_{1 \leq j \neq i \leq n} h(Z_i, Z_j) - \crosshsic \rp^2. \label{eq:sn2-def}
        \end{align}
        We note that $n^{-1}s_n^2$ is the jackknife estimator of the variance of $\mathbb{E}[h(Z_1,Z_2) | Z_1]$ also considered in \citet{jing2000berry}.
    \end{itemize}
    
    In~\Cref{sec:null-distribution}, we show that under certain assumptions on the kernel, the $\cshsic$ statistic has a standard normal limiting distribution under the null. This suggests the following natural level-$\alpha$ test of independence: 
    \begin{align}
        \Psi \equiv \Psi\lp \data \rp = \ind_{\cshsic > z_{1-\alpha}},   \label{eq:cross-HSIC-test}
    \end{align}
    where $z_{1-\alpha}$ is the $(1-\alpha)$-quantile of the $N(0,1)$ distribution. 
    
\subsection{Quadratic-time Computation of \texorpdfstring{$\cshsic$}{xHSIC}}
\label{subsec:computational-trick}

    A naive computation of the $\cshsic$ statistic has a $\mc{O}(n^4)$ computational complexity, that is infeasible for all but very small problems. However, a more careful look at the terms involved in defining the statistic indicates that it can actually be computed in quadratic time. 
    
    \begin{theorem}
        \label{theorem:quadratic-cost} 
        The~$\cshsic$, introduced in~\eqref{eq:chsic-def}, can be computed in~$\mc{O}(n^2)$ time.
    \end{theorem}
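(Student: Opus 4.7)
The strategy is the kernel trick: both the numerator $\crosshsic$ and the denominator $s_n$ are scalars built entirely from RKHS inner products of feature-map evaluations, so once we precompute the two $2n \times 2n$ Gram matrices $K_{ab} = k(X_a, X_b)$ and $L_{ab} = \ell(Y_a, Y_b)$ in $\mathcal{O}(n^2)$ time, every remaining step is scalar arithmetic, and it suffices to verify that the total arithmetic cost is also $\mathcal{O}(n^2)$.

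First I would collapse the double sums defining $f_1$ and $f_2$ to obtain the closed form
\[
f_1 \;=\; \frac{1}{n-1} \sum_{i=1}^n \phi(X_i)\otimes\psi(Y_i) \;-\; \frac{1}{n(n-1)}\Big(\sum_{i=1}^n\phi(X_i)\Big)\otimes\Big(\sum_{i=1}^n\psi(Y_i)\Big),
\]
and analogously for $f_2$ over $\datatwo$. Expanding $\crosshsic = \langle f_1, f_2\rangle$ via $\langle a\otimes b,\,c\otimes d\rangle = \langle a, c\rangle\langle b, d\rangle$ produces a bounded number of scalar sums in entries of $K$ and $L$: a ``diagonal'' piece $\sum_{i \leq n}\sum_{n < t \leq 2n} K_{it}L_{it}$, together with three cross terms that collapse to products of precomputed row sums of $K$ and $L$. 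Each such sum is evaluable in $\mathcal{O}(n^2)$ arithmetic operations, so $\crosshsic$ is obtained in $\mathcal{O}(n^2)$ time overall.

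For the studentizer I would introduce the scalar kernel $\tilde h(z, z') := \langle h(z, z'),\, f_2\rangle$, so that, conditionally on $\datatwo$, $\crosshsic = \frac{1}{n(n-1)}\sum_{i\neq j}\tilde h(Z_i, Z_j)$ is an ordinary scalar U-statistic on $\dataone$ and $s_n^2$ is its jackknife variance. Expanding $h_{ij}$ gives $\tilde h(Z_i, Z_j) = \frac{1}{2}[B_{ii} - B_{ij} - B_{ji} + B_{jj}]$ with $B_{ij} := \langle \phi(X_i)\otimes\psi(Y_j),\, f_2\rangle$, from which
\[
\tilde h_i \;:=\; \frac{1}{n-1}\sum_{j\neq i}\tilde h(Z_i, Z_j) \;=\; \frac{1}{2(n-1)}\bigl[\,n B_{ii} - R_i - C_i + T\,\bigr],
\]
where $R_i, C_i$ are the $i$-th row and column sums of the $n \times n$ matrix $B$ and $T = \sum_j B_{jj}$. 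Hence $s_n^2$ depends on $B$ only through at most $3n+1$ linear summaries, never through the full matrix. Substituting the closed form of $f_2$ shows that each of $B_{ii}, R_i, C_i$ is an affine combination of inner products of a single row of $K$ or $L$ against a precomputed row sum of the other Gram matrix; precomputing the handful of relevant row sums once is $\mathcal{O}(n^2)$, after which each index $i$ costs $\mathcal{O}(n)$, giving $\mathcal{O}(n^2)$ in aggregate. Given $\tilde h_1, \ldots, \tilde h_n$ and $\crosshsic$, forming $s_n^2 = \frac{4(n-1)}{(n-2)^2}\sum_i(\tilde h_i - \crosshsic)^2$ and then $\cshsic$ adds only $\mathcal{O}(n)$.

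The main obstacle is precisely the third step: a naive pass that first builds the full matrix $B$ would cost $\mathcal{O}(n^3)$, so the argument must exploit that $s_n^2$ sees $B$ only through its diagonal, row sums, column sums, and trace, and that each such aggregate factors through a single matrix-vector product against one precomputed row-sum vector of the Gram matrices. Once this observation is in place, the remainder is straightforward algebraic bookkeeping.
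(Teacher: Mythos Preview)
Your proposal is correct and follows essentially the same strategy as the paper. Both arguments expand $\crosshsic=\langle f_1,f_2\rangle$ into a bounded number of double sums over Gram-matrix entries (your four terms from the two-piece closed form of $f_1,f_2$ are exactly the paper's $T_1,T_2,T_3,T_4$), and both handle $s_n^2$ by computing the vector $\bigl(\tilde h_i\bigr)_{i=1}^n$ in $\mathcal O(n^2)$ via precomputed row-sum vectors rather than the full intermediate matrix. Your presentation via the matrix $B$ and its marginals is a bit more conceptual; the paper instead writes out an explicit matrix formula for the vector (involving $(K\circ L)\bone$, $KL\bone_l$, etc.), but the underlying observation---that only matrix-vector products against precomputed row sums are needed---is identical.
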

    
    \emph{Proof outline of~\Cref{theorem:quadratic-cost}.} It suffices to prove that, both, the numerator~($\crosshsic$) and the denominator~($s_n$) in the statistic~$\cshsic$ can be computed in quadratic time. 
    First, we consider the numerator, and observe that it can be decomposed as follows: 
    \begin{align}
        \crosshsic &= T_1 - T_2 - T_3 + T_4, \quad \text{where} \\
        T_1 &=  \frac{1}{n^2} \sum_{i=1}^n \sum_{j=n+1}^{2n} k(X_i, X_j) \ell(Y_j, Y_j), \label{eq:T1-def} \\
        T_2 &= \frac{1}{n^2(n-1)} \sum_{i=1}^n \sum_{n+1 \leq j_1 \neq j_2 \leq 2n} k(X_i, X_{j_1})\ell(Y_i, Y_{j_2}), \label{eq:T2-def}\\
        T_3 & = \frac{1}{n^2(n-1)} \sum_{i=n+1}^{2n} \sum_{1 \leq j_1 \neq j_2 \leq n} k(X_i, X_{j_1})\ell(Y_i, Y_{j_2}), \quad \text{and} \label{eq:T3-def}\\
        T_4 & = \frac{1}{n^2(n-1)^2} \sum_{1 \leq i_1 \neq i_2 \leq n} \;\sum_{n+1 \leq j_1 \neq j_2 \leq 2n} k(X_{i_1}, X_{j_1}) \ell(Y_{i_2}, Y_{j_2}). \label{eq:T4-def}
    \end{align}
    The above expressions indicate that a direct computation of $T_1, T_2, T_3$ and $T_4$ incur $\mc{O}(n^2)$, $\mc{O}(n^3)$, $\mc{O}(n^3)$ and $\mc{O}(n^4)$ cost respectively, making the overall computational cost $\mc{O}(n^4)$. In~\Cref{lemma:quadratic-T2-T3}, we show how the terms $T_2$ and $T_3$ can be computed in quadratic time, and in~\Cref{lemma:quadratic-T4} we show how the term $T_4$ can be computed in quadratic time. These two results together imply that the numerator of $\cshsic$ has an overall quadratic complexity. 
    Finally, we consider the denominator $s_n$, and first note that 
    \begin{align} \label{eq: alternative expression of s_n}
        s_n^2 = \frac{4(n-1)}{(n-2)^2} \lb \frac{1}{(n-1)^2} \sum_{i=1}^n \lp \sum_{j=1}^{n, j\neq i} \la h(Z_i, Z_j), f_2 \ra \rp^2 - n \crosshsic^2  \rb. 
    \end{align}
    We complete the proof by showing that the first term inside the square brackets above can also be computed in quadratic time. The details are in~\Cref{lemma:quadratic-sn}. 
    \qed

\subsection{Connections to \texorpdfstring{$\crossmmd$}{xMMD}} 
\label{subsec:connections-to-mmd} 
    The kernel-MMD distance  between two probability distributions (on the same observation space) is  a widely used metric in the two-sample testing literature. With  $\mc{X}=\mc{Y}$ and $k=\ell$, the \mmd distance between $P_X$ and $P_Y$ is defined as $\mmd(P_X, P_Y) \defined \langle \mu - \nu, \mu - \nu \rangle_k^{1/2}$.  The usual empirical estimates of $\mmd^2(P_X, P_Y)$ based on observations are known to have a complex asymptotic null distribution (similar to $\hsic$). To address this,~\citet{shekhar2022permutation} proposed the $\crossmmd$ statistic, based on splitting the observations $\data=\{(X_i, Y_i): i \in [2n]\}$ into two (usually equal) splits: $\dataone=\{(X_i, Y_i): i \in [n]\}$ and $\datatwo=\{(X_i, Y_i): n+1 \leq i \leq 2n\}$. In particular, let $(\muhat_1, \nuhat_1)$ and $(\muhat_2, \nuhat_2)$  denote the empirical estimates of $\mu$ and $\nu$ based on $\dataone$ and $\datatwo$ respectively. Then, the $\crossmmd$ is defined as 
    \begin{align}
        \crossmmd \equiv \crossmmd(\data) = \langle \muhat_1- \nuhat_1, \, \muhat_2 - \nuhat_2 \rangle_{k}.  \label{eq:cross-mmd}
    \end{align}
    \citet{shekhar2022permutation} showed that the asymptotic distribution of a studentized version of $\crossmmd$ under the null, that is with $P_X = P_Y$,  is $N(0,1)$ under mild conditions. 
 
    As we mentioned earlier in~\Cref{sec:cHSIC}, the $\hsic$ metric can be interpreted as the kernel-MMD distance, with the product kernel $k \times \ell$, between the joint distribution $P_{XY}$ and the product of marginals $P_X \times P_Y$. 
    Hence, similar to the definition of~$\crossmmd$, the $\crosshsic$ statistic based on $\data$, can be rewritten as 
    \begin{align}
        \crosshsic \equiv \crosshsic(\data) = \langle \muhat_1\times \nuhat_1 - \omegahat_1, \; \muhat_2\times \nuhat_2 - \omegahat_2 \rangle_{k \times \ell}, \label{eq:cross-hsic-def-0}
    \end{align}
    where $(\omegahat_1, \muhat_1, \nuhat_1)$ and $(\omegahat_2, \muhat_2, \nuhat_2)$ denote the empirical estimates of $(\omega, \mu, \nu)$ based on~$\dataone$ and $\datatwo$ respectively. 
    
    Given the similarity in the definitions of~$\crosshsic$ and $\crossmmd$, it might appear that the theoretical guarantees of~$\crossmmd$  also carry over directly to the case of $\crosshsic$. However, there is a subtle issue that prevents this. For analyzing $\crossmmd$, \citet{shekhar2022permutation} rely strongly on the fact that $\muhat_1$ and $\nuhat_1$ are independent, and thus, conditioned on the second half of data, the terms $\langle \muhat_1, \muhat_2 - \nuhat_2\rangle_k$ and $\langle \nuhat_1, \muhat_2 - \nuhat_2\rangle_k$ can be analyzed separately, and shown to converge~(after appropriate normalization) to conditionally independent Gaussian distributions. The final distribution of the studentized $\crossmmd$ statistic is then obtained by using the fact that the distribution of the sum of two Gaussian distributions is also Gaussian.  
    
    With~$\crosshsic$ as defined in~\eqref{eq:cross-hsic-def-0}, however, the terms $\omegahat_1$ and $\muhat_1 \times \nuhat_1$ are not independent. 
    Hence, the techniques used for analyzing $\crossmmd$ do not directly apply to our case, and  we develop a different approach for analyzing the studentized version of $\crosshsic$ statistic in the next two sections.

\section{Warmup: Testing Linear Dependence in One Dimension}
\label{sec:warmup-linear} 
    The theoretical properties of our test, under both null and alternative, are subtle and nontrivial.
    Thus, to build intuition gently for later generalizations in the paper, we first briefly analyze our cross-HSIC test in the simplest testing: testing linear dependence of real-valued random variables $X$ and $Y$. 
    
    Formally, this section studies the following (much simpler) problem: given $\data$ drawn according to a joint distribution $P_{XY}$ over $\mathbb{R}^2$, suppose we want to test whether $P_{XY}=P_X \times P_Y$ or that  $\rho = \cov(X, Y) = \mathbb{E}_{P_{XY}}[(X- \mathbb{E}_{P_X}[X])(Y-\mathbb{E}_{P_Y}[Y])$ is non-zero. Of course, there are many methods for this setting and we do not recommend using ours in particular; the only reason to focus only on our statistic is in order to get a handle of the more complex theoretical analysis that follows in the general case.

    To formally describe our results in this setting, we set the observation spaces $\mc{X}$ and $\mc{Y}$ to  $\mathbb{R}$, and assume that the kernels $k$ and $\ell$ are both linear kernels: i.e., $k(x, x') = xx'$ and $\ell(y, y') = yy'$.
    Without loss of generality, we assume that the mean of both $X$ and $Y$ are equal to $0$. In this case, the cross-HSIC statistic can be written as follows: 
    \begin{align}
        \crosshsic 
        &= \lp \frac{1}{n} \sum_{i=1}^n X_i Y_i - \frac{1}{n(n-1)} \sum_{1 \leq i \neq j \leq n} X_i Y_j \rp \times \lp \frac{1}{n} \sum_{t=1}^n X_{n+t} Y_{n+t} - \frac{1}{n(n-1)} \sum_{1 \leq t \neq u \leq n} X_{n+t} Y_{n+u} \rp. \label{eq:xhsic-linear} \\
        & = \lp \frac{1}{n} \sum_{i=1}^n X_i Y_i - \frac{1}{n(n-1)} \sum_{1 \leq i \neq j \leq n} X_i Y_j \rp \times  f_2. 
    \end{align}
    
     In our first result of this section, we consider the case in which the distribution $P_{XY}$ is fixed, and we characterize the performance of our proposed cross-HSIC test under mild second moment requirements on $X$ and $Y$. 
    \begin{theorem}
        \label{thm:warmup-linear-1} 
        Suppose $\data = \{(X_i, Y_i) \in \mathbb{R}^2: 1 \leq i \leq 2n\}$ is drawn \iid from a distribution $P_{XY}$, with  $\mathbb{E}_{P_{XY}}[(XY)^2] + \mathbb{E}_{P_X}[X^2] \mathbb{E}_{P_Y}[Y^2]<\infty$. Then, we have the following: 
        \begin{enumerate}[label=(\alph*)]
            \item If $X$ and $Y$ are independent, then $\cshsic$ computed with linear kernels $k$ and $\ell$ converges in distribution to $N(0,1)$. This means that the cross-HSIC test controls the type-I error asymptotically at level-$\alpha$ for any $P_{XY}= P_X \times P_Y$. 
            \item If $X$ and $Y$ are not independent, and $\mathbb{E}_{P_{XY}}[XY] = \rho \neq 0$, then $ \mathbb{P}(\Psi = 1) = \mathbb{P}\lp \cshsic > z_{1-\alpha} \rp \to 1$. In other words, the cross-HSIC test with linear kernels is consistent against fixed alternatives. 
        \end{enumerate}
    \end{theorem}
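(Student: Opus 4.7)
\textbf{Proof plan for Theorem \ref{thm:warmup-linear-1}.}

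The plan exploits two simplifications specific to the scalar linear-kernel setting. First, since $h_{ij} = \tfrac{1}{2}(X_i - X_j)(Y_i - Y_j)$ is invariant under translations of $X$ and $Y$, we may assume without loss of generality that $\mathbb{E}[X] = \mathbb{E}[Y] = 0$. A short algebraic manipulation then gives $f_1 = \frac{n}{n-1}(\overline{XY}_n - \bar X_n \bar Y_n)$ and the analogous identity for $f_2$, so that $\crosshsic = f_1 f_2$. Second, since $\langle h_{ij}, f_2 \rangle = f_2 \cdot h_{ij}$ is a product of scalars, the factor $f_2^2$ can be pulled out of every summand defining $s_n^2$, yielding
\begin{equation*}
    s_n^2 \;=\; f_2^2 \, \widehat{\sigma}_h^2, \qquad \widehat{\sigma}_h^2 \defined \frac{4(n-1)}{(n-2)^2}\sum_{i=1}^n (q_i - f_1)^2, \quad q_i \defined \frac{1}{n-1}\sum_{j \neq i} h_{ij}.
\end{equation*}
Combined, these give the clean representation $\cshsic = \sgn(f_2) \cdot \sqrt{n}\, f_1 / \widehat{\sigma}_h$, in which the sign factor depends only on $\datatwo$ and is therefore independent of the ratio built from $\dataone$.

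For part (a), under the null $X$ and $Y$ are independent; together with zero means, this makes $f_1$ a mean-zero U-statistic of order two with nondegenerate Hoeffding component $h_1(z) = xy/2$. The classical U-statistic CLT, valid under $\mathbb{E}[X^2]\mathbb{E}[Y^2] < \infty$, gives $\sqrt{n}\, f_1 \to N(0, \mathbb{E}[X^2]\mathbb{E}[Y^2])$ in distribution. The quantity $\widehat{\sigma}_h^2$ is the standard jackknife estimator of $4\Var(h_1)$ (cf.\ \citet{jing2000berry}), and is consistent under the same moment hypothesis, yielding $\widehat{\sigma}_h^2 \to \mathbb{E}[X^2]\mathbb{E}[Y^2]$ in probability. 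Slutsky's theorem gives $\sqrt{n}\, f_1 / \widehat{\sigma}_h \to N(0,1)$, and since $\sgn(f_2) \in \{-1,+1\}$ is independent of this ratio while $N(0,1)$ is symmetric, the product $\cshsic$ also converges to $N(0,1)$. Type-I control of the cross-HSIC test at level $\alpha$ then follows.

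For part (b), the weak law of large numbers gives $f_1, f_2 \to \rho$ in probability, hence $\crosshsic = f_1 f_2 \to \rho^2 > 0$ and $\sqrt{n}\,\crosshsic \to \infty$. Under the alternative, the Hoeffding component becomes $h_1(z) = (xy - \rho)/2$, and the same jackknife consistency argument gives $\widehat{\sigma}_h^2 \to \mathbb{E}[(XY)^2] - \rho^2$, which is finite by the moment hypothesis. Therefore $s_n = |f_2|\,\widehat{\sigma}_h \to |\rho|\sqrt{\mathbb{E}[(XY)^2] - \rho^2}$, a finite positive constant, so $\cshsic \to \infty$ in probability, yielding $\mathbb{P}(\cshsic > z_{1-\alpha}) \to 1$.

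The main obstacle is establishing consistency of the jackknife variance estimator $\widehat{\sigma}_h^2$ under the weak moment hypothesis $\mathbb{E}[(XY)^2] + \mathbb{E}[X^2]\mathbb{E}[Y^2] < \infty$, rather than the stronger $\mathbb{E}[X^4] + \mathbb{E}[Y^4] < \infty$ that a naive Cauchy--Schwarz argument would demand. Controlling $n^{-1}\sum_i q_i^2$ requires a careful LLN for U-statistic-type sums whose summands factor as products of $X$- and $Y$-quantities; here the assumed moment bound is exactly the one needed. A minor secondary issue is that $\cshsic$ is undefined on the event $\{f_2 = 0\}$, but this event has vanishing probability in all nontrivial cases and does not affect the limiting distribution.
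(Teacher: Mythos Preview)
Your approach is essentially the same as the paper's: center by location invariance, factor $\crosshsic = f_1 f_2$ and $s_n^2 = f_2^2 \cdot (\mathrm{I}_n - \mathrm{II}_n)$ (your $\widehat\sigma_h^2$), obtain $\cshsic = \mathrm{sign}(f_2)\cdot \sqrt{n}\,f_1/\widehat\sigma_h$, and combine a CLT for the numerator with consistency of the denominator. The only real difference is packaging: you invoke the U-statistic CLT and ``jackknife consistency'' as black boxes, whereas the paper carries out the variance-estimator analysis by hand, splitting $\mathrm{I}_n$ into two U-statistic pieces $A_n$ and $B_n$ and applying Hoeffding's strong law for U-statistics directly. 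That explicit route is exactly what resolves the obstacle you flag---it shows the relevant U-statistic kernels have finite first moment under $\mathbb{E}[(XY)^2] + \mathbb{E}[X^2]\mathbb{E}[Y^2] < \infty$ alone, without needing $\mathbb{E}[X^4]$ or $\mathbb{E}[Y^4]$---so your plan is sound and the paper's proof tells you how to fill in the step you were worried about.
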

    The proof of this statement, presented in~\Cref{proof:linear-kernel}, relies on the special structure of the linear kernel in the case of real-valued observations, that allows the $\crosshsic$ statistic to factor into the product of two independent components. Nevertheless, in the next section, we will show that similar results can be obtained for the case of general kernels using different proof techniques. 
    
    In our next result,  we show that by imposing stronger moment assumptions on the distributions, we can extend  the above result to hold for data distributions changing with $n$. That is, we show that the cross-HSIC test controls type-I error at level-$\alpha$ uniformly over a large class of null distributions, and it is also consistent against local alternatives separated by a $\Omega(1/\sqrt{n})$ detection boundary.   
    \begin{theorem}
        \label{thm:warmup-linear-2} 
        Suppose $\data = \{(X_i, Y_i) \in \mathbb{R}^2: 1 \leq i \leq 2n\}$ is drawn \iid from a distribution $P_{XY, n}$. Introduce the term $\mc{P}_n = \{ P_{XY,n}: \frac{\mathbb{E}_{P_{XY, n}}[(XY)^4]}{n} \lp 1+  \frac{1}{\var_{P_{XY,n}}^2(XY)}  \rp = o(1)\}$, and  define the following null and alternative classes of distributions: 
        \begin{align}
            \nullclass &= \{ P_{XY, n} \in \mc{P}_n: P_{XY,n}=P_{X, n} \times P_{Y, n}\}, \\
            \altclass &= \{ P_{XY, n} \in \mc{P}_n:  \lim_{n \to \infty} |\mathbb{E}_{P_{XY, n}}[XY]| \sqrt{n}>0, \text{ and } \var_{P_{XY,n}}(XY)=1\}.  
        \end{align}
        Then, we have the following:
        \begin{enumerate}[label=(\alph*)]
            \item  The $\cshsic$ statistic converges in distribution to $N(0,1)$ uniformly over the null distribution class; which implies the following bound on the type-I error of the cross-HSIC test:
            \begin{align}
                \lim_{n \to \infty} \sup_{P_{XY, n} \in \nullclass}\;  \mathbb{E}_{P_{XY,n}}[ \Psi(\data)]  =  \alpha. 
            \end{align}%
            \item Consider a sequence of distributions $\{P_{XY, n} \in \altclass: n \geq 1\}$, such that $\mathbb{E}_{P_{XY,n}}[XY] = \rho_n$, with $\lim_{n \to \infty} \rho_n \sqrt{n} = c\neq 0$. Then, we have 
            \begin{align}
                \lim_{n \to \infty} \mathbb{E}_{P_{XY, n}}[\Psi(\data)] & = \Phi(c) \times \Phi(z_{\alpha} + c) + \Phi(-c) \times \Phi(z_{\alpha}- c).
            \end{align}
        In particular, the test is consistent for all distributions for which $\lim_{n \to \infty} |\mathbb{E}_{P_{XY,n}}[XY]|\sqrt{n} =\infty$. 
        \end{enumerate}
    \end{theorem}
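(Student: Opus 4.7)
The whole argument hinges on a factorization special to the linear kernel. Since $h_{ij} = \tfrac{1}{2}(X_i - X_j)(Y_i - Y_j)$ is now scalar-valued, writing $A_k = f_k$ for $k=1,2$ and $g_i = \tfrac{1}{n-1}\sum_{j \neq i} h_{ij}$, the identity $\langle h_{ij}, f_2 \rangle = A_2 \cdot h_{ij}$ gives
\[
\crosshsic = A_1 A_2, \qquad s_n^2 = A_2^2 \cdot \widetilde{s}_n^2, \qquad \widetilde{s}_n^2 \defined \frac{4(n-1)}{(n-2)^2}\sum_{i=1}^n (g_i - A_1)^2,
\]
where $\widetilde{s}_n$ depends only on $\dataone$. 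Hence $\cshsic = (\sqrt{n}\, A_1 / \widetilde{s}_n) \cdot \sgn(A_2)$, and by sample splitting the two factors are independent. The problem therefore reduces to analyzing a scalar studentized U-statistic on $\dataone$ together with an independent $\pm 1$ sign from $\datatwo$.

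For Part~(a), the Hoeffding decomposition of $A_1$ (with $\mathbb{E}[X] = \mathbb{E}[Y] = 0$ as in~\Cref{sec:warmup-linear}) yields $\sqrt{n}\, A_1 = n^{-1/2}\sum_{i=1}^n X_i Y_i + R_n$ with degenerate remainder $R_n = O_p(n^{-1/2})$ controlled by the 4th-moment condition. Under the null the leading summands are i.i.d., centered, with variance $\sigma_n^2 = \var_{P_{XY,n}}(XY)$. The moment condition in $\mc{P}_n$ drives (i) a Berry-Esseen bound giving $\sqrt{n}\, A_1 / \sigma_n \convdist N(0, 1)$ uniformly over $\nullclass$, and (ii) an $L^2$-concentration argument showing $\widetilde{s}_n^2 / \sigma_n^2 \convprob 1$ uniformly — obtained by expanding $\sum_i (g_i - A_1)^2$ into U-statistic pieces whose second moments are bounded in terms of $\mathbb{E}[(XY)^4]/\sigma_n^4$. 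A uniform Slutsky step then gives $\sqrt{n}\, A_1 / \widetilde{s}_n \convdist N(0,1)$ uniformly. Since $\sgn(A_2) \in \{\pm 1\}$ is independent of the ratio and the $N(0,1)$ law is symmetric, $\cshsic \convdist N(0,1)$ uniformly over $\nullclass$, yielding the type-I error claim.

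For Part~(b), with $\sqrt{n}\,\rho_n \to c$ and $\var_{P_{XY,n}}(XY) = 1$, the same Hoeffding decomposition now gives $\sqrt{n}\,(A_k - \rho_n) \convdist N(0,1)$ for $k = 1, 2$; sample splitting makes the two limits independent, so $(\sqrt{n}\, A_1, \sqrt{n}\, A_2) \convdist (c + Z_1, c + Z_2)$ with $Z_1, Z_2$ i.i.d.\ standard normal. The concentration $\widetilde{s}_n \convprob 1$ from Part~(a) still holds (it used only moments of $XY$, not independence of $X$ and $Y$). Continuous mapping then gives $\cshsic \convdist (c + Z_1)\sgn(c + Z_2)$, and conditioning on the sign of $c + Z_2$,
\[
\lim_{n \to \infty} \mathbb{E}_{P_{XY,n}}[\Psi(\data)] = \Phi(c)\, \mathbb{P}(c + Z_1 > z_{1-\alpha}) + \Phi(-c)\, \mathbb{P}(c + Z_1 < -z_{1-\alpha}) = \Phi(c)\Phi(z_\alpha + c) + \Phi(-c)\Phi(z_\alpha - c),
\]
using $z_\alpha = -z_{1-\alpha}$. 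Consistency when $|\rho_n|\sqrt{n} \to \infty$ is immediate: as $|c| \to \infty$ the dominant $\Phi(z_\alpha + |c|)$ factor tends to $1$.

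The main technical work is the uniform Berry-Esseen bound and the uniform $L^2$ concentration of $\widetilde{s}_n^2$, both driven entirely by the tailored 4th-moment-to-variance condition in $\mc{P}_n$. Conceptually, the nice factorization into an independent product of a studentized U-statistic and a sign is particular to the scalar linear-kernel setting and will not carry over to general kernels; this is precisely why the later sections must develop a more delicate conditional CLT argument.
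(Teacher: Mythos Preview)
Your proposal is correct and follows essentially the same approach as the paper: the same factorization $\cshsic = \sgn(f_2)\cdot \sqrt{n}\,f_1/\sqrt{\mathrm{I}_n-\mathrm{II}_n}$ special to the scalar linear kernel, the same Lyapunov-type CLT for the numerator, and the same second-moment concentration of the variance estimate (the paper's Lemmas~21--23 do exactly the $L^2$ work you describe). The only cosmetic differences are that the paper uses a sequence argument rather than a Berry--Esseen bound for uniformity, and in Part~(b) computes the limiting power by directly conditioning on $\{f_2>0\}$ versus $\{f_2<0\}$ rather than via continuous mapping to $(c+Z_1)\sgn(c+Z_2)$.
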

    The type-I error result of the above theorem relies on the asymptotic normality of $\cshsic$ that holds uniformly over $\mathcal{P}_n$. We note that the conditions for $\mathcal{P}_n$ involve the fourth and second moments of $XY$. The next example justifies these moment conditions by showing that the asymptotic normality of $\cshsic$ is no longer guaranteed without such conditions. 
    
    \begin{example} \label{exam: Poisson}Suppose that $X_1,\ldots,X_n, Y_1,\ldots, Y_n$ are \iid random variables following a Bernoulli distribution with parameter $p_n$. Assume that $np_n^2 = \lambda >0$ and $\lambda$ is not an integer for all $n$. In this scenario, the limiting null distribution of $\cshsic$ is not Gaussian and it instead satisfies
    	\begin{align}
    		\lim_{n \rightarrow \infty} \mathbb{P}(\cshsic \leq 0) = \mathbb{P}\bigl(\mathrm{sign}(V') \times V \leq 0\bigr),
    	\end{align}
    	where $V,V'$ are \iid centered Poisson random variables with parameter $\lambda$. 
    \end{example}
    We first note that depending on the value of $\lambda$, the limiting probability in \Cref{exam: Poisson} is far from $1/2$, which should be the case if $\cshsic$ is asymptotically Gaussian. 
    It is also worth mentioning that $n^{-1} \mathbb{E}_{P_{XY, n}}[(XY)^4] \var_{P_{XY,n}}^{-2}(XY) \gtrsim \lambda^{-1}$ under the conditions of \Cref{exam: Poisson}. Thus the condition for $\mathcal{P}_n$ is violated. Lastly, we note that $\lambda$ is assumed to be a non-integer for technical reasons and it may be removed with more effort. The detailed derivation of \Cref{exam: Poisson} can be found in \Cref{appendix: poisson example}. 
    
    In the next section, we will see that  we can  obtain similar results with general observation spaces, controlling the type-I error of cross-HSIC test uniformly over composite null classes and establishing its consistency against local alternatives.

\section{Asymptotic Gaussian Distribution under Null}
\label{sec:null-distribution}
    
    In~\Cref{sec:warmup-linear}, we considered a simpler version of this problem, where the observations are real-valued,  and the linear kernel is employed for both $k$ and $\ell$, showing via direct arguments that under the null hypothesis and some moment conditions,  the studentized cross-HSIC statistic, $\cshsic$ converges in distribution to $N(0,1)$. 
    In this section, we generalize this to hold for more general kernels. In particular, we first consider the case where the kernels $k$ and $\ell$, as well as the distribution $P_{XY}$, are fixed with $n$ in~\Cref{theorem:null-dist-0}. Finally, we consider the most general with with $n$-varying kernels and distribution in~\Cref{theorem:null-dist-1}. 
    
    First, we introduce the ``centered"  kernels $\tildek$ and $\tildel$, defined as \begin{align}
        \tildek(x, \cdot) = k(x, \cdot) - \mu, \quad \text{and} \quad \tildel(y, \cdot) \defined \ell(y, \cdot) - \nu. \label{eq:centered-kernels-def}
    \end{align}
    Let $\mc{Z} = \mc{X} \times \mc{Y}$, and introduce the kernel $\tildeg: \mc{Z} \times \mc{Z} \to \mathbb{R}$, defined as 
    \begin{align}
        \tildeg(z, z') = \tildeg\big( (x, y), (x', y') \big) = \langle \tildek(x, \cdot), \tildek(x', \cdot) \rangle \langle \tildel(y, \cdot), \tildel(y', \cdot) \rangle \label{eq:centered-inner-product-0}
    \end{align}
    Note that $\tildeg$ is a symmetric function that is also square integrable if the kernel $k \times \ell$  is square integrable, when $(X, Y) \sim P_{XY, n}$ for some $n \geq 1$. Under this assumption, it admits the following orthonormal expansion: 
    \begin{align}
        \tildeg(z, z') = \sum_{i=1}^{\infty} \lambda_{i,n} e_{i, n}(z) e_{i,n} (z'), \label{eq:eigendecomposition}
    \end{align}
    where $\{(\lambda_{i,n}, e_{i,n}): i \geq 1\}$ for the orthonormal sequences of eigenvalue-eigenfunction pairs  of the Hilber-Schmidt operator associated with the product kernel $K = k \times \ell$; that is, $e \mapsto \int_{\mc{Z}} e(z)K(z, \cdot) dP_{XY,n}(z)$, for any $e \in L^2(P_{XY,n})$. 
    
    We now state our first main result of this section that shows that if $\tildeg(Z_1, Z_2)$ has a finite second moment under the null, then $\cshsic$ converges in distribution to a standard normal random variable. 
    \begin{theorem}
        \label{theorem:null-dist-0} Suppose the $\cshsic$ statistic is computed with fixed kernels $k$ and $\ell$, with observations $\data$ drawn \iid from a fixed distribution $P_{XY}$. If $k, \ell$ and $P_{XY}$ satisfy the condition $\mathbb{E}[\tildeg(Z_1, Z_2)^2] < \infty$ with $Z_1, Z_2 \sim P_{XY} = P_X \times P_Y$, then we have $\cshsic \convdist N(0,1)$. 
    \end{theorem}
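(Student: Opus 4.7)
The plan is to condition on the second-half data $\datatwo$ so that $f_2$ becomes deterministic. Then $\crosshsic = \langle f_1, f_2\rangle$ reduces to a real-valued degree-2 U-statistic in $Z_1,\ldots,Z_n$ with symmetric scalar kernel $\tildeh(z,z';f_2) \defined \langle h(z,z'), f_2 \rangle$. Under the null, $\E[h(Z_1,Z_2)] = \omega - \mu\otimes\nu = 0$, so the conditional mean of $\tildeh$ vanishes. The first Hoeffding projection can be computed explicitly as
\begin{align}
g_1(z;f_2) \defined \E[\tildeh(z,Z')\mid \datatwo] = \tfrac{1}{2}\bigl\langle \tilde\phi(x)\otimes \tilde\psi(y),\; f_2\bigr\rangle,
\end{align}
using the identity $\E[h(z,Z')\mid z] = \tfrac{1}{2}\,\tilde\phi(x)\otimes\tilde\psi(y)$ that holds under independence. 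In general $g_1(\cdot;f_2)$ is not identically zero, so conditional on $\datatwo$, the U-statistic $\crosshsic$ is non-degenerate.

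I would then invoke a conditional CLT for one-sample non-degenerate U-statistics. The hypothesis $\E[\tildeg(Z_1,Z_2)^2] < \infty$ furnishes the required moment control on both $\E\|h(Z_1,Z_2)\|_{HS}^2$ and the conditional second moment of $g_1(Z;f_2)$, via Cauchy--Schwarz applied to the inner product defining $\tildeh$. After verifying the Lyapunov condition, I would conclude that on a set of $\datatwo$ of probability tending to one,
\begin{align}
    \frac{\sqrt{n}\,\crosshsic}{2\sigma_1(f_2)} \convdist N(0,1), \qquad \sigma_1^2(f_2) \defined \Var(g_1(Z;f_2)\mid\datatwo).
\end{align}

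Next I would prove consistency of the studentizer: $s_n^2/(4\sigma_1^2(f_2)) \convprob 1$. Writing $\xi_i = (n-1)^{-1}\sum_{j\neq i,\, j \leq n}\langle h(Z_i,Z_j), f_2\rangle$, the expression~\eqref{eq: alternative expression of s_n} shows $s_n^2$ equals, up to the prefactor $4(n-1)^2/(n-2)^2 \to 4$, the sample variance of the $\xi_i$'s. Under the null, a Hoeffding decomposition of $\xi_i$ gives $\xi_i = g_1(Z_i;f_2) + r_i$, where a direct conditional second-moment computation yields $\Var(r_i\mid Z_i,\datatwo) = O_p(\|f_2\|_{HS}^2/n)$, an order smaller than the typical magnitude of $g_1(Z_i;f_2)^2 = O_p(\|f_2\|_{HS}^2)$. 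Hence the empirical variance of the $\xi_i$'s is asymptotically equivalent to $\sigma_1^2(f_2)$, paralleling the jackknife analysis of~\cite{jing2000berry}. A conditional Slutsky argument then yields $\cshsic \convdist N(0,1)$ conditional on $\datatwo$, and because the $N(0,1)$ limit does not depend on $\datatwo$, bounded convergence applied to $\mathbb{P}(\cshsic \leq t) = \E\bigl[\mathbb{P}(\cshsic \leq t\mid \datatwo)\bigr]$ transfers this to unconditional weak convergence.

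The main obstacle I expect is the rigorous execution of the conditional CLT: because $f_2$ is itself random with $\|f_2\|_{HS} = O_p(n^{-1/2})$, both $\sqrt{n}\,\crosshsic$ and the scale $\sigma_1(f_2)$ shrink at rate $n^{-1/2}$, so I must show that the conditional CLT holds at the correct normalization \emph{uniformly} on a high-probability event over $\datatwo$. I anticipate this requires either a Berry--Esseen-type bound for one-sample U-statistics with explicit dependence on the kernel scale, or a careful conditional third-moment estimate controlling $\E[|g_1(Z;f_2)|^3 \mid \datatwo]/\sigma_1^3(f_2)$. A secondary subtlety is ruling out the degenerate event $\{\sigma_1(f_2) = 0\}$, which amounts to showing that the fourth-order operator $\E[\tilde\phi(X)\otimes\tilde\psi(Y)\otimes\tilde\phi(X)\otimes\tilde\psi(Y)]$ acts non-trivially on $f_2$ with probability tending to one; this can be handled via the eigendecomposition~\eqref{eq:eigendecomposition} of $\tildeg$.
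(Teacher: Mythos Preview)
Your conditional-CLT strategy is natural and is in fact how the paper proves the \emph{stronger} result, \Cref{theorem:null-dist-1}, under \Cref{assump:null-main} (via Jing's Berry--Esseen bound for studentized U-statistics). But for \Cref{theorem:null-dist-0}, with only $\E[\tildeg(Z_1,Z_2)^2]<\infty$ assumed, both remedies you propose for the ``main obstacle'' fail: Lyapunov and Berry--Esseen each require a third-moment bound on $g_1(Z;f_2)$ (or on $\langle h(Z_1,Z_2),f_2\rangle$), and there is no reason this should be finite from $\E[\tildeg^2]<\infty$ alone. You could try Lindeberg instead, but verifying it for the triangular array $g_1(Z;f_2)/\sigma_1(f_2)$---whose law changes with $n$ through the random direction $f_2/\|f_2\|_{HS}$ in the eigen-coordinates---amounts to a uniform $L^2$-integrability statement over that family, and carrying it out would force you into a truncation of the eigendecomposition~\eqref{eq:eigendecomposition} anyway.

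The paper therefore abandons the conditional-CLT route and instead establishes \emph{joint} weak convergence of $\bigl(n\,\crosshsic,\,(n-1)s_n^2\bigr)$ directly. Using~\eqref{eq:eigendecomposition} it expresses both quantities as functionals of $\bigl(n^{-1/2}\sum_{i\le n}e_k(Z_i),\,n^{-1/2}\sum_{t>n}e_k(Z_t)\bigr)_{k\ge 1}$, applies a Serfling-style truncation (reduce to $k\le K$, invoke the finite-dimensional multivariate CLT and LLN, then let $K\to\infty$ controlling the tail uniformly in $n$ via $\sum_k\lambda_k^2<\infty$), and obtains the joint limit $\bigl(\sum_k\lambda_k W_k\widetilde W_k,\;\sum_k\lambda_k^2\widetilde W_k^2\bigr)$ with $W_k,\widetilde W_k$ i.i.d.\ $N(0,1)$. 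The continuous mapping theorem then gives $\cshsic\convdist \sum_k\lambda_k W_k\widetilde W_k\big/\bigl(\sum_k\lambda_k^2\widetilde W_k^2\bigr)^{1/2}$, which is exactly $N(0,1)$ by conditioning on $(\widetilde W_k)_k$. No third moments are ever invoked. You also underestimate the work needed for the studentizer: even after approximating $\xi_i\approx g_1(Z_i;f_2)$, showing that the \emph{sample variance} of these terms converges at the right scale under only second moments is delicate---the paper's argument for $(n-1)s_n^2$ runs several pages, decomposing it into nine U-statistic pieces $J_{(1)},\dots,J_{(9)}$ and bounding each separately.
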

    
    
    The finite second moment assumption on $\tildeg$  is  commonly used in the literature for studying the limiting distribution of HSIC statistic~\citep[e.g.,][]{gretton2007kernel} using the asymptotic theory of degenerate U- or V-statistics~\citep[e.g.,][]{serfling2009approximation}. In \Cref{theorem:null-dist-0}, we show that under the same condition,  our $\cshsic$ attains an asymptotic normal limiting  distribution. However, this type of fixed-asymptotic analysis excludes more dynamic and arguably more interesting settings where $k,\ell, P_{XY}$ may change with the sample size. Therefore we now present a more general condition in Assumption~\ref{assump:null-main} which requires higher moment conditions than the finite-second moment of $\tildeg$. 
          
    \begin{assumption}
        \label{assump:null-main}
        Let $\{P_{XY,n}:n \geq 1\}$ denote a sequence of probability distributions on $\mc{Z} = \mc{X} \times \mc{Y}$, with $P_{XY,n} = P_{X,n} \times P_{Y, n}$, and let $\{K_n = k_n \times  \ell_n: n \geq 1\}$ denote a sequence of  positive definite kernels on $\mc{Z}$. 
        With $Z_{1,n}, Z_{2,n}, Z_{3,n}$ denoting three independent draws from $P_{XY, n}$, we assume that 
        \begin{align}
          &\lim_{n \to \infty} \; \frac{\mathbb{E}[\tildeg(Z_{1,n}, Z_{2,n})^4]n^{-1} + \mathbb{E}[\tildeg(Z_{1,n}, Z_{2,n})^2 \tildeg(Z_{1,n}, Z_{3,n})^2] }{n\mathbb{E}[\tildeg(Z_{1,n}, Z_{2,n})^2]^2} = 0,  \quad \text{and} \\
          &\lim_{n \to \infty}\; \frac{\lambda_{1,n}^2}{\sum_{i=1}^{\infty} \lambda_{i,n}^2} \text{ exists}. 
        \end{align}
        Recall that $\tildeg$ was introduced in~\eqref{eq:centered-inner-product-0}, and $\{\lambda_{i,n}:i \geq 1\}$ in~\eqref{eq:eigendecomposition}. 
    \end{assumption}
    
     We now present the next result which establishes the asymptotic normality of the $\cshsic$ statistic for $n$-varying kernels and distributions. 
    \begin{theorem}
        \label{theorem:null-dist-1}
        Suppose the $\cshsic$ is computed with kernels $\{k_n, \ell_n: n \geq 1\}$, and let $\nullclass$ denote the set of distributions $P_{XY,n} = P_{X,n} \times P_{Y,n}$ satisfying~\Cref{assump:null-main}, for each $n \geq 1$. 
        Then,   the $\cshsic$ statistic computed with kernels $k_n$ and $\ell_n$ converges in distribution to $N(0,1)$ uniformly over $\nullclass$. 
    \end{theorem}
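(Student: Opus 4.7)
The plan is to condition on the second half $\datatwo$, which fixes $f_2$, and to show that $\cshsic\mid\datatwo \convdist N(0,1)$ in probability over $\datatwo$. Since the limit is free of $\datatwo$, unconditional weak convergence follows, and a quantitative version of the same argument yields uniformity over $\nullclass$. Conditionally on $f_2$, the numerator $\sqrt{n}\,\crosshsic=\sqrt{n}\,\langle f_1,f_2\rangle$ is $\sqrt{n}$ times a one-sample U-statistic in $Z_1,\ldots,Z_n$ with kernel $g_{f_2}(z,z')=\langle h(z,z'),f_2\rangle$. Although $\mathbb{E}[h(Z_1,Z_2)]=0$ under the null, the conditional H\'ajek projection
\[
\mathbb{E}[h(z,Z')\mid z] \;=\; \tfrac{1}{2}\,\tilde\phi(x)\otimes\tilde\psi(y)
\]
is nondegenerate, which is the same mechanism from \citet{kim2020dimension} and \citet{shekhar2022permutation} that allows $\cshsic$ to be Gaussian rather than weighted-$\chi^{2}$ in the limit.

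First, I would perform the conditional Hoeffding decomposition
\[
\sqrt{n}\,\crosshsic \;=\; \frac{1}{\sqrt{n}}\sum_{t=1}^{n}L_t(f_2) \;+\; \sqrt{n}\,R_n(f_2),\qquad L_t(f_2)=\langle\tilde\phi(X_t)\otimes\tilde\psi(Y_t),\,f_2\rangle,
\]
where, conditional on $f_2$, the $L_t(f_2)$ are i.i.d.\ mean zero with variance
\[
v_n(f_2) \;=\; \sum_{i\geq 1}\lambda_{i,n}\,\alpha_i(f_2)^{2},\qquad \alpha_i(f_2)=\langle f_2,\,e_{i,n}\rangle,
\]
and $R_n(f_2)$ is a completely degenerate U-statistic. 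Second, I would bound $\sqrt{n}\,R_n(f_2)$ via the standard degree-two degenerate U-statistic variance formula, reducing $\mathbb{E}[\mathrm{Var}(\sqrt{n}R_n\mid f_2)]/\mathbb{E}[v_n(f_2)]$ to a ratio whose numerator involves $n^{-1}\mathbb{E}[\tildeg(Z_{1,n},Z_{2,n})^{4}]$ and $\mathbb{E}[\tildeg(Z_{1,n},Z_{2,n})^{2}\tildeg(Z_{1,n},Z_{3,n})^{2}]$ and whose denominator involves $\mathbb{E}[\tildeg(Z_{1,n},Z_{2,n})^{2}]^{2}$, i.e.\ precisely the quantities in the first part of Assumption~\ref{assump:null-main}. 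Third, I would apply the conditional Lindeberg CLT to $n^{-1/2}\sum_t L_t(f_2)$: the fourth-moment ratio $\mathbb{E}[L_1(f_2)^{4}\mid f_2]/\bigl(n\,v_n(f_2)^{2}\bigr)$ expands in the same eigenbasis and is dominated by the same two moment quantities in Assumption~\ref{assump:null-main}, while the second part of the assumption (existence of $\lim \lambda_{1,n}^{2}/\sum_i\lambda_{i,n}^{2}$) is used to ensure a single limiting regime across the ``single-spike'' case, where $v_n(f_2)\approx\lambda_{1,n}\alpha_1(f_2)^{2}$, and the ``many-component'' case, where $v_n(f_2)$ concentrates.

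Fourth, I would establish $s_n^{2}/v_n(f_2)\convprob 1$ using the alternative form~\eqref{eq: alternative expression of s_n}: applying the H\'ajek expansion to $\sum_{j\neq t}\langle h(Z_t,Z_j),f_2\rangle$ shows the leading contribution of the bracketed sum is $n^{-1}\sum_t L_t(f_2)^{2}$, which concentrates around its conditional mean $v_n(f_2)$ by a second-moment bound again governed by Assumption~\ref{assump:null-main}; the subtracted $n\,\crosshsic^{\,2}$ term is $o_p(v_n(f_2))$ via step two. A conditional Slutsky argument then gives $\cshsic\mid\datatwo\convdist N(0,1)$. For the uniformity claim, I would replace each qualitative $o_p$ by an explicit bound, namely a conditional Berry--Esseen bound in step three and an $L^{2}$-concentration bound in step four, each controlled solely by the two $o(1)$ quantities in Assumption~\ref{assump:null-main}, so that the bounded-Lipschitz distance from $\mathcal{L}(\cshsic)$ to $N(0,1)$ can be bounded uniformly over $\nullclass$.

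The main technical obstacle is the interplay between steps three and four. Because the conditioning variance $v_n(f_2)$ is itself of order $n^{-1}\sum_i \lambda_{i,n}^{2}$ and random, both the conditional Lindeberg condition and the consistency of $s_n^{2}$ require a \emph{lower-tail} control on $v_n(f_2)$ holding uniformly over $\nullclass$. This is especially delicate in the single-spike regime $\lambda_{1,n}^{2}/\sum_i\lambda_{i,n}^{2}\to 1$, where $v_n(f_2)\approx \lambda_{1,n}\alpha_1(f_2)^{2}$ can collapse on the event $\{\alpha_1(f_2)\approx 0\}$; ruling out this collapse uniformly is precisely the reason the existence-of-limit condition on $\lambda_{1,n}^{2}/\sum_i\lambda_{i,n}^{2}$ appears in Assumption~\ref{assump:null-main}.
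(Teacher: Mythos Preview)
Your proposal is correct and shares the same conditioning strategy as the paper, but packages the argument differently. The paper's proof is considerably more direct: having fixed $\datatwo$, it simply invokes the Berry--Esseen theorem for \emph{studentized} U-statistics due to \citet[Theorem~3.1]{jing2000berry}, which reduces the entire result to verifying the single condition
\[
C_n \;=\; \frac{\mathbb{E}\bigl[|\langle h(Z_1,Z_2),f_2\rangle|^{3}\,\big|\,\datatwo\bigr]}{\sqrt{n}\,\sigma_g^{3}} \;\convprob\; 0,
\]
with $\sigma_g^2=\mathbb{E}[\{\mathbb{E}[\langle h(Z_1,Z_2),f_2\rangle\mid Z_2,\datatwo]\}^2\mid\datatwo]$. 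This condition is then bounded via Cauchy--Schwarz by $\sqrt{B_n}$, where $B_n$ is the fourth-to-second-squared moment ratio of $\langle f_2,\tilde a_{11}\rangle$; the paper writes $B_n=B_{1,n}\cdot B_{2,n}$, shows $B_{1,n}\convprob 0$ directly from the first clause of Assumption~\ref{assump:null-main}, and shows $B_{2,n}=\mathcal{O}_P(1)$ by a lower-tail argument on the conditional variance that ultimately appeals to the eigenvalue condition and the argument of \citet{kim2020dimension}. Your steps two through four are thus subsumed in Jing's theorem, and your ``main technical obstacle'' (lower-tail control on $v_n(f_2)$) is precisely the content of the paper's $B_{2,n}=\mathcal{O}_P(1)$ lemma. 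What your route buys is transparency about where each piece of Assumption~\ref{assump:null-main} is spent; what the paper's route buys is brevity and an immediate, explicit rate that makes the uniformity claim automatic.
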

    The proof of this statement is given in~\Cref{proof:null-distribution}, and it proceeds by verifying that under~\Cref{assump:null-main}, the conditions required for the Berry-Esseen theorem for studentized U-statistics derived by~\citet{jing2000berry} are satisfied. 

    \begin{remark}
        Note that we have presented all the results of this section under the assumption that the two splits, $\dataone$ and $\datatwo$, are drawn \iid from the same distribution $P_{XY}$. However, a closer inspection of the proof of these results indicates that the  asymptotic normality of the $\cshsic$ statistic~(and the $\csdcov$ statistic, introduced later in~\Cref{sec:dcov})  holds even when the two splits are only independent, and  not identically distributed. Thus, the techniques developed in this paper can be used to analyze the $\cshsic$ statistic in  more general scenarios; such as when $\dataone$ and $\datatwo$ are obtained by separately processing independent outputs of some common source. 
    \end{remark}

\section{Power of the cross-HSIC Test}
\label{sec:power}
        The results of the previous section establish the limiting standard normal distribution of the $\cshsic$ statistic under very general conditions, which in turn, implies that the cross-HSIC test controls type-I error at the desired level $\alpha$ asymptotically. In this section, we analyze the power of our test under the alternative. In particular, we first consider the case of an arbitrary fixed alternative, and prove the consistency of the cross-HSIC test with characteristic kernels in this case. Then,  we consider the case of smooth local alternatives, and show that the cross-HSIC test with Gaussian kernels has minimax rate-optimal power. The proof of both these results can be inferred from a more general result~(\Cref{theorem:general-consistency}) that establishes sufficient conditions for the consistency of our cross-HSIC  test, which we state and prove in~\Cref{appendix:general-consistency}.

 \paragraph{Consistency against fixed alternatives.}
    First, we show that under some mild moment assumptions on the kernels,  our cross-HSIC test is consistent against an arbitrary fixed alternative. That is, for some joint distribution $P_{XY}$ such that $P_{XY} \neq P_X \times P_Y$,  the cross-HSIC test with fixed characteristic kernels $k$ and $\ell$ satisfying some moment conditions is consistent. 
    
    \begin{theorem}
        \label{prop:fixed-alternative} 
        Let $P_{XY}$ be a distribution such that $\hsic(P_{XY}, \calK, \calL)>0$, where $\calK$ and $\calL$ denote the RKHS associated with characteristic kernels $k$ and $\ell$. Then, if $\mathbb{E}[k(X,X)\ell(Y,Y)] + \mathbb{E}[k(X,X)]\mathbb{E}[\ell(Y,Y)]<\infty$, the cross-HSIC test is consistent; that is, $\lim_{n \to \infty} \mathbb{P}_{P_{XY}}(\Psi = 1) = 1$. 
    \end{theorem}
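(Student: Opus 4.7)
The plan is to show that, under the alternative $P_{XY} \neq P_X \times P_Y$, the numerator $\sqrt{n}\,\crosshsic$ diverges in probability while the denominator $s_n$ stays stochastically bounded; since the rejection threshold $z_{1-\alpha}$ is a fixed constant, this immediately yields $\Pr(\Psi = 1) \to 1$. For the numerator, since $k$ and $\ell$ are characteristic, $\hsic(P_{XY}, k, \ell) = \|\omega - \mu \times \nu\|^2_{k \times \ell}$ is strictly positive. Writing $C \defined \omega - \mu \times \nu$ and noting that $f_1, f_2$ are independent U-statistics with common mean $C$, I would use the decomposition
\begin{align*}
\crosshsic - \|C\|^2 = \langle f_1 - C, f_2 - C\rangle + \langle C, f_1 - C\rangle + \langle C, f_2 - C\rangle
\end{align*}
and invoke the Hilbert-space law of large numbers for U-statistics to conclude $\|f_j - C\| = o_p(1)$ for $j = 1, 2$. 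The moment condition $\E[k(X,X)\ell(Y,Y)] + \E[k(X,X)]\E[\ell(Y,Y)] < \infty$ guarantees $\E\|h_{12}\|^2_{k \times \ell} < \infty$, which is precisely what this step needs. By Cauchy--Schwarz on each of the three inner products, $\crosshsic \convprob \hsic(P_{XY}, k, \ell) > 0$, and thus $\sqrt{n}\,\crosshsic \to +\infty$ in probability at the parametric rate.

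For the denominator, I would use the alternative form~\eqref{eq: alternative expression of s_n}. Setting $g_i \defined \frac{1}{n-1}\sum_{j \neq i,\, j \leq n} h_{ij}$, this reads
\begin{align*}
s_n^2 = \frac{4(n-1)}{(n-2)^2}\left[\sum_{i=1}^n \langle g_i, f_2\rangle^2 - n\,\crosshsic^2\right],
\end{align*}
whose leading factor is $(4/n)(1+o(1))$. Crucially, $\{g_i\}_{i=1}^n$ depends only on $\dataone$ while $f_2$ depends only on $\datatwo$, so conditioning on $\datatwo$ reduces $\langle g_i, f_2 \rangle$ to a scalar-valued first-order empirical process in $Z_1, \ldots, Z_n$. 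A standard Hoeffding-projection argument then yields $\frac{1}{n}\sum_i \langle g_i, f_2\rangle^2 \convprob \E\bigl[\langle \E[h(Z, Z') \mid Z], C\rangle^2\bigr]$, where $Z, Z'$ are independent draws from $P_{XY}$. Combined with the first step, this gives $s_n^2 \convprob 4\,\Var\bigl(\langle \E[h(Z, Z')\mid Z], C\rangle\bigr)$, a finite deterministic constant under the stated moment hypothesis.

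Combining both pieces via Slutsky's theorem, $\cshsic = \sqrt{n}\,\crosshsic/s_n$ diverges to $+\infty$ in probability, so $\Pr(\cshsic > z_{1-\alpha}) \to 1$. The main obstacle is the analysis of $s_n^2$: although conditioning on $\datatwo$ turns $\sum_i \langle g_i, f_2\rangle^2$ into a tractable one-sample empirical average, careful moment bookkeeping is needed under the relatively weak hypothesis, and the degenerate possibility that the limiting variance in the second step vanishes must be addressed separately (though such a case only makes $s_n^2 = o_p(1)$, which amplifies rather than impedes divergence of the ratio). A cleaner alternative is to verify the hypotheses of the general consistency result~\Cref{theorem:general-consistency} in the appendix, which is tailored precisely to this fixed-alternative regime with characteristic kernels satisfying the stated moment condition.
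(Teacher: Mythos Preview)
Your proposal is correct, and your closing remark already anticipates the paper's actual route: the paper proves this theorem entirely by verifying the hypotheses of the general consistency result (\Cref{theorem:general-consistency}), reducing everything to the single moment bound $\mathbb{E}[\|h_{12}\|^2]<\infty$ (their \Cref{lemma:fixed-alternative-1}), which is exactly the condition $\mathbb{E}[k(X,X)\ell(Y,Y)]+\mathbb{E}[k(X,X)]\mathbb{E}[\ell(Y,Y)]<\infty$.

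Your primary (direct) approach differs in one substantive way. For the denominator you aim to show $s_n^2 \convprob 4\,\Var\bigl(\langle \mathbb{E}[h(Z,Z')\mid Z], C\rangle\bigr)$, a precise limit. The paper never does this: inside \Cref{theorem:general-consistency} it only bounds $\mathbb{E}[s_n^2]$ (their \Cref{lemma:general-consistency-1}) and uses Markov's inequality to get $s_n^2 \leq \mathbb{E}[s_n^2]/\delta_n$ with high probability. This is strictly less work---no law of large numbers for the dependent array $\{\langle g_i, f_2\rangle^2\}_{i=1}^n$, no need to pin down the limiting constant, and no special handling of the degenerate-variance case you flag. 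Since consistency only requires $s_n = O_P(1)$, the expectation bound is the more economical device. Your numerator analysis, on the other hand, is essentially the content of the paper's \Cref{lemma:general-consistency-2} specialized to fixed $\gamma$, so the two arguments coincide there.
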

    The details of obtaining this result from the more general result of~\Cref{theorem:general-consistency} mentioned above are given in~\Cref{proof:fixed-alternative}.  We now study the case of local alternatives, where the alternative distributions are allowed to change with $n$.

   \paragraph{Consistency against smooth local alternatives.}
    In our next result, we show that when constructed using Gaussian kernels, the cross-HSIC test is also minimax rate optimal against local alternatives with smooth density functions that are separated in $L^2$ sense. 
    In particular, we set $\mc{X} = \reals^{d_1}$ and $\mc{Y} = \reals^{d_2}$, with $d\defined d_1 + d_2$, and assume that the distributions $P_{XY}$, $P_X$ and $P_Y$ have smooth marginals $p_{XY}$, $p_X$ and $p_Y$ respectively. Furthermore, we assume that there exist constants $M, M_X$ and $M_Y$ such that $p_X \in \Sobolev(M_X)$, $p_Y \in \Sobolev(M_Y)$ and $M = M_X \times M_Y$. Here, $\Sobolev(M)$ denotes the ball with radius $M$ in the fractional Sobolev space of order $\beta>0$. We can then define the null class of distributions, $\nullclass$,  and the $\Delta_n$-separated alternative class of distributions, $\altclass$, as follows for all $n \geq 1$
    \begin{align}
        \nullclass &= \{p_{XY} = p_X \times p_Y: p_X \in \Sobolev(M_X), \; p_Y \in \Sobolev(M_Y)\}, \quad \text{and} \label{eq:def-null-class}\\
        \altclass &= \{p_{XY}:  p_X \in \Sobolev(M_X), \; p_Y \in \Sobolev(M_Y), \text{ and } \|p - p_X \times p_Y\|_{L_2} \geq \Delta_n\}. \label{eq:def-alt-class}
    \end{align}
    For the independence testing problem described above, we will show that the cross-HSIC test, when instantiated using Gaussian kernels with appropriate scale factors, is minimax near-optimal. In particular, we define $k_n(x, x') = \exp \lp -c_n \|x-x'\|^2 \rp$ and $\ell_n(y, y') = \exp \lp -c_n \|y-y'\|^2 \rp$, where we have overloaded the term $\|\cdot\|$ to represent the Euclidean norm on both $\mc{X}$ and $\mc{Y}$. 
    \begin{theorem}
        \label{theorem:local-alternative}
        Suppose $\mc{X} = \reals^{d_1}$ and $\mc{Y}=\reals^{d_2}$ with $d=d_1+d_2$, and let $\{\Delta_n: n \geq 1\}$ is a non-negative sequence with $\lim_{n \to \infty} \Delta_n n^{2\beta/(d+4\beta)} = \infty$. Suppose the cross-HSIC test is instantiated with Gaussian kernels, $k_n(x,x') = \exp \lp - c_n \|x-x'\|^2\rp$ and $\ell_n(y, y') = \exp \lp -c_n \|y-y'\|^2 \rp$, with $c_n \asymp n^{4/(d+4\beta)}$. Then, we have the following: 
        \begin{align}
            \lim_{n \to \infty} \sup_{P_{XY,n} \in \nullclass} \mathbb{E}[\Psi] = \alpha, \quad \text{and} \quad \lim_{n \to \infty} \inf_{P_{XY,n} \in \altclass} \mathbb{E}[\Psi] =1.
        \end{align}
    \end{theorem}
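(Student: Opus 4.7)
The plan is to deduce both parts from machinery developed earlier: the type-I claim from \Cref{theorem:null-dist-1}, and the power claim from an appropriate specialization of \Cref{theorem:general-consistency}. For the type-I error, I would verify that Assumption \ref{assump:null-main} holds uniformly over $\nullclass$ for Gaussian kernels with $c_n \asymp n^{4/(d+4\beta)}$, and then invoke \Cref{theorem:null-dist-1}. Because $0 \le k_n, \ell_n \le 1$, the centered kernels $\tildek_n, \tildel_n$ and hence $\tildeg$ are uniformly bounded, so $\mathbb{E}[\tildeg^4]$ is of constant order, and the main work is (a) a lower bound $\mathbb{E}[\tildeg(Z_{1,n}, Z_{2,n})^2] \asymp c_n^{-d/2}$ and (b) an upper bound on the mixed moment $\mathbb{E}[\tildeg(Z_{1,n}, Z_{2,n})^2\,\tildeg(Z_{1,n}, Z_{3,n})^2] \lesssim c_n^{-d}$, both obtained by a Fourier/Plancherel computation using the Sobolev smoothness of the marginals. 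Combined with (c) the eigenvalue limit $\lambda_{1,n}^2/\sum_i \lambda_{i,n}^2 \to 0$ from the exponential decay of the Gaussian integral operator spectrum, this reduces the ratio condition in Assumption \ref{assump:null-main} to $c_n^{-d/2}/n = n^{-1-2d/(d+4\beta)} \to 0$, which holds.

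For the power claim, invoking \Cref{theorem:general-consistency} reduces consistency to establishing $\sqrt{n}\,\hsic(P_{XY,n}, k_n, \ell_n)/\sigma_n \to \infty$ uniformly over $\altclass$, where $\sigma_n$ is the population standard deviation that $s_n$ tracks. Two quantitative bounds are needed. First, a Plancherel-based lower bound on HSIC: writing the Gaussian kernel in Fourier coordinates gives
\begin{align*}
    \hsic(P_{XY,n}, k_n, \ell_n) = (\pi/c_n)^{d/2}\!\int \abs*{\widehat{q}_n(\xi_1,\xi_2)}^2 \, e^{-(\|\xi_1\|^2+\|\xi_2\|^2)/(4c_n)}\, d\xi_1\, d\xi_2,
\end{align*}
where $q_n \defined p_{XY,n} - p_{X,n}\times p_{Y,n}$. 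Using $1 - e^{-u} \le \min(1,u)$ together with the Sobolev constraint on the marginals to bound the smoothing bias by $O(c_n^{-\beta})$, one obtains $\hsic \gtrsim c_n^{-d/2}(\Delta_n^2 - C\,c_n^{-\beta})$. Second, a direct moment calculation with Gaussian kernels shows that the non-degenerate Hoeffding variance of $\crosshsic$ under the alternative satisfies $\sigma_n^2 \lesssim c_n^{-d/2}\,\hsic$, with a subdominant degenerate remainder of order $c_n^{-d}/n$. Taking the ratio,
\begin{align*}
    \frac{\sqrt{n}\,\hsic(P_{XY,n}, k_n, \ell_n)}{\sigma_n} \;\gtrsim\; \sqrt{n\, c_n^{d/2}\, \hsic} \;\asymp\; \sqrt{n}\,\Delta_n \;=\; \bigl(\Delta_n n^{2\beta/(d+4\beta)}\bigr) \cdot n^{d/(2(d+4\beta))},
\end{align*}
which diverges under $\Delta_n n^{2\beta/(d+4\beta)} \to \infty$, since this hypothesis also guarantees $\Delta_n^2 \gg n^{-4\beta/(d+4\beta)} = c_n^{-\beta}$ so that the bias is negligible.

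The main obstacle is the sharp bias--variance analysis underlying these two bounds. On one hand, the HSIC lower bound must track the smoothing bias carefully, given that only the marginals (not the joint $p_{XY,n}$) are assumed smooth. On the other hand, identifying the correct non-degenerate Hoeffding variance of $\crosshsic$ in the $n$-varying Gaussian-kernel regime is delicate, because the null-regime calculations of \Cref{sec:null-distribution} operate in the degenerate setting and do not transfer directly once $f^* = \omega - \mu \otimes \nu \neq 0$. A secondary subtlety is showing that $s_n/\sigma_n \to 1$ uniformly over $\altclass$; this requires a fourth-moment estimate on the jackknife pseudo-values $(n-1)^{-1} \sum_{j\neq i}\langle h(Z_i, Z_j), f_2\rangle$ under changing distributions, which goes beyond the moment bounds used for the null.
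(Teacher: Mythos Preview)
Your overall strategy for both claims matches the paper's—verify \Cref{assump:null-main} for the type-I part and invoke \Cref{theorem:general-consistency} for the power part—but there is one genuine gap and one place where you make life harder than necessary.

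\textbf{Type-I error.} Using only boundedness to conclude $\mathbb{E}[\tildeg^4]=O(1)$ is too crude. With $\mathbb{E}[\tildeg^2]^2\asymp c_n^{-d}$, the first fraction in \Cref{assump:null-main} becomes $c_n^{d}/n^2 = n^{(2d-8\beta)/(d+4\beta)}$, which fails to vanish whenever $d\ge 4\beta$. (Your displayed reduction to ``$c_n^{-d/2}/n$'' does not follow from the bounds you list.) The fix, which the paper imports from \citet{li2019optimality}, is the sharp estimate $\mathbb{E}[\tildeg^4]\asymp c_n^{-d/2}$, obtained from the same Gaussian-concentration calculation that gives the second moment: for large $c_n$ the centered kernel is dominated by the uncentered one, and $\mathbb{E}[e^{-4c_n\|X_1-X_2\|^2}]\asymp c_n^{-d_1/2}$. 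With this, the ratio becomes $c_n^{d/2}/n^2=n^{-8\beta/(d+4\beta)}\to 0$ for all $d$ and $\beta>0$. Your mixed-moment bound ($\lesssim c_n^{-d}$) is finer than the paper's $c_n^{-3d/4}$; either suffices.

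\textbf{Power.} Your reformulation ``\Cref{theorem:general-consistency} reduces consistency to $\sqrt n\,\gamma_n^2/\sigma_n\to\infty$ plus $s_n/\sigma_n\to 1$'' is not what that theorem actually says, and the second ingredient is superfluous. The paper instead verifies the explicit moment condition~\eqref{eq:general-consistency} directly: the proof of \Cref{theorem:general-consistency} controls $\mathbb{E}[s_n^2]$ by Markov's inequality and bounds $\var(\crosshsic)$ in terms of $\mathbb{E}[\langle\tildeh_{12},\tildeh_{34}\rangle^2]$, $\mathbb{E}[\langle\tildeh_{12},\tildeh_{12}\rangle]$, $\mathbb{E}[\langle\tildeh_{12},\tildeh_{13}\rangle]$ and $\gamma_n^2$, so ratio consistency of $s_n$ is never needed. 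Plugging the Gaussian-kernel estimates—your Plancherel HSIC lower bound $\gamma_n^2\gtrsim c_n^{-d/2}\Delta_n^2$ together with the analogous bounds on these inner products, all available from \citet{li2019optimality}—into~\eqref{eq:general-consistency} yields $D_n\delta_n\lesssim (n\Delta_n^2)^{-1}$, which vanishes under the hypothesis on $\Delta_n$. Your variance heuristic $\sigma_n^2\lesssim c_n^{-d/2}\,\hsic$ and the bias analysis are on the right track, but the paper's route sidesteps the ``secondary subtlety'' you flag entirely.
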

    The proof of this statement is given in~\Cref{proof:local-alternative}. The first part of this result implies that the cross-HSIC test controls the type-I error at the specified level $\alpha \in (0,1)$ uniformly over the entire class of null distributions with smooth densities, $\nullclass$. The second part of the result implies that our cross-HSIC test has a detection boundary of the order $\mc{O}\lp n^{-2\beta/(d+4\beta)}\rp$ in terms of the $L^2$-distance. As shown by  \citet{li2019optimality}, this rate cannot  be improved in the worst case, thus establishing the minimax rate-optimality of  our test. More formally, \citet{li2019optimality} showed that if $\lim_{n \to \infty} \Delta_n n^{2\beta/(d+4\beta)} <\infty$, then there exists an $\alpha \in (0,1)$ for which there exists no independence test that is consistent against such local alternatives (separated by $\Delta_n$).
   
\section{The Cross Distance Covariance Test (xdCov)}
    \label{sec:dcov}
    A popular alternative to kernel based method for measuring the dependence between two distributions is the distance-covariance metric introduced by~\citet{szekely2007measuring}. When, $\mc{X} = \reals^{d_1}$ and $\mc{Y} = \reals^{d_2}$ for $d_1, d_2 \geq 1$, the distance-covariance associated with the joint distribution $P_{XY}$ is 
    \begin{align}
        \dcov(P_{XY}) &= \mathbb{E}_{X,X', Y, Y'}[\|X-X'\| \|Y-Y'\|] + \mathbb{E}_{X,X'}[\|X-X'\|] \mathbb{E}_{Y, Y'} \|Y-Y'\|] \\
        & \quad - 2 \mathbb{E}_{X,Y}[ \mathbb{E}_{X'}[\|X-X'\|] \mathbb{E}_{Y'}[\|Y-Y'\|] ]. 
        \label{eq:dcov-1}
    \end{align}
    In the above display, we overload the notation $\|\cdot\|$ to represent the Euclidean norm on both $\mc{X}$ and $\mc{Y}$. The distance-covariance metric has the property that it is equal to $0$ if and only if $X$ and $Y$ are independent; that is, $P_{XY} = P_X \times P_Y$. This measure was extended beyond Euclidean spaces to general metric spaces by~\citet{lyons2013distance}, and further generalized to semi-metric spaces~(i.e., distance-measures that don't satisfy the triangle inequality) by~\citet{sejdinovic2013equivalence}. In particular, if $(\mc{X}, \distX)$ and $(\mc{Y}, \distY)$ are semi-metric spaces, we can define the corresponding distance-covariance in a manner analogous to~\eqref{eq:dcov-1}, as 
    \begin{align}
        \dcov(P_{XY}, \distX, \distY)  &= \mathbb{E}_{X,X', Y, Y'}[\distX(X, X') \distY(Y, Y')] + \mathbb{E}_{X,X'}[\distX(X, X')] \mathbb{E}_{Y, Y'}[\distY(Y, Y')] \\
        & \quad - 2 \mathbb{E}_{X,Y}[ \mathbb{E}_{X'}[\distX(X, X')] \mathbb{E}_{Y'}[\distY(Y, Y')] ]. \label{eq:dcov-2}
    \end{align}
    \citet{sejdinovic2013equivalence} showed an equivalence between the above distance-covariance metric and the $\hsic$ computed with  with the so-called \emph{distance kernels} that we recall next. 
    \begin{fact}
    \label{fact:dcov-hsic-equivalence}
    Define the distance kernels $\dkernelX$ and $\dkernelY$  as 
    \begin{align}
        \dkernelX(x, x') \defined \frac{1}{2} \lp \distX(x, x_0) + \distX(x', x_0) - 2\distX(x, x') \rp,  \quad \text{and} \quad 
        \dkernelY(y, y') \defined \frac{1}{2} \lp \distY(y, y_0) + \distY(y', y_0) - 2\distY(y, y') \rp,  \label{eq:distance-kernel-Y}
    \end{align}
    where $x_0$ and $y_0$ are arbitrary elements of $\mc{X}$ and $\mc{Y}$ respectively. Then, we have $\dcov(P_{XY}, \distX, \distY) = \hsic(P_{XY}, \dkernelX, \dkernelY)$. 
    \end{fact}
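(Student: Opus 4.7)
The plan is to exploit a structural invariance of the functional defining $\hsic$ in~\eqref{eq:population-hsic-1}: replacing a kernel $k(x,x')$ by $k(x,x') + g(x) + g(x')$ for any integrable function $g$ leaves the value of $\hsic$ unchanged (and symmetrically in $\ell$). Since the distance kernels decompose as $\dkernelX(x,x') = -\distX(x,x') + \tfrac{1}{2}\distX(x,x_0) + \tfrac{1}{2}\distX(x',x_0)$, with the analogous decomposition for $\dkernelY$, the anchor-dependent parts are exactly of this additive form. The invariance will therefore reduce the HSIC with distance kernels to an expression in which the anchors $x_0, y_0$ no longer appear.

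First, I would verify the invariance by direct expansion of~\eqref{eq:population-hsic-1} with $k'(x,x') = k(x,x') + g(x) + g(x')$. The new contributions to the three terms on the right-hand side are, respectively, $2\mathbb{E}[g(X)\ell(Y,Y')]$ (from the first term, using that $(X,Y)$ and $(X',Y')$ are i.i.d.), $2\mathbb{E}[g(X)]\mathbb{E}[\ell(Y,Y')]$ (from the product of marginal means), and $-2\mathbb{E}[g(X)\nu(Y)] - 2\mathbb{E}[g(X)]\mathbb{E}[\ell(Y,Y')]$ (from the centered cross term, where $\nu(Y) = \mathbb{E}_{Y'}[\ell(Y,Y')]$). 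The two $\mathbb{E}[g(X)]\mathbb{E}[\ell(Y,Y')]$ contributions cancel, and the remaining $\mathbb{E}[g(X)\ell(Y,Y')]$ and $\mathbb{E}[g(X)\nu(Y)]$ cancel as well, because the independence of $Y'$ from $(X,Y)$ together with the tower property gives $\mathbb{E}[g(X)\ell(Y,Y')] = \mathbb{E}[g(X)\nu(Y)]$.

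Applying this invariance in both kernel arguments yields $\hsic(P_{XY}, \dkernelX, \dkernelY) = \hsic(P_{XY}, -\distX, -\distY)$, and since the right-hand side of~\eqref{eq:population-hsic-1} is bilinear in the kernel pair, the two sign flips combine to give $\hsic(P_{XY}, \distX, \distY)$. Expanding~\eqref{eq:population-hsic-1} with $k=\distX$ and $\ell=\distY$ produces exactly the three terms appearing in the definition~\eqref{eq:dcov-2} of $\dcov(P_{XY}, \distX, \distY)$, completing the identification.

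The main (and essentially only) subtlety is that~\eqref{eq:population-hsic-1} was originally derived for positive definite kernels, whereas the intermediate functions $\pm \distX$ and $\pm \distY$ need not be positive definite. This is harmless because the right-hand side of~\eqref{eq:population-hsic-1} is a formal integral identity well-defined for any symmetric measurable functions whose relevant expectations exist, which holds under mild integrability of $\distX, \distY$ under $P_{XY}$; all manipulations above are purely algebraic and do not invoke the operator-theoretic interpretation. A pleasant byproduct is that the arbitrary anchors $x_0, y_0$ fall out automatically via the invariance step, as they must for the claimed identity to hold.
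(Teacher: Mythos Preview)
Your argument is correct. Note, however, that the paper does not actually prove \Cref{fact:dcov-hsic-equivalence}: it is stated as a fact ``recalled'' from \citet{sejdinovic2013equivalence}, with no proof given in the paper itself. Your proposal therefore supplies a self-contained proof where the paper simply cites prior work. The invariance-of-HSIC-under-additive-perturbations route you take is essentially the same mechanism underlying the original equivalence result, packaged cleanly: the additive terms $g(x)+g(x')$ are precisely what the centering in the cross-covariance operator annihilates, and your direct verification via~\eqref{eq:population-hsic-1} makes this transparent without appealing to the operator formulation. The observation that~\eqref{eq:population-hsic-1} can be read as a bilinear form in arbitrary symmetric integrable functions (not just positive-definite kernels) is exactly the right way to handle the intermediate appearance of $\pm\distX,\pm\distY$.
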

    
    Using this equivalence, we can define a cross-distance-covariance statistic similar to the cross-HSIC statistic of~\Cref{sec:cHSIC} by using sample-splitting and studentization. 
    \begin{definition}
    \label{def:cross-dcov} 
        Given observations $\data$ drawn from a distribution $P_{XY}$, we can define the cross-distance-covariance statistic, denoted by $\crossdcov$, as follows: 
        \begin{align}
            &\crossdcov = \la \frac{1}{n(n-1)} \sum_{i=1}^n \sum_{\substack{j=1\\j\neq i}}^n h_{ij},\; \frac{1}{n(n-1)} \sum_{t=n+1}^{2n} \sum_{\substack{u=n+1\\u\neq t}}^{2n} h_{tu} \ra \quad \text{where} \\ 
            &2h_{ij} = a_{ii} + a_{jj} - a_{ij} - a_{ji}, \quad \text{and} \quad a_{ij} = \dkernelX(X_i, \cdot) \dkernelY(Y_j, \cdot). 
        \end{align}
        We can then define a studentized version of cross-distance-covariance statistic, denoted by $\csdcov$, by normalizing $\crossdcov$ with $s_n/\sqrt{n}$, with $s_n$ defined similar to~\eqref{eq:sn2-def}. 
    \end{definition}

    Having defined the studentized cross-dCov statistic~($\csdcov$), we can now characterize its limiting null distribution by exploiting its equivalence to the studentized cross-HSIC statitic, and using the results derived in~\Cref{sec:null-distribution}. 
    \begin{corollary}
        \label{corollary:dcov-null}
        For a fixed distribution $P_{XY} = P_X \times P_Y$, and a fixed pair of points $(x_0, y_0) \in \mc{X} \times \mc{Y}$, if $\mathbb{E}[\rho_{\mc{X}}(X, x_0)^2]\mathbb{E}[\rho_{\mc{Y}}(Y, y_0)^2]<\infty$, then $\csdcov \convdist N(0,1)$. 

        More generally, suppose the sequence of distributions $\{P_{XY,n}: n \geq 1\}$ and the distance kernels $\{(k_{\mc{X}, n}, \ell_{\mc{Y},n}): n \geq 1\}$ and  satisfy \Cref{assump:null-main}. Then, the statistic $\csdcov$  converges in distribution to $N(0,1)$ uniformly over the composite class of null distributions satisfying~\Cref{assump:null-main}. 
    \end{corollary}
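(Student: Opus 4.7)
The plan is to derive \Cref{corollary:dcov-null} as a direct consequence of \Cref{theorem:null-dist-0} and \Cref{theorem:null-dist-1} via the kernel--distance equivalence (\Cref{fact:dcov-hsic-equivalence}). The first step is to identify the studentized cross-dCov statistic with a studentized cross-HSIC statistic. Taking feature maps $\phi(x) = \dkernelX(x, \cdot)$ and $\psi(y) = \dkernelY(y, \cdot)$, the expansion
\begin{align*}
  \tfrac{1}{2}\{\phi(X_i) - \phi(X_j)\}\otimes\{\psi(Y_i) - \psi(Y_j)\} = \tfrac{1}{2}(a_{ii} - a_{ij} - a_{ji} + a_{jj})
\end{align*}
with $a_{ij} = \dkernelX(X_i,\cdot)\dkernelY(Y_j,\cdot)$ shows that the $h_{ij}$ in \Cref{def:cross-dcov} agrees with the one used in $\crosshsic$ when $k = \dkernelX$ and $\ell = \dkernelY$. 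Since the jackknife-style variance $s_n$ is defined identically in both cases, we obtain the pointwise identity $\csdcov = \cshsic$ whenever the latter is instantiated with the distance kernels.

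Given this identification, the uniform statement of the corollary is immediate: because \Cref{assump:null-main} is directly assumed for the sequence $(P_{XY,n}, k_{\mc{X},n}, \ell_{\mc{Y},n})$, \Cref{theorem:null-dist-1} applies verbatim and yields uniform convergence to $N(0,1)$ over the composite null class.

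For the fixed-distribution statement it remains to verify the hypothesis of \Cref{theorem:null-dist-0}, namely $\mathbb{E}[\tildeg(Z_1, Z_2)^2] < \infty$ with $\tildeg$ built from the distance kernels. Since $\tildeg$ factorizes as a product of double-centered inner products and $X \perp Y$ under the null, independence of $(X_1, X_2)$ from $(Y_1, Y_2)$ gives
\begin{align*}
  \mathbb{E}[\tildeg(Z_1, Z_2)^2] = \mathbb{E}\bigl[\langle \tildek(X_1, \cdot), \tildek(X_2, \cdot)\rangle^2\bigr] \cdot \mathbb{E}\bigl[\langle \tildel(Y_1, \cdot), \tildel(Y_2, \cdot)\rangle^2\bigr].
\end{align*}
For distance kernels each double-centered inner product reduces to an affine combination of $\distX(X_1, X_2)$ and its marginal expectations, with the reference-point dependence cancelling out. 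A bound of the form $\distX(X_1, X_2) \leq C(\distX(X_1, x_0) + \distX(X_2, x_0))$, which holds for true metrics and, via the identity $\distX = \|\varphi(\cdot) - \varphi(\cdot)\|^2$ together with AM--GM, extends to the negative-type semi-metrics for which $\dkernelX$ is positive definite, then controls each factor by a constant multiple of $\mathbb{E}[\distX(X, x_0)^2]^2$ and similarly for the $Y$-factor. The assumed moment bound therefore suffices.

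The main obstacle is cosmetic: the two main theorems of \Cref{sec:null-distribution} do essentially all the work, and the only new ingredient is the moment bound on $\tildeg$. This is routine for genuine metrics via the triangle inequality, but requires a slightly more careful step for general semi-metrics of negative type, where one invokes the isometric Hilbert-space embedding implicit in the \citet{sejdinovic2013equivalence} framework underlying \Cref{fact:dcov-hsic-equivalence} to recover the required quasi-triangle inequality.
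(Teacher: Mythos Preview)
Your proposal is correct and follows essentially the same route as the paper: the corollary is stated as a direct consequence of the equivalence in \Cref{fact:dcov-hsic-equivalence}, which identifies $\csdcov$ with $\cshsic$ instantiated with the distance kernels, after which \Cref{theorem:null-dist-0} and \Cref{theorem:null-dist-1} apply verbatim. The paper does not give a standalone proof beyond this remark; the only explicit computation appears in a remark at the start of Appendix~C, where the moment condition $\mathbb{E}[\tildeg(Z_1,Z_2)^2]<\infty$ is checked for the Euclidean distance kernel by writing $\tildeg$ as a product of double-centered distances and applying Jensen to obtain $\mathbb{E}[\tildeg^2(Z_1,Z_2)]\lesssim \mathbb{E}[\distX(X,X')^2]\,\mathbb{E}[\distY(Y,Y')^2]$.

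Your treatment is in fact a bit more complete than the paper's: the remark stops at the condition $\mathbb{E}[\distX(X,X')^2]<\infty$, whereas the corollary is phrased in terms of $\mathbb{E}[\distX(X,x_0)^2]$, and the paper does not spell out the bridge. Your use of the Hilbert-space embedding $\distX(x,x')=\|\varphi(x)-\varphi(x')\|^2$ for negative-type semi-metrics to obtain the quasi-triangle bound $\distX(X,X')\le 2\distX(X,x_0)+2\distX(X',x_0)$ is exactly the right way to close this gap in the general (non-Euclidean) setting, and it also makes precise why the reference point $x_0$ is immaterial.
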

    
    The above asymptotic normality of the $\csdcov$ statistic under the null suggests the definition of an independence test~($\testdist$), based on the cross-distance-covariance statistic that rejects the null when $\csdcov$ exceeds $z_{1-\alpha}$, the $1-\alpha$ quantile of the standard normal distribution. That is, $\testdist = \bone_{\csdcov>z_{1-\alpha}}$. Using the analogous results for the $\cshsic$ statistic, we can obtain sufficient conditions for $\testdist$ to be consistent. 
    
    \begin{corollary}
        \label{corollary:dcov-consistency}
        For a fixed alternative distribution, $P_{XY} \neq P_X \times P_Y$, a sufficient condition for consistency of $\testdist$ is that $\mathbb{E}[\distX(X,x_0)^2\distY(Y,y_0)^2] + \mathbb{E}[\distX(X,x_0)^2]\,\mathbb{E}[\distY(Y,y_0)^2] <\infty$. 

        More generally, for a sequence of local alternatives  $\{P_{XY,n}: n \geq 1\}$, if  the distance kernels $\dkernelX$ and $\dkernelY$ satisfy \Cref{theorem:general-consistency} in~\Cref{appendix:general-consistency}, the test $\testdist$ is consistent. 
    \end{corollary}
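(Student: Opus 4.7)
The plan is to reduce both claims to the corresponding statements for $\cshsic$ via the kernel/distance equivalence in Fact~\ref{fact:dcov-hsic-equivalence}. Comparing Definition~\ref{def:cross-dcov} with the earlier definition \eqref{eq:cross-hsic-def1}, one sees that $\csdcov$ is precisely $\cshsic$ instantiated with $(k,\ell) = (\dkernelX, \dkernelY)$, because the building blocks $h_{ij} = \frac{1}{2}(a_{ii}+a_{jj}-a_{ij}-a_{ji})$ in the two definitions agree once one identifies $\phi$ with $\dkernelX(\cdot,\cdot)$ and $\psi$ with $\dkernelY(\cdot,\cdot)$. Consequently $\testdist$ coincides with the cross-HSIC test of \eqref{eq:cross-HSIC-test} applied to the distance kernels, and it suffices to verify the hypotheses of Theorem~\ref{prop:fixed-alternative} (for the first claim) and of Theorem~\ref{theorem:general-consistency} (for the second claim) for these kernels.

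For the fixed-alternative case, I would check two ingredients: that $\hsic(P_{XY},\dkernelX,\dkernelY) > 0$ with $\dkernelX,\dkernelY$ characteristic, and that the moment bound $\mathbb{E}[\dkernelX(X,X)\dkernelY(Y,Y)] + \mathbb{E}[\dkernelX(X,X)]\,\mathbb{E}[\dkernelY(Y,Y)] < \infty$ holds. The first is inherited from distance-based theory: by Fact~\ref{fact:dcov-hsic-equivalence}, $\hsic(P_{XY},\dkernelX,\dkernelY) = \dcov(P_{XY},\distX,\distY)$, which is strictly positive whenever $P_{XY} \neq P_X \times P_Y$ provided the semi-metrics are of strong negative type (so that $\dkernelX,\dkernelY$ are characteristic, as shown by~\citet{sejdinovic2013equivalence}). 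For the moment bound, I would use $\distX(x,x) = 0$ and $\distY(y,y) = 0$ to simplify $\dkernelX(x,x) = \distX(x,x_0)$ and $\dkernelY(y,y) = \distY(y,y_0)$; the required inequality then reads $\mathbb{E}[\distX(X,x_0)\distY(Y,y_0)] + \mathbb{E}[\distX(X,x_0)]\,\mathbb{E}[\distY(Y,y_0)] < \infty$, which follows from the assumed squared-moment hypothesis by Cauchy--Schwarz (on the joint term) and Jensen's inequality (on the marginal terms). Theorem~\ref{prop:fixed-alternative} then yields consistency.

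The second part is immediate: since $\csdcov = \cshsic$ computed with $(\dkernelX,\dkernelY)$, the assumed fact that these distance kernels satisfy the hypotheses of Theorem~\ref{theorem:general-consistency} delivers consistency of $\testdist$ by direct invocation of that theorem. The only genuine subtlety in the whole argument is structural rather than analytic: the distance kernels are characteristic, and the equivalence $\dcov > 0 \Leftrightarrow P_{XY}\neq P_X\times P_Y$ holds, only under a strong-negative-type hypothesis on $\distX,\distY$; the corollary's first claim implicitly relies on this, and one must either inherit it from the ambient assumptions of Section~\ref{sec:dcov} or state it as an additional hypothesis. Once this is acknowledged, the remainder of the proof is bookkeeping through Fact~\ref{fact:dcov-hsic-equivalence}.
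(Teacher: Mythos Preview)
Your proposal is correct and matches the paper's intended argument: the paper does not give an explicit proof of this corollary but simply states that it follows ``using the analogous results for the $\cshsic$ statistic,'' which is exactly the reduction you carry out via Fact~\ref{fact:dcov-hsic-equivalence} and Theorem~\ref{prop:fixed-alternative}. Your observation that the equivalence $\dcov>0 \Leftrightarrow P_{XY}\neq P_X\times P_Y$ tacitly requires the semi-metrics to be of strong negative type is a valid caveat that the paper leaves implicit.
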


    \begin{remark}
        \label{remark:dcov-test-1} 
        
        \citet{szekely2007measuring}  proposed a permutation-free independence test based on the dCov statistic, which relied on the fact that asymptotically, a suitably normalized variant of dCov is stochastically dominated by a quadratic form of a centered standard normal random variable~\citep[Theorem 6]{szekely2007measuring}. 
        However, while this result is tight for the case of Bernoulli distributions, in general this can lead to very conservative independence tests. Furthermore, the validity of this approach when the distributions and distance-measures can change with the sample-size has not been established. Our proposed cross-dCov test addresses both these issues. 
    \end{remark}

\section{Experiments} 
\label{sec:experiments}
    In this section, we experimentally validate the theoretical results presented in the previous sections.
    
    \subsection{Null Distribution}
        \paragraph{Sufficiency of finite second moment.} In the first experiment, we verify the claim of~\Cref{theorem:null-dist-0} that states that finite second moment of the kernel is sufficient for the asymptotic normality of the $\cshsic$ statistic under the null. In particular, we use linear kernels $k$ and $\ell$, and consider the case where $P_X$ and $P_Y$ are distribution in $\mathbb{R}^d$; with each component drawn independently from a $t$-distribution with $\texttt{dof}$ degrees of freedom. Recall that such distributions have finite moments of order up to $\texttt{dof}-1$. The null distribution of the $\cshsic$ statistic on such distributions with $\texttt{dof} \in \{1,2,3\}$ are shown in~\Cref{fig:null-x-hsic-t-dist}. For $\dof=1$ and $\dof=3$, the null distribution appears to be clearly non-Gaussian and Gaussian respectively; with $\dof=2$ representing an intermediate state.  
    
        \begin{figure}[htb!]
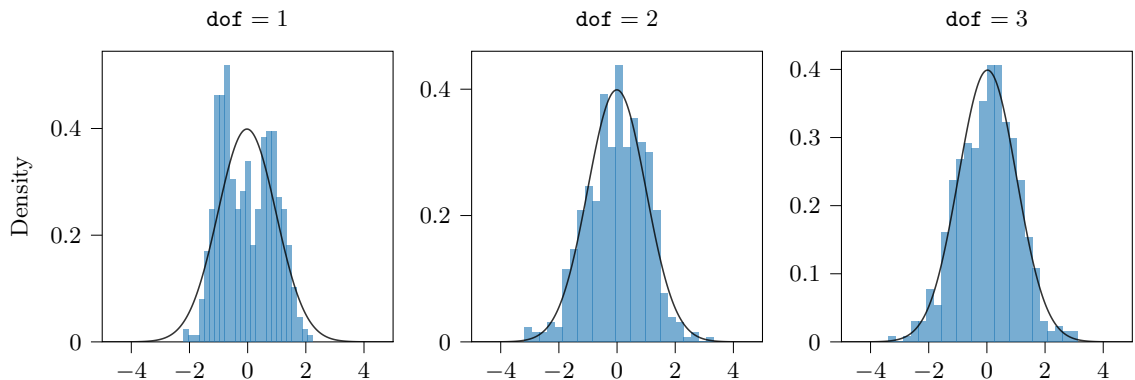

             \def\figwidth{0.33\linewidth}
             \def\figheight{0.33\linewidth} 
            \centering
            \input{Figures/Second_Moment_NullPdf_HSIC_n_500_d_10_dof_1}
            \input{Figures/Second_Moment_NullPdf_HSIC_n_500_d_10_dof_2}
            \input{Figures/Second_Moment_NullPdf_HSIC_n_500_d_10_dof_3}
            \caption{The figures show the null distribution of $\cshsic$  with $n=500$, $d=10$ using $500$ trials. Both $P_X$ and $P_Y$ consist of $d$ independent components drawn from $t$-distributions with degrees of freedom $\mathrm{dof} \in \{1,2,3\}$; the data has finite variance only in the third plot. Thus, the plots above indicate that the existence of finite second moment is sufficient for the asymptotic normality of the $\cshsic$ statistic, and appear somewhat necessary as well (the second plot may be close to Gaussian, but the first plot is far from it).}
            \label{fig:null-x-hsic-t-dist}
        \end{figure}
    
        \paragraph{Effect of kernels and dimension regimes.}  
            In the next, experiment, we verify that the $\cshsic$ statistic has a limiting null distribution for different choices of kernels~(Gaussian vs Rational Quadratic) and in different dimension regimes~($d/n = 1/2$ vs $d/n = 1/20$). The observations are drawn from independent multivariate Gaussian distributions with unit covariance matrix. The results, plotted in~\Cref{fig:null-x-hsic} show that as expected, the null distribution of $\cshsic$ approaches the standard normal distribution, even for relatively small sample sizes~(all the plots have $n=200$). 
            \begin{figure}[htb!]
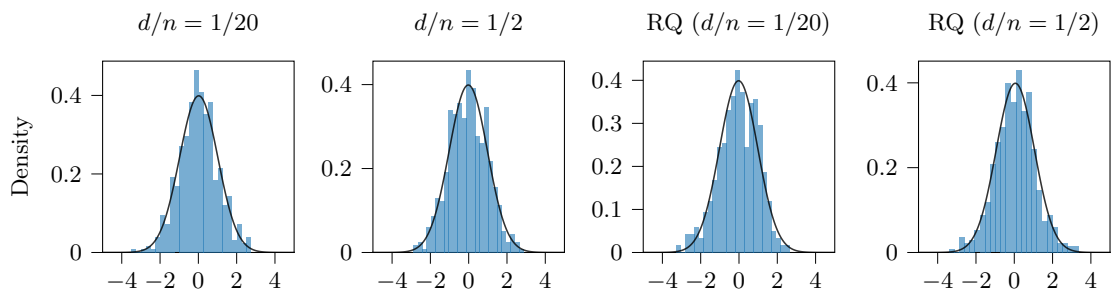

                \def\figwidth{0.25\linewidth}
                \def\figheight{0.25\linewidth} 
                \centering
                \input{Figures/NullPdf_HSIC_n_200_d_10}
                \input{Figures/NullPdf_HSIC_n_200_d_100}
                \input{Figures/NullPdf_HSIC_n_200_d_10_RationalQuadratic}
                \input{Figures/NullPdf_HSIC_n_200_d_100_RationalQuadratic}
                \caption{Plots of the null distribution of the $\cshsic$ statistic for two commonly used kernels, Gaussian~(first two plots) and Rational-Quadratic, under two dimension regimes each, with $d/n \in \{1/2, 1/20\}$, with $n=200$.}
                \label{fig:null-x-hsic}
            \end{figure}
        \paragraph{Control of type-I error.} 
            The previous experiment shows that visually, the null distribution of $\cshsic$ statistic approaches the standard normal distribution even at relatively small $n$ values. We now show in~\Cref{fig:type-I-x-hsic} that this also translates into a tight control over the type-I error at the desired level $\alpha$~($=0.05$ in the plots) of the resulting cross-HSIC test based on the $\cshsic$ statistic. 
             \begin{figure}[htb!]
                \def\figwidth{0.25\linewidth}
                \def\figheight{0.25\linewidth} 
                \centering
\begin{tikzpicture}

\definecolor{darkgray176}{RGB}{176,176,176}
\definecolor{steelblue31119180}{RGB}{31,119,180}

\begin{axis}[
height=\figheight,
tick align=outside,
tick pos=left,
title={$d/n=1/20$},
width=\figwidth,
x grid style={darkgray176},
xlabel={Sample size (n)},
xmin=0.5, xmax=209.5,
xtick style={color=black},
y grid style={darkgray176},
ylabel={Type-I error},
ymin=0, ymax=1.0,
ytick style={color=black}
]
\addplot [semithick, steelblue31119180]
table {%
10 0.05
20 0.048
30 0.048
40 0.046
50 0.04
60 0.04
70 0.074
80 0.054
90 0.06
100 0.034
110 0.042
120 0.052
130 0.046
140 0.05
150 0.058
160 0.042
170 0.05
180 0.05
190 0.058
200 0.066
};
\addplot [semithick, black, opacity=0.15]
table {%
10 0.05
20 0.05
30 0.05
40 0.05
50 0.05
60 0.05
70 0.05
80 0.05
90 0.05
100 0.05
110 0.05
120 0.05
130 0.05
140 0.05
150 0.05
160 0.05
170 0.05
180 0.05
190 0.05
200 0.05
};
\end{axis}

\end{tikzpicture}
\begin{tikzpicture}

\definecolor{darkgray176}{RGB}{176,176,176}
\definecolor{steelblue31119180}{RGB}{31,119,180}

\begin{axis}[
height=\figheight,
tick align=outside,
tick pos=left,
title={$d/n=1/2$},
width=\figwidth,
x grid style={darkgray176},
xlabel={Sample size (n)},
xmin=0.5, xmax=209.5,
xtick style={color=black},
y grid style={darkgray176},
ylabel={},
ymin=0, ymax=1.0,
ytick style={color=black}
]
\addplot [semithick, steelblue31119180]
table {%
10 0.058
20 0.066
30 0.044
40 0.048
50 0.042
60 0.054
70 0.062
80 0.054
90 0.048
100 0.046
110 0.062
120 0.07
130 0.052
140 0.042
150 0.056
160 0.064
170 0.052
180 0.04
190 0.054
200 0.05
};
\addplot [semithick, black, opacity=0.15]
table {%
10 0.05
20 0.05
30 0.05
40 0.05
50 0.05
60 0.05
70 0.05
80 0.05
90 0.05
100 0.05
110 0.05
120 0.05
130 0.05
140 0.05
150 0.05
160 0.05
170 0.05
180 0.05
190 0.05
200 0.05
};
\end{axis}

\end{tikzpicture}
\begin{tikzpicture}

\definecolor{darkgray176}{RGB}{176,176,176}
\definecolor{steelblue31119180}{RGB}{31,119,180}

\begin{axis}[
height=\figheight,
tick align=outside,
tick pos=left,
title={RQ ($d/n=1/20$)},
width=\figwidth,
x grid style={darkgray176},
xlabel={Sample size (n)},
xmin=0.5, xmax=209.5,
xtick style={color=black},
y grid style={darkgray176},
ylabel={},
ymin=0, ymax=1.0,
ytick style={color=black}
]
\addplot [semithick, steelblue31119180]
table {%
10 0.044
20 0.032
30 0.034
40 0.044
50 0.038
60 0.032
70 0.054
80 0.044
90 0.024
100 0.05
110 0.044
120 0.032
130 0.028
140 0.042
150 0.056
160 0.034
170 0.044
180 0.058
190 0.038
200 0.05
};
\addplot [semithick, black, opacity=0.15]
table {%
10 0.05
20 0.05
30 0.05
40 0.05
50 0.05
60 0.05
70 0.05
80 0.05
90 0.05
100 0.05
110 0.05
120 0.05
130 0.05
140 0.05
150 0.05
160 0.05
170 0.05
180 0.05
190 0.05
200 0.05
};
\end{axis}

\end{tikzpicture}
\begin{tikzpicture}

\definecolor{darkgray176}{RGB}{176,176,176}
\definecolor{steelblue31119180}{RGB}{31,119,180}

\begin{axis}[
height=\figheight,
tick align=outside,
tick pos=left,
title={RQ ($d/n=1/2$)},
width=\figwidth,
x grid style={darkgray176},
xlabel={Sample size (n)},
xmin=0.5, xmax=209.5,
xtick style={color=black},
y grid style={darkgray176},
ylabel={},
ymin=0, ymax=1.0,
ytick style={color=black}
]
\addplot [semithick, steelblue31119180]
table {%
10 0.034
20 0.04
30 0.046
40 0.052
50 0.058
60 0.054
70 0.046
80 0.034
90 0.066
100 0.054
110 0.06
120 0.042
130 0.046
140 0.05
150 0.064
160 0.046
170 0.068
180 0.05
190 0.052
200 0.05
};
\addplot [semithick, black, opacity=0.15]
table {%
10 0.05
20 0.05
30 0.05
40 0.05
50 0.05
60 0.05
70 0.05
80 0.05
90 0.05
100 0.05
110 0.05
120 0.05
130 0.05
140 0.05
150 0.05
160 0.05
170 0.05
180 0.05
190 0.05
200 0.05
};
\end{axis}

\end{tikzpicture}
                \caption{The figures show the variation of the type-I error of the cross-HSIC test under two different dimension regimes: $d/n \in \{1/2, 1/20\}$, and for two commonly used kernels: Gaussian~(the first two plots) and Rational Quadratic. In all cases, we see that the type-I error is controlled at level $\alpha=0.05$ for $n \geq 100$.}
                \label{fig:type-I-x-hsic}
            \end{figure}
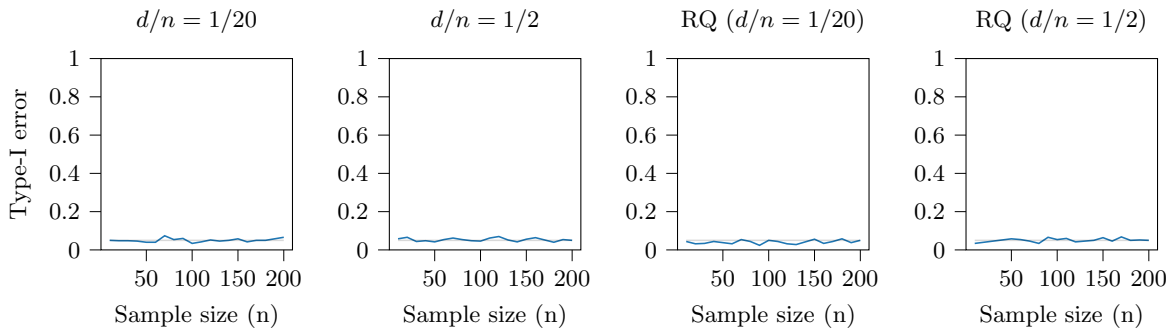
            
    \subsection{Power}
        In this section, we empirically compare the power of our cross-HSIC test with the permutation HSIC test for different choices of kernels and different dimension regimes (similar to the previous subsection). As shown in~\Cref{fig:power-x-hsic}, the power of our cross-HSIC test is smaller than the permutation HSIC test by approximately a factor of $\sqrt{2}$ due to sample-splitting.
        
        In~\Cref{fig:power-x-hsic}, we plot the power curves of our cross-HSIC test and the HSIC permutation test for two kernels~(Gaussian and Rational-Quadratic), for different levels of dependence (measured by $\epsilon$). For this experiment, we set $P_X$ to a multivariate Gaussian distribution in $d$ dimensions with identity covariance matrix; and then generated $Y$ as 
        \begin{align}
            \label{eq:dependent-source}
            Y = \epsilon \times X^b + (1-\epsilon) \times (X')^b, \quad \text{for } b >0. 
        \end{align}
        In the above display, the exponentiation is done component-wise and $X'$ is an independent copy of $X$. 
             \begin{figure}[htb!]
                \def\figwidth{0.33\linewidth}
                \def\figheight{0.33\linewidth} 
                \centering
\begin{tikzpicture}

\definecolor{darkgray176}{RGB}{176,176,176}
\definecolor{darkorange25512714}{RGB}{255,127,14}
\definecolor{lightgray204}{RGB}{204,204,204}
\definecolor{steelblue31119180}{RGB}{31,119,180}

\begin{axis}[
height=\figheight,
legend cell align={left},
legend style={
  fill opacity=0.8,
  draw opacity=1,
  text opacity=1,
  at={(0.97, 0.03)},
  anchor=south east,
  draw=lightgray204
},
tick align=outside,
tick pos=left,
title={$\epsilon=0.3$},
width=\figwidth,
x grid style={darkgray176},
xlabel={Sample-size (n)},
xmin=-34.5, xmax=944.5,
xtick style={color=black},
y grid style={darkgray176},
ylabel={Power},
ymin=0.0256, ymax=1.0464,
ytick style={color=black}
]
\addplot [semithick, steelblue31119180]
table {%
10 0.104
56 0.184
103 0.304
150 0.468
197 0.644
244 0.764
291 0.844
337 0.896
384 0.952
431 1
478 0.988
525 0.996
572 1
618 1
665 1
712 1
759 1
806 1
853 1
900 1
};
\addlegendentry{HSIC-perm}
\addplot [semithick, darkorange25512714]
table {%
10 0.072
56 0.112
103 0.156
150 0.276
197 0.308
244 0.416
291 0.48
337 0.616
384 0.712
431 0.744
478 0.864
525 0.916
572 0.944
618 0.96
665 0.98
712 0.976
759 0.996
806 0.996
853 1
900 1
};
\addlegendentry{$\mathrm{xHSIC}$}
\end{axis}

\end{tikzpicture}
\begin{tikzpicture}

\definecolor{darkgray176}{RGB}{176,176,176}
\definecolor{darkorange25512714}{RGB}{255,127,14}
\definecolor{lightgray204}{RGB}{204,204,204}
\definecolor{steelblue31119180}{RGB}{31,119,180}

\begin{axis}[
height=\figheight,
legend cell align={left},
legend style={
  fill opacity=0.8,
  draw opacity=1,
  text opacity=1,
  at={(0.97,0.03)},
  anchor=south east,
  draw=lightgray204
},
tick align=outside,
tick pos=left,
title={$\epsilon=0.4$},
width=\figwidth,
x grid style={darkgray176},
xlabel={Sample-size (n)},
xmin=-19.5, xmax=629.5,
xtick style={color=black},
y grid style={darkgray176},
ylabel={Power},
ymin=0.013, ymax=1.047,
ytick style={color=black}
]
\addplot [semithick, steelblue31119180]
table {%
10 0.144
41 0.28
72 0.54
103 0.74
134 0.844
165 0.952
196 0.96
227 0.98
258 1
289 1
320 1
351 1
382 1
413 1
444 1
475 1
506 1
537 1
568 1
600 1
};
\addlegendentry{HSIC-perm}
\addplot [semithick, darkorange25512714]
table {%
10 0.06
41 0.104
72 0.208
103 0.268
134 0.428
165 0.54
196 0.656
227 0.744
258 0.86
289 0.892
320 0.944
351 0.98
382 0.976
413 0.988
444 0.988
475 1
506 1
537 1
568 1
600 1
};
\addlegendentry{x-HSIC}; 
\legend{};
\end{axis}

\end{tikzpicture} 
\begin{tikzpicture}

\definecolor{darkgray176}{RGB}{176,176,176}
\definecolor{darkorange25512714}{RGB}{255,127,14}
\definecolor{lightgray204}{RGB}{204,204,204}
\definecolor{steelblue31119180}{RGB}{31,119,180}

\begin{axis}[
height=\figheight,
legend cell align={left},
legend style={
  fill opacity=0.8,
  draw opacity=1,
  text opacity=1,
  at={(0.03,0.97)},
  anchor=north west,
  draw=lightgray204
},
tick align=outside,
tick pos=left,
title={$\epsilon=0.5$},
width=\figwidth,
x grid style={darkgray176},
xlabel={Sample-size (n)},
xmin=-4.5, xmax=314.5,
xtick style={color=black},
y grid style={darkgray176},
ylabel={Power},
ymin=0.0214, ymax=1.0466,
ytick style={color=black}
]
\addplot [semithick, steelblue31119180]
table {%
10 0.252
25 0.468
40 0.64
55 0.772
71 0.784
86 0.856
101 0.948
116 0.976
132 0.98
147 0.996
162 1
177 1
193 1
208 1
223 1
238 1
254 1
269 1
284 1
300 1
};
\addlegendentry{HSIC-perm}
\addplot [semithick, darkorange25512714]
table {%
10 0.068
25 0.1
40 0.16
55 0.236
71 0.236
86 0.356
101 0.48
116 0.572
132 0.664
147 0.752
162 0.868
177 0.9
193 0.912
208 0.944
223 0.972
238 0.984
254 0.992
269 0.996
284 1
300 0.996
};
\addlegendentry{x-HSIC}; 
\legend{};
\end{axis}

\end{tikzpicture}\\
\begin{tikzpicture}

\definecolor{darkgray176}{RGB}{176,176,176}
\definecolor{darkorange25512714}{RGB}{255,127,14}
\definecolor{lightgray204}{RGB}{204,204,204}
\definecolor{steelblue31119180}{RGB}{31,119,180}

\begin{axis}[
height=\figheight,
legend cell align={left},
legend style={
  fill opacity=0.8,
  draw opacity=1,
  text opacity=1,
  at={(0.97,0.03)},
  anchor=south east,
  draw=lightgray204
},
tick align=outside,
tick pos=left,
title={RQ ($\epsilon=0.3$)},
width=\figwidth,
x grid style={darkgray176},
xlabel={Sample-size (n)},
xmin=-34.5, xmax=944.5,
xtick style={color=black},
y grid style={darkgray176},
ylabel={Power},
ymin=-0.008, ymax=1.048,
ytick style={color=black}
]
\addplot [semithick, steelblue31119180]
table {%
10 0.112
56 0.268
103 0.6
150 0.808
197 0.972
244 0.992
291 1
337 1
384 1
431 1
478 1
525 1
572 1
618 1
665 1
712 1
759 1
806 1
853 1
900 1
};
\addlegendentry{HSIC-perm}
\addplot [semithick, darkorange25512714]
table {%
10 0.04
56 0.112
103 0.34
150 0.5
197 0.776
244 0.86
291 0.92
337 0.972
384 0.988
431 1
478 1
525 1
572 1
618 1
665 1
712 1
759 1
806 1
853 1
900 1
};
\addlegendentry{x-HSIC}; 
\legend{}; 
\end{axis}

\end{tikzpicture} 
\begin{tikzpicture}

\definecolor{darkgray176}{RGB}{176,176,176}
\definecolor{darkorange25512714}{RGB}{255,127,14}
\definecolor{lightgray204}{RGB}{204,204,204}
\definecolor{steelblue31119180}{RGB}{31,119,180}

\begin{axis}[
height=\figheight,
legend cell align={left},
legend style={
  fill opacity=0.8,
  draw opacity=1,
  text opacity=1,
  at={(0.97,0.03)},
  anchor=south east,
  draw=lightgray204
},
tick align=outside,
tick pos=left,
title={RQ ($\epsilon=0.4$)},
width=\figwidth,
x grid style={darkgray176},
xlabel={Sample-size (n)},
xmin=-19.5, xmax=629.5,
xtick style={color=black},
y grid style={darkgray176},
ylabel={Power},
ymin=0.0298, ymax=1.0462,
ytick style={color=black}
]
\addplot [semithick, steelblue31119180]
table {%
10 0.124
41 0.52
72 0.86
103 0.96
134 0.996
165 1
196 1
227 1
258 1
289 1
320 1
351 1
382 1
413 1
444 1
475 1
506 1
537 1
568 1
600 1
};
\addlegendentry{HSIC-perm}
\addplot [semithick, darkorange25512714]
table {%
10 0.076
41 0.164
72 0.44
103 0.704
134 0.876
165 0.964
196 0.992
227 1
258 1
289 1
320 1
351 1
382 1
413 1
444 1
475 1
506 1
537 1
568 1
600 1
};
\addlegendentry{x-HSIC}; 
\legend{}; 
\end{axis}

\end{tikzpicture} 
\begin{tikzpicture}

\definecolor{darkgray176}{RGB}{176,176,176}
\definecolor{darkorange25512714}{RGB}{255,127,14}
\definecolor{lightgray204}{RGB}{204,204,204}
\definecolor{steelblue31119180}{RGB}{31,119,180}

\begin{axis}[
height=\figheight,
legend cell align={left},
legend style={
  fill opacity=0.8,
  draw opacity=1,
  text opacity=1,
  at={(0.97,0.03)},
  anchor=south east,
  draw=lightgray204
},
tick align=outside,
tick pos=left,
title={RQ ($\epsilon=0.5$)},
width=\figwidth,
x grid style={darkgray176},
xlabel={Sample-size (n)},
xmin=-4.5, xmax=314.5,
xtick style={color=black},
y grid style={darkgray176},
ylabel={Power},
ymin=-0.0164, ymax=1.0484,
ytick style={color=black}
]
\addplot [semithick, steelblue31119180]
table {%
10 0.236
25 0.54
40 0.776
55 0.932
71 0.976
86 1
101 1
116 1
132 1
147 1
162 1
177 1
193 1
208 1
223 1
238 1
254 1
269 1
284 1
300 1
};
\addlegendentry{HSIC-perm}
\addplot [semithick, darkorange25512714]
table {%
10 0.032
25 0.172
40 0.364
55 0.62
71 0.692
86 0.884
101 0.948
116 0.984
132 0.996
147 0.996
162 1
177 1
193 1
208 1
223 1
238 1
254 1
269 1
284 1
300 1
};
\addlegendentry{x-HSIC}; 
\legend{}; 
\end{axis}

\end{tikzpicture}
                \caption{The figures in the top row show the power curves for HSIC permutation test, and cross-HSIC test with Gaussian kernels, while the bottom row corresponds to the same tests with Rational-Quadratic~(RQ) kernel. In all figures, we have $d=10$ and $b=2$, while $\epsilon$ is set to $0.3, 0.4,$ and $0.5$ in the three columns. Recall that $\epsilon$ and $b$ correspond to the expression in~\eqref{eq:dependent-source}. 
                }
                \label{fig:power-x-hsic}
            \end{figure}
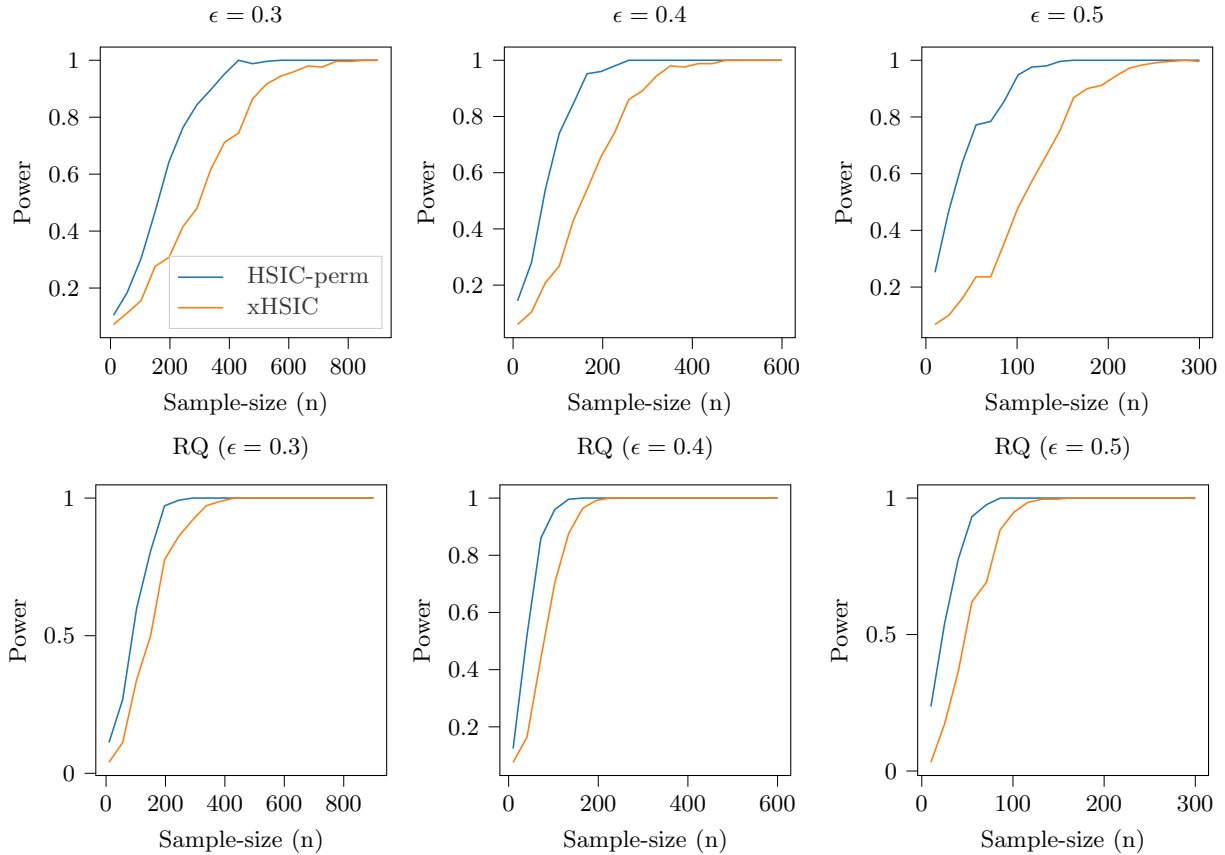
        Finally, to highlight the trade-off achieved by our cross-HSIC test, we plot the power against the running-time of the cross-HSIC test against that of the HSIC permutation test in~\Cref{fig:power-compoute-hsic}

        \begin{figure}[htb!]
            \def\figwidth{0.33\linewidth}
            \def\figheight{0.33\linewidth} 
            \centering
\begin{tikzpicture}

\definecolor{darkgray176}{RGB}{176,176,176}
\definecolor{darkorange25512714}{RGB}{255,127,14}
\definecolor{lightgray204}{RGB}{204,204,204}
\definecolor{steelblue31119180}{RGB}{31,119,180}

\begin{axis}[
height=\figheight,
legend cell align={left},
legend style={
  fill opacity=0.8,
  draw opacity=1,
  text opacity=1,
  at={(0.03,0.97)},
  anchor=north west,
  draw=lightgray204
},
log basis x={10},
tick align=outside,
tick pos=left,
title={Power vs Computation},
width=\figwidth,
x grid style={darkgray176},
xlabel={Running Time (Seconds)},
xmin=0.000714712244023049, xmax=13.8128148491252,
xmode=log,
xtick style={color=black},
xtick={1e-05,0.0001,0.001,0.01,0.1,1,10,100,1000},
xticklabels={
  \(\displaystyle {10^{-5}}\),
  \(\displaystyle {10^{-4}}\),
  \(\displaystyle {10^{-3}}\),
  \(\displaystyle {10^{-2}}\),
  \(\displaystyle {10^{-1}}\),
  \(\displaystyle {10^{0}}\),
  \(\displaystyle {10^{1}}\),
  \(\displaystyle {10^{2}}\),
  \(\displaystyle {10^{3}}\)
},
y grid style={darkgray176},
ylabel={Power},
ymin=-0.008, ymax=1.048,
ytick style={color=black}
]
\addplot [
  draw=darkorange25512714,
  fill=darkorange25512714,
  mark=triangle*,
  only marks,
  opacity=0.6,
    scatter, 
  visualization depends on={\thisrow{sizedata} \as\perpointmarksize},
  scatter/@pre marker code/.style={/tikz/mark size=\perpointmarksize},
  scatter/@post marker code/.style={}
]
table{%
x  y  sizedata
0.0593019223213196 0.15 0.7978845608028654
0.0827778482437134 0.05 1.4927053303604616
0.107946157455444 0.21 1.9706294962377493
0.227671394348145 0.27 2.3534213434057993
0.367848544120789 0.37 2.682126661392722
0.519360957145691 0.42 2.9640095915284457
0.719321272373199 0.5 3.231186201200472
0.947983055114746 0.55 3.477898169151024
1.24455168008804 0.56 3.70823233942262
1.57231640338898 0.71 3.9250730555360964
2.20357163906097 0.79 4.122832512025492
2.71612585306168 0.85 4.318907191682884
3.21402317285538 0.87 4.506458780298101
3.88698330163956 0.97 4.686510657907603
4.52512806415558 0.96 4.85989645515594
5.10633770227432 0.97 5.0209703231304035
6.08884870767593 0.97 5.183179949983594
6.77686154127121 0.99 5.340464942499636
7.84050616502762 0.98 5.493248329560876
8.81976769685745 1 5.641895835477563
};
\addlegendentry{HSIC-perm}
\addplot [
  draw=steelblue31119180,
  fill=steelblue31119180,
  mark=*,
  only marks,
  opacity=0.6,
  scatter, 
  visualization depends on={\thisrow{sizedata} \as\perpointmarksize},
  scatter/@pre marker code/.style={/tikz/mark size=\perpointmarksize},
  scatter/@post marker code/.style={}
]
table{%
x  y  sizedata
0.00123802423477173 0.04 0.7978845608028654
0.00111932516098022 0.08 1.4927053303604616
0.00142442464828491 0.07 1.9706294962377493
0.00305082321166992 0.13 2.3534213434057993
0.00468114137649536 0.05 2.682126661392722
0.00631772041320801 0.17 2.9640095915284457
0.0084230899810791 0.24 3.231186201200472
0.0109951972961426 0.28 3.477898169151024
0.0140377020835876 0.23 3.70823233942262
0.0174449944496155 0.39 3.9250730555360964
0.0230372905731201 0.44 4.122832512025492
0.0284050631523132 0.48 4.318907191682884
0.0330060195922852 0.61 4.506458780298101
0.0395018982887268 0.63 4.686510657907603
0.0458557939529419 0.67 4.85989645515594
0.0530057096481323 0.76 5.0209703231304035
0.061233479976654 0.76 5.183179949983594
0.0670866227149963 0.79 5.340464942499636
0.0783382797241211 0.81 5.493248329560876
0.088278181552887 0.88 5.641895835477563
};
\addlegendentry{$\mathrm{xHSIC}$}
\end{axis}

\end{tikzpicture}
\begin{tikzpicture}

\definecolor{darkgray176}{RGB}{176,176,176}
\definecolor{darkorange25512714}{RGB}{255,127,14}
\definecolor{lightgray204}{RGB}{204,204,204}
\definecolor{steelblue31119180}{RGB}{31,119,180}

\begin{axis}[
height=\figheight,
legend cell align={left},
legend style={
  fill opacity=0.8,
  draw opacity=1,
  text opacity=1,
  at={(0.03,0.97)},
  anchor=north west,
  draw=lightgray204
},
log basis x={10},
tick align=outside,
tick pos=left,
title={Power vs Computation},
width=\figwidth,
x grid style={darkgray176},
xlabel={Running Time (Seconds)},
xmin=0.000780557454610865, xmax=5.6544249759252,
xmode=log,
xtick style={color=black},
xtick={1e-05,0.0001,0.001,0.01,0.1,1,10,100},
xticklabels={
  \(\displaystyle {10^{-5}}\),
  \(\displaystyle {10^{-4}}\),
  \(\displaystyle {10^{-3}}\),
  \(\displaystyle {10^{-2}}\),
  \(\displaystyle {10^{-1}}\),
  \(\displaystyle {10^{0}}\),
  \(\displaystyle {10^{1}}\),
  \(\displaystyle {10^{2}}\)
},
y grid style={darkgray176},
ylabel={Power},
ymin=0.0145, ymax=1.0155,
ytick style={color=black}
]
\addplot [
  draw=darkorange25512714,
  fill=darkorange25512714,
  mark=triangle*,
  only marks,
  opacity=0.6,
  scatter, 
  visualization depends on={\thisrow{sizedata} \as\perpointmarksize},
  scatter/@pre marker code/.style={/tikz/mark size=\perpointmarksize},
  scatter/@post marker code/.style={}
]
table{%
x  y  sizedata
0.0939647960662842 0.16 1.0300645387285057
0.0745704317092896 0.19 1.6286750396763996
0.0784309768676758 0.31 2.0601290774570113
0.101192662715912 0.34 2.4157154730437167
0.271330718994141 0.35 2.744691963229459
0.388918466567993 0.36 3.020742194219052
0.508085062503815 0.52 3.2735963151943896
0.623620369434357 0.59 3.508273402369496
0.713830964565277 0.68 3.742410318509555
0.870520465373993 0.66 3.949327084834294
1.01283274650574 0.78 4.145929793656026
1.18640481948853 0.85 4.3336224206596095
1.36256204366684 0.83 4.525255353143864
1.91881767034531 0.9 4.697817093297286
2.18859865665436 0.93 4.86426097912058
2.46109400510788 0.89 5.025194951831626
2.8398472738266 0.95 5.191361770309175
3.236625187397 0.95 5.342451353184843
3.51406677961349 0.97 5.489383926458918
3.77515378236771 0.95 5.641895835477563
};
\addlegendentry{HSIC-perm}
\addplot [
  draw=steelblue31119180,
  fill=steelblue31119180,
  mark=*,
  only marks,
  opacity=0.6,
    scatter, 
  visualization depends on={\thisrow{sizedata} \as\perpointmarksize},
  scatter/@pre marker code/.style={/tikz/mark size=\perpointmarksize},
  scatter/@post marker code/.style={}
]
table{%
x  y  sizedata
0.00137717247009277 0.06 1.0300645387285057
0.00116911888122559 0.13 1.6286750396763996
0.0012628436088562 0.11 2.0601290774570113
0.00141819953918457 0.12 2.4157154730437167
0.00314977884292603 0.14 2.744691963229459
0.00480786561965942 0.08 3.020742194219052
0.00664048671722412 0.15 3.2735963151943896
0.00812045574188233 0.19 3.508273402369496
0.00832841634750366 0.31 3.742410318509555
0.00926328420639038 0.23 3.949327084834294
0.012204601764679 0.4 4.145929793656026
0.0129652190208435 0.41 4.3336224206596095
0.016937096118927 0.46 4.525255353143864
0.0201682448387146 0.6 4.697817093297286
0.0216219115257263 0.51 4.86426097912058
0.0263251233100891 0.59 5.025194951831626
0.0303431534767151 0.63 5.191361770309175
0.0321394228935242 0.68 5.342451353184843
0.0360158801078796 0.72 5.489383926458918
0.0397423028945923 0.72 5.641895835477563
};
\addlegendentry{$\mathrm{xHSIC}$}
\end{axis}

\end{tikzpicture}
            \caption{The figure shows the power versus running time curves for our cross-HSIC statistic and the HSIC permutation-test~(with $150$ permutations) on two different problems~(with $b=2$ and $\epsilon \in \{0.30, 0.35\}$. The size of the marker in the figures is proportional to the sample-size used for estimating the power.  As indicated by the figures, if sample-size is not an issue, the cross-HSIC test can achieve the same power at a significantly lower running time (or computational cost) as compared to the permutation-test. }
            \label{fig:power-compoute-hsic}
        \end{figure}
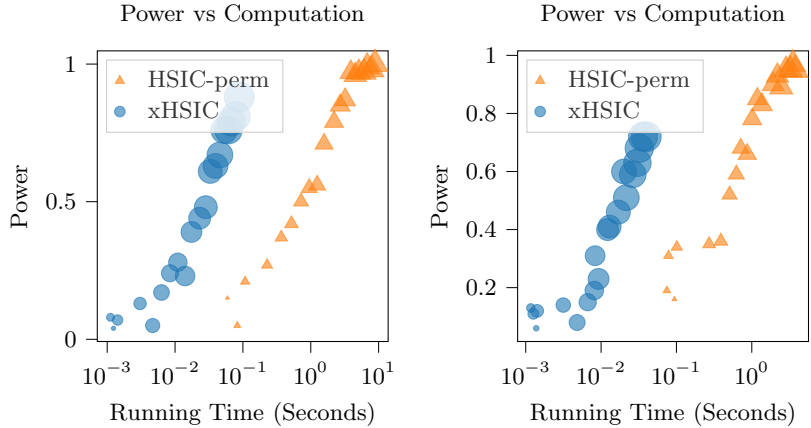
        
\section{Conclusion and Future Work}
    \label{sec:conclusion}
    In this paper, we proposed a variant of the HSIC statistic, called the cross-HSIC statistic, that is constructed using  the ideas of sample-splitting and studentization. Under very mild conditions, we showed that this statistic has a standard normal limiting null distribution. Based on this result, we proposed a simple permutation-free independence test, that rejects the null when the studentized cross-HSIC statistic exceeds $z_{1-\alpha}$, the $(1-\alpha)$-quantile of the standard normal distribution. We present a through theoretical analysis of the performance of this test, and empirically validate the theoretical predictions on some experiments with synthetic data. 

    An interesting direction for future work is extending the ideas developed in this paper to design roubust versions of the cross-HSIC test.  The cross-HSIC statistic that we proposed can be thought of as the studentized sample covariance of projected feature maps onto one-dimensional spaces. The association between these projected feature maps can be measured through other methods as well such as Kendall's rank coefficient and Spearman's rank coefficient. Given that these rank-based approaches are more robust to outliers than the sample covariance, it would be interesting to develop robust cross-HSIC and investigate its theoretical guarantees and empirical performance. 


\newpage 
\bibliography{ref}
\bibliographystyle{abbrvnat}
\newpage
\appendix 

\section[Quadratic Computational Complexity of \texorpdfstring{$\cshsic$}{xHSIC}]{Quadratic Computational Complexity of \texorpdfstring{$\cshsic$}{xHSIC}~(\texorpdfstring{\Cref{theorem:quadratic-cost}}{Theorem 2})}
\label{appendix:quadratic-time}
    We begin by introducing the necessary notation for proving this result. First let $K$ and $L$ denote $2n \times 2n$ matrices, defined as 
    \begin{align}
        [K]_{ij} =  \begin{cases}
            0 & \text{ if } 1 \leq i, j \leq n, \text{ or } n+1 \leq i, j \leq 2n \\
            k(X_i, X_j) & \text{ otherwise}. 
        \end{cases}
    \end{align}
        and 
     \begin{align}
        [L]_{ij} =  \begin{cases}
            0 & \text{ if } 1 \leq i, j \leq n, \text{ or } n+1 \leq i, j \leq 2n \\
            \ell(X_i, X_j) & \text{ otherwise}. 
        \end{cases}
    \end{align}   
    Also introduce the following two vectors in $\mathbb{R}^{2n}$. 
    \begin{align}
        \bone_u = (\underbrace{1, \ldots, 1}_{n \text{ terms}}, 0, \ldots, 0) \quad \text{and} \quad  \bone_l= (\underbrace{0, \ldots, 0}_{n \text{ terms}}, 1, \ldots, 1)
    \end{align}
    Finally, note that we will use $\circ$ to denote the elementwise product of two matrices. 

    \paragraph{Proof of~\Cref{theorem:quadratic-cost}.} To show the quadratic complexity of the cross-HSIC statistic, we it suffices to show that $T_2, T_3,$ and $T_4$ can each be computed in quadratic time.   
    
    \begin{lemma}
        \label{lemma:quadratic-T2-T3}
        $T_2$ and $T_3$ satisfy the following: 
        \begin{align}
            &T_2 = \frac{1}{n^2(n-1)} \lp \bone_l K L \bone_l - \frac{1}{2} \text{tr}\lp KL \rp \rp, \quad \text{and} \quad T_3 = \frac{1}{n^2(n-1)} \lp \bone_u K L \bone_u - \frac{1}{2} \text{tr}\lp KL \rp \rp, 
        \end{align}
        where $tr(\cdot)$ denote the trace of a matrix. 
    \end{lemma}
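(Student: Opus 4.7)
The plan is to verify the identity algebraically by expanding the matrix expression and then observe that each piece on the right-hand side is evaluable in $\mc{O}(n^2)$ time. The core idea is that the cross-block support pattern of $K$ and $L$ automatically enforces the index constraints in the definition of $T_2$, up to a diagonal correction that is exactly $\frac{1}{2}\tr(KL)$.

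First I will expand $\bone_l^T K L \bone_l = \sum_{a, b \in \{n+1,\ldots,2n\}} \sum_{c=1}^{2n} K_{ac} L_{cb}$. Because $K$ is supported only on cross-block entries, the inner sum forces $c \in \{1, \ldots, n\}$; then $K_{ac} = k(X_a, X_c)$ and $L_{cb} = \ell(Y_c, Y_b)$ are both nonzero and well-defined by the definitions of $K$ and $L$. Relabelling $c \mapsto i$, $a \mapsto j_1$, $b \mapsto j_2$, this equals the unrestricted double sum $\sum_{i=1}^n \sum_{j_1, j_2 \in \{n+1,\ldots,2n\}} k(X_i, X_{j_1}) \ell(Y_i, Y_{j_2})$, which is $n^2(n-1) T_2$ plus the diagonal contribution $D \defined \sum_{i=1}^n \sum_{j=n+1}^{2n} k(X_i, X_j) \ell(Y_i, Y_j)$ coming from the $j_1 = j_2$ terms.

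Next I will identify $D$ with $\frac{1}{2}\tr(KL)$. Writing $(KL)_{aa} = \sum_c K_{ac} L_{ca}$ and using the cross-block zero pattern, the trace splits as $\sum_{a=1}^n (KL)_{aa} + \sum_{a=n+1}^{2n}(KL)_{aa}$, where the first sum restricts $c$ to $\{n+1,\ldots,2n\}$ and the second restricts $c$ to $\{1,\ldots,n\}$. Both halves evaluate to $D$ after relabelling and applying symmetry of $k$ and $\ell$, so $\tr(KL) = 2D$. Subtracting then yields the claimed identity for $T_2$; the identity for $T_3$ follows by swapping $\bone_l \leftrightarrow \bone_u$ in the same argument.

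It remains to check the computational cost, which is the final step. The quantity $\bone_l^T K L \bone_l$ is computed as $\bone_l^T(K(L\bone_l))$ via three matrix-vector products, each of cost $\mc{O}(n^2)$. For the trace, using $\tr(KL) = \sum_{i,j} K_{ij} L_{ij}$ (valid since $L$ is symmetric) reduces this to an elementwise-product-and-sum over two $2n \times 2n$ matrices, again $\mc{O}(n^2)$. The only mildly delicate point — and the closest thing to an obstacle — is bookkeeping which blocks of $K$ and $L$ vanish when unwinding the matrix expressions, but this is purely mechanical once the cross-block support is made explicit, and no genuinely new estimate is required.
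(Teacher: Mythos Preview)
Your proof is correct and takes essentially the same approach as the paper: both expand the unrestricted double sum over $j_1,j_2\in\{n+1,\ldots,2n\}$, identify the diagonal correction $D=\sum_{i=1}^n\sum_{j=n+1}^{2n}k(X_i,X_j)\ell(Y_i,Y_j)$ with $\tfrac{1}{2}\tr(KL)$, and conclude. The only cosmetic difference is that you start from the matrix expression and unwind to the sum, whereas the paper starts from the sum and rewrites it in matrix form; the bookkeeping is otherwise identical.
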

    \begin{proof}
        We show the details of the calculation for the term $T_2$~(the result for $T_3$ follows the exact same steps). In particular, we have the following: 
        \begin{align}
            n^2(n-1) T_2 &= \sum_{i=1}^n \sum_{n+1 \leq j_1 \neq j_2 \leq 2n}\; k(X_i, X_{j_1}) \ell(Y_i, Y_{j_2} \\ 
                &= \sum_{i=1}^n \sum_{j_1 =n+1}^{2n}\sum_{j_2 =n+1}^{2n}  k(X_i, X_{j_1}) \ell(Y_i, Y_{j_2} - \sum_{i=1}^n \sum_{j=n+1}^{2n} k(X_i, X_{j}) \ell(Y_i, Y_{j} \\
                & = \bone_l^T KL \bone_l - \frac{1}{2} tr(KL). 
        \end{align}
        Both the matrix multiplications involved in the last expression can be done in $\mc{O}(n^2)$ time, implying the quadratic complexity of $T_2$. 
    \end{proof}
    
    \begin{lemma}
        \label{lemma:quadratic-T4}
        The term $T_4$ satisfies 
        \begin{align}
            T_4 = \frac{1}{n^2(n-1)^2} \lb (\bone_l^T K \bone_u) (\bone_l^TL\bone_u)  - \bone_l^T KL \bone_l - \bone_u^T KL \bone_u + \frac{1}{2} \text{tr}(KL)\rb. 
        \end{align}
    \end{lemma}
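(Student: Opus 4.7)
\textbf{Proof proposal for \Cref{lemma:quadratic-T4}.} The plan is to remove the distinctness constraints from the quadruple sum defining $T_4$ via inclusion--exclusion, and then express each of the four resulting unconstrained sums as a simple matrix--vector expression using the block-zero structure of $K$ and $L$. Specifically, by the identity $\bone_{i_1 \neq i_2} = 1 - \bone_{i_1 = i_2}$ applied independently to the two pairs of indices, we have
\begin{align}
n^2(n-1)^2 T_4 &= \sum_{i_1, i_2 = 1}^{n} \sum_{j_1, j_2 = n+1}^{2n} k(X_{i_1}, X_{j_1}) \ell(Y_{i_2}, Y_{j_2}) - \sum_{i=1}^{n} \sum_{j_1, j_2 = n+1}^{2n} k(X_i, X_{j_1}) \ell(Y_i, Y_{j_2}) \\
&\qquad - \sum_{i_1, i_2 = 1}^{n} \sum_{j=n+1}^{2n} k(X_{i_1}, X_j) \ell(Y_{i_2}, Y_j) + \sum_{i=1}^{n} \sum_{j=n+1}^{2n} k(X_i, X_j) \ell(Y_i, Y_j).
\end{align}
Each of the four pieces will be a step in the proof.

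First I will rewrite the fully unconstrained sum by separating the two factors: summing over $j_1$ first and over $j_2$ second, it factors as $\big(\sum_{i_1, j_1} k(X_{i_1}, X_{j_1})\big)\big(\sum_{i_2, j_2} \ell(Y_{i_2}, Y_{j_2})\big) = (\bone_u^{T} K \bone_l)(\bone_u^{T} L \bone_l)$, which equals $(\bone_l^{T} K \bone_u)(\bone_l^{T} L \bone_u)$ by symmetry of $K$ and $L$.

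Second, for the term with $i_1 = i_2 = i$, I collect factors depending on the shared index to get $\sum_{i=1}^{n} (K \bone_l)_i (L \bone_l)_i$. Using the block-sparsity of $K$ (nonzero only in the two off-diagonal blocks) together with the symmetry $\bone_l^{T} K = (K \bone_l)^{T}$, this sum equals $\bone_l^{T} K L \bone_l$: the inner index of $KL$ is forced into $[1,n]$ by the block structure, exactly matching the outer sum over $i \in [1,n]$. The analogous argument for $j_1 = j_2$ yields $\bone_u^{T} K L \bone_u$. Finally, when both collisions happen simultaneously, the sum collapses to $\sum_{i,j} (K \circ L)_{ij}$ restricted to the upper-right block; by symmetry of $K \circ L$ and the vanishing of its diagonal blocks, this equals $\tfrac{1}{2} \mathrm{tr}(KL)$, which I will verify by writing $\mathrm{tr}(KL) = \sum_{i,k} K_{ik} L_{ki} = \sum_{i,k} (K \circ L)_{ik}$ and noting that only the two off-diagonal blocks contribute, equally.

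Combining the four pieces gives the stated identity. The main obstacle is purely bookkeeping: one must carefully exploit the block-zero structure of $K$ and $L$ at each step so that the (a priori) full sums over $[1, 2n]$ collapse to sums over the correct half-range, and one must track the factor of $\tfrac{1}{2}$ arising from the two symmetric off-diagonal blocks when relating $\bone_u^{T}(K\circ L)\bone_l$ to $\mathrm{tr}(KL)$. Quadratic-time computability is then immediate, since each of the matrix--vector products $K\bone_l$, $K\bone_u$, $L\bone_l$, $L\bone_u$, the matrix product $KL$ restricted to the required block sums, and $\mathrm{tr}(KL) = \sum_{i,j} K_{ij} L_{ij}$ are all $\mathcal{O}(n^2)$.
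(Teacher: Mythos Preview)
Your proposal is correct and follows essentially the same approach as the paper: both apply inclusion--exclusion on the two distinctness constraints to decompose $n^2(n-1)^2 T_4$ into the same four unconstrained sums $T_{4,1}-T_{4,2}-T_{4,3}+T_{4,4}$, and then identify each with the corresponding matrix expression using the block-zero structure of $K$ and $L$. The only cosmetic difference is that the paper cites its earlier \Cref{lemma:quadratic-T2-T3} for the evaluations of $T_{4,2}$, $T_{4,3}$, and $T_{4,4}$, whereas you rederive them directly.
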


    \begin{proof}
        We again proceed by expanding the summation defining $T_4$. 
        \begin{align}
            n^2(n-1)^2 T_4 &= \sum_{1 \leq i_1 \neq i_2 \leq n} \sum_{n+1 \leq j_1 \neq j_2 \leq 2n} k(X_{i_1}, X_{j_1}) \ell(Y_{i_2}, Y_{j_2})  \\
            & = \sum_{i_1=1}^{n} \sum_{i_2=1}^n \sum_{j_1=n+1}^{2n} \sum_{j_2=n+1}^{2n} k(X_{i_1}, X_{j_1}) \ell(Y_{i_2}, Y_{j_2}) - \sum_{i=1}^n \sum_{j_1=n+1}^{2n} \sum_{j_2=n+1}^{2n} k(X_{i}, X_{j_1}) \ell(Y_{i}, Y_{j_2}) \\
            &\; - \sum_{i_1=1}^n \sum_{i_2=1}^n \sum_{j=n+1}^{2n}  k(X_{i_1}, X_{j}) \ell(Y_{i_2}, Y_{j}) + \sum_{i=1}^n \sum_{j=n+1}^{2n} k(X_i, X_j) \ell(Y_i, Y_j).  \\
            & \defined T_{4,1} - T_{4,2} - T_{4,3} + T_{4,4}
        \end{align}
        The first term, $T_{4,1}$, can be factored into a product of two terms, each of which can be computed in quadratic time, as follows: 
        \begin{align}
            T_{4,1} &= \sum_{i_1, j_1} k(X_{i_1}, X_{j_1}) \sum_{i_2, j_2} \ell(Y_{i_2}, Y_{j_2}) 
             = \lp \bone_l^T K \bone_u \rp \lp \bone_l^T L \bone_u\rp.  
        \end{align}
        The next three terms were evaluated in~\Cref{lemma:quadratic-T2-T3}, as 
        \begin{align}
            T_{4,2} &= \bone_l^T KL \bone_l, \quad  
            T_{4,3} = \bone_u^T KL \bone_u, \quad  \text{and} \quad 
            T_{4,4} = \frac{1}{2} tr \lp KL \rp. 
        \end{align}
        Together, the above expressions imply the required result. 
    \end{proof}
    Thus the previous two lemmas imply that the statistic $\crosshsic$ can be computed in quadratic time. To establish the quadratic computational complexity of $\cshsic$, it remains to show that $s_n^2$ can also be computed in quadratic time. We now give an outline of this result, omitting some long but standard calculations. 
    
    \begin{lemma}
        \label{lemma:quadratic-sn}
        The variance term $s_n^2$ can be computed in quadratic time. 
    \end{lemma}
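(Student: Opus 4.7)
\emph{Proof proposal for \Cref{lemma:quadratic-sn}.}
The plan is to start from the alternative expression for $s_n^2$ given in~\eqref{eq: alternative expression of s_n}. Since $n\crosshsic^2$ is a single scalar already produced by the quadratic-time computation of the numerator, it suffices to show that, writing $W_i = \sum_{j\neq i} \langle h(Z_i,Z_j), f_2\rangle$, all $n$ scalars $W_1,\ldots,W_n$ can be produced in total time $\mc{O}(n^2)$; the sum of squares $\sum_i W_i^2$ is then a trivial $\mc{O}(n)$ postprocessing step.

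Pulling the sum inside the inner product gives $W_i = \langle g_i, f_2\rangle$ with $g_i = \sum_{j\neq i} h(Z_i,Z_j)$. Expanding the definition of $h$ and introducing the ``first-split'' partial sums $S_X = \sum_{j=1}^n \phi(X_j)$, $S_Y = \sum_{j=1}^n \psi(Y_j)$, $S_{XY} = \sum_{j=1}^n \phi(X_j)\otimes\psi(Y_j)$, one obtains an explicit four-term representation of the form
\begin{align}
g_i = \tfrac{1}{2}\Bigl[\, n\,\phi(X_i)\otimes\psi(Y_i) - \phi(X_i)\otimes S_Y - S_X\otimes\psi(Y_i) + S_{XY}\,\Bigr].
\end{align}
Therefore $W_i$ decomposes into four inner products against $f_2$. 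The next step is to massage $f_2$ in the same way: by grouping terms in the U-statistic defining $f_2$, one shows the identity $f_2 = \tfrac{n}{n-1}\bigl(\widehat\omega_2 - \widehat\mu_2\otimes\widehat\nu_2\bigr)$, where $\widehat\omega_2$, $\widehat\mu_2$, $\widehat\nu_2$ are the empirical versions of $\omega$, $\mu$, $\nu$ computed on the second half $\datatwo$. This reduces each inner product of the form $\langle \phi(x)\otimes\psi(y), f_2\rangle$ to a pair of kernel sums over the second split.

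The core of the argument is the precomputation step. Define the partial sums
\begin{align}
M_t = \sum_{j=1}^n \ell(Y_j, Y_t), \quad N_t = \sum_{j=1}^n k(X_j, X_t), \quad K_i = \sum_{t=n+1}^{2n} k(X_i, X_t), \quad L_i = \sum_{t=n+1}^{2n} \ell(Y_i, Y_t),
\end{align}
for $t \in \{n+1,\ldots,2n\}$ and $i \in \{1,\ldots,n\}$; each of these four arrays has $n$ entries, each requiring $\mc{O}(n)$ kernel evaluations, for a one-time cost of $\mc{O}(n^2)$. With these in hand, every one of the four tensor inner products needed to evaluate $W_i$ collapses to an $\mc{O}(n)$ expression. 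Concretely, for example, $\langle\phi(X_i)\otimes S_Y, f_2\rangle$ which naively looks like a triple sum over the first and second splits rewrites (after the $\widehat\omega_2/\widehat\mu_2\otimes\widehat\nu_2$ split) as $\tfrac{n}{n-1}\bigl[\tfrac{1}{n}\sum_t k(X_i,X_t)M_t - \tfrac{1}{n^2}K_i \sum_u M_u\bigr]$, which is $\mc{O}(n)$ per $i$. The analogous identities handle the other three terms. Summing over $i$ yields $\mc{O}(n^2)$ total, establishing the lemma.

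The main obstacle I anticipate is precisely avoiding the apparent cubic cost in the ``mixed'' terms $\langle \phi(X_i)\otimes S_Y, f_2\rangle$ and $\langle S_X\otimes \psi(Y_i), f_2\rangle$: a direct expansion couples three indices (the outer $i$, an inner $j$ from the first split hidden inside $S_X$ or $S_Y$, and a second-split index from $f_2$), so without the decoupling trick one would pay $\mc{O}(n^3)$. The observation that unlocks quadratic complexity is that the cross-split double sums factor through the one-dimensional auxiliary arrays $M_t$ and $N_t$, whose construction is already quadratic and which separate the free index $i$ from the summed index $j$. Everything else is bookkeeping; the final expression for $s_n^2$ is then an explicit combination of the cached arrays and the already quadratic-time quantities $\crosshsic$, $T_1,\ldots,T_4$. \qed
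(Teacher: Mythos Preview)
Your proposal is correct and takes essentially the same approach as the paper: both start from the alternative expression~\eqref{eq: alternative expression of s_n}, reduce to computing the scalar vector $(W_i)_{i=1}^n$ in $\mc{O}(n^2)$ total, and do so by expanding $h$ and $f_2$ and factoring the resulting sums through precomputed row/column sums of the cross-kernel matrices. Your auxiliary arrays $M_t,N_t,K_i,L_i$ are exactly the relevant entries of $L\bone_u$, $K\bone_u$, $K\bone_l$, $L\bone_l$ in the paper's matrix notation; the paper simply packages the end result as a single closed-form matrix expression for $\ww$ rather than your term-by-term reduction, but the computational content is identical.
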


    \noindent \emph{Proof outline of~\Cref{lemma:quadratic-sn}.} In the first step, we expand the expression of $s_n^2$ to get the following: 
    \begin{align}
        s_n^2 &= \frac{ 4(n-1)}{(n-2)^2} \sum_{i=1}^n \lp \frac{1}{n-1} \sum_{j \neq i} h(Z_i, Z_j) - \crosshsic \rp^2  \\
        &= \frac{ 4(n-1)}{(n-2)^2} \sum_{i=1}^n \lp  \lp \frac{1}{n-1} \sum_{j=1}^{j\neq i, n} h(Z_i, Z_j) \rp^2 + \crosshsic^2 - 2 \crosshsic \lp \frac{1}{n-1} \sum_{j=1}^{j\neq i, n} h(Z_i, Z_j) \rp \rp \\
        & = \frac{4(n-1)}{(n-2)^2(n-1)^2} \sum_{i=1}^n \lp  \frac{1}{n-1} \sum_{j=1}^{j\neq i, n} h(Z_i, Z_j) \rp^2 - \frac{4n(n-1)}{(n-2)^2} \crosshsic^2. 
    \end{align}
    We have already proved that $\crosshsic$ can be computed with quadratic cost. Excluding the leading factors, the first term can written as 
    \begin{align}
        &\sum_{i=1}^n \lp  \frac{1}{n-1} \sum_{j=1}^{j\neq i, n} h(Z_i, Z_j) \rp^2 = \sum_{i=1}^n w_i^2, \quad  
        \text{where} \quad  w_i = \frac{1}{n-1} \sum_{j=1}^{j\neq i, n} h(Z_i,Z_j). 
    \end{align}
    Thus to complete the proof, it suffices to show that each $\boldsymbol{w} = (w_1, \ldots, w_n)$ can be computed in quadratic time. By expanding the terms involved in the definition of $\boldsymbol{w}$, it can be verified that (with $I_n$ denoting the $n\times n$ identity matrix, and $\bone$ denoting the all ones vector of appropriate dimension): 
    \begin{align}
        &\ww = I_n \wtilde, \quad \text{where} \\
        &2(n-1)\wtilde = n (K\circ L)\bone + \frac{1}{2} tr(KL)  \bone - (KL + LK) \bone - (K\bone_l) \circ (L \bone_l) - \frac{1}{2n} \bone_l^T KL \bone_l \bone  \\
        &\; + \frac{1}{2n} \lp (\bone^T L \bone )K\bone_l  + (\bone^T K \bone)L \bone_l \rp.  
    \end{align}
    Note that every term involved in the definition of $\wtilde$, and hence in the definition of $\ww$, can be computed in quadratic time. This in turn, implies that the variance term $s_n^2$ can also be computed in quadratic time as required. 
    \qed 
    
\section[Testing Linear Dependence]{Testing Linear Dependence~(\Cref{sec:warmup-linear})} 
\label{proof:linear-kernel}
    Both~\Cref{thm:warmup-linear-1} and~\Cref{thm:warmup-linear-2} rely on some simplified representation of the $\cshsic$ statistic for linear kernels in one dimension. We present the details of the common steps here, before moving on to the specifics of the proof of~\Cref{thm:warmup-linear-1} in~\Cref{proof:warmup-linear-1} and of~\Cref{thm:warmup-linear-2} in~\Cref{proof:warmup-linear-2}. 
    
     Since $\crosshsic$ is location-invariant, we may assume that $\mathbb{E}[X]=\mathbb{E}[Y]=0$ without loss of generality. With this assumption, $\crosshsic$ can be written as: 
        \begin{align} 
        	\crosshsic ~=~ & {f}_2 \cdot \biggl( \frac{1}{n} \sum_{i=1}^n X_iY_i - \frac{1}{n(n-1)} \sum_{1 \leq i \neq j \leq n} X_i Y_j \biggr) \\[.5em]\label{eq: xHSIC for linear kernel}
        	=~ &  {f}_2 \cdot \biggl( \frac{1}{n} \sum_{i=1}^n X_iY_i - \frac{1}{n(n-1)} \biggl\{ \biggl( \sum_{i=1}^n X_iY_i \biggr)^2 - \sum_{i=1}^n X_i^2 Y_i^2 \bigg\} \biggr). 
        \end{align}
        On the other hand, the squared denominator of the studentized statistic is
        \begin{align}
        	s_n^2 ~=~ {f}_2^2 \cdot (\mathrm{I}_n - \mathrm{II}_n),
        \end{align}
        where 
        \begin{align}
        	\mathrm{I}_n ~=~ &\frac{1}{(n-1)(n-2)^2} \sum_{i=1}^n \lp \sum_{j=1}^{n, j\neq i} (X_i-X_j)(Y_i-Y_j) \rp^2 \quad \text{and} \label{eq:I_n} \\[.5em] 
        	\mathrm{II}_n ~=~ & \frac{4n(n-1)}{(n-2)^2}  \biggl( \frac{1}{n} \sum_{i=1}^n X_iY_i - \frac{1}{n(n-1)} \biggl\{ \biggl( \sum_{i=1}^n X_iY_i \biggr)^2 - \sum_{i=1}^n X_i^2 Y_i^2 \bigg\} \biggr)^2. \label{eq:II_n}
        \end{align}
        More specifically, recall from \eqref{eq: alternative expression of s_n} that $s_n^2$ can be expressed as
        \begin{align}
            s_n^2 ~=~ \frac{4}{(n-1)(n-2)^2} \sum_{i=1}^n \lp \sum_{j=1}^{n, j\neq i} \la h(Z_i, Z_j), f_2 \ra \rp^2 - \frac{4n(n-1)}{(n-2)^2} \crosshsic^2.
        \end{align}
        Since $\la h(Z_i, Z_j), f_2 \ra = \frac{1}{2}(X_i-X_j)(Y_i-Y_j) \cdot f_2$ for the linear kernel, we see that 
        \begin{align}
            \frac{4}{(n-1)(n-2)^2} \sum_{i=1}^n \lp \sum_{j=1}^{n, j\neq i} \la h(Z_i, Z_j), f_2 \ra \rp^2 = f_2^2 \cdot \mathrm{I}_n.
        \end{align}
        Similarly, using expression~\eqref{eq: xHSIC for linear kernel}, we see that
        \begin{align}
            \frac{4n(n-1)}{(n-2)^2} \crosshsic^2 ~=~ f_2^2 \cdot \mathrm{II}_n.
        \end{align}
        Combining the above two expressions yields $s_n^2 = f_2^2 \cdot (\mathrm{I}_n - \mathrm{II}_n)$. 
        
    \subsection{Proof of\texorpdfstring{~\Cref{thm:warmup-linear-1}}{Theorem 3}}
    \label{proof:warmup-linear-1}
        \emph{Proof of part (a).} Let us write the variance of $XY$ as $v$. By the central limit theorem and the law of large numbers,
        \begin{align}
        	\frac{1}{\sqrt{n}} \sum_{i=1}^n X_iY_i  \convdist N(0, v),   \quad \frac{1}{n}\sum_{i=1}^n X_i^2 Y_i^2 \convprob v,\quad \text{and}\quad \frac{1}{n} \sum_{i=1}^n X_i Y_i \convprob \mathbb{E}[X]\mathbb{E}[Y] = 0. 
        \end{align}
        As a result, we have the following: 
        \begin{align}
            \mathrm{II}_n = \frac{4n(n-1)}{(n-2)^2} \lp o_P(1) + \frac{1}{n-1} O_P(1) \rp^2 = o_P(1).
        \end{align}
        In other words, $\mathrm{II}_n \convprob 0$. To control $\mathrm{I}_n$, note that 
        \begin{align}
        	& \sum_{i=1}^n \lp \sum_{j=1}^{n, j\neq i} (X_i-X_j)(Y_i-Y_j) \rp^2 \\[.5em] 
        	= ~ & \sum_{1 \leq i \neq j \leq n} (X_i-X_j)^2(Y_i-Y_j)^2 + \sum_{\substack{1\leq i,j,q \leq n \\ \text{$i,j,q$ distinct}}} (X_i-X_j)(Y_i-Y_j) (X_i-X_q)(Y_i-Y_q).
        \end{align}
        Using the law of large numbers for U-statistics~\citep{hoeffding1961strong}, we have
        \begin{align}
        	A_n \defined & \frac{1}{(n-1)(n-2)^2}  \sum_{1 \leq i \neq j \leq n} (X_i-X_j)^2(Y_i-Y_j)^2 \convprob 0 \quad \text{and} \label{eq:linear-0-1}\\[.5em]
        	B_n \defined & \frac{1}{(n-1)(n-2)^2}  \sum_{\substack{1\leq i,j,q \leq n \\ \text{$i,j,q$ distinct}}} (X_i-X_j)(Y_i-Y_j) (X_i-X_q)(Y_i-Y_q) \convprob  v, \label{eq:linear-0-2}
        \end{align}
        under the assumption that $\mathbb{E}[X^2] < \infty$ and $\mathbb{E}[Y^2] < \infty$. This further implies that $\mathrm{I}_n - \mathrm{II}_n \convprob v$. We also note that 
        \begin{align}
            \frac{1}{\sqrt{n}(n-1)} \sum_{1 \leq i \neq j \leq n} X_i Y_j = \frac{n}{\sqrt{n}(n-1)} \underbrace{\biggl(\frac{1}{\sqrt{n}} \sum_{i=1}^n X_iY_i \biggr)^2}_{O_P(1)} - \frac{n}{\sqrt{n}(n-1)} \cdot \underbrace{\frac{1}{n} \sum_{i=1}^n X_i^2 Y_i^2}_{O_P(1)} = o_P(1), 
        \end{align}
        and thus Slutsky's theorem yields
        \begin{align}
            \sqrt{n} \biggl( \frac{1}{n} \sum_{i=1}^n X_iY_i - \frac{1}{n(n-1)} \sum_{1 \leq i \neq j \leq n} X_i Y_j \biggr) \convdist N(0,v).
        \end{align}
        This approximation combined with the continuous mapping theorem shows that $\mathrm{sign}({f}_2) \convdist \mathrm{Rademacher}$, taking either $-1$ or $1$ with an equal probability. Combining the above pieces, we have under the finite second moment condition of $X$ and $Y$
        \begin{align}
        	\frac{\sqrt{n}\crosshsic}{s_n} = \mathrm{sign}({f}_2) \times  \frac{\sqrt{n} \biggl( \frac{1}{n} \sum_{i=1}^n X_iY_i - \frac{1}{n(n-1)} \sum_{1 \leq i \neq j \leq n} X_i Y_j \biggr)}{\sqrt{\mathrm{I}_n - \mathrm{II}_n}} \convdist N(0,1),
        \end{align}
        where we use the fact that ${f}_2$ is independent of the first half of the data. This shows that the (pointwise) asymptotic normality holds as long as  $0 < \mathbb{E}[X^2] < \infty$ and $0 < \mathbb{E}[Y^2] < \infty$.  \hfill \qed 
        
        \emph{Proof of part~(b).} When the alternative is true, and $\mathbb{E}[XY] = \rho \neq 0$, there are two main differences: 
        \begin{align}
            \frac{1}{n} \sum_{i=1}^n X_i Y_i \convprob \rho,  \quad \text{and} \quad 
            B_n \convprob \mathbb{E}[(XY)^2] + 3 \rho^2 = \var(XY) + 4\rho^2, 
        \end{align}
        due to the law of large numbers for \iid sums and U-statistics respectively.  As a result, we have the following: 
        \begin{align}
            \mathrm{II}_n \convprob  4\rho^2, \quad \text{and} \quad \mathrm{I}_n \convprob \var(XY) + 4\rho^2. 
        \end{align}
        Together, these two results imply that $(\mathrm{I}_n - \mathrm{II}_n) \convprob \var(XY)$. 
        
        The power of the cross-HSIC test is 
        \begin{align}
            \mathbb{E}[\Psi] &= \mathbb{P} \lp \text{sign}(f_2) \frac{  \frac{1}{\sqrt{n}}  \sum_{i=1}^n X_i Y_i - \frac{1}{\sqrt{n}} o_P(1)}{\sqrt{\mathrm{I}_n - \mathrm{II}_n}}  > z_{1-\alpha}\rp. 
        \end{align}
        Let $Z$ denote a $N(0,1)$ random variable. Then, we have 
        \begin{align}
            \lim_{n \to \infty} \mathbb{E}[\Psi] = \lim_{n \to \infty} \Bigg\{  \mathbb{P}\lp f_2>0 \rp \mathbb{P}\lp Z > z_{1-\alpha} -  \frac{\rho\sqrt{n}}{\sqrt{\var(XY)}} \rp +  \mathbb{P}\lp f_2 \leq 0 \rp \mathbb{P}\lp Z < -z_{1-\alpha} - \frac{\rho \sqrt{n}}{\sqrt{\var(XY)}} \rp + o(1) \Bigg\}. 
        \end{align}
        To conclude the proof, we note that $\frac{\sqrt{n}}{n(n-1)} \sum_{n+1 \leq i \neq j \leq 2n} X_iY_j = o_P(1)$ and thus by the central limit theorem and Slutsky's theorem,
        \begin{align}
            \lim_{n \rightarrow \infty} \mathbb{P}(f_2 > 0) = \lim_{n \rightarrow \infty} \mathbb{P} \biggl( \frac{1}{\sqrt{n}} \sum_{i=n+1}^{2n} X_iY_i - \rho + o_P(1) > \sqrt{n} \rho \biggr) = 1 - \lim_{n \rightarrow \infty} \mathbb{P}(Z > \sqrt{n} \rho).
        \end{align}
        This implies that $\lim_{n \to \infty} \mathbb{P}(f_2>0) = \boldsymbol{1}_{\rho>0}$. Similarly, we have $\lim_{n \to \infty} \mathbb{P}(f_2\leq 0) = \boldsymbol{1}_{\rho<0}$ (note that $\lim_{n \to \infty} \mathbb{P}(f_2 = 0) = 0$). In either case, we have $\lim_{n \to \infty} \mathbb{E}[\Psi]=1$ under the alternative, as required.
        
    \subsection{Proof of~\texorpdfstring{\Cref{thm:warmup-linear-2}}{Theorem 4}}
    \label{proof:warmup-linear-2}
        \subsubsection{Proof of part (a)} 
        To prove that the distribution of $\cshsic$ converges to $N(0,1)$ uniformly over the null class, we first show that that asymptotic normality holds for an arbitrary sequence of distributions $\{P_{XY,n} \in \nullclass: n \geq 1\}$.

         We start by analyzing the terms $\mathrm{I}_n$ and $\mathrm{II}_n$ first stated in~\eqref{eq:I_n} and~\eqref{eq:II_n} respectively. Introduce the variance  $v_n = \var_{P_{XY,n}}(XY)$, and note that
        \begin{align}
            \frac{v_n}{n-1} \lp \frac{\sum_{i=1}^n X_i Y_i}{\sqrt{n v_n}}  \rp^2 \convprob 0. \label{eq:linear-2-1}
        \end{align}
        This is because  $\frac{v_n}{n-1} \lesssim \frac{\mathbb{E}_{P_{XY,n}}[(XY)^4]}{n} \to 0$,  and  $\frac{\sum_{i=1}^n X_i Y_i}{\sqrt{n v_n}}   \convdist N(0,1)$  under the fourth moment assumption.  Next, by the weak law of large numbers for triangular arrays~\citep[Theorem 2.2.6]{durrett2019probability}, we also have 
        \begin{align}
             \frac{1}{n} \sum_{i=1}^n X_i^2Y_i^2  &=  \frac{1}{n} \sum_{i=1}^n X_i^2Y_i^2 - v_i + \frac{1}{n}\sum_{i=1}^n v_i  \leq o_P(1) + \max_{1 \leq i \leq n} v_i. 
        \end{align}
        Since $v_n \lesssim \mathbb{E}_{P_{XY,n}}[(XY)^4]$, the above expression implies 
        \begin{align}
            \lim_{n \to \infty} \frac{1}{n(n-1)} \sum_{i=1}^n X_i^2 Y_i^2 \lesssim  \lim_{n \to \infty}\lp \frac{ o_P(1)}{n}  + \frac{\mathbb{E}_{P_{XY,n}}[(XY)^4]}{n} \rp \stackrel{p}{=} 0. \label{eq:linear-2-2}
        \end{align}
        Finally, we consider the remaining term, $\frac{1}{n} \sum_{i=1}^n X_i Y_i$ of $\mathrm{II}_n$,  and observe the following: 
        \begin{align}
            \frac{1}{n} \sum_{i=1}^n X_i Y_i = \sqrt{\frac{v_n}{n}} \times \lp \frac{1}{\sqrt{n v_n}} \sum_{i=1}^n X_i Y_i \rp = \sqrt{\frac{v_n}{n}} \times O_P(1) \convprob 0,   \label{eq:linear-2-3}
        \end{align}
        where we used the fact that $\frac{1}{\sqrt{n v_n}} \sum_{i=1}^n X_i Y_i \convdist N(0,1)$  and $\frac{v_n}{n} \to 0$ under the assumptions of~\Cref{thm:warmup-linear-2}. 
         Combining the results of~\eqref{eq:linear-2-1}, \eqref{eq:linear-2-2} and~\eqref{eq:linear-2-3}, we conclude that $\mathrm{II}_n \convprob 0$, for an arbitrary sequence of $P_{XY,n}$ chosen from $\nullclass$. 
         
         Next, we look at the term $\mathrm{I}_n$, and note that the term $A_n$ introduced in~\eqref{eq:linear-0-1}, satisfies
         \begin{align}
             \mathbb{E}[A_n/v_n] &= \frac{1}{n(n-1)^2v_n} \mathbb{E}\lb  \sum_{i=1}^n \sum_{j\neq i} (X_i-X_j)^2 (Y_i-Y_j)^2 \rb = \frac{1}{(n-1)} \to 0. 
         \end{align}
         Since $A_n/v_n$ is a non-negative random variable for all $n \geq 1$, this implies that $A_n/v_n \convprob 0$, by using Markov's inequality. We now consider the final term $B_n$, introduced in~\eqref{eq:linear-0-2}, and obtain the following result: 
        \begin{lemma}
            \label{lemma:linear-kernel-1}
            Under the assumptions of~\Cref{thm:warmup-linear-2}, we have 
            \begin{align}
                \frac{B_n}{v_n} \convprob 1. \label{eq:linear-2-4}
            \end{align}
        \end{lemma}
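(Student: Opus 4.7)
The plan is to apply Chebyshev's inequality to $B_n/v_n$, i.e., to show $\mathbb{E}[B_n]/v_n \to 1$ and $\var(B_n)/v_n^2 \to 0$. For the mean, under the null ($X \perp Y$) together with the zero-mean normalization, the expectation of the kernel $h(Z_i,Z_j,Z_q) = (X_i-X_j)(X_i-X_q)(Y_i-Y_j)(Y_i-Y_q)$ factors for distinct $i,j,q$ as $\mathbb{E}[(X_i-X_j)(X_i-X_q)] \cdot \mathbb{E}[(Y_i-Y_j)(Y_i-Y_q)] = \mathbb{E}[X^2]\mathbb{E}[Y^2] = v_n$. Summing over the $n(n-1)(n-2)$ ordered triples and dividing by $(n-1)(n-2)^2$ gives $\mathbb{E}[B_n/v_n] = n/(n-2) \to 1$.

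For the variance I would expand $\mathbb{E}[B_n^2]$ as a sum of covariances $\cov(h(Z_i,Z_j,Z_q), h(Z_{i'},Z_{j'},Z_{q'}))$, grouped by the overlap size $k = |\{i,j,q\}\cap\{i',j',q'\}|$. The structural fact driving the argument is the asymmetry of $h$ in its first coordinate: a direct computation using the zero-mean null assumption shows $\mathbb{E}[h(Z_i,Z_j,Z_q)\mid Z_i] = X_i^2 Y_i^2$, while $\mathbb{E}[h(Z_i,Z_j,Z_q)\mid Z_j] = \mathbb{E}[h(Z_i,Z_j,Z_q)\mid Z_q] = v_n$. Consequently, pairs with $k=0$ contribute zero; pairs with $k=1$ contribute zero unless the shared index plays the central role in both triples, and this configuration contains $O(n^5)$ pairs, each with covariance at most $\mathbb{E}[X^4]\mathbb{E}[Y^4] - v_n^2 \leq \mathbb{E}[(XY)^4]$; pairs with $k \in \{2,3\}$ number $O(n^4)$, each with covariance bounded by $\mathbb{E}[h^2] \lesssim \mathbb{E}[(XY)^4]$ via Cauchy-Schwarz and the factorization $h^2 = (X_i-X_j)^2(X_i-X_q)^2 \cdot (Y_i-Y_j)^2(Y_i-Y_q)^2$. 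Combined with the $[(n-1)^2(n-2)^4]^{-1} \sim n^{-6}$ prefactor, this yields $\var(B_n)/v_n^2 \lesssim \mathbb{E}[(XY)^4]/(n v_n^2)$, which vanishes by the definition of $\mathcal{P}_n$ stated in \Cref{thm:warmup-linear-2}.

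The main technical obstacle will be the careful bookkeeping at $k=1$. The conclusion that the covariance vanishes whenever the shared index plays a non-central role in at least one of the two triples must be verified across all six possible ordering configurations, and it relies simultaneously on $\mathbb{E}[X] = \mathbb{E}[Y] = 0$ and on $X \perp Y$ so that the conditional expectation of one factor collapses to the constant $v_n$. Once this vanishing is pinned down, the counting of overlap patterns and the Cauchy-Schwarz moment bounds for $k \geq 2$ are routine, and Chebyshev's inequality delivers $B_n/v_n \convprob 1$ as claimed.
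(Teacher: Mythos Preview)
Your proposal is correct and follows the same overall strategy as the paper: compute $\mathbb{E}[B_n/v_n]\to 1$, bound the second moment of $B_n/v_n-1$, and apply Chebyshev. The paper's variance argument is considerably cruder than yours, however. Writing $h_{ij}=(X_i-X_j)(Y_i-Y_j)$, the paper simply expands
\[
\mathbb{E}\Bigl[\bigl(B_n/v_n-1\bigr)^2\Bigr]\ \lesssim\ \frac{1}{n^6}\sum_{\substack{i,j,q\\ \text{distinct}}}\ \sum_{\substack{t,u,r\\ \text{distinct}}}\mathbb{E}\Bigl[\bigl(h_{ij}h_{iq}/v_n-1\bigr)\bigl(h_{tu}h_{tr}/v_n-1\bigr)\Bigr],
\]
observes that the $\Omega(n^6)$ summands with six distinct indices vanish by independence, and then bounds \emph{every one} of the remaining $O(n^5)$ summands by $O\bigl((\mathbb{E}[(XY)^4]+v_n^2)/v_n^2\bigr)$ via Cauchy--Schwarz, yielding the same $\mathbb{E}[(XY)^4]/(nv_n^2)\to 0$ conclusion. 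Your finer classification at $k=1$ (showing the covariance vanishes unless the shared index is central in both triples) is correct and more informative, but it does not improve the rate: the surviving central--central configurations already number $\Theta(n^5)$, so the crude Cauchy--Schwarz bound on all overlapping pairs is enough. In short, the paper trades your case analysis for a one-line moment bound; your route is cleaner structurally, theirs is shorter.
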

        The statement is proved in~\Cref{proof:lemma-linear-kernel-1} at the end of this section.  As a consequence of the above result, we have $(\mathrm{I}_n - \mathrm{II}_n)/v_n \convprob 1$. 
         
         Combining all the above statements, we get that 
        \begin{align}
            \frac{\sqrt{n}\, \crosshsic}{s_n}  = \text{sign}(f_2) \sqrt{\frac{v_n}{\mathrm{I}_n - \mathrm{II}_n}} \; \lp  \frac{1}{\sqrt{nv_n}} \sum_{i=1}^n X_i Y_i - \frac{1}{(n-1) \sqrt{nv_n}} \sum_{i \neq j} X_i Y_j  \rp \convdist N(0,1). 
        \end{align}
        Here we have used the following facts: 
        \begin{itemize}
            \item $\text{sign}(f_2)$ is a Rademacher random variable due to symmetry of $f_2$ under the null.  
            \item $(\mathrm{I}_n - \mathrm{II}_n)/v_n \convprob 1$, due to ~\eqref{eq:linear-2-4}. 
            \item $D_n \defined  \frac{1}{(n-1)\sqrt{nv_n}} \sum_{i \neq j} X_i Y_j \convprob 0$. This is because $\mathbb{E}[D_n] = 0$, and $\var(D_n) \lesssim  1/n \to 0$. 
            \item $\frac{1}{\sqrt{n v_n}} \sum_{i=1}^n X_i Y_i \convdist N(0,1)$ by the (fourth moment) Lyapunov version of CLT. 
        \end{itemize}
        The final result follows by two applications of Slutsky's inequality, and the fact that $f_2$ is independent of the other terms. 
        
        We have thus proved that for any sequence of distributions $\{P_{XY,n} \in \nullclass: n \geq 1\}$, the statistic $\cshsic \convdist N(0,1)$. In other words, for this arbitrary sequence, we have $\lim_{n \to \infty} \mathbb{E}_{P_{XY,n}}\lb \Psi \rb \leq \alpha$. 
        To make this result hold uniformly over the null class of distributions, we fix an arbitrary $\epsilon>0$, and for every $n \geq 1$, select $P_{XY,n}$ such that $\mathbb{E}_{P_{XY,n}}[\Psi] \geq \sup_{P \in \nullclass}\mathbb{E}_P[\Psi] - \epsilon$. Thus, we have 
        \begin{align}
            \lim_{n \to \infty} \sup_{P \in \nullclass} \mathbb{E}_P[\Psi] \leq \lim_{n \to \infty} \mathbb{E}_{P_{XY,n}}[\Psi] + \epsilon \leq \alpha + \epsilon. 
        \end{align}
        Since $\epsilon>0$ is arbitrary, the result follows. 

        \subsubsection{Proof of part (a) with finite $2+\delta$ moments} 
            Before proceeding to the proof of part~(b) of~\Cref{thm:warmup-linear-2}, we first note that the uniform asymptotic normality of the $\cshsic$ statistic with linear kernels can in fact be obtained under weaker moment assumptions as well. 
       	For a fixed $\delta \in (0,2]$, consider a class of distributions
       	\begin{align}
       		\mathcal{P}_{n,\delta} = \bigg\{ P_{XY,n} :  \frac{\mathbb{E}\bigl[ |X - \mathbb{E}[X] |^{2 +\delta} \bigr]}{n^{\delta/4} v_X^{1+\delta/2}} = o(1) \ \text{and} \  \frac{\mathbb{E}\bigl[ |Y - \mathbb{E}[Y] |^{2+ \delta} \bigr]}{n^{\delta/4}v_Y^{1+\delta/2}} = o(1) \bigg\},
       	\end{align}
       	where $v_X = \var(X)$ and $v_Y = \var(Y)$. In this part of the proof, we show that $\cshsic$ is asymptotically $N(0,1)$ uniformly over the class $\mathcal{P}_{n,\delta}^{(0)}$ given as
       	\begin{align}
       		\mathcal{P}_{n,\delta}^{(0)} = \bigl\{ P_{XY,n} \in \mathcal{P}_{n,\delta} : P_{XY,n} = P_{X,n} \times P_{Y,n} \big\}.
       	\end{align}
       	Notice that any $P_{XY,n} \in \mathcal{P}_{n,\delta}^{(0)}$ satisfies 
       	\begin{align} \label{eq: lyapunov condition}
       		 \frac{\mathbb{E}\bigl[|X-\mathbb{E}(X)|^{2+\delta} \cdot |Y-\mathbb{E}(Y)|^{2+\delta}\bigr]}{n^{\frac{\delta}{2}} \big\{ \var\bigl[\bigl(X-\mathbb{E}(X)\bigr)\bigl(Y-\mathbb{E}(Y)\bigr)\bigr] \big\}^{2+\delta}} = o(1).
       	\end{align} 
		Since $\cshsic$ is both location-invariant and scale-invariant, we may assume that $\mathbb{E}[X] = \mathbb{E}[Y] = 0$ and $v_X = v_Y = 1$ without loss generality. We therefore assume that $X$ and $Y$ are standardized throughout this proof. 
				
		For a triangular array of random variables with a distribution $P_n \in \mathcal{P}_{n,\delta}^{(0)}$, the Lyapunov central limit theorem shows that 
        \begin{align}
        	\frac{1}{\sqrt{n}} \sum_{i=1}^n X_iY_i \convdist N(0,1). 
        \end{align}
   		We also note that $\mathbb{E}[|X^2Y^2|^{1 + \delta/2}] n^{-\delta/2} = o(1)$ by condition~\eqref{eq: lyapunov condition}. Thus Lemma 17 of \cite{lundborg2022projected} yields
   		\begin{align}
   			& \frac{1}{n} \sum_{i=1}^n X_i^2 Y_i^2 - 1 = o_{P_n}(1),
   		\end{align}
    	which implies that $\frac{1}{n^{3/2}} \sum_{i=1}^n X_i^2 Y_i^2 = o_{P_n}(1)$. Therefore by Slutsky's theorem, we have
        \begin{align}
        	& \sqrt{n} \biggl( \frac{1}{n} \sum_{i=1}^n X_iY_i - \frac{1}{n(n-1)} \sum_{1 \leq i \neq j \leq n} X_i Y_j \biggr) \\[.5em]
        	 = ~ & \frac{1}{\sqrt{n}} \sum_{i=1}^n X_iY_i - \frac{1}{\sqrt{n}(n-1)} \biggl\{ \biggl( \sum_{i=1}^n X_iY_i \biggr)^2 - \sum_{i=1}^n X_i^2 Y_i^2 \bigg\}  \convdist N(0,1).
        \end{align}
		Following the same logic, we also have $\sqrt{n} f_2 \convdist N(0,1)$ and so $\mathrm{sign}(f_2) \convdist \mathrm{Rademacher}$. 
		
		Now let us turn to the terms $\mathrm{I}_n$ and $\mathrm{II}_n$ where $\mathrm{I}_n$ and $\mathrm{II}_n$ are recalled in \eqref{eq:I_n} and \eqref{eq:II_n}, respectively. Note that the second term satisfies
		\begin{align}
			\mathrm{II}_n = O(1) \cdot f_1^2.
		\end{align}
		We already showed that $f_1^2 = O_{P_n}(n^{-1})$ and thus $\mathrm{II}_n = o_{P_n}(1)$. For the first term, note that 
		\begin{align}
			\mathrm{I}_n ~=~ & \{1 + o(1)\} \cdot \frac{1}{n^3} \sum_{i=1}^n \sum_{j=1}^n \sum_{k=1}^n (X_i -X_j)(Y_i-Y_j) (X_i -X_k)(Y_i-Y_k) \\[.5em]
			= ~ & \{1 + o(1)\} \cdot \Biggl\{\frac{1}{n} \sum_{i=1}^n X_i^2 Y_i^2 + O(1) \biggl( \frac{1}{n} \sum_{i=1}^n X_i^2Y_i \biggr) \biggl( \frac{1}{n} \sum_{j=1}^n Y_j  \biggr)+ O(1)  \biggl( \frac{1}{n} \sum_{i=1}^n X_iY_i^2 \biggr) \biggl( \frac{1}{n} \sum_{j=1}^n X_j  \biggr)  \\[.5em]
			& \hskip 6em + O(1) \biggl( \frac{1}{n} \sum_{i=1}^n X_i^2 \biggr) \biggl( \frac{1}{n} \sum_{j=1}^n Y_j \biggr)^2 + O(1) \biggl( \frac{1}{n} \sum_{i=1}^n Y_i^2 \biggr) \biggl( \frac{1}{n} \sum_{j=1}^n X_j \biggr)^2  \Biggr\}. 
		\end{align}
		It can be checked that $\mathbb{E}[|X|^{1+\delta/2}] n^{-\delta/2} = o(1)$, $\mathbb{E}[|Y|^{1+\delta/2}] n^{-\delta/2} = o(1)$, $\mathbb{E}[|X^2 Y|^{1+\delta/2}] n^{-\delta/2} = o(1)$ and $\mathbb{E}[|XY^2|^{1+\delta/2}] n^{-\delta/2} = o(1)$ for $P_{XY,n} \in \mathcal{P}_{n,\delta}^{(0)}$. For instance, by Cauchy--Schwarz inequality, we see that 
		\begin{align}
			\mathbb{E}[|X^2 Y|^{1+\delta/2}]  n^{-\delta/2} \leq \sqrt{\mathbb{E}[|X^2 Y^2|^{1+\delta/2}] n^{-\delta/2}} \sqrt{\mathbb{E}[|X^2|^{1+\delta/2}] n^{-\delta/2}} = o(1).
		\end{align}
		Thus Lemma 17 of \cite{lundborg2022projected} yields
		\begin{align}
			& \frac{1}{n} \sum_{i=1}^n X_i^2 Y_i = o_{P_n}(1), \  \frac{1}{n} \sum_{i=1}^n X_i Y_i^2 = o_{P_n}(1), \\
			& \frac{1}{n} \sum_{i=1}^n X_i= o_{P_n}(1) \; \ \text{and} \; \ \frac{1}{n} \sum_{i=1}^n Y_i = o_{P_n}(1).
		\end{align}
		These approximations verify that $\mathrm{I}_n = 1 + o_{P_n}(1)$. Therefore
		\begin{align}
			\cshsic = \mathrm{sign}(f_2)\frac{\frac{1}{\sqrt{n}}\sum_{i=1}^n X_iY_i + o_{P_n}(1)}{1 + o_{P_n}(1)} \convdist N(0,1),
		\end{align} 
		by Slutsky's theorem. 
		
		We have thus proved that for any sequence of distributions $\{P_{XY,n} \in \nullclass: n \geq 1\}$, the statistic $\cshsic \convdist N(0,1)$. In other words, for this arbitrary sequence, we have $\lim_{n \to \infty} \mathbb{E}_{P_{XY,n}}\lb \Psi \rb \leq \alpha$. 
		To make this result hold uniformly over the null class of distributions, we fix an arbitrary $\epsilon>0$, and for every $n \geq 1$, select $P_{XY,n}$ such that $\mathbb{E}_{P_{XY,n}}[\Psi] \geq \sup_{P \in \nullclass}\mathbb{E}_P[\Psi] - \epsilon$. Thus, we have 
		\begin{align}
			\lim_{n \to \infty} \sup_{P \in \nullclass} \mathbb{E}_P[\Psi] \leq \lim_{n \to \infty} \mathbb{E}_{P_{XY,n}}[\Psi] + \epsilon \leq \alpha + \epsilon. 
		\end{align}
		Since $\epsilon>0$ is arbitrary, the result follows.

        \subsubsection{Proof of part (b)}
            We now study the power of the cross-HSIC test against a sequence of local alternatives $\{P_{XY,n} \in \altclass: n \geq 1\}$. Our first result considers the two components of the  term $\mathrm{I}_n$. 
            \begin{lemma}
                \label{lemma:linear-kernel-2} 
                Recall that $A_n$ and $B_n$ were defined in~\eqref{eq:linear-0-1} and~\eqref{eq:linear-0-2} respectively, and $\rho_n = \mathbb{E}_{P_{XY,n}}[XY]$. Then we have the following under the alternative: 
                \begin{align}
                    A_n \convprob 0, \quad  \text{and} \quad  B_n  - \lp 1 + 4\rho_n^2 \rp \convprob  0. 
                \end{align}
            \end{lemma}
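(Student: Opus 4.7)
The strategy parallels the analogous null-case computation (\Cref{lemma:linear-kernel-1}): for each of $A_n$ and $B_n$ I compute the mean by direct expansion and then bound the variance using the moment condition inherited from $\mathcal{P}_n$ (which, in $\altclass$ with $\var(XY)=1$, reduces to $\mathbb{E}[(XY)^4]/n = o(1)$), concluding by Markov's and Chebyshev's inequalities respectively. By the location invariance of $\cshsic$ in each coordinate I may assume $\mathbb{E}[X]=\mathbb{E}[Y]=0$ throughout, and by scale invariance I may further normalise the marginal second moments to be of order one.

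For $A_n$, since $(X_i-X_j)^2(Y_i-Y_j)^2 \geq 0$, it suffices to show $\mathbb{E}[A_n]\to 0$. Expanding the product and using independence across $i$ together with the zero-mean normalisation gives
\[
\mathbb{E}\bigl[(X_1-X_2)^2(Y_1-Y_2)^2\bigr] = 2\mathbb{E}[X^2Y^2] + 2\mathbb{E}[X^2]\mathbb{E}[Y^2] + 4\rho_n^2,
\]
where $\mathbb{E}[X^2Y^2] = 1 + \rho_n^2$ from $\var(XY)=1$, and every other factor is $O(1)$ under the working normalisation. Hence $\mathbb{E}[A_n] = \tfrac{n}{(n-2)^2} \cdot O(1) = O(1/n)$, and Markov's inequality yields $A_n \convprob 0$.

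For $B_n$, the mean computation is an analogous but longer expansion of the four-fold product $(X_i-X_j)(Y_i-Y_j)(X_i-X_q)(Y_i-Y_q)$. Of the sixteen resulting monomials, taking expectations with the three indices $i,j,q$ distinct and using $\mathbb{E}[X]=\mathbb{E}[Y]=0$ together with independence eliminates every term in which some subscript appears only once as an isolated first power. What survives is $\mathbb{E}[X_i^2 Y_i^2] = 1+\rho_n^2$, together with three terms of the form $\mathbb{E}[X_\alpha Y_\alpha]\mathbb{E}[X_\beta Y_\beta] = \rho_n^2$, giving
\[
\mathbb{E}\bigl[(X_1-X_2)(Y_1-Y_2)(X_1-X_3)(Y_1-Y_3)\bigr] = 1 + 4\rho_n^2.
\]
Therefore $\mathbb{E}[B_n] = \tfrac{n}{n-2}(1+4\rho_n^2) \to 1+4\rho_n^2$, matching the claimed limit. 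It remains to show $\var(B_n) \to 0$.

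The main obstacle, and the bulk of the work, is this variance bound. I plan to view the symmetrised $B_n$ as a U-statistic of order three and apply the Hoeffding decomposition, so that $B_n - \mathbb{E}[B_n]$ is a sum of orthogonal projections of orders $1,2,3$ with variances of order $1/n,\,1/n^2,\,1/n^3$ times the $L^2$-norms of the successive projected kernels. Each such $L^2$-norm is a polynomial expectation of products of differences $(X_a-X_b)(Y_a-Y_b)$, which after Cauchy--Schwarz reduces to a constant multiple of $\mathbb{E}[(XY)^4]$ plus lower-order moments controlled by the same bound. The moment condition $\mathbb{E}[(XY)^4]/n = o(1)$ is exactly what is needed to drive the leading $1/n$ projection term to zero. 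The hard part is the combinatorial bookkeeping --- enumerating the projections and verifying the Cauchy--Schwarz reductions --- rather than any new analytic ingredient.
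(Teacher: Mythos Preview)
Your proposal is correct and follows essentially the same approach as the paper: for $A_n$, both you and the paper use nonnegativity plus Markov's inequality on the mean; for $B_n$, both compute the mean explicitly (obtaining $1+4\rho_n^2$) and then bound the variance using the moment condition $\mathbb{E}[(XY)^4]/n = o(1)$. The only difference is organizational: you propose to control $\var(B_n)$ via the Hoeffding decomposition of the order-3 U-statistic, whereas the paper (pointing back to the proof of \Cref{lemma:linear-kernel-1}) writes out the variance as a double sum over ordered triples, observes that terms with all six indices distinct vanish by independence, and crudely bounds the remaining $O(n^5)$ terms by Cauchy--Schwarz against $\mathbb{E}[(XY)^4]$. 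Your Hoeffding route is a more structured version of the same combinatorics and yields the same $O(\mathbb{E}[(XY)^4]/n)$ bound; the paper's direct index-counting is shorter because it skips the projection machinery, but neither approach requires an idea the other lacks.
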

            
            As a consequence of the above lemma, we can conclude that  
            \begin{align}
                \mathrm{I}_n - \lp 1 + 4\mathbb{E}_{P_{XY,n}}[XY]^2 \rp \convprob 0. 
            \end{align}
            
            Next, we consider the second component, $\mathrm{II}_n$, of $s_n^2$. 
            \begin{lemma}
                \label{lemma:linear-kernel-3}
                Under the alternative, we have  $\mathrm{II}_n - 4\rho_n^2 \convprob 0$. 
            \end{lemma}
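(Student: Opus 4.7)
\textbf{Proof plan for Lemma~\ref{lemma:linear-kernel-3}.} My plan is to analyze the bracketed term $T_n$ inside the square in~\eqref{eq:II_n} directly and show $T_n = \rho_n + O_P(n^{-1/2})$; squaring and multiplying by the prefactor $\tfrac{4n(n-1)}{(n-2)^2} = 4 + O(n^{-1})$ then gives $\mathrm{II}_n = 4\rho_n^2 + O_P(n^{-1})$. Since under the local alternative $\rho_n = \Theta(n^{-1/2})$, this is in fact stronger than the claimed $\mathrm{II}_n - 4\rho_n^2 = o_P(1)$, and the lemma will follow immediately.

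The analysis proceeds in three elementary steps. Setting $S := \sum_{i=1}^n X_iY_i = n\rho_n + \sqrt{n}\,Z_n$ with $Z_n := n^{-1/2}\sum_i (X_iY_i - \rho_n)$, Chebyshev's inequality together with $\var_{P_{XY,n}}(XY)=1$ immediately yields $Z_n = O_P(1)$, so that $n^{-1} S = \rho_n + n^{-1/2}Z_n$. Next, the condition $P_{XY,n} \in \mathcal{P}_n$ simplifies (using $\var(XY)=1$) to $n^{-1}\mathbb{E}_{P_{XY,n}}[(XY)^4] = o(1)$. Since $\var(X_i^2 Y_i^2) \leq \mathbb{E}[(XY)^4]$ and $\mathbb{E}[X_i^2Y_i^2] = 1 + \rho_n^2$, another application of Chebyshev gives $n^{-1}\sum_i X_i^2Y_i^2 = 1 + \rho_n^2 + o_P(1)$. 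Plugging these into the quadratic piece of~\eqref{eq:II_n} shows
\begin{align}
    \frac{S^2 - \sum_i X_i^2 Y_i^2}{n(n-1)} = \frac{n\rho_n^2 + 2\sqrt{n}\rho_n Z_n + Z_n^2 - 1 - \rho_n^2 - o_P(1)}{n-1} = \rho_n^2 + O_P(n^{-1}),
\end{align}
where the $O_P(n^{-1})$ bound on each remainder uses $\sqrt{n}\rho_n = O(1)$ and $Z_n = O_P(1)$.

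Combining the two pieces, $T_n = \rho_n - \rho_n^2 + n^{-1/2}Z_n + O_P(n^{-1}) = \rho_n + O_P(n^{-1/2})$, so $T_n^2 = \rho_n^2 + 2\rho_n \cdot O_P(n^{-1/2}) + O_P(n^{-1}) = \rho_n^2 + O_P(n^{-1})$, again using $\rho_n = O(n^{-1/2})$. Multiplying through by the prefactor $4 + O(n^{-1})$ produces $\mathrm{II}_n - 4\rho_n^2 = O_P(n^{-1}) = o_P(1)$, as required. The only mildly delicate point is the Chebyshev control of $n^{-1}\sum X_i^2 Y_i^2$ in a triangular-array setting where $\mathbb{E}[(XY)^4]$ is allowed to grow with $n$; the definition of $\mathcal{P}_n$ is calibrated precisely so that this Chebyshev bound is $o_P(1)$. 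Everything else is routine bookkeeping with the $\rho_n = \Theta(n^{-1/2})$ rate.
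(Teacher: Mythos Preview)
Your proof is correct and follows essentially the same route as the paper's: decompose the bracketed term into the sample mean $n^{-1}S$ and the quadratic correction $(S^2-\sum X_i^2Y_i^2)/\{n(n-1)\}$, control the latter via Chebyshev on $\sum X_i^2Y_i^2$ using the $\mathcal{P}_n$ fourth-moment condition, and then square. The only minor differences are that you use Chebyshev directly for $Z_n=O_P(1)$ where the paper invokes the Lyapunov CLT, and you track explicit $O_P(n^{-1})$ rates throughout (using $\sqrt{n}\rho_n=O(1)$) whereas the paper works at the coarser $o_P(1)$ level; both yield the same conclusion.
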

            
            Combining the results of the two lemmas above, we get that 
            \begin{align}
                \mathrm{I}_n  - \mathrm{II}_n  \convprob 1. 
            \end{align}
            
            Next, we look at the second term in the definition of $\crosshsic$, and note that 
            \begin{align}
                \mathbb{E} \lb \bigg( \frac{1}{n(n-1)} \sum_{i \neq j } X_i Y_j \bigg)^2 \rb \lesssim \frac{1}{n^2}  \lp \mathbb{E}[X^2Y^2]  + \mathbb{E}[XY]^2 \rp \to 0. 
            \end{align}
            This implies that $\frac{1}{n(n-1)}\sum_{i \neq j} X_iY_j \convprob 0$. Using these we have the following: 
            \begin{align}
                \frac{ \sqrt{n} \crosshsic }{s_n} = \frac{1}{\sqrt{\mathrm{I}_n - \mathrm{II}_n}} \times \text{sign}(f_2) \lp \frac{1}{\sqrt{n}} \sum_{i=1}^n X_i Y_i + o_P(1) \rp. 
            \end{align}
            Thus, the power of the cross-HSIC test can be written as 
            \begin{align}
                \mathbb{P}\lp \cshsic > z_{1-\alpha} \rp = \mathbb{P} \lp f_2>0 \rp \times \mathbb{P}\lp \sum_{i=1}^n \frac{X_i Y_i}{\sqrt{n}} > z_{1-\alpha} \rp  + \mathbb{P} \lp f_2<0 \rp \times \mathbb{P}\lp \sum_{i=1}^n \frac{X_i Y_i}{\sqrt{n}} < -z_{1-\alpha} \rp + o_P(1)  \label{eq:linear-3-1}
            \end{align}
             We now look at all the terms involved in~\eqref{eq:linear-3-1}. 
            Let us assume that $\mathbb{E}_{P_{XY,n}}[XY] = c_n/\sqrt{n}$ with $\lim_{n \to \infty}c_n = c$, where $c \in \mathbb{R} \cup\{\infty\}$. 
            \begin{align}
                \mathbb{P} \lp f_2 > 0 \rp = \mathbb{P} \lp \frac{1}{\sqrt{n}} \sum_{t=n+1}^{2n} X_t Y_t - \mathbb{E}[X_tY_t] > - \mathbb{E}[XY] \sqrt{n} \rp \to 1- \Phi(-c)= \Phi(c). 
            \end{align}
            Similarly, the other term, $\mathbb{P}(f_2\leq 0) \to \Phi(-c)$. Next,  we note that  
            \begin{align}
                \mathbb{P}\lp \sum_{t=1}^n \frac{ X_i Y_i - \mathbb{E}_{P_{XY, n}}[XY]}{\sqrt{n}} > z_{1-\alpha} - \mathbb{E}_{P_{XY, n}}[XY] \sqrt{n}\rp  \to \Phi(c - z_{1-\alpha}). 
            \end{align}
            Proceeding similarly with the last term, we obtain 
            \begin{align}
                \mathbb{P}\lp \sum_{t=1}^n \frac{ X_i Y_i - \mathbb{E}_{P_{XY, n}}[XY]}{\sqrt{n}} < -z_{1-\alpha} - \mathbb{E}_{P_{XY, n}}[XY] \sqrt{n}\rp  \to \Phi(-c - z_{1-\alpha}), 
            \end{align}
            which implies that 
            \begin{align}
                \lim_{n \to \infty} \mathbb{P}\lp \cshsic > z_{1-\alpha} \rp  = \Phi(c) \Phi(-z_{1-\alpha} + c) + \Phi(-c) \Phi(-z_{1-\alpha} - c).  
            \end{align}
        \subsection{Details of \texorpdfstring{\Cref{exam: Poisson}}{Example 5}} \label{appendix: poisson example}
            Based on our previous discussion, $\cshsic$ with linear kernels can be expressed as
            \begin{align}
            	\cshsic = \mathrm{sign}(f_2) \frac{\sqrt{n} f_1}{\sqrt{\mathrm{I}_n - \mathrm{II}_n}},
            \end{align}
            where $\mathrm{I}_n$ and $\mathrm{II}_n$ can be found in \eqref{eq:I_n} and \eqref{eq:II_n}, respectively. Moreover, since $\mathrm{I}_n - \mathrm{II}_n \geq 0$, we have
            \begin{align}
            	\mathbb{P}(\cshsic \leq 0) = \mathbb{P}\bigl(\mathrm{sign}(f_2) \cdot n f_1 \leq 0\bigr).
            \end{align} 
            To approximate this probability to the target probability in \Cref{exam: Poisson}, note that 
            \begin{align}
            	nf_1 =  \sum_{i=1}^n (X_i - p_n)(Y_i-p_n) - \frac{1}{n-1} \sum_{1 \leq i \neq j \leq n} (X_i - p_n) (Y_j - p_n) 
            \end{align}
        	by the location-invariance property. We also note that since 
        	\begin{align}
        		\var \biggl( \frac{1}{n-1} \sum_{1 \leq i \neq j \leq n} (X_i - p_n) (Y_j - p_n)  \biggr) = \frac{n}{n-1} p_n^2(1-p_n)^2 \rightarrow 0,
        	\end{align}
        	an application of Chebyshev's inequality ensures that 
        	\begin{align}
        		nf_1 = \sum_{i=1}^n (X_i - p_n)(Y_i-p_n) + o_P(1). 
        	\end{align}
        	A similar argument shows that $p_n \sum_{i=1}^n (X_i - p_n) =  o_P(1)$ and $p_n \sum_{i=1}^n (Y_i - p_n) =  o_P(1)$. Therefore
        	\begin{align}
        		nf_1 = \sum_{i=1}^n X_iY_i - np_n^2+ o_P(1). 
        	\end{align}
        	Since $X_iY_i$s are \iid Bernoulli random variables with parameter $p_n^2$ satisfying $np_n^2 = \lambda>0$, Poisson limit theorem \citep[Theorem 3.6.1 of][]{durrett2019probability} together with Slutsky's theorem yields
        	\begin{align}
        		nf_1 \convdist \mathrm{Poisson}(\lambda) - \lambda. 
        	\end{align}
        	Similarly it also follows that $\mathrm{sign}(f_2) = \mathrm{sign}(n f_2) \convdist \mathrm{sign}(\mathrm{Poisson}(\lambda) - \lambda)$ by the continuous mapping theorem. Strictly speaking, the sign function is discontinuous at $x=0$, and thus the continuous mapping theorem does not directly apply. However the same conclusion follows since $\mathbb{P}(\mathrm{Poisson}(\lambda) - \lambda = 0) = 0$, i.e., the set of discontinuity points has zero probability, as $\lambda$ is a non-integer by our assumption. Combining the pieces and observing that $f_1$ and $f_2$ are independent, we therefore conclude that $\mathrm{sign}(f_2) \cdot n f_1 \convdist \mathrm{sign}(V') \times V$ where $V,V'$ are i.i.d.~centered Poisson random variables. Lastly, the distribution of $\mathrm{sign}(V') \times V$ is continuous at $0$ since $\lambda$ is not an integer. Hence the definition of convergence in distribution implies the desired result.

        \subsection{Proof of Auxiliary Lemmas}        
            \subsubsection{Proof of~\texorpdfstring{\Cref{lemma:linear-kernel-1}}{Lemma~21}}
            \label{proof:lemma-linear-kernel-1}
            To prove this result, we first note that $\mathbb{E}[(X_i-X_j)(X_i-X_q)(Y_i-Y_j)(Y_i-Y_q)] = \mathbb{E}[X_i^2 Y_i^2] = v_n$, under the null for any $n \geq 1$. Hence, $\mathbb{E}[B_n/v_n]=1$. To complete the proof, we will show that the second moment of $B_n/v_n-1$ converges to $0$ with $n$, which in turn implies by an application of Chebychev's inequality that $B_n/v_n-1 \convprob 0$ as required. 
            Using the notation $h_{ij}$ to denote $(X_i-X_j)(Y_i-Y_j)$, we have 
            \begin{align}
                \mathbb{E}\lb \bigg( \frac{B_n}{v_n} - 1 \bigg)^2\rb \lesssim \frac{1}{n^6} \sum_{\substack{1\leq i,j,q \leq n \\ \text{$i,j,q$ distinct}}} \sum_{\substack{1\leq t,u,r \leq n \\ \text{$t,u,r$ distinct}}} 
                \mathbb{E}\lb \lp \frac{h_{ij} h_{iq}}{v_n} -1 \rp \lp  \frac{ h_{tu} h_{tr}}{v_n} -1 \rp \rb.  \label{eq:lemma-linear-1-1}
            \end{align}
            First consider the case where all six terms, $i,j,q,t,u,r$ are distinct --- there are $\Omega(n^6)$ such terms, and in all these instances the summand in~\eqref{eq:lemma-linear-1-1} is equal to zero, due to independence of the two factors. For all other $\mc{O}(n^5)$ terms with at two common indices, we can upper bound the summand with $\mc{O}\lp (\mathbb{E}[(XY)^4] + v_n^2)/v_n^2\rp$ by using Cauchy-Schwarz inequality. Thus, overall, we get 
            \begin{align}
                \mathbb{E}\lb \bigg( \frac{B_n}{v_n} - 1 \bigg)^2\rb \lesssim \frac{ \mathbb{E}[(XY)^4] + v_n^2}{v_n^2 n} \to 0. 
            \end{align}
            This is sufficient to conclude that $B_n/v_n - 1 \convprob 0$, as required.  
        
        \subsubsection{Proof of~\texorpdfstring{\Cref{lemma:linear-kernel-2}}{Lemma 22}}        
        \label{proof:lemma-linear-kernel-2}
            Since $A_n$ is a nonnegative random variable, to show that it converges in probability to $0$, it suffices to show that $\mathbb{E}[A_n] \to 0$. 
            \begin{align}
                \mathbb{E}[A_n] = \frac{1}{n(n-1)^2} \sum_{i \neq j}\mathbb{E}[(X_i-X_j)^2 (Y_i-Y_j)^2] \lesssim \frac{1}{n^3} \lp n^2 \mathbb{E}[X^2 Y^2] \rp \leq \frac{1}{\sqrt{n}}\sqrt{ \frac{ \mathbb{E}[(XY)^4]}{n}} \to 0. 
            \end{align}
            This completes the first part of the proof. The second statement, $B_n - (1+4\rho_n^2) \convprob 0$, is obtained by following the exact same steps used in proving~\Cref{lemma:linear-kernel-1} in ~\Cref{proof:lemma-linear-kernel-1}, and we omit the details.  
        
        \subsubsection{Proof of~\texorpdfstring{\Cref{lemma:linear-kernel-3}}{Lemma 23}}
        \label{proof:lemma-linear-kernel-3}
            Recall that the term $\mathrm{II}_n$ can be written as 
            \begin{align}
                \mathrm{II}_n &\asymp 4 \lp \frac{1}{n} \sum_{i=1}^n X_i Y_i - \frac{1}{(n-1)}\lp \lp \sum_{i=1}^n \frac{X_iY_i}{\sqrt{n}} \rp^2 -  \frac{1}{n} \sum_{i=1}^n X_i^2Y_i^2 \rp \rp^2  \\
                & \defined 4 \lp E_n - \frac{1}{n-1} \lp F_n^2 - G_n \rp \rp^2. 
            \end{align}
            
            We now look at the three terms $E_n, F_n$ and $G_n$ separately. First note that 
            \begin{align}
                F_n \convdist N(0,1), 
            \end{align}
            by an application of Lyapunov CLT, since $\frac{\mathbb{E}[X_i^4Y_i^4]}{n \var(X_iY_i)} = \frac{\mathbb{E}[X_i^4Y_i^4]}{n } \to 0$, by assumption. This implies that $F_n^2$ is a $O_P(1)$ term, and thus $\frac{1}{n-1} F_n^2 \convprob 0$. 
            
            Next, note that $\var(G_n) \lesssim \frac{\mathbb{E}[(XY)^4]}{n} \to 0$, which implies that $G_n - \mathbb{E}[G_n] \convprob 0$. Thus we have 
            \begin{align}
                \frac{1}{n-1} G_n = \frac{1}{n-1} \lp o_P(1) + \mathbb{E}[G_n] \rp = o_P(1) + \frac{\mathbb{E}[X^2Y^2]}{n-1} = o_P(1). 
            \end{align}
            Thus, we have proved that $\frac{1}{n-1} (F_n^2 - G_n) \convprob 0$, which implies that $\mathrm{II}_n \asymp 4 E_n^2 + o_P(1)$. Thus, we have the following: 
            \begin{align}
                \mathrm{II}_n \asymp 4 \lp E_n - \rho_n\rp^2 + 4\rho_n^2 + o_P(1), \quad \text{or} \quad \mathrm{II}_n - 4\rho_n^2 \asymp 4(E_n-\rho_n)^2 + o_P(1). 
            \end{align}
            Finally, we note that $\mathbb{E}[E_n - \rho_n] = 0$ and $\var(E_n -\rho_n) = \frac{1}{n} \var(XY) = \frac{1}{n} \to 0$, which implies that $E_n - \rho_n \convprob 0$ by an application of Chebychev's inequality. This concludes the proof.

    \section[Pointwise Asymptotic Null Distribution]{Pointwise Asymptotic Null Distribution~\texorpdfstring{(\Cref{theorem:null-dist-0})}{(Theorem 6)}}
        \label{proof:null-dist-0}
   	In this section, we prove that $\cshsic$ is asymptotically $N(0,1)$ under the null, provided that (i) the kernels $k$ and $\ell$ are fixed in $n$, (ii) the data generating distribution $P_{XY}$ is fixed in $n$, and (iii) $0 < \mathbb{E}[\tildeg^2(Z_1,Z_2)] < \infty$ for $Z_1,Z_2$ independent draws from $P_{XY}$. We first present an outline of the proof in~\Cref{appendix:outline-null-0}, breaking the overall argument into three steps, and then present the details of these steps in the next three subsections. 
   	
   \begin{remark}[moment condition for $\bar{\mathrm{x}}$dCov] Suppose that we use the Euclidean distance kernel in Fact~\ref{fact:dcov-hsic-equivalence} with $x_0=x'$ and $y_0=y'$. Then the kernels become $k(x,x') = - \frac{1}{2} \| x - x'\|$ and  $\ell(y,y') = - \frac{1}{2} \| y - y'\|$. Due to the non-linearity of the Euclidean norm, it is difficult to obtain an explicit form of $\mathbb{E}[\tildeg^2(Z_1,Z_2)]$ associated with the Euclidean distance kernel. Nevertheless, Jensen's inequality gives a sufficient condition for $\mathbb{E}[\tildeg^2(Z_1,Z_2)] < \infty$. First note that 
  	\begin{align}
  		\tildeg\bigl( (x,x'), (y,y')\bigr) ~=~ \frac{1}{4} & \big\{ \| x-x'\| - \mathbb{E}[\|X - x'\|] - \mathbb{E}[\|x - X'\|] + \mathbb{E}[\|X - X'\|] \big\} \\
  		 \times &  \big\{ \| y-y'\| - \mathbb{E}[\|Y - y'\|] - \mathbb{E}[\|y - Y'\|] + \mathbb{E}[\|Y - Y'\|] \big\}.
  	\end{align}
  	Then Jensen's inequality (more specifically, $\{ \mathbb{E}[\|X - X'\|]\}^2 \leq  \mathbb{E}[\|X - X'\|^2]$) along with independence of $X$ and $Y$ yields
  	\begin{align}
  		\mathbb{E}[\tildeg^2(Z_1,Z_2)] \lesssim \mathbb{E}[\|X - X'\|^2] \mathbb{E}[\|Y - Y'\|^2].
  	\end{align}
  	Therefore, $\bar{\mathrm{x}}$dCov is asymptotically $N(0,1)$ given that $\mathbb{E}[\|X - X'\|^2] < \infty$ and $\mathbb{E}[\|Y - Y'\|^2] < \infty$.
 
   \end{remark}
 	
  	\subsection{Outline of the proof}
            \label{appendix:outline-null-0}
  	Throughout the proof, we work with the centered features $\widetilde{\phi}(\cdot) := \phi(\cdot) - \mu$ and $\widetilde{\psi}(\cdot):= \psi(\cdot) - \nu$. This can be done without loss of generality due to the following simple observation: 
  	 \begin{align}
  		h(Z_i,Z_j) = h_{ij} & = \frac{1}{2} \bigl\{ \phi(X_i) - \phi(X_j) \bigr\} \bigl\{ \psi(Y_i) - \psi(Y_j) \bigr\} \\
  		& =  \frac{1}{2} \bigl\{ \phi(X_i) - \mu + \mu - \phi(X_j) \bigr\} \bigl\{ \psi(Y_i) - \nu + \nu - \psi(Y_j) \bigr\} \\
  		& =   \frac{1}{2} \bigl\{ \widetilde{\phi}(X_i) - \widetilde{\phi}(X_j) \bigr\} \bigl\{ \widetilde{\psi}(Y_i) - \widetilde{\psi}(Y_j) \bigr\}.
  	\end{align}
  	Having this observation in mind, the main technical ingredient of the proof is the orthonormal expansion of $\tildeg$ in equation~\eqref{eq:eigendecomposition}. In fact, we can express this orthonormal expansion as the product of the orthonormal expansions of $\tildek$ and $\tildel$, respectively. This leads to
  	\begin{align}
  		\tildeg(z,z') ~=~ & \tildeg((x,y),(x',y')) = \tildek(x,x') \tildel(y,y') \\
  		= ~ &\bigg\{ \sum_{k=1}^{\infty} \lambda_{X,k}  e_{X,k}(x)e_{X,k}(x')  \bigg\} \bigg\{ \sum_{k'=1}^{\infty} \lambda_{Y,k'}  e_{Y,k'}(y)e_{Y,k'}(y')  \bigg\} \\
  		= ~ & \sum_{k=1}^{\infty}\sum_{k'=1}^{\infty} \lambda_{X,k} \lambda_{Y,k'}  e_{X,k}(x)e_{X,k}(x') e_{Y,k'} (y) e_{Y,k'} (y')\\
  		= ~ &  \sum_{k=1}^{\infty} \lambda_{k} e_{k}(z) e_{k} (z').
  	\end{align}
  	 In the proof, one of the main challenges is to handle the infinite sum associated with eigenvalues and eigenfunctions. When the sum is finite, then the usual fixed-dimensional law of large numbers and multivariate central limit theorem establish the desired result in a straightforward manner. However, when dealing with an infinite sum, we need extra care, and we bypass this technical difficulty by leveraging the truncation argument used in the asymptotic analysis of degenerate U-statistics~\citep[e.g.][]{serfling2009approximation}.

    \medskip 
    
  	We break the proof into several pieces for readability. 
  	\begin{itemize}
  		\item Step 1: In the first step, we prove 
  		\begin{align}
  			n \crosshsic ~=~ & \sum_{k=1}^\infty \lambda_k \biggl( \frac{1}{\sqrt{n}}\sum_{i=1}^{n} e_k(Z_i) \biggr) \biggl( \frac{1}{\sqrt{n}}\sum_{t=n+1}^{2n} e_{k}(Z_t) \biggr) + o_P(1).
  		\end{align}
  		\item Step 2: In the second step, we prove
  		\begin{align}
  			(n-1) s_n^2 =  \frac{1}{n} \sum_{i=1}^n \bigg\{ \sum_{k=1}^\infty \lambda_k e_k(Z_i) \biggl( \frac{1}{\sqrt{n}}\sum_{t=n+1}^{2n} e_k(Z_t) \biggr) \bigg\}^2 + o_P(1).
  		\end{align}
  		\item Step 3: In the final step, we prove the bivariate central limit theorem
  		\begin{align}
  			\begin{pmatrix}
  				\sum_{k=1}^\infty \lambda_k \biggl( \frac{1}{\sqrt{n}}\sum_{i=1}^{n} e_k(Z_i) \biggr) \biggl( \frac{1}{\sqrt{n}}\sum_{t=n+1}^{2n} e_{k}(Z_t) \biggr)  \\
  				\frac{1}{n} \sum_{i=1}^n \bigg\{ \sum_{k=1}^\infty \lambda_k e_k(Z_i) \biggl( \frac{1}{\sqrt{n}}\sum_{t=n+1}^{2n} e_k(Z_t) \biggr) \bigg\}^2
  			\end{pmatrix} 
  			\convdist \begin{pmatrix}
  				\sum_{k=1}^\infty \lambda_k W_k \widetilde{W}_k \\[.5em]
  				\sum_{k=1}^\infty \lambda_k^2  \widetilde{W}_k^2
  			\end{pmatrix},
  		\end{align}
  		where $W_1, \widetilde{W}_1, W_2,  \widetilde{W}_2, \ldots$ are i.i.d.~$N(0,1)$. Then Slutsky's theorem along with the continuous mapping theorem proves that 
  		\begin{align}
  			\cshsic ~=~ & \frac{n \crosshsic}{\sqrt{(n-1)s_n^2}} \frac{\sqrt{n-1}}{\sqrt{n}} \\
  			=~ & \frac{ \sum_{k=1}^\infty \lambda_k \biggl( \frac{1}{\sqrt{n}}\sum_{i=1}^{n} e_k(Z_i) \biggr) \biggl( \frac{1}{\sqrt{n}}\sum_{t=n+1}^{2n} e_{k}(Z_t) \biggr) + o_P(1)}{\sqrt{\frac{1}{n} \sum_{i=1}^n \bigg\{ \sum_{k=1}^\infty \lambda_k e_k(Z_i) \biggl( \frac{1}{\sqrt{n}}\sum_{t=n+1}^{2n} e_k(Z_t) \biggr) \bigg\}^2 + o_P(1)}}\{1 + o_P(1)\} \\[.5em]
  			\convdist ~& \frac{\sum_{k=1}^\infty \lambda_k W_k \widetilde{W}_k }{\sqrt{\sum_{k=1}^\infty \lambda_k^2  \widetilde{W}_k^2}} \overset{d}{=} N(0,1).
  		\end{align}
  	\end{itemize}
  	In the subsequent subsections, we present detailed proofs of the above results in each step.
  	
   \medskip 
  	
  	\subsection[Proof of Step 1]{Proof of Step 1 (Numerator)} \label{Section: Proof of Step 1 (Numerator)}
  	We start by decomposing $\crosshsic$ as
  	\begin{align}
  		\crosshsic ~=~ & \frac{1}{n(n-1)}\sum_{1 \leq i \neq j \leq n} \langle h_{ij}, f_2 \rangle \\
  		= ~&  \underbrace{\frac{1}{n} \sum_{i=1}^n \langle \widetilde{\phi}(X_i) \widetilde{\psi}(Y_i), f_2 \rangle}_{\crosshsic^{(1)}} -  \underbrace{\frac{1}{n(n-1)}\sum_{1 \leq i \neq j \leq n} \langle  \widetilde{\phi}(X_i) \widetilde{\psi}(Y_j), f_2 \rangle}_{\crosshsic^{(2)}}.
  	\end{align}
  	Notice that $\crosshsic^{(2)}$ is a degenerate U-statistic of order 2 whose kernel satisfies 
  	\begin{align}
  		\mathbb{E} \bigl[ \langle  \widetilde{\phi}(X_i) \widetilde{\psi}(Y_j), f_2 \rangle  | f_2,Z_i \bigr] = \mathbb{E} \bigl[ \langle  \widetilde{\phi}(X_i) \widetilde{\psi}(Y_j), f_2 \rangle  | f_2,Z_j \bigr] = 0.
  	\end{align}
  	Under the null, $\mathbb{E}[\crosshsic^{(2)} | f_2] = 0$ and 
  	\begin{align}
  		\mathbb{E}\Bigl[\bigl\{\crosshsic^{(2)}\bigr\}^2 \big| f_2 \Bigr] =  O\biggl( \frac{1}{n^2} \biggr) \mathbb{E}\bigl[ \langle \widetilde{\phi}(X_1) \widetilde{\psi}(Y_1), f_2 \rangle^2 | f_2 \bigr].
  	\end{align}
   Moreover,
   \begin{align}
   	\langle \widetilde{\phi}(X_1) \widetilde{\psi}(Y_1), f_2 \rangle^2 ~=~ & \biggl( \frac{1}{n(n-1)} \sum_{n+1 \leq t \neq u \leq 2n} \langle  \widetilde{\phi}(X_1) \widetilde{\psi}(Y_1), h(Z_t,Z_u)\rangle \biggr)^2 \\
   	\lesssim ~ & \biggl( \frac{1}{n} \sum_{t=n+1}^{2n} \langle  \widetilde{\phi}(X_1) \widetilde{\psi}(Y_1),   \widetilde{\phi}(X_t) \widetilde{\psi}(Y_t) \rangle  \biggr)^2 \\
   	& ~~ + \biggl(  \frac{1}{n(n-1)} \sum_{n+1 \leq t \neq u \leq 2n} \langle  \widetilde{\phi}(X_t) \widetilde{\psi}(Y_u), \widetilde{\phi}(X_1) \widetilde{\psi}(Y_1) \rangle \biggr)^2,
   \end{align}
   and therefore its expectation is bounded by
   \begin{align} \label{Eq: bound of tilde terms product f2}
   	\mathbb{E} \bigl[ \langle \widetilde{\phi}(X_1) \widetilde{\psi}(Y_1), f_2 \rangle^2\bigr] \lesssim \frac{1}{n} \mathbb{E}\bigl[ \langle \widetilde{\phi}(X_1) \widetilde{\psi}(Y_1), \widetilde{\phi}(X_2) \widetilde{\psi}(Y_2) \rangle^2 \bigr] = \frac{1}{n} \mathbb{E}\bigl[ \tildeg(Z_1,Z_2)^2 \bigr].
   \end{align}
	Hence, for $t \geq 0$, Chebyshev's inequality yields
	\begin{align}
		\mathbb{E}\bigl[\mathbb{P}\bigl( |\crosshsic^{(2)}| \geq t  | f_2 \bigr)\bigr] \lesssim~ & \frac{1}{t^2n^2} \mathbb{E}\Bigl[\mathbb{E}\bigl[ \langle \widetilde{\phi}(X_1) \widetilde{\psi}(Y_1), f_2 \rangle^2 | f_2 \bigr]\Bigr] \\
		\lesssim ~ & \frac{1}{t^2n^3} \mathbb{E}\bigl[ \tildeg(Z_1,Z_2)^2 \bigr],
	\end{align}
  	which concludes that $\crosshsic^{(2)} = O_P(n^{-3/2}) = o_P(n^{-1})$.
  	
  	Next we study $\crosshsic^{(1)}$, which equals 
  	\begin{align}
  		& \crosshsic^{(1)} ~=~ \frac{1}{n} \sum_{i=1}^n \langle \widetilde{\phi}(X_i) \widetilde{\psi}(Y_i), f_2 \rangle \\
  		= ~ &  \frac{1}{n^2} \sum_{i=1}^n \sum_{t = n+1}^{2n} \langle \widetilde{\phi}(X_i) \widetilde{\psi}(Y_i),  \widetilde{\phi}(X_t) \widetilde{\psi}(Y_t)\rangle - \frac{1}{n^2(n-1)} \sum_{i=1}^n \sum_{n+1 \leq t \neq u \leq 2n} \langle  \widetilde{\phi}(X_i) \widetilde{\psi}(Y_i), \widetilde{\phi}(X_t) \widetilde{\psi}(Y_u) \rangle. 
  	\end{align}
  	The second term above has zero expectation and its variance is bounded above by 
  	\begin{align}
  		O\biggl( \frac{1}{n^3} \mathbb{E}\bigl[ \tildeg(Z_1,Z_2)^2 \bigr] \biggr).
  	\end{align}
  	As a result, we have
  	\begin{align}
  		\crosshsic^{(1)} ~=~ & \frac{1}{n^2} \sum_{i=1}^n \sum_{t = n+1}^{2n} \langle \widetilde{\phi}(X_i) \widetilde{\psi}(Y_i),  \widetilde{\phi}(X_t) \widetilde{\psi}(Y_t)\rangle + O_P(n^{-3/2}) \\
  		= ~ &  \sum_{k=1}^\infty \lambda_k \biggl( \frac{1}{n}\sum_{i=1}^{n} e_k(Z_i) \biggr) \biggl( \frac{1}{n}\sum_{t=n+1}^{2n} e_{k}(Z_t) \biggr)  + O_P(n^{-3/2}).
  	\end{align}
  	Putting all together, we see 
  	\begin{align}
  		n \crosshsic ~=~ & \sum_{k=1}^\infty \lambda_k \biggl( \frac{1}{\sqrt{n}}\sum_{i=1}^{n} e_k(Z_i) \biggr) \biggl( \frac{1}{\sqrt{n}}\sum_{t=n+1}^{2n} e_{k}(Z_t) \biggr) + o_P(1),
  	\end{align} 
  	as claimed in Step 1.
  	
  	\medskip 
  	
  	\subsection[Proof of Step 2]{Proof of Step 2 (Denominator)} 
  	In this step, we would like to show $(n-1) s_n^2$ approximates
  	\begin{align}
  		(n-1) s_n^2 =  \frac{1}{n} \sum_{i=1}^n \bigg\{ \sum_{k=1}^\infty \lambda_k e_k(Z_i) \biggl( \frac{1}{\sqrt{n}}\sum_{t=n+1}^{2n} e_k(Z_t) \biggr) \bigg\}^2 + o_P(1).
  	\end{align}
  	This part is much more challenging to prove than Step 1 partly due to the complexity of $s_n^2$. To simplify the problem a bit, we first prove that $\crosshsic^2 = O_P(n^{-2})$. Indeed, from the previous results in Step 1, for this claim to hold, it suffices to prove 
  	\begin{align}
  		\frac{1}{n^2} \sum_{i=1}^n \sum_{t = n+1}^{2n} \langle \widetilde{\phi}(X_i) \widetilde{\psi}(Y_i),  \widetilde{\phi}(X_t) \widetilde{\psi}(Y_t)\rangle   = O_P(n^{-1}).
  	\end{align}
  	This follows by the Chebyshev's argument as before given that it has the expectation zero and the variance bounded by $O(n^{-2} \mathbb{E}\bigl[ \tildeg(Z_1,Z_2)^2 \bigr])$. Consequently, we have $\crosshsic^2 = O_P(n^{-2})$, and thus
  	 \begin{align} 
  		s_n^2 ~=~ & \frac{4(n-1)}{(n-2)^2} \lb \frac{1}{(n-1)^2} \sum_{i=1}^n \lp \sum_{j=1}^{n, j\neq i} \la h(Z_i, Z_j), f_2 \ra \rp^2 - n \crosshsic^2  \rb \\
  		=~ & \frac{4(n-1)}{(n-1)^2(n-2)^2} \sum_{i=1}^n \lp \sum_{j=1}^{n, j\neq i} \la h(Z_i, Z_j), f_2 \ra \rp^2 + O_P(n^{-2}).
  	\end{align}
  	The first term above further approximates 
  	\begin{align}
  		& \frac{4(n-1)}{(n-1)^2(n-2)^2} \sum_{i=1}^n \lp \sum_{j=1}^{n, j\neq i} \la h(Z_i, Z_j), f_2 \ra \rp^2 \\ 
  		~=~ &  \frac{4(n-1)}{(n-1)^2(n-2)^2} \sum_{1 \leq i \neq j \leq n} \la h(Z_i, Z_j), f_2 \ra^2  +  \frac{4(n-1)}{(n-1)^2(n-2)^2} \sum_{\substack{1\leq i,j,q \leq n \\ \text{$i,j,q$ distinct}}}  \la h(Z_i, Z_j), f_2 \ra \la h(Z_i, Z_q), f_2 \ra \\
  		= ~ & \frac{4}{(n-1)(n-2)^2} \sum_{\substack{1\leq i,j,q \leq n \\ \text{$i,j,q$ distinct}}}  \la h(Z_i, Z_j), f_2 \ra \la h(Z_i, Z_q), f_2 \ra + O_P(n^{-2}),
  	\end{align}
  	where the last step uses Markov's inequality together with 
  	\begin{align}
  		\mathbb{E}\bigl[ \la h(Z_i, Z_j), f_2 \ra^2 \bigr] ~\lesssim~ &  \mathbb{E} \biggl[ \la \widetilde{\phi}(X_i) \widetilde{\psi}(Y_i), f_2 \ra^2 \biggr] + \mathbb{E} \biggl[ \la \widetilde{\phi}(X_i) \widetilde{\psi}(Y_j), f_2 \ra^2 \biggr] \\
  		+ & \mathbb{E} \biggl[ \la \widetilde{\phi}(X_j) \widetilde{\psi}(Y_j), f_2 \ra^2 \biggr] + \mathbb{E} \biggl[ \la \widetilde{\phi}(X_j) \widetilde{\psi}(Y_i), f_2 \ra^2 \biggr]  ~\lesssim ~ \frac{1}{n} \mathbb{E}\bigl[ \tildeg(Z_1,Z_2)^2 \bigr],
  	\end{align}
  	due to the upper bound in \eqref{Eq: bound of tilde terms product f2}. Thus the main term to investigate is 
  	\begin{align}
  		s_{\mathrm{main},n}^2 :=~  & \frac{4}{n(n-1)(n-2)}\sum_{\substack{1\leq i,j,q \leq n \\ \text{$i,j,q$ distinct}}}  \la h(Z_i, Z_j), f_2 \ra \la h(Z_i, Z_q), f_2 \ra \\
  		= ~ &  \frac{1}{n(n-1)(n-2)}\sum_{\substack{1\leq i,j,q \leq n \\ \text{$i,j,q$ distinct}}}  \la \bigl\{ \widetilde{\phi}(X_i) - \widetilde{\phi}(X_j) \bigr\} \bigl\{ \widetilde{\psi}(Y_i) - \widetilde{\psi}(Y_j) \bigr\}, f_2 \ra \\
  		& ~~~~~~~~~~~~~~~~~~~~~~~~~~~~~~~~~\times \la \bigl\{ \widetilde{\phi}(X_i) - \widetilde{\phi}(X_q) \bigr\} \bigl\{ \widetilde{\psi}(Y_i) - \widetilde{\psi}(Y_q) \bigr\}, f_2 \ra.
  	\end{align}
  	We also claim that 
  	\begin{align} \label{Eq: s_n approximation}
  		s_{\mathrm{main},n}^2 =  \frac{1}{n} \sum_{i=1}^n \la \widetilde{\phi}(X_i)\widetilde{\psi}(Y_i),  \frac{1}{n}\sum_{t=n+1}^{2n} \widetilde{\phi}(X_t)\widetilde{\psi}(Y_t) \ra^2 + o_P(n^{-1}),
  	\end{align}
  	which yields the desired result in Step 2. Note that $s_{\mathrm{main},n}^2$ is a U-statistic of order 3 with a varying kernel in $n$ conditional on $f_2$. Therefore we are not able to directly apply the usual approximation theory of U-statistics, which focuses on a fixed kernel, in the process of obtaining approximation~\eqref{Eq: s_n approximation}. It turns out that the analysis is non-trivial especially only with the finite second moment of $\tildeg$. We postpone the detailed (long) analysis to Appendix~\ref{appendix: s_n approximation}. 
  	
  	\medskip 
  	
  	\subsection{Proof of Step 3} \label{Section: Proof of Step 3}
  	By the Cram\'{e}r--Wold device, the bivariate central limit theorem holds if for each $t_1,t_2 \in \mathbb{R}$,
  	\begin{align}
  		& \underbrace{t_1 \sum_{k=1}^\infty \lambda_k \biggl( \frac{1}{\sqrt{n}}\sum_{i=1}^{n} e_k(Z_i) \biggr) \biggl( \frac{1}{\sqrt{n}}\sum_{t=n+1}^{2n} e_{k}(Z_t) \biggr)  + t_2 \frac{1}{n} \sum_{i=1}^n \bigg\{ \sum_{k=1}^\infty \lambda_k e_k(Z_i) \biggl( \frac{1}{\sqrt{n}}\sum_{t=n+1}^{2n} e_k(Z_t) \biggr) \bigg\}^2}_{T_n} \\
  		\convdist ~& \underbrace{t_1 \sum_{k=1}^\infty \lambda_k W_k \widetilde{W}_k +t_2\sum_{k=1}^\infty \lambda_k^2  \widetilde{W}_k^2}_{T}.
  	\end{align}
  	In order to establish this, we make use of the truncation argument as in Chapter 5 of \citet{serfling2009approximation}. First of all, for some fixed $K$, define 
  	\begin{align}
  		&T_{n,K} :=  t_1 \sum_{k=1}^K \lambda_k \biggl( \frac{1}{\sqrt{n}}\sum_{i=1}^{n} e_k(Z_i) \biggr) \biggl( \frac{1}{\sqrt{n}}\sum_{t=n+1}^{2n} e_{k}(Z_t) \biggr) \\
  		& \qquad \qquad \qquad + t_2 \frac{1}{n} \sum_{i=1}^n \bigg\{ \sum_{k=1}^K \lambda_k e_k(Z_i) \biggl( \frac{1}{\sqrt{n}}\sum_{t=n+1}^{2n} e_k(Z_t) \biggr) \bigg\}^2, \\
  		&T_{K} :=  t_1 \sum_{k=1}^K \lambda_k W_k \widetilde{W}_k +t_2\sum_{k=1}^K \lambda_k^2  \widetilde{W}_k^2.
  	\end{align}
  	Then by the triangle inequality
  	\begin{align}
  		\mathbb{E}[| T_{n} - T_{n,K}| ] \leq~ & |t_1| \mathbb{E}\biggl[ \bigg| \sum_{k=K+1}^\infty \lambda_k \biggl( \frac{1}{\sqrt{n}}\sum_{i=1}^{n} e_k(Z_i) \biggr) \biggl( \frac{1}{\sqrt{n}}\sum_{t=n+1}^{2n} e_{k}(Z_t) \biggr) \bigg| \biggr] \\
  		~& |t_2|  \mathbb{E} \biggl[  \bigg| \frac{1}{n} \sum_{i=1}^n \bigg\{ \sum_{k=1}^\infty \lambda_k e_k(Z_i) \biggl( \frac{1}{\sqrt{n}}\sum_{t=n+1}^{2n} e_k(Z_t) \biggr) \bigg\}^2 \\
  		& \hskip 5em - \frac{1}{n} \sum_{i=1}^n \bigg\{ \sum_{k=1}^K \lambda_k e_k(Z_i) \biggl( \frac{1}{\sqrt{n}}\sum_{t=n+1}^{2n} e_k(Z_t) \biggr) \bigg\}^2  \bigg|\biggr] \\
  		= ~& |t_1| (\mathrm{I}) + |t_2| (\mathrm{II}).
  	\end{align}
  	By the Cauchy--Schwarz inequality and using that $\{e_k\}_{k=1}^\infty$ are orthonormal and $\mathbb{E}[e_k(Z)] = 0$,
  	\begin{align}
  		(\mathrm{I}) \leq \sqrt{\sum_{k=K+1}^\infty \lambda_k^2}.
  	\end{align}
  	On the other hand, letting
  	\begin{align}
  		\sum_{k=1}^\infty \lambda_k e_k(Z_i) \biggl( \frac{1}{\sqrt{n}}\sum_{t=n+1}^{2n} e_k(Z_t) \biggr) ~=~& \sum_{k=1}^K \lambda_k e_k(Z_i) \biggl( \frac{1}{\sqrt{n}}\sum_{t=n+1}^{2n} e_k(Z_t) \biggr) \\
  		& + \sum_{k=K+1}^\infty \lambda_k e_k(Z_i) \biggl( \frac{1}{\sqrt{n}}\sum_{t=n+1}^{2n} e_k(Z_t) \biggr) \\
  		= ~ & A_i+ B_i
  	\end{align}
  	and using $|(A_i+B_i)^2 - A_i^2| \leq B_i^2 + 2|A_iB_i|$, we have
  	\begin{align}
  		(\mathrm{II}) \leq \frac{1}{n} \sum_{i=1}^n \mathbb{E}\bigl[B_i^2 + 2|A_iB_i|\bigr]  \leq \frac{1}{n} \sum_{i=1}^n \mathbb{E}\bigl[B_i^2\bigr] + 2\frac{1}{n} \sum_{i=1}^n \sqrt{\mathbb{E} \bigl[A_i^2\bigr]}  \sqrt{\mathbb{E} \bigl[B_i^2\bigr]},
  	\end{align}
  	where the last inequality uses the Cauchy--Schwarz inequality. Again by using the orthonormal property of  $\{e_k\}_{k=1}^\infty$ and $\mathbb{E}[e_k(Z)] = 0$,
  	\begin{align}
  		\mathbb{E} \bigl[A_i^2\bigr] = \sum_{k=1}^K \lambda_k^2 \quad \text{and} \quad \mathbb{E} \bigl[B_i^2\bigr] = \sum_{k=K+1}^\infty \lambda_k^2.
  	\end{align}
  	Therefore, noting that $\mathbb{E}[\tildeg^2(Z_1,Z_2)] = \sum_{k=1}^K \lambda_k^2 <\infty$,
  	\begin{align}
  		\mathbb{E}[| T_{n} - T_{n,K}| ] \leq  |t_1| \sqrt{\sum_{k=K+1}^\infty \lambda_k^2} + |t_2| \sum_{k=K+1}^\infty \lambda_k^2  + 2|t_2| \sqrt{\sum_{k=1}^K \lambda_k^2}\sqrt{\sum_{k=K+1}^\infty \lambda_k^2},
  	\end{align}
  	which goes to zero as $K \rightarrow \infty$ uniformly over $n$. Hence $T_{n,K}$ converges to $T_{n}$ in distribution as $K \rightarrow \infty$ uniformly over $n$. Similarly, we obtain
  	\begin{align}
  		\mathbb{E}[|T - T_K|] ~\leq~ & |t_1| \mathbb{E}\biggl[ \bigg| \sum_{k=K+1}^\infty \lambda_k W_k \widetilde{W}_k \bigg| \biggr] + |t_2| \mathbb{E} \biggl[\sum_{k=K+1}^\infty \lambda_k^2  \widetilde{W}_k^2 \biggr] \\
  		\leq ~ & |t_1| \sqrt{\sum_{k=K+1}^\infty \lambda_k^2} + |t_2| \sum_{k=K+1}^\infty \lambda_k^2,
  	\end{align}
  	which goes to zero as $K \rightarrow \infty$. In addition, for each fixed $K$, the multivariate central limit theorem yields
  	\begin{align}
  		\begin{pmatrix}
  			\frac{1}{\sqrt{n}}\sum_{i=1}^{n} e_1(Z_i) \\[.5em]
  			\vdots  \\[.5em]
  			\frac{1}{\sqrt{n}}\sum_{i=1}^{n} e_K(Z_i) \\[.5em]
  			 \frac{1}{\sqrt{n}}\sum_{t=n+1}^{2n} e_1(Z_t) \\[.5em]
  			 \vdots \\[.5em]
  			 \frac{1}{\sqrt{n}}\sum_{t=n+1}^{2n} e_K(Z_t)
  		\end{pmatrix} 
  		\convdist N_{2K}(0, I),
  	\end{align}
  	and the law of large numbers shows 
  	\begin{align}
  		\frac{1}{n} \sum_{i=1}^n e_k(Z_i) e_{k'}(Z_i) \convprob \begin{cases}
  			1 & \text{$k=k'$,} \\
  			0 & \text{otherwise.}
  		\end{cases}
  	\end{align}
  	Using these results, Slutsky's theorem together with the continuous mapping theorem proves that $T_{n,K} \convdist T_{K}$ for each fixed $K$. 
  	
  	Having these preliminary results in place, we finish the proof of Step 3 using the argument in \citet{serfling2009approximation} as follows. Let $\varphi_n$, $\varphi_{n,K}$, $\varphi$ and $\varphi_K$ be the characteristic functions of $T_n$, $T_{n,K}$, $T$ and $T_{K}$, respectively. Let $\epsilon >0$ be some fixed number. Then we can choose $K_1>0$ such that for all $K \geq K_1$ and for all $n$, 
  	\begin{align}
  		|\varphi_n(s) - \varphi_{n,K}(s)| ~=~& |\mathbb{E}[e^{isT_n} - e^{is T_{n,K}}| \\
  		\leq ~& \mathbb{E}|e^{is(T_n - T_{n,K})} - 1| \\
  		\leq ~ & |s| \mathbb{E}[|T_n - T_{n,K}|]  < \frac{\epsilon}{3}.
  	\end{align}
  	We can also choose $K_2>0$ such that for all $K \geq K_2$, 
  	\begin{align}
  		|\varphi(s) - \varphi_{K}(s)| < \frac{\epsilon}{3}.
  	\end{align}
  	Since $T_{n,K} \convdist T_{K}$ for each $K$, we can choose $N$ such that for all $n \geq N$ and $K_0 = \max\{K_1,K_2\}$,
  	\begin{align}
  		|\varphi_{n,K}(s) - \varphi_{K}| \leq \frac{\epsilon}{n}.
  	\end{align}
  	Therefore, by the triangle inequality,
  	\begin{align}
  		|\varphi_n(s)-\varphi(s)| \leq |\varphi_n(s) - \varphi_{n,K_0}(s)| + |\varphi_{n,K_0}(s) - \varphi_{K_0}| + |\varphi(s) - \varphi_{K_0}(s)| \leq \epsilon. 
  	\end{align}
  	Since $\epsilon$ was arbitrary, we conclude that $\lim_{n \to \infty} \varphi_n(s)  = \varphi(s)$ as desired.

  	\bigskip 
  	
  	\subsection{Details of approximation~\texorpdfstring{\eqref{Eq: s_n approximation}}{(24)}} \label{appendix: s_n approximation}
  	The aim of this section is to establish approximation~\eqref{Eq: s_n approximation}. First note that $s_{\mathrm{main},n}^2$ has the following decomposition:
  	\begin{align}
  		s_{\mathrm{main},n}^2 = ~& \frac{1}{n} \sum_{i=1}^n \la \widetilde{\phi}(X_i)\widetilde{\psi}(Y_i), f_2 \ra^2 - \frac{2}{n(n-1)} \sum_{1 \leq i \neq j \leq n}  \la \widetilde{\phi}(X_i)\widetilde{\psi}(Y_i), f_2 \ra \la \widetilde{\phi}(X_i)\widetilde{\psi}(Y_j), f_2 \ra \\
  		-& \frac{2}{n(n-1)} \sum_{1 \leq i \neq j \leq n}  \la \widetilde{\phi}(X_i)\widetilde{\psi}(Y_i), f_2 \ra \la \widetilde{\phi}(X_j)\widetilde{\psi}(Y_i), f_2 \ra \\
  		+&  \frac{3}{n(n-1)} \sum_{1 \leq i \neq j \leq n}  \la \widetilde{\phi}(X_i)\widetilde{\psi}(Y_i), f_2 \ra \la \widetilde{\phi}(X_j)\widetilde{\psi}(Y_j), f_2 \ra \\
  		- &   \frac{4}{n(n-1)(n-2)}\sum_{\substack{1\leq i,j,q \leq n \\ \text{$i,j,q$ distinct}}} \la \widetilde{\phi}(X_i)\widetilde{\psi}(Y_i), f_2 \ra \la \widetilde{\phi}(X_j)\widetilde{\psi}(Y_q), f_2 \ra \\
  		+  &\frac{1}{n(n-1)(n-2)}\sum_{\substack{1\leq i,j,q \leq n \\ \text{$i,j,q$ distinct}}} \la \widetilde{\phi}(X_i)\widetilde{\psi}(Y_j), f_2 \ra \la \widetilde{\phi}(X_i)\widetilde{\psi}(Y_q), f_2 \ra \\
  		+  &\frac{1}{n(n-1)(n-2)}\sum_{\substack{1\leq i,j,q \leq n \\ \text{$i,j,q$ distinct}}} \la \widetilde{\phi}(X_i)\widetilde{\psi}(Y_j), f_2 \ra \la \widetilde{\phi}(X_q)\widetilde{\psi}(Y_i), f_2 \ra \\
  		+  &\frac{1}{n(n-1)(n-2)}\sum_{\substack{1\leq i,j,q \leq n \\ \text{$i,j,q$ distinct}}} \la \widetilde{\phi}(X_j)\widetilde{\psi}(Y_i), f_2 \ra \la \widetilde{\phi}(X_i)\widetilde{\psi}(Y_q), f_2 \ra \\
  		+  &\frac{1}{n(n-1)(n-2)}\sum_{\substack{1\leq i,j,q \leq n \\ \text{$i,j,q$ distinct}}} \la \widetilde{\phi}(X_j)\widetilde{\psi}(Y_i), f_2 \ra \la \widetilde{\phi}(X_q)\widetilde{\psi}(Y_i), f_2 \ra \\
  		:= ~ & \sum_{i=1}^9 J_{(i)}.
  	\end{align}
  	By defining $f_{2,A}$, $f_{2,B}$ and $f_{2,C}$ as
  	\begin{align}
  		f_2 ~=~& \frac{1}{n} \sum_{t = n+1}^{2n} \widetilde{\phi}(X_t) \widetilde{\psi}(Y_t) - \frac{1}{n(n-1)} \sum_{n+1 \leq t\neq u \leq 2n} \widetilde{\phi}(X_t) \widetilde{\psi}(Y_u) \\
  		=~&  \frac{1}{n} \sum_{t = n+1}^{2n} \widetilde{\phi}(X_t) \widetilde{\psi}(Y_t) -  \frac{n}{n-1} \bigg\{ \frac{1}{n}\sum_{n+ 1 \leq t \leq 2n}  \widetilde{\phi}(X_t)  \bigg\} \bigg\{  \frac{1}{n} \sum_{n+ 1 \leq u \leq 2n}  \widetilde{\psi}(Y_u) \bigg\} \\
  		& \qquad  +\frac{1}{n(n-1)}  \sum_{t = n+1}^{2n} \widetilde{\phi}(X_t) \widetilde{\psi}(Y_t) \\
  		:=~ & f_{2,A} + f_{2,B} + f_{2,C}.
  	\end{align}
  	we analyze each $J_{(i)}$ term, separately.

  	\bigskip 
  	
  	\noindent \textbf{1.~Analyzing $J_{(1)}$.} Starting with $J_{(1)}$, we will show that 
  	\begin{align} \label{Eq: J1 approximation}
  		J_{(1)} = \frac{1}{n} \sum_{i=1}^n \la \widetilde{\phi}(X_i)\widetilde{\psi}(Y_i),  \frac{1}{n}\sum_{t=n+1}^{2n} \widetilde{\phi}(X_t)\widetilde{\psi}(Y_t) \ra^2 + o_P(n^{-1}).
  	\end{align}
  	First note that 
  	\begin{align}
  		J_{(1)} =~ & \frac{1}{n} \sum_{i=1}^n \la \widetilde{\phi}(X_i)\widetilde{\psi}(Y_i), f_{2,A} \ra^2 + \frac{2}{n} \sum_{i=1}^n \la \widetilde{\phi}(X_i)\widetilde{\psi}(Y_i), f_{2,A} \ra \la \widetilde{\phi}(X_i)\widetilde{\psi}(Y_i), f_{2,B} \ra \\
  		+ &  \frac{1}{n} \sum_{i=1}^n \la \widetilde{\phi}(X_i)\widetilde{\psi}(Y_i), f_{2,B} \ra^2 + \frac{2}{n} \sum_{i=1}^n \la \widetilde{\phi}(X_i)\widetilde{\psi}(Y_i), f_{2,B} \ra \la \widetilde{\phi}(X_i)\widetilde{\psi}(Y_i), f_{2,C} \ra \\
  		+ & \frac{1}{n} \sum_{i=1}^n \la \widetilde{\phi}(X_i)\widetilde{\psi}(Y_i), f_{2,C} \ra^2 + \frac{2}{n} \sum_{i=1}^n \la \widetilde{\phi}(X_i)\widetilde{\psi}(Y_i), f_{2,A} \ra \la \widetilde{\phi}(X_i)\widetilde{\psi}(Y_i), f_{2,C} \ra.
  	\end{align}
  	We can ignore the terms involving $f_{2,C}$ since they have a faster convergence rate than the same terms replacing $f_{2,C}$ with $f_{2,A}$. By noting that 
  	\begin{align}
  		\frac{1}{n} \sum_{i=1}^n \la \widetilde{\phi}(X_i)\widetilde{\psi}(Y_i), f_{2,A} \ra^2 = \frac{1}{n} \sum_{i=1}^n \la \widetilde{\phi}(X_i)\widetilde{\psi}(Y_i),  \frac{1}{n}\sum_{t=n+1}^{2n} \widetilde{\phi}(X_t)\widetilde{\psi}(Y_t) \ra^2,
  	\end{align}
  	we shall prove
  	\begin{align}
  		\underbrace{\frac{2}{n} \sum_{i=1}^n \la \widetilde{\phi}(X_i)\widetilde{\psi}(Y_i), f_{2,A} \ra \la \widetilde{\phi}(X_i)\widetilde{\psi}(Y_i), f_{2,B} \ra}_{:=J_{(1),a}} +  \underbrace{\frac{1}{n} \sum_{i=1}^n \la \widetilde{\phi}(X_i)\widetilde{\psi}(Y_i), f_{2,B} \ra^2}_{:=J_{(1),b}} = o_P(n^{-1}),
  	\end{align}
  	and thus establish approximation~\eqref{Eq: J1 approximation}. 
  	
  	Focusing on $J_{(1),b}$, notice that
   \begin{align}
   		n J_{(1),b} ~=~& n \times \frac{1}{n}\sum_{i=1}^n \la \widetilde{\phi}(X_i)\widetilde{\psi}(Y_i), f_{2,B} \ra^2 \\
   		=~ & \frac{n^2}{(n-1)^2} \frac{1}{n^2}\sum_{i=1}^n  \bigg\{ \sum_{k=1}^\infty \lambda_{X,k} e_{X,k}(X_i) \biggl( \frac{1}{\sqrt{n}}\sum_{n+ 1 \leq t \leq 2n}  e_{X,k}(X_t) \biggr) \bigg\}^2 \\
   		& \hskip 7em \bigg\{ \sum_{k'=1}^\infty \lambda_{Y,k'} e_{Y,k'}(Y_i) \biggl( \frac{1}{\sqrt{n}}\sum_{n+ 1 \leq t \leq 2n}  e_{Y,k'}(Y_t) \biggr) \bigg\}^2 := \frac{n^2}{(n-1)^2} \frac{1}{n} G_{(1)}. 
   \end{align}
   In addition, the orthonormal property of eigenfunctions yields
   \begin{align}
   	&\mathbb{E} \biggl[  \bigg\{ \sum_{k=1}^\infty \lambda_{X,k} e_{X,k}(X_i) \biggl( \frac{1}{\sqrt{n}}\sum_{n+ 1 \leq t \leq 2n}  e_{X,k}(X_t) \biggr) \bigg\}^2 \biggr] = \sum_{k=1}^\infty \lambda_{X,k}^2 \quad \text{and}\\
   	& \mathbb{E} \biggl[  \bigg\{ \sum_{k'=1}^\infty \lambda_{Y,k'} e_{Y,k'}(Y_i) \biggl( \frac{1}{\sqrt{n}}\sum_{n+ 1 \leq t \leq 2n}  e_{Y,k'}(Y_t) \biggr) \bigg\}^2   \biggr] = \sum_{k'=1}^\infty \lambda_{Y,k'}^2.
   \end{align}
	This implies that $\mathbb{E}[G_{(1)}] = \sum_{k=1}^\infty \lambda_k^2 < \infty$ since $X$ and $Y$ are independent. Therefore using Markov's inequality,
   \begin{align}  \label{Eq: approximation f2B}
   	\frac{1}{n}\sum_{i=1}^n \la \widetilde{\phi}(X_i)\widetilde{\psi}(Y_i), f_{2,B} \ra^2 = O_P(n^{-2})= o_P(n^{-1}).
   \end{align}
	Next, we turn to $J_{(1),a}$. As shown before in Appendix~\ref{Section: Proof of Step 3}, 
	\begin{align}
		\sum_{i=1}^n \la \widetilde{\phi}(X_i)\widetilde{\psi}(Y_i), f_{2,A} \ra^2 \convdist \sum_{k=1}^\infty \lambda_k^2, \widetilde{W}_k^2
	\end{align}
	which implies 
	\begin{align}  \label{Eq: approximation f2A}
		\frac{1}{n} \sum_{i=1}^n \la \widetilde{\phi}(X_i)\widetilde{\psi}(Y_i), f_{2,A} \ra^2 = O_P(n^{-1}).
	\end{align}
	Hence, by the Cauchy--Schwarz inequality,
	\begin{align}
		|J_{(1),a}| ~=~ & \bigg| \frac{2}{n} \sum_{i=1}^n \la \widetilde{\phi}(X_i)\widetilde{\psi}(Y_i), f_{2,A} \ra \la \widetilde{\phi}(X_i)\widetilde{\psi}(Y_i), f_{2,B} \ra\bigg| \\
		\leq ~ & 2 \sqrt{\frac{1}{n} \sum_{i=1}^n \la \widetilde{\phi}(X_i)\widetilde{\psi}(Y_i), f_{2,A} \ra^2}   \sqrt{\frac{1}{n} \sum_{i=1}^n \la \widetilde{\phi}(X_i)\widetilde{\psi}(Y_i), f_{2,B} \ra^2} \\
		= ~ & 2 \sqrt{O_P(n^{-1})} \sqrt{O_P(n^{-2})} =o_P(n^{-1})
	\end{align}
	where we use the approximation result \eqref{Eq: approximation f2B} for the second term in the upper bound. Combining results establishes the approximation~\eqref{Eq: J1 approximation}. 
	
	\bigskip 
	
	\noindent \textbf{2.~Analyzing $J_{(2)}$.} By the same logic as in the analysis of $J_{(1)}$, we can ignore the terms involving $f_{2,C}$ throughout the proof, and we only need to handle three terms
	\begin{align}
		J_{(2),a} := & \frac{2}{n(n-1)} \sum_{1 \leq i \neq j \leq n}  \la \widetilde{\phi}(X_i)\widetilde{\psi}(Y_i), f_{2,A} \ra \la \widetilde{\phi}(X_i)\widetilde{\psi}(Y_j), f_{2,A} \ra, \\
		J_{(2),b} := & \frac{2}{n(n-1)} \sum_{1 \leq i \neq j \leq n}  \la \widetilde{\phi}(X_i)\widetilde{\psi}(Y_i), f_{2,B} \ra \la \widetilde{\phi}(X_i)\widetilde{\psi}(Y_j), f_{2,B} \ra \quad \text{and} \\
		J_{(2),c} :=  & \frac{2}{n(n-1)} \sum_{1 \leq i \neq j \leq n}  \la \widetilde{\phi}(X_i)\widetilde{\psi}(Y_i), f_{2,A} \ra \la \widetilde{\phi}(X_i)\widetilde{\psi}(Y_j), f_{2,B} \ra.
	\end{align}
	For the first term $J_{(2),a}$,
	\begin{align}
		 J_{(2),a} =~ &\frac{2}{n(n-1)} \sum_{1 \leq i \neq j \leq n}  \la \widetilde{\phi}(X_i)\widetilde{\psi}(Y_i), f_{2,A} \ra \la \widetilde{\phi}(X_i) \widetilde{\psi}(Y_j), f_{2,A} \ra \\
		 =~ &\frac{2}{n(n-1)}  \sum_{i=1}^n\sum_{j=1}^n  \la \widetilde{\phi}(X_i)\widetilde{\psi}(Y_i), f_{2,A} \ra \la \widetilde{\phi}(X_i) \widetilde{\psi}(Y_j), f_{2,A} \ra \\
		 & - \frac{2}{n(n-1)}\sum_{i=1}^n \la \widetilde{\phi}(X_i)\widetilde{\psi}(Y_i), f_{2,A} \ra \la \widetilde{\phi}(X_i) \widetilde{\psi}(Y_i), f_{2,A} \ra \\
		 =~ & \underbrace{\frac{2n}{n-1} \times  \frac{1}{n}\sum_{i=1}^n  \la \widetilde{\phi}(X_i)\widetilde{\psi}(Y_i), f_{2,A} \ra \la \widetilde{\phi}(X_i) \biggl(\frac{1}{n}\sum_{j=1}^n\widetilde{\psi}(Y_j)\biggr), f_{2,A} \ra}_{O_P(n^{-3/2})} \\
		 & -  \underbrace{\frac{2}{n(n-1)}\sum_{i=1}^n \la \widetilde{\phi}(X_i)\widetilde{\psi}(Y_i), f_{2,A} \ra^2}_{O_P(n^{-2})},
	\end{align}
	where the second approximation result $O_P(n^{-2})$ holds by \eqref{Eq: approximation f2A}. On the other hand, the first approximation $O_P(n^{-3/2})$ holds since
	\begin{align}
		&\bigg| \frac{1}{n}\sum_{i=1}^n  \la \widetilde{\phi}(X_i)\widetilde{\psi}(Y_i), f_{2,A} \ra \la \widetilde{\phi}(X_i) \biggl(\frac{1}{n}\sum_{j=1}^n\widetilde{\psi}(Y_j)\biggr), f_{2,A} \ra\bigg| \\
		\leq ~ & \sqrt{ \frac{1}{n}\sum_{i=1}^n  \la \widetilde{\phi}(X_i)\widetilde{\psi}(Y_i), f_{2,A} \ra^2 }  \sqrt{ \frac{1}{n}\sum_{i=1}^n  \la \widetilde{\phi}(X_i) \biggl(\frac{1}{n}\sum_{j=1}^n\widetilde{\psi}(Y_j)\biggr), f_{2,A} \ra^2 } \\
		= ~ & O_P(n^{-1/2}) O_P(n^{-1}) = O_P(n^{-3/2}),
	\end{align}
	by the Cauchy--Schwarz inequality and Markov's inequality. In particular, it can be seen that 
	\begin{align}
		\mathbb{E} \Biggl[   \frac{1}{n}\sum_{i=1}^n  \la \widetilde{\phi}(X_i)  \widetilde{\psi}(Y_j), f_{2,A} \ra^2 \Biggr] = \frac{1}{n^3} \sum_{i=1}^n \sum_{j=1}^n \mathbb{E}\biggl[  \la \widetilde{\phi}(X_i)\widetilde{\psi}(Y_j), f_{2,A} \ra^2 \biggr] \lesssim \frac{1}{n^2} \mathbb{E}[\tildeg^2(Z_1,Z_2)]. 
	\end{align}	
	Therefore $J_{(2),a} = o_P(n^{-1})$.

	For the second term $J_{(2),b}$,
	\begin{align}
		J_{(2),b}= ~ &\frac{2}{n(n-1)} \sum_{1 \leq i \neq j \leq n}  \la \widetilde{\phi}(X_i)\widetilde{\psi}(Y_i), f_{2,B} \ra \la \widetilde{\phi}(X_i)\widetilde{\psi}(Y_j), f_{2,B} \ra \\[.5em]
		= ~ &  \underbrace{\frac{2n}{n-1} \times  \frac{1}{n}\sum_{i=1}^n  \la \widetilde{\phi}(X_i)\widetilde{\psi}(Y_i), f_{2,B} \ra \la \widetilde{\phi}(X_i) \biggl(\frac{1}{n}\sum_{j=1}^n\widetilde{\psi}(Y_j)\biggr), f_{2,B} \ra}_{O_P(n^{-3/2})} \\
		& -  \underbrace{\frac{2}{n(n-1)}\sum_{i=1}^n \la \widetilde{\phi}(X_i)\widetilde{\psi}(Y_i), f_{2,B} \ra^2}_{O_P(n^{-2})},
	\end{align}
	where the second approximation $O_P(n^{-2})$ uses \eqref{Eq: approximation f2B}. The first approximation follows by the Cauchy--Schwarz inequality and Markov's inequality as 
	\begin{align}
		&\bigg| \frac{1}{n}\sum_{i=1}^n  \la \widetilde{\phi}(X_i)\widetilde{\psi}(Y_i), f_{2,B} \ra \la \widetilde{\phi}(X_i) \biggl(\frac{1}{n}\sum_{j=1}^n\widetilde{\psi}(Y_j)\biggr), f_{2,B} \ra\bigg| \\
		\leq ~ & \sqrt{ \frac{1}{n}\sum_{i=1}^n  \la \widetilde{\phi}(X_i)\widetilde{\psi}(Y_i), f_{2,B} \ra^2 }  \sqrt{ \frac{1}{n}\sum_{i=1}^n  \la \widetilde{\phi}(X_i) \biggl(\frac{1}{n}\sum_{j=1}^n\widetilde{\psi}(Y_j)\biggr), f_{2,B} \ra^2 } \\
		= ~ & O_P(n^{-1}) O_P(n^{-1}) = O_P(n^{-2}).
	\end{align}
	Therefore $J_{(2),b} = o_P(n^{-1})$. 
	For the last term $J_{(2),c}$, 
	\begin{align}
		J_{(2),c} ~=~& \frac{2}{n(n-1)} \sum_{1 \leq i \neq j \leq n}  \la \widetilde{\phi}(X_i)\widetilde{\psi}(Y_i), f_{2,A} \ra \la \widetilde{\phi}(X_i)\widetilde{\psi}(Y_j), f_{2,B} \ra \\
		= ~ &  \underbrace{\frac{2n}{n-1} \times  \frac{1}{n}\sum_{i=1}^n  \la \widetilde{\phi}(X_i)\widetilde{\psi}(Y_i), f_{2,A} \ra \la \widetilde{\phi}(X_i) \biggl(\frac{1}{n}\sum_{j=1}^n\widetilde{\psi}(Y_j)\biggr), f_{2,B} \ra}_{O_P(n^{-3/2})} \\
		& -  \underbrace{\frac{2}{n(n-1)}\sum_{i=1}^n \la \widetilde{\phi}(X_i)\widetilde{\psi}(Y_i), f_{2,A} \ra \la \widetilde{\phi}(X_i)\widetilde{\psi}(Y_i), f_{2,B} \ra}_{O_P(n^{-2})},
	\end{align}
	which can be established by the Cauchy--Schwarz inequality and the previous results, and thus $J_{(2),c} = o_P(n^{-1})$. In summary, it holds that $J_{(2)} = o_P(n^{-1})$.
	
	\bigskip 
	
	\noindent \textbf{3.~Analyzing $J_{(3)}$.} By symmetry, $J_{(3)}$ has the same convergence rate as $J_{(2)}$ and thus $J_{(3)} = o_P(n^{-1})$. 
	
	\bigskip 
	
	\noindent \textbf{4.~Analyzing $J_{(4)}$.} For the fourth term $J_{(4)}$, we will show that $J_{(4)} = o_P(n^{-1})$. To this end, we only need to handle three terms
	\begin{align}
		J_{(4),a} := &  \frac{3}{n(n-1)} \sum_{1 \leq i \neq j \leq n}  \la \widetilde{\phi}(X_i)\widetilde{\psi}(Y_i), f_{2,A} \ra \la \widetilde{\phi}(X_j)\widetilde{\psi}(Y_j), f_{2,A} \ra , \\
		J_{(4),b} := &  \frac{3}{n(n-1)} \sum_{1 \leq i \neq j \leq n}  \la \widetilde{\phi}(X_i)\widetilde{\psi}(Y_i), f_{2,B} \ra \la \widetilde{\phi}(X_j)\widetilde{\psi}(Y_j), f_{2,B} \ra  \quad \text{and} \\
		J_{(4),c} :=  &  \frac{3}{n(n-1)} \sum_{1 \leq i \neq j \leq n}  \la \widetilde{\phi}(X_i)\widetilde{\psi}(Y_i), f_{2,A} \ra \la \widetilde{\phi}(X_j)\widetilde{\psi}(Y_j), f_{2,B} \ra .
	\end{align}
	Starting with $J_{(4),a}$,
	\begin{align}
		J_{(4),a} = \underbrace{\frac{3n^2}{n(n-1)}  \la \frac{1}{n} \sum_{i=1}^n \widetilde{\phi}(X_i)\widetilde{\psi}(Y_i), f_{2,A} \ra^2}_{O_P(n^{-2})} - \underbrace{\frac{3}{n(n-1)} \sum_{i=1}^n  \la \widetilde{\phi}(X_i)\widetilde{\psi}(Y_i), f_{2,A} \ra^2}_{O_P(n^{-2})},
	\end{align}
	where the second approximation is due to \eqref{Eq: approximation f2A} and the first approximation is by Markov's inequality. Indeed, we have shown in Appendix~\ref{Section: Proof of Step 1 (Numerator)} and \ref{Section: Proof of Step 3} that 
	\begin{align}
		n \la \frac{1}{n} \sum_{i=1}^n \widetilde{\phi}(X_i)\widetilde{\psi}(Y_i), f_{2,A} \ra \convdist \sum_{k=1}^\infty \lambda_k W_k \widetilde{W}_k . 
	\end{align}	
	This establishes $J_{(4),a} = o_P(n^{-1})$.
	
	Next, for the second term $J_{(4),b}$, 
		\begin{align}
		J_{(4),b} = \underbrace{\frac{3n^2}{n(n-1)}  \la \frac{1}{n} \sum_{i=1}^n \widetilde{\phi}(X_i)\widetilde{\psi}(Y_i), f_{2,B} \ra^2}_{O_P(n^{-2})} - \underbrace{\frac{3}{n(n-1)} \sum_{i=1}^n  \la \widetilde{\phi}(X_i)\widetilde{\psi}(Y_i), f_{2,B} \ra^2}_{O_P(n^{-3})},
	\end{align}
	where the second approximation uses \eqref{Eq: approximation f2B}. For the first approximation, we simply use Jensen's inequality along with \eqref{Eq: approximation f2B}:
	\begin{align}
		\la \frac{1}{n} \sum_{i=1}^n \widetilde{\phi}(X_i)\widetilde{\psi}(Y_i), f_{2,B} \ra^2 \leq \frac{1}{n} \sum_{i=1}^n \la \widetilde{\phi}(X_i)\widetilde{\psi}(Y_i), f_{2,B} \ra^2 =O_P(n^{-2}).
	\end{align}
	Thus we have $J_{(4),b} = o_P(n^{-1})$.
	
	For the last term $J_{(4),c}$,
	\begin{align}
		J_{(4),c} = &\underbrace{\frac{3n^2}{n(n-1)}  \la \frac{1}{n} \sum_{i=1}^n \widetilde{\phi}(X_i)\widetilde{\psi}(Y_i), f_{2,A} \ra \la \frac{1}{n} \sum_{i=1}^n \widetilde{\phi}(X_i)\widetilde{\psi}(Y_i), f_{2,B} \ra}_{O_P(n^{-2})} \\
		& - \underbrace{\frac{3}{n(n-1)} \sum_{i=1}^n  \la \widetilde{\phi}(X_i)\widetilde{\psi}(Y_i), f_{2,A} \ra \la \widetilde{\phi}(X_i)\widetilde{\psi}(Y_i), f_{2,B} \ra}_{O_P(n^{-2})},
	\end{align}
	which can be seen using the Cauchy--Schwarz inequality. Therefore it holds that $J_{(4)} = o_P(n^{-1})$.
	
	\bigskip 
	
	\noindent \textbf{5.~Analyzing $J_{(5)}$.} For the fifth term $J_{(5)}$, similarly as before, we need to study
	\begin{align}
		& J_{(5),a} :=\frac{1}{n(n-1)(n-2)}\sum_{\substack{1\leq i,j,q \leq n \\ \text{$i,j,q$ distinct}}} \la \widetilde{\phi}(X_i)\widetilde{\psi}(Y_i), f_{2,A} \ra \la \widetilde{\phi}(X_j)\widetilde{\psi}(Y_q), f_{2,A} \ra, \\
		& J_{(5),b} :=\frac{1}{n(n-1)(n-2)}\sum_{\substack{1\leq i,j,q \leq n \\ \text{$i,j,q$ distinct}}} \la \widetilde{\phi}(X_i)\widetilde{\psi}(Y_i), f_{2,B} \ra \la \widetilde{\phi}(X_j)\widetilde{\psi}(Y_q), f_{2,B} \ra \quad \text{and} \\
		& J_{(5),c} :=\frac{1}{n(n-1)(n-2)}\sum_{\substack{1\leq i,j,q \leq n \\ \text{$i,j,q$ distinct}}} \la \widetilde{\phi}(X_i)\widetilde{\psi}(Y_i), f_{2,A} \ra \la \widetilde{\phi}(X_j)\widetilde{\psi}(Y_q), f_{2,B} \ra.
	\end{align}
	Starting with $J_{(5),a}$, note that 
	\begin{align}
		\sum_{\substack{1\leq i,j,q \leq n \\ \text{$i,j,q$ distinct}}} \la \widetilde{\phi}(X_i)\widetilde{\psi}(Y_i), f_{2,A} \ra \la \widetilde{\phi}(X_j)\widetilde{\psi}(Y_q), f_{2,A} \ra = &  \la \sum_{i=1}^n \widetilde{\phi}(X_i)\widetilde{\psi}(Y_i), f_{2,A} \ra \la \sum_{1 \leq j \neq q \leq n} \widetilde{\phi}(X_j)\widetilde{\psi}(Y_q), f_{2,A} \ra \\
		 -  2 & \sum_{1 \leq j \neq q \leq n} \la \widetilde{\phi}(X_j)\widetilde{\psi}(Y_j), f_{2,A} \ra \la \widetilde{\phi}(X_j)\widetilde{\psi}(Y_q), f_{2,A} \ra.
	\end{align}
	Thus 
	\begin{align}
		J_{(5),a} ~=~ & O(1) \times \underbrace{\la \frac{1}{n}\sum_{i=1}^n \widetilde{\phi}(X_i)\widetilde{\psi}(Y_i), f_{2,A} \ra}_{O_P(n^{-1})} \underbrace{\la \frac{1}{n(n-1)}\sum_{1 \leq j \neq q \leq n} \widetilde{\phi}(X_j)\widetilde{\psi}(Y_q), f_{2,A} \ra}_{O_P(n^{-1})} \\
		+ ~& O(n^{-1}) \times \underbrace{\frac{1}{n(n-1)}   \sum_{1 \leq j \neq q \leq n} \la \widetilde{\phi}(X_j)\widetilde{\psi}(Y_j), f_{2,A} \ra \la \widetilde{\phi}(X_j)\widetilde{\psi}(Y_q), f_{2,A} \ra}_{o_P(n^{-1})},
	\end{align}
	where the first approximation holds since 
	\begin{align}
		n \times \la \frac{1}{n}\sum_{i=1}^n \widetilde{\phi}(X_i)\widetilde{\psi}(Y_i), f_{2,A} \ra \convdist \sum_{k=1}^K \lambda_k W_k \widetilde{W}_k,
	\end{align}
	as established in Appendix~\ref{Section: Proof of Step 3}. For the second approximation, we have
	\begin{align}
		\la \frac{1}{n(n-1)}\sum_{1 \leq j \neq q \leq n} \widetilde{\phi}(X_j)\widetilde{\psi}(Y_q), f_{2,A} \ra ~=~ & O(1) \times \la \biggl( \frac{1}{n} \sum_{i=1}^n \widetilde{\phi}(X_i) \biggr) \biggl( \frac{1}{n} \sum_{i=1}^n \widetilde{\psi}(Y_i)\biggr), f_{2,A}  \ra \\
		& - O(n^{-1}) \times  \underbrace{\la \frac{1}{n}\sum_{i=1}^n \widetilde{\phi}(X_i)\widetilde{\psi}(Y_i), f_{2,A} \ra}_{O_P(n^{-1})}. 
	\end{align}
	Moreover the following term
	\begin{align}
		\la \biggl( \frac{1}{n} \sum_{i=1}^n \widetilde{\phi}(X_i) \biggr) \biggl( \frac{1}{n} \sum_{i=1}^n \widetilde{\psi}(Y_i)\biggr), f_{2,A}  \ra
	\end{align}
	has the same convergence rate as
	\begin{align}
		\Bigg| \frac{1}{n} \sum_{i=1}^n \la  \widetilde{\phi}(X_i)\widetilde{\psi}(Y_i), f_{2,B} \ra \Bigg| \leq \sqrt{\frac{1}{n} \sum_{i=1}^n \la  \widetilde{\phi}(X_i)\widetilde{\psi}(Y_i), f_{2,B} \ra^2} = O_P(n^{-1}).
	\end{align}
	Thereby, the second approximation holds. The last approximation was established in the analysis of $J_{(2),a}$, and thus $J_{(5),a} = o_P(n^{-1})$. 
	
	For the second term $J_{(5),b}$, we simply use the Cauchy--Schwarz inequality and Markov's inequality, and see
	\begin{align}
		|J_{(5),b}| ~\leq~ O(1) \underbrace{\sqrt{\frac{1}{n} \sum_{i=1}^n \la \widetilde{\phi}(X_i)\widetilde{\psi}(Y_i), f_{2,B} \ra^2}}_{O_P(n^{-1})} \underbrace{\sqrt{\frac{1}{n^2} \sum_{i=1}^n\sum_{j=1}^n \la \widetilde{\phi}(X_i)\widetilde{\psi}(Y_j), f_{2,B} \ra^2}}_{O_P(n^{-1})} =o_P(n^{-1}).
	\end{align}
  	Similarly, for the third term $J_{(5),c}$, we apply the Cauchy--Schwarz inequality and see
  	\begin{align}
  		|J_{(5),c}| ~ \leq~  O(1) \underbrace{\sqrt{\frac{1}{n} \sum_{i=1}^n \la \widetilde{\phi}(X_i)\widetilde{\psi}(Y_i), f_{2,A} \ra^2}}_{O_P(n^{-1/2})} \underbrace{\sqrt{\frac{1}{n^2} \sum_{i=1}^n\sum_{j=1}^n \la \widetilde{\phi}(X_i)\widetilde{\psi}(Y_j), f_{2,B} \ra^2}}_{O_P(n^{-1})} =o_P(n^{-1}).
  	\end{align}
  	Therefore we conclude $J_{(5)} = o_P(n^{-1})$.
  	
  	\bigskip 
  	
  	\noindent \textbf{6.~Analyzing $J_{(6)}$.} For the sixth term $J_{(6)}$, similarly as before, we need to study
  	\begin{align}
  		& J_{(6),a} :=\frac{1}{n(n-1)(n-2)}\sum_{\substack{1\leq i,j,q \leq n \\ \text{$i,j,q$ distinct}}} \la \widetilde{\phi}(X_i)\widetilde{\psi}(Y_j), f_{2,A} \ra \la \widetilde{\phi}(X_i)\widetilde{\psi}(Y_q), f_{2,A} \ra, \\
  		& J_{(6),b} :=\frac{1}{n(n-1)(n-2)}\sum_{\substack{1\leq i,j,q \leq n \\ \text{$i,j,q$ distinct}}} \la \widetilde{\phi}(X_i)\widetilde{\psi}(Y_j), f_{2,B} \ra \la \widetilde{\phi}(X_i)\widetilde{\psi}(Y_q), f_{2,B} \ra \quad \text{and} \\
  		& J_{(6),c} :=\frac{1}{n(n-1)(n-2)}\sum_{\substack{1\leq i,j,q \leq n \\ \text{$i,j,q$ distinct}}} \la \widetilde{\phi}(X_i)\widetilde{\psi}(Y_j), f_{2,A} \ra \la \widetilde{\phi}(X_i)\widetilde{\psi}(Y_q), f_{2,B} \ra.
  	\end{align}
  	Starting with $J_{(6),a}$, there exist some constants $C_1,\ldots,C_4$ such that 
  	\begin{align}
  		& \sum_{\substack{1\leq i,j,q \leq n \\ \text{$i,j,q$ distinct}}} \la \widetilde{\phi}(X_i)\widetilde{\psi}(Y_j), f_{2,A} \ra \la \widetilde{\phi}(X_i)\widetilde{\psi}(Y_q), f_{2,A} \ra \\
  		~=~ & \sum_{i=1}^n \la \widetilde{\phi}(X_i) \sum_{j=1}^n \widetilde{\psi}(Y_j), f_{2,A} \ra \la \widetilde{\phi}(X_i) \sum_{q=1}^n \widetilde{\psi}(Y_q), f_{2,A} \ra \\[.5em]
  		-  &C_1  \sum_{1 \leq i \neq j \leq n} \la \widetilde{\phi}(X_i)\widetilde{\psi}(Y_j), f_{2,A} \ra \la \widetilde{\phi}(X_i)\widetilde{\psi}(Y_i), f_{2,A} \ra - C_2 \sum_{i=1}^n \la \widetilde{\phi}(X_i)\widetilde{\psi}(Y_i), f_{2,A} \ra^2 \\[.5em]
  		- & C_3  \sum_{1 \leq i \neq q \leq n}  \la \widetilde{\phi}(X_i)\widetilde{\psi}(Y_i), f_{2,A} \ra \la \widetilde{\phi}(X_i)\widetilde{\psi}(Y_q), f_{2,A} \ra - C_4\sum_{1 \leq i \neq q \leq n}  \la \widetilde{\phi}(X_i)\widetilde{\psi}(Y_q), f_{2,A} \ra^2.
  	\end{align}
  	By the Cauchy--Schwarz inequality and Markov's inequality,
  	\begin{align}
  		 &\Bigg| \frac{1}{n}\sum_{i=1}^n \la \widetilde{\phi}(X_i) \frac{1}{n}\sum_{j=1}^n \widetilde{\psi}(Y_j), f_{2,A} \ra \la \widetilde{\phi}(X_i) \biggl(\frac{1}{n}\sum_{q=1}^n \widetilde{\psi}(Y_q)\biggr), f_{2,A} \ra \Bigg|^2 \\
  		 \leq ~ &  \underbrace{\frac{1}{n}\sum_{i=1}^n \la \widetilde{\phi}(X_i) \frac{1}{n}\sum_{j=1}^n \widetilde{\psi}(Y_j), f_{2,A} \ra^2}_{O_P(n^{-2})}  \underbrace{\frac{1}{n}\sum_{i=1}^n  \la \widetilde{\phi}(X_i) \biggl(\frac{1}{n}\sum_{q=1}^n \widetilde{\psi}(Y_q)\biggr), f_{2,A} \ra^2}_{O_P(n^{-2})}.
  	\end{align}
  	Thus
  	\begin{align}
  		  \frac{1}{n^3}\sum_{i=1}^n \la \widetilde{\phi}(X_i) \sum_{j=1}^n \widetilde{\psi}(Y_j), f_{2,A} \ra \la \widetilde{\phi}(X_i) \sum_{q=1}^n \widetilde{\psi}(Y_q), f_{2,A} \ra = O_P(n^{-2}).
  	\end{align}
  	Based on the previous results, we also have
  	\begin{align}
  		& \frac{1}{n^3} \sum_{1 \leq i \neq j \leq n} \la \widetilde{\phi}(X_i)\widetilde{\psi}(Y_j), f_{2,A} \ra \la \widetilde{\phi}(X_i)\widetilde{\psi}(Y_i), f_{2,A} \ra = o_P(n^{-2}), \\
  		& \frac{1}{n^3} \sum_{i=1}^n \la \widetilde{\phi}(X_i)\widetilde{\psi}(Y_i), f_{2,A} \ra^2 = o_P(n^{-2}), \\
  		& \frac{1}{n^3} \sum_{1 \leq i \neq q \leq n}  \la \widetilde{\phi}(X_i)\widetilde{\psi}(Y_i), f_{2,A} \ra \la \widetilde{\phi}(X_i)\widetilde{\psi}(Y_q), f_{2,A} \ra = o_P(n^{-2}), \\
  		& \frac{1}{n^3} \sum_{1 \leq i \neq q \leq n}  \la \widetilde{\phi}(X_i)\widetilde{\psi}(Y_q), f_{2,A} \ra^2 = o_P(n^{-2}),
  	\end{align}
  	which concludes $J_{(6),a} = o_P(n^{-1})$. In addition $J_{(6),b}$ and $J_{(6),c}$ are shown to be $o_P(n^{-1})$ by the Cauchy--Schwarz inequality as in the analysis of $J_{(5),b}$ and $J_{(5),c}$. This proves that $J_{(6)} = o_P(n^{-1})$.
  	
  	\bigskip 
  	
  	\noindent \textbf{7.~Analyzing $J_{(7)}$.}  For the sixth term $J_{(7)}$, similarly as before, we need to study
  	\begin{align}
  		& J_{(7),a} :=\frac{1}{n(n-1)(n-2)}\sum_{\substack{1\leq i,j,q \leq n \\ \text{$i,j,q$ distinct}}} \la \widetilde{\phi}(X_i)\widetilde{\psi}(Y_j), f_{2,A} \ra \la \widetilde{\phi}(X_q)\widetilde{\psi}(Y_i), f_{2,A} \ra, \\
  		& J_{(7),b} :=\frac{1}{n(n-1)(n-2)}\sum_{\substack{1\leq i,j,q \leq n \\ \text{$i,j,q$ distinct}}} \la \widetilde{\phi}(X_i)\widetilde{\psi}(Y_j), f_{2,B} \ra \la \widetilde{\phi}(X_q)\widetilde{\psi}(Y_i), f_{2,B} \ra \quad \text{and} \\
  		& J_{(7),c} :=\frac{1}{n(n-1)(n-2)}\sum_{\substack{1\leq i,j,q \leq n \\ \text{$i,j,q$ distinct}}} \la \widetilde{\phi}(X_i)\widetilde{\psi}(Y_j), f_{2,A} \ra \la \widetilde{\phi}(X_q)\widetilde{\psi}(Y_i), f_{2,B} \ra.
  	\end{align}
  	Starting with $J_{(7),a}$, there exist some constants $C_1',\ldots,C_4'$ such that 
  	\begin{align}
  		& \sum_{\substack{1\leq i,j,q \leq n \\ \text{$i,j,q$ distinct}}} \la \widetilde{\phi}(X_i)\widetilde{\psi}(Y_j), f_{2,A} \ra \la \widetilde{\phi}(X_q)\widetilde{\psi}(Y_i), f_{2,A} \ra \\
  		~=~ & \sum_{i=1}^n \la \widetilde{\phi}(X_i) \sum_{j=1}^n \widetilde{\psi}(Y_j), f_{2,A} \ra \la  \sum_{q=1}^n \widetilde{\phi}(X_q) \widetilde{\psi}(Y_i), f_{2,A} \ra \\[.5em]
  		- & C_1' \sum_{1 \leq i \neq j \leq n} \la \widetilde{\phi}(X_i)\widetilde{\psi}(Y_j), f_{2,A} \ra \la \widetilde{\phi}(X_i)\widetilde{\psi}(Y_i), f_{2,A} \ra - C_2' \sum_{i=1}^n \la \widetilde{\phi}(X_i)\widetilde{\psi}(Y_i), f_{2,A} \ra^2 \\[.5em]
  		- & C_3' \sum_{1 \leq i \neq q \leq n}  \la \widetilde{\phi}(X_i)\widetilde{\psi}(Y_i), f_{2,A} \ra \la \widetilde{\phi}(X_q)\widetilde{\psi}(Y_i), f_{2,A} \ra - C_4'\sum_{1 \leq i \neq q \leq n}  \la \widetilde{\phi}(X_i)\widetilde{\psi}(Y_q), f_{2,A} \ra^2\Biggr].
  	\end{align}
  	 From here, we can follow the same steps as in the analysis for $J_{(6)}$ and conclude that $J_{(7)} = o_P(n^{-1})$.
  	
  	\bigskip 
  	
  	\noindent \textbf{8.~Analyzing $J_{(8)}$.} By switching the role between $X$ and $Y$, $J_{(8)}$ can be analyzed similarly as $J_{(7)}$ and it can be shown that $J_{(8)} = o_P(n^{-1})$. 
  	
  	\bigskip 
  	
  	\noindent \textbf{9.~Analyzing $J_{(9)}$.}  By switching the role between $X$ and $Y$, $J_{(9)}$ can be analyzed similarly as $J_{(6)}$ and it can be shown that $J_{(9)} = o_P(n^{-1})$. 
  	
  	\bigskip
  	
  	Throughout, we have shown that $\sum_{i=2}^9 J_{(i)} = o_P(n^{-1})$ and $J_{(1)}$ satisfies \eqref{Eq: J1 approximation}, which concludes approximation~\eqref{Eq: s_n approximation}.


\section[Uniform Asymptotic Null Distribution]{Uniform Asymptotic Null Distribution~\texorpdfstring{(\Cref{theorem:null-dist-1})}{(Theorem~7)}}
    \label{proof:null-distribution}
    We first describe an outline of the general steps of the proof in~\Cref{appendix:outline-null-1}, breaking down the argument into three parts. Then, we present the details of the steps in the subsequent subsections. 
        
    Before proceeding, we recall some notation.  We use $\tildea_{ij} = \tildek(X_i, \cdot) \tildel(Y_j, \cdot)$,  $\tildeb_{tu} = \tildea_{n+t, n+u}$ for $1\leq i, j, t, u \leq n$, and  
    $\tildeg_{12}$ to denote $\tildeg(Z_1, Z_2) = \la \tildea_{11}, \tildea_{22} \ra$. 
        
    \subsection{Outline of the proof}   
        \label{appendix:outline-null-1}
          
            To show the asymptotic normality of~$\cshsic$, we verify that the sufficient conditions for the Berry--Esseen theorem for studentized U-statistics are satisfied. In particular, by \citet[Theorem 3.1]{jing2000berry}, it suffices to show that 
            \begin{align}
            \label{eq:null-outline-1}
                C_n \defined \frac{ \mathbb{E}[|\la h(Z_1, Z_2), f_2 \ra|^3|\datatwo]}
                {\sqrt{n} \sigma_g^3} \convprob 0,
            \end{align}
            where we have used the notation $\sigma_g^2 \defined \mathbb{E}[\{\mathbb{E}[\la h(Z_1,Z_2), f_2 \ra |Z_2,\datatwo]\}^2|\datatwo]$ following~\citet{jing2000berry}. 

            To establish this result, we proceed in the following steps: 
            \begin{itemize}
                \item Step 1: First, we observe in~\Cref{lemma:null-distribution-1}, that a sufficient condition for~\eqref{eq:null-outline-1} to hold is if the following holds: 
                    \begin{align}
                    \label{eq:null-outline-2}
                        B_n \defined \frac{\mathbb{E}[\langle f_2, \tildek(X_{1,n}, \cdot) \tildel(Y_{1,n}, \cdot) \rangle^4|\datatwo]}{n \big\{  \mathbb{E}[\langle f_2, \tildek(X_{1,n}, \cdot) \tildel(Y_{1,n}, \cdot) \rangle^2|\datatwo]\big\}^2}  \convprob 0,
                    \end{align}
                \item Step 2: In the next step, we introduce the term $B_{1,n}$ defined as 
                    \begin{align}
                         B_{1,n} \defined \frac{n\mathbb{E}[\langle f_2, \tildek(X_{1,n}, \cdot) \tildel(Y_{1,n}, \cdot) \rangle^4|\datatwo]}{\mathbb{E}[\tildeg(Z_{1,n}, Z_{2,n})^2]^2}, \label{eq:null-outline-3}
                    \end{align}
                    and show that $B_{1,n} \convprob 0$ in~\Cref{lemma:null-distribution-2}. 
                \item Step 3: Finally, in we introduce the term $B_{2,n}$ defined as 
                    \begin{align}
                        B_{2,n} \defined \frac{\mathbb{E}[\tildeg(Z_{1,n}, Z_{2,n})^2]^2}{n^2 \big\{  \mathbb{E}[\langle f_2, \tildek(X_{1,n}, \cdot) \tildel(Y_{1,n}, \cdot) \rangle^2|\datatwo]\big\}^2}, \label{eq:null-outline-4}
                    \end{align}
                    and show that $B_{2,n} = \mc{O}_P(1)$ in~\Cref{lemma:null-distribution-3}
            \end{itemize}
            Since $B_n = B_{1,n} \times B_{2,n}$, together~\eqref{eq:null-outline-3} and~\eqref{eq:null-outline-4} imply~\eqref{eq:null-outline-2}, to complete the proof. The details are in~\Cref{proof:null-distribution}. 

    \subsection[Proof of Step 1]{Proof of Step 1}
        
        \begin{lemma}
            \label{lemma:null-distribution-1}        
            Introduce the term $B_n \defined \frac{\mathbb{E}[\langle f_2, \tildek(X_{1,n}, \cdot) \tildel(Y_{1,n}, \cdot) \rangle^4|\datatwo]}{n \big\{  \mathbb{E}[\langle f_2, \tildek(X_{1,n}, \cdot) \tildel(Y_{1,n}, \cdot) \rangle^2|\datatwo]\big\}^2}$. Then, we have the following: 
            \begin{align}
                B_n \convprob 0 \quad \text{implies}\quad  C_n \convprob 0. 
            \end{align}
        \end{lemma}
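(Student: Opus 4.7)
The strategy is to reduce both $C_n$ and $B_n$ to the same underlying random variable and then relate them by a single application of the Cauchy--Schwarz inequality. Write $\tildea_{ij} \defined \tildek(X_i,\cdot)\tildel(Y_j,\cdot)$, so that
\begin{align}
    h(Z_1,Z_2) = \tfrac{1}{2}\bigl(\tildea_{11} - \tildea_{12} - \tildea_{21} + \tildea_{22}\bigr).
\end{align}
Under the null $X \perp Y$, all four tensor products $\tildea_{ij}$ have the same marginal distribution (each is a product of a mean-zero $\tildek$-feature and an independent mean-zero $\tildel$-feature), so Minkowski's inequality applied conditionally on $\datatwo$ yields
\begin{align}
    \mathbb{E}[|\langle h(Z_1,Z_2),f_2\rangle|^3 \mid \datatwo] \;\lesssim\; \mathbb{E}[|\langle \tildea_{11},f_2\rangle|^3 \mid \datatwo].
\end{align}
I will use this as my upper bound on the numerator of $C_n$.

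Next I handle the denominator $\sigma_g^2 = \mathbb{E}\bigl[\{\mathbb{E}[\langle h(Z_1,Z_2),f_2\rangle \mid Z_2,\datatwo]\}^2 \mid \datatwo\bigr]$. Conditioning on $Z_2$ and using that $\mathbb{E}[\tildek(X_1,\cdot)] = 0$ and $\mathbb{E}[\tildel(Y_1,\cdot)] = 0$ together with $X \perp Y$ under the null, three of the four cross terms vanish and one finds $\mathbb{E}[h(Z_1,Z_2) \mid Z_2] = \tfrac{1}{2}\tildea_{22}$. Consequently
\begin{align}
    \sigma_g^2 = \tfrac{1}{4}\,\mathbb{E}\bigl[\langle \tildea_{22},f_2\rangle^2 \mid \datatwo\bigr],
\end{align}
which, since $Z_2$ has the same distribution as a generic independent draw, equals a constant multiple of $\mathbb{E}[\langle \tildea_{11},f_2\rangle^2 \mid \datatwo]$.

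With these two simplifications, let $V \defined \langle \tildea_{11},f_2\rangle$ (viewed conditionally on $\datatwo$). The above reductions give, up to absolute constants,
\begin{align}
    C_n \;\lesssim\; \frac{\mathbb{E}[|V|^3 \mid \datatwo]}{\sqrt{n}\,\mathbb{E}[V^2 \mid \datatwo]^{3/2}}, \qquad B_n \;=\; \frac{\mathbb{E}[V^4 \mid \datatwo]}{n\,\mathbb{E}[V^2 \mid \datatwo]^{2}}.
\end{align}
By the Cauchy--Schwarz inequality applied to $|V|^{3} = V^2 \cdot |V|$, one has $\mathbb{E}[|V|^3]^2 \leq \mathbb{E}[V^2]\,\mathbb{E}[V^4]$, hence $C_n^2 \lesssim B_n$. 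The conclusion $B_n \convprob 0 \Rightarrow C_n \convprob 0$ then follows immediately from the continuous mapping theorem.

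I do not anticipate any serious obstacles here: the only subtlety is verifying the conditional mean computation $\mathbb{E}[h(Z_1,Z_2)\mid Z_2] = \tfrac{1}{2}\tildea_{22}$, which relies crucially on the null independence $X \perp Y$ and on the centering built into $\tildek$ and $\tildel$. Once this identity is in hand, everything else is a one-line Cauchy--Schwarz, which is what makes this lemma a clean reduction step before the harder moment calculations carried out in Steps 2 and 3.
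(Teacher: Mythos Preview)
Your proposal is correct and follows essentially the same approach as the paper: both reduce to $V = \langle \tildea_{11}, f_2\rangle$, compute $\sigma_g^2 = \tfrac{1}{4}\mathbb{E}[V^2\mid\datatwo]$ exactly as you do, and finish with Cauchy--Schwarz on $|V|^3 = V^2\cdot|V|$ to obtain $C_n \lesssim \sqrt{B_n}$. The only cosmetic difference is that the paper first applies Cauchy--Schwarz to $\mathbb{E}[|\langle h(Z_1,Z_2),f_2\rangle|^3\mid\datatwo]$ and then bounds the resulting second and fourth moments separately by $\mathbb{E}[V^2\mid\datatwo]$ and $\mathbb{E}[V^4\mid\datatwo]$, whereas you use Minkowski to pass directly to the third moment of $V$; both routes land on the identical inequality.
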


        \begin{proof}
                We begin by noting that due to Cauchy--Schwarz inequality,  we have 
            \begin{align}
                \mathbb{E}[|\la h(Z_1,Z_2), f_2 \ra|^3|\datatwo] &= \mathbb{E}[|\la h(Z_1,Z_2), f_2 \ra|\times  |\la h(Z_1,Z_2), f_2 \ra|^2 |\datatwo] \\
                &\leq \big\{ \mathbb{E}[\la h(Z_1,Z_2), f_2 \ra^2|\datatwo] \big\}^{1/2} \big\{ \mathbb{E}[\langle h(Z_1,Z_2), f_2\rangle^4|\datatwo] \big\}^{1/2}.
            \end{align}
            Further note 
            \begin{align}
                \mathbb{E}[\la h(Z_1,Z_2), f_2 \ra^2|\datatwo] ~=~ & \frac{1}{4}\mathbb{E}[\langle \tildef_2, \tildea_{11} +\tildea_{22} - \tildea_{12} - \tildea_{21} \rangle^2|\datatwo] \\
                \lesssim ~& \mathbb{E}[\langle \tildef_2, \tildea_{11} \rangle^2 + \langle \tildef_2, \tildea_{22} \rangle^2 + \langle \tildef_2, \tildea_{12} \rangle^2 + \langle \tildef_2, \tildea_{21} \rangle^2|\datatwo] \\
                \lesssim ~& \mathbb{E}[\langle \tildef_2, \tildea_{11} \rangle^2|\datatwo], 
                \label{eq:proof-null-1}
            \end{align}
            where the first inequality  uses Jensen's inequality, while the second inequality relies on  the observation under the null:
            \begin{align}
                \mathbb{E}[\langle \tildef_2, \tildea_{11} \rangle^2|\datatwo] = \mathbb{E}[\langle \tildef_2, \tildea_{ij} \rangle^2|\datatwo] \quad \text{for any $i,j \in \{1,2\}$.}
            \end{align}
            By the same logic along with Jensen's inequality, 
            \begin{align}
               \mathbb{E}[\la h(Z_1,Z_2), f_2 \ra^4|\datatwo] 
               \lesssim~& \mathbb{E}[\la f_2, \tildea_{11} \ra^4+ \la f_2, \tildea_{22} \ra^4 + \la f_2, \tildea_{12} \ra^4 + \la f_2, \tildea_{21} \ra^4|\datatwo]\\
               \lesssim~& \mathbb{E}[\langle \tildef_2, \tildea_{11} \rangle^4|\datatwo]. \label{eq:proof-null-2}
            \end{align}
            Thus, combining~\eqref{eq:proof-null-1} and~\eqref{eq:proof-null-2}, we get the following bound on the numerator of the term $C_n$: 
            \begin{align}
                \lp \mathbb{E}[|\la h(Z_1, Z_2), f_2 \ra|^3|\datatwo] \rp^2 \leq  \mathbb{E}[\langle \tildef_2, \tildea_{11} \rangle^2|\datatwo] \times \mathbb{E}[\langle \tildef_2, \tildea_{11} \rangle^4|\datatwo] \label{eq:proof-null-3}
            \end{align}
            
            We now evaluate $\sigma_g^2$ from the denominator of $C_n$.
            \begin{align}
                \sigma_g^2 ~=~ \frac{1}{4} \mathbb{E}[\langle \tildef_2, \tildea_{11} \rangle^2|\datatwo]. \label{eq:proof-null-4}
            \end{align}
            Combining the pieces, we obtain the following: 
            \begin{align}
                C_n =~& \frac{ \mathbb{E}[|\la h(Z_1, Z_2), f_2 \ra|^3|\datatwo]}{\sqrt{n}\sigma_g^3} 
                \leq ~ \frac{\lp \mathbb{E}[\langle \tildef_2, \tildea_{11} \rangle^2|\datatwo]\rp^{1/2} \times \lp \mathbb{E}[\langle \tildef_2, \tildea_{11} \rangle^4|\datatwo] \rp^{1/2}}{\sqrt{n} \sigma_g^3} \\
                = ~& 
                \lp \frac{\mathbb{E}[\langle \tildef_2, \tildea_{11} \rangle^4|\datatwo]}{n \big\{  \mathbb{E}[\langle \tildef_2, \tildea_{11} \rangle^2|\datatwo]\big\}^2} \rp^{1/2} = \sqrt{B_n}.
            \end{align}
            In the first inequality above, we used~\eqref{eq:proof-null-3}, while the second equality uses~\eqref{eq:proof-null-4}. This completes the proof. 
        \end{proof}
        
    \subsection{Proof of Step 2}
        \begin{lemma}
            \label{lemma:null-distribution-2}
            Under~\Cref{assump:null-main}, we have $B_{1,n} \convprob 0$. 
        \end{lemma}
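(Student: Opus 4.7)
The plan is to reduce the in-probability convergence of $B_{1,n}$ to a convergence of expectations via Markov's inequality. Since $B_{1,n}\geq 0$, it suffices to prove
\begin{align*}
\mathbb{E}[B_{1,n}] \;=\; \frac{n\,\mathbb{E}[\langle f_2,\tildea_{11}\rangle^4]}{\mathbb{E}[\tildeg(Z_{1,n},Z_{2,n})^2]^2} \;\longrightarrow\; 0.
\end{align*}
To compute the numerator I would exploit the product structure under the null, where $P_{XY,n} = P_{X,n}\times P_{Y,n}$. Writing $u_t := \tildek(X_t,\cdot)$ and $v_t := \tildel(Y_t,\cdot)$ for indices $t$ in the second split, a direct manipulation of the double sum defining $f_2$ yields
\begin{align*}
f_2 \;=\; \frac{1}{n-1}\sum_{t}u_t\otimes v_t \;-\; \frac{n}{n-1}\,\bar u\otimes \bar v,
\end{align*}
so that, with $W_t := \langle u_t,u_1\rangle$ and $V_t := \langle v_t,v_1\rangle$, pairing against $\tildea_{11} = u_1\otimes v_1$ gives
\begin{align*}
\langle f_2,\tildea_{11}\rangle \;=\; \frac{1}{n-1}\sum_t W_tV_t \;-\; \frac{1}{n(n-1)}\Bigl(\sum_t W_t\Bigr)\Bigl(\sum_s V_s\Bigr).
\end{align*}
Conditionally on $Z_1$, the $\{W_t\}$ are i.i.d.\ centered, and by null independence they are independent of the i.i.d.\ centered collection $\{V_t\}$.

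The heart of the proof is the fourth-moment computation. Using Minkowski's inequality to split the two summands, it suffices to control the $L^4$ norms of $S_n := \sum_t W_tV_t$ and of $T_n := (\sum_t W_t)(\sum_s V_s)$ separately. For $S_n$, the only index patterns surviving quadruple centering are ``all four equal'' and ``two disjoint pairs,'' so
\begin{align*}
\mathbb{E}[S_n^4] \;=\; n\,\mathbb{E}[W^4]\mathbb{E}[V^4] \;+\; 3n(n-1)\,\mathbb{E}\bigl[\mathbb{E}[W^2\mid X_1]^2\bigr]\,\mathbb{E}\bigl[\mathbb{E}[V^2\mid Y_1]^2\bigr].
\end{align*}
Under the null the factorization $\tildeg(z,z') = \langle \tildek(x,\cdot),\tildek(x',\cdot)\rangle\cdot\langle \tildel(y,\cdot),\tildel(y',\cdot)\rangle$ turns these into $n\,\mathbb{E}[\tildeg(Z_1,Z_2)^4]$ and $3n(n-1)\,\mathbb{E}[\tildeg(Z_1,Z_2)^2\tildeg(Z_1,Z_3)^2]$ respectively, using the identity $\mathbb{E}[\langle\tildek(X,\cdot),\tildek(X_1,\cdot)\rangle^2 \mid X_1]^2 = \mathbb{E}[\langle\tildek(X,\cdot),\tildek(X_1,\cdot)\rangle^2\,\langle\tildek(X',\cdot),\tildek(X_1,\cdot)\rangle^2 \mid X_1]$ for independent copies $X,X'$. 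A parallel but simpler computation shows $\mathbb{E}[T_n^4] = O(n^4\,\mathbb{E}[\tildeg(Z_1,Z_2)^2]^2)$, which is lower-order once rescaled by $1/(n(n-1))^4$ and multiplied by $n$. Assembling the pieces delivers
\begin{align*}
\mathbb{E}[B_{1,n}] \;\lesssim\; \frac{\mathbb{E}[\tildeg(Z_1,Z_2)^4]}{n^2\,\mathbb{E}[\tildeg(Z_1,Z_2)^2]^2} \;+\; \frac{\mathbb{E}[\tildeg(Z_1,Z_2)^2\tildeg(Z_1,Z_3)^2]}{n\,\mathbb{E}[\tildeg(Z_1,Z_2)^2]^2} \;+\; o(1),
\end{align*}
which is exactly the expression whose limit is zero by Assumption~\ref{assump:null-main}.

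The main obstacle is the combinatorial bookkeeping of the quartic expansion, in particular verifying that ``mixed'' patterns (a single coincidence among four) cancel by centering, and that the $T_n$ contribution is genuinely lower-order despite its second-moment magnitude $O(n^2)$. A cleaner alternative sidesteps cross terms entirely by invoking Minkowski on $\|S_n/(n-1) - T_n/(n(n-1))\|_{L^4}$; I would favor this route since it reduces the analysis to two separate i.i.d.\ centered-sum computations while still matching the rate dictated by the assumption.
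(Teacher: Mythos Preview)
Your approach is essentially the paper's: reduce to $\mathbb{E}[B_{1,n}]\to 0$ by Markov, split $\langle f_2,\tildea_{11}\rangle$ into two pieces, and bound each fourth moment by counting surviving index patterns under centering. The paper splits $f_2$ into its diagonal part $\frac{1}{n}\sum_t\tildeb_{tt}$ and off-diagonal part $\frac{1}{n(n-1)}\sum_{t\neq u}\tildeb_{tu}$ and applies $(x-y)^4\le 16(x^4+y^4)$; your $S_n/T_n$ decomposition is algebraically equivalent and arguably cleaner, since it makes the conditional i.i.d.\ structure of $\{W_t\},\{V_t\}$ explicit.

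One imprecision to fix: your stated bound $\mathbb{E}[T_n^4]=O\bigl(n^4\,\mathbb{E}[\tildeg(Z_1,Z_2)^2]^2\bigr)$ is not correct as written. The leading term of $\mathbb{E}[(\sum W_t)^4]$ is $3n(n-1)\,\mathbb{E}\bigl[(\mathbb{E}[W^2\mid X_1])^2\bigr]$, and $\mathbb{E}\bigl[(\mathbb{E}[W^2\mid X_1])^2\bigr]$ is generally \emph{not} bounded by $\mathbb{E}[W^2]^2$; it equals $\mathbb{E}[\tildek(X_1,X_2)^2\tildek(X_1,X_3)^2]$. The correct leading order is therefore $\mathbb{E}[T_n^4]=O\bigl(n^4\,\mathbb{E}[\tildeg(Z_1,Z_2)^2\tildeg(Z_1,Z_3)^2]\bigr)$, possibly with an additional $O(n^2\,\mathbb{E}[\tildeg_{12}^4])$ cross term. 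After rescaling by $(n(n-1))^{-4}$ and multiplying by $n$, this contributes at most a constant times $\mathbb{E}[\tildeg_{12}^2\tildeg_{13}^2]/(n^3\,\mathbb{E}[\tildeg_{12}^2]^2)$, which is $o(n^{-2})$ under \Cref{assump:null-main} and hence still lower-order than the $S_n$ contribution. So your conclusion stands, but the intermediate bound needs the correct moment expression.
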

        
         \begin{proof}

                It suffices to show that $\mathbb{E}[B_{1,n}] \to 0$, which in turn will imply the convergence in probability by an application of Markov's inequality.
                To verify this, we observe the following: 
                \begin{align}
                \mathbb{E}[B_{1,n}] &= \frac{n}{\mathbb{E}[\tildeg_{12}^2]^2} \mc{O}\lp \mathbb{E}\lb \left\langle \tildea_{ii}, \frac{1}{n} \sum_{t=1}^n \tildeb_{tt} - \frac{1}{n(n-1)} \sum_{t \neq u} \tildeb_{tu} \right \rangle^4 \rb \rp \\
                &\leq \frac{16 n}{\mathbb{E}[\tildeg_{12}^2]^2} \lp \mathbb{E}\lb \left \langle \tildea_{ii}, \; \frac{1}{n} \sum_{t=1}^n \tildeb_{tt}\right  \rangle ^4 \rb + \mathbb{E}\lb\left  \langle \tildea_{ii}, \; \frac{1}{n(n-1)} \sum_{t \neq u} \tildeb_{tu} \right \rangle ^4 \rb \rp \label{eq:proof-lyapunav-1}\\
                & = \frac{16n}{\mathbb{E}[\tildeg_{12}^2]^2} \mc{O}\lp
                \frac{1}{n^4} \mathbb{E}\lb \lp \sum_{t=1}^n \langle \tildea_{ii}, \tildeb_{tt} \rangle \rp^4 \rb  + \frac{1}{n^8} \mathbb{E} \lb \lp \sum_{u\neq t} \langle \tildea_{ii}, \tildeb_{tu} \rangle \rp^4 \rb
                \rp \label{eq:proof-lyapunav-2} \\
                & = \frac{16n}{\mathbb{E}[\tildeg_{12}^2]^2} \mc{O}\lp \mathbb{E}[\tildeg_{12}^4]\lp \frac{1}{n^3}+ \frac{1}{n^6} \rp + \mathbb{E}[\tildeg_{12}^2 \tildeg_{13}^2] \lp \frac{1}{n^2} + \frac{1}{n^4} \rp \rp  \label{eq:proof-lyapunav-3}\\ 
                & =  \mc{O}\lp  \frac{\mathbb{E}[\tildeg_{12}^4]n^{-1} + \mathbb{E}[\tildeg_{12}^2 \tildeg_{13}^2] }{n\mathbb{E}[\tildeg_{12}^2]^2} \rp \to 0.\label{eq:proof-lyapunav-4}
                \end{align}
                In the above display, \eqref{eq:proof-lyapunav-1} uses the fact that $(x-y)^4 \leq 16(x^4 + y^4)$. To obtain \eqref{eq:proof-lyapunav-3}, we note that $\mathbb{E}[(\sum_{t=1}^n \langle \tildea_{ii}, \tildeb_{tt}\rangle^4]$ can be expanded into $n^4$ terms, each of the form $\mathbb{E}[\tildeg_{it_1}\tildeg_{it_2}\tildeg_{it_3}\tildeg_{it_4}]$, for $1 \leq t_1, t_2, t_3, t_4 \leq n$. Of these $n^4$ terms, only the terms with two or four common $t's$ have non-zero expectation under the null. There are a total of $n$ terms of the form $\mathbb{E}[\tildeg_{it}^4]$ and $\mc{O}(n^2)$ terms of the form $\mathbb{E}[\tildeg_{it}^2\tildeg_{iu}^2]$ for $t \neq u$. Such terms appear as $\mathbb{E}[\tildeg_{it}^4]/n^3$ and $\mathbb{E}[\tildeg_{it}^2\tildeg_{iu}^2]/n^2$ respectively in~\eqref{eq:proof-lyapunav-3}. Repeating the same argument for $\mathbb{E}[\sum_{t\neq u} \langle \tildea_{ii}, \tildeb_{tu}\rangle^4]$ gives us the other two terms in~\eqref{eq:proof-lyapunav-3}.

        \end{proof}
   
    \subsection{Proof of Step 3}
        \begin{lemma}
            \label{lemma:null-distribution-3}
            Under~\Cref{assump:null-main}, we have $B_{2,n} = \mc{O}_P(1)$. 
        \end{lemma}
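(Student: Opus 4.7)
The strategy is to prove that the random variable $N_n \defined \mathbb{E}[\langle f_2,\tildea_{11}\rangle^2\mid\datatwo]$, whose unconditional mean equals $\mathbb{E}[\tildeg_{12}^2]/(n-1)$, remains at least a constant multiple of this mean with high probability; since $B_{2,n} = \mathbb{E}[\tildeg_{12}^2]^2/(nN_n)^2$, this delivers the required $\mathcal{O}_P(1)$ bound. The plan is to pass to the spectral side, where the relevant structure becomes transparent.

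Under the null, the covariance operator $T_n$ of the centered product feature $\tildea(Z) = \tildek(X,\cdot)\otimes \tildel(Y,\cdot)$ factorizes as $T_X \otimes T_Y$, so its eigenvalues coincide with those $\{\lambda_{i,n}\}$ of $\tildeg$ in~\eqref{eq:eigendecomposition} and its eigenvectors take the product form $\phi_{i,n} = \lambda_{X,k}^{-1/2}\phi_{X,k} \otimes \lambda_{Y,k'}^{-1/2}\phi_{Y,k'}$ for $i \leftrightarrow (k,k')$. Writing $\alpha_t^{(i)} = e_{X,k}(X_{n+t})$, $\beta_t^{(i)} = e_{Y,k'}(Y_{n+t})$ and expanding $f_2 = \frac{1}{n(n-1)}\sum_{t\neq u}\tildeh(Z_t,Z_u)$, a direct computation yields $\langle f_2,\phi_{i,n}\rangle = \sqrt{\lambda_{i,n}}\,\hat\mu_{i,n}$, where $\hat\mu_{i,n} \defined (n-1)^{-1}\sum_{t=1}^n(\alpha_t^{(i)}-\bar\alpha^{(i)})(\beta_t^{(i)}-\bar\beta^{(i)})$ is a \emph{sample covariance} in orthonormal eigencoordinates. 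Consequently $N_n = \sum_i \lambda_{i,n}^2\,\hat\mu_{i,n}^2$, and the null independence $X\perp Y$ combined with orthonormality of each eigenfunction family immediately gives $\mathbb{E}[\hat\mu_{i,n}^2] = 1/(n-1)$, confirming the mean identity for $N_n$.

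Setting $q_{i,n} \defined \lambda_{i,n}^2/\sum_j\lambda_{j,n}^2$ and $W_n \defined \sum_i q_{i,n}(n-1)\hat\mu_{i,n}^2$ (so $\sum_i q_{i,n}=1$ and $\mathbb{E}[W_n]=1$), the claim reduces to showing $W_n$ is bounded below in probability. By the second part of Assumption~\ref{assump:null-main} the limit $\rho \defined \lim_n q_{1,n}$ exists along any chosen subsequence, and I would split into two cases. If $\rho > 0$, the naive bound $W_n \geq q_{1,n}(n-1)\hat\mu_{1,n}^2$ is enough: the Lyapunov condition $\mathbb{E}[(\alpha_t^{(1)}\beta_t^{(1)})^4]/n \to 0$ follows from the chain $\lambda_{1,n}^4 \mathbb{E}[(\alpha^{(1)}\beta^{(1)})^4] \leq \sum_i\lambda_{i,n}^4\mathbb{E}[e_{i,n}^4] \leq \mathbb{E}[\tildeg_{12}^2\tildeg_{13}^2]$, combined with $q_{1,n}\to\rho>0$ and the first part of Assumption~\ref{assump:null-main}, so that $(n-1)\hat\mu_{1,n}^2 \convdist \chi_1^2$, which is strictly positive almost surely. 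If instead $\rho = 0$, then $\sum_i q_{i,n}^2 \leq q_{1,n}\to 0$; I would prove $\mathrm{Var}(W_n)\to 0$ by expanding $(n-1)^2\mathbb{E}[\hat\mu_{i,n}^2\hat\mu_{j,n}^2]$ into joint eigenfunction moments, which collapse under the null factorization and $\mathbb{E}[e_{X,k}e_{X,l}]=\delta_{kl}$ so that the leading cross-terms reproduce $\mathbb{E}[W_n]^2=1$ while the excess is bounded by the functionals $\mathbb{E}[\tildeg_{12}^4]/n$ and $\mathbb{E}[\tildeg_{12}^2\tildeg_{13}^2]$, both $o(n\mathbb{E}[\tildeg_{12}^2]^2)$ by Assumption~\ref{assump:null-main}. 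Hence $W_n\convprob 1$ in this case.

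The main obstacle is the fourth-moment bookkeeping in the $\rho=0$ regime: for every four-tuple of indices one must track how the joint eigenfunction moments reduce under the null independence of $X$ and $Y$ and the orthonormality of each individual family, and identify each surviving sum with one of the two functionals of $\tildeg$ controlled by Assumption~\ref{assump:null-main}. The fact that $\hat\mu_{i,n}$ is a sample covariance in orthonormal coordinates is precisely what keeps this bookkeeping tractable; once that dictionary is established the conclusion follows cleanly in both cases.
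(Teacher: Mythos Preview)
Your approach is correct and takes a genuinely different route from the paper's.

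The paper first splits $f_2 = f_{21} - f_{22}$ into its diagonal and off-diagonal parts, shows via a second-moment bound that the $f_{22}$ contribution is negligible (so that $B_{3,n}-B_{4,n}\convprob 0$), and only then invokes the eigenvalue-dichotomy argument of \citet{kim2020dimension} for the simpler quantity $B_{4,n}$, whose spectral coefficients are plain sample \emph{means} $\tfrac{1}{n}\sum_t e_{i}(Z_t)$. You instead work with $f_2$ directly in spectral coordinates, observing that $\langle f_2,\phi_{i}\rangle=\sqrt{\lambda_{i}}\,\hat\mu_{i}$ with $\hat\mu_{i}$ a sample \emph{covariance} in orthonormal eigencoordinates, and then run the eigenvalue dichotomy on $W_n=\sum_i q_{i}(n-1)\hat\mu_{i}^2$ itself. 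Your $\rho>0$ case is complete: the chain $\lambda_1^4\mathbb{E}[e_1^4]\le \mathbb{E}[\tildeg_{12}^2\tildeg_{13}^2]$ combined with $q_{1}\to\rho>0$ does give the Lyapunov condition, and the continuous distribution of $\chi_1^2$ at $0$ is exactly what is needed to lower-bound $W_n$ in probability.

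What each approach buys: the paper's intermediate reduction front-loads one moment computation but then makes the spectral step literally the one in \citet{kim2020dimension}, where the fourth-moment bookkeeping in the $\rho=0$ case is short (a handful of index-coincidence patterns for i.i.d.\ sums). Your direct approach is conceptually cleaner and the sample-covariance identity for $\langle f_2,\phi_i\rangle$ is a nice structural observation, but the $\rho=0$ variance computation is heavier precisely because $\hat\mu_i$ is degree~$2$ in each of $X$ and $Y$; the centering terms $\bar\alpha^{(i)}\bar\beta^{(i)}$ generate additional cross-moments that, when tracked through, effectively reproduce the paper's $f_{21}/f_{22}$ split inside the variance bound. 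You correctly identify this as the main obstacle and correctly name the two controlling functionals $\mathbb{E}[\tildeg_{12}^4]/n$ and $\mathbb{E}[\tildeg_{12}^2\tildeg_{13}^2]$, so the plan goes through, but be aware that executing it is not shorter than the paper's two-step argument.
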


        \begin{proof}
     Introduce the notation $f_{21} = \frac{1}{n}\sum_{t=1}^n \tildeb_{tt}$ and $f_{22} = \frac{1}{n(n-1)} \sum_{t=1}^n \sum_{u \neq t} \tildeb_{tu}$, and observe that $f_2 = f_{21} - f_{22}$. Next, we define the following terms: 
            \begin{align}
                B_{3,n} = \frac{1}{\sqrt{B_{2,n}}} = \frac{n  \mathbb{E}[\langle \tildea_{ii}, f_2 \rangle^2| \mathcal{D}_{n+1}^{2n}]}{\mathbb{E}[\tildeh_{it}^2]}, \quad \text{and} \quad  B_{4,n} = \frac{n  \mathbb{E}[\langle \tildea_{ii}, f_{21} \rangle^2| \mathcal{D}_{n+1}^{2n}]}{\mathbb{E}[\tildeh_{it}^2]}
            \end{align}
            Next, we observe that the random variable $B_{3,n} - B_{4,n}$ converges in probability to $0$. We do this by proving that the second moment of $B_{3,n} - B_{4,n}$ converges to $0$ with $n$ under~\Cref{assump:null-main}. 
            \begin{align}
                \mathbb{E}[(B_{3,n} - B_{4,n})^2] &= \frac{n^2}{\mathbb{E}[\tildeh_{tt}^2]} \mathbb{E}\lb \lp   \mathbb{E}[\langle \tildea_{ii}, f_{22} \rangle^2 | \datatwo ] - 2 \mathbb{E}[\langle \tildea_{ii}, f_{21} \rangle  \langle \tildea_{ii}, f_{22} \rangle | \datatwo ]  \rp^2 \rb \label{eq:null-proof-1} \\
                & \leq \frac{n^2}{\mathbb{E}[\tildeh_{tt}^2]} \mathbb{E} \lb \lp   \langle \tildea_{ii}, f_{22} \rangle^2  - 2 \langle \tildea_{ii}, f_{21} \rangle  \langle \tildea_{ii}, f_{22} \rangle   \rp^2\rb \label{eq:null-proof-2}\\
                & \leq \frac{2n^2}{\mathbb{E}[\tildeh_{tt}^2]} \mathbb{E} \lb    \langle \tildea_{ii}, f_{22} \rangle^4  + 4 \langle \tildea_{ii}, f_{21} \rangle^2  \langle \tildea_{ii}, f_{22} \rangle^2 \rb \label{eq:null-proof-3}\\
                & \leq \frac{2n^2}{\mathbb{E}[\tildeh_{tt}^2]} \lp  \mathbb{E} \lb    \langle \tildea_{ii}, f_{22} \rangle^4 \rb  + 4 \mathbb{E}\lb \langle \tildea_{ii}, f_{21} \rangle^4 \rb^{1/2}  \mathbb{E}\lb \langle \tildea_{ii}, f_{22} \rangle^4 \rb^{1/2} \rp. \label{eq:null-proof-4} \\
                & = \mc{O}\lp \frac{\mathbb{E}[ n^{-1} \tildeh_{it}^4 + \tildeh_{it}^2 + \tildeh_{iu}^2 ]}{n \mathbb{E}[\tildeh_{tt}^2]} \rp \to 0. \label{eq:null-proof-5}
            \end{align}
            In the above display, \\
            \eqref{eq:null-proof-2} uses the (conditional) version of Jensen's inequality along with the convexity of $x \mapsto x^2$, \\
            \eqref{eq:null-proof-3} uses the fact that $(x+y)^2 \leq 2 (x^2+y^2)$, \\
            \eqref{eq:null-proof-4} uses the Cauchy-Schwarz inequality, and \\
            \eqref{eq:null-proof-5} follows by expanding the terms $f_{21}$ and $f_{22}$ in terms of $\tildeb_{tu}$, and simplifying the expressions exploiting the fact that the terms containing odd powers of $\langle \tildea_{ii}, \tildeb_{tu} \rangle$ are zero in expectation. The final terms in~\eqref{eq:null-proof-5} is exactly the condition in~\Cref{assump:null-main}. 
            
             Note that $1/B_{4,n}$ can be written as follows: 
            \begin{align}
                \frac{1}{B_{4,n}} = \frac{ \mathbb{E}[\tildeh_{it}^2]}{n \mathbb{E}\lb \langle \tildea_{ii}, \, \frac{1}{n}\sum_{t=1}^n \tildeb_{tt} \rangle^2 \rb}, 
            \end{align}
            which can be shown to be stochastically bounded under the conditions of~\Cref{assump:null-main}, by following the exact same argument used by~\citet{kim2020dimension} in proving that the term (II)  in their~Eq.(53) is stochastically boundeded.

            To complete the proof, we will show that the combination of the two facts proved above; that is, \textbf{(i)} $B_{3,n}-B_{4,n} \convprob 0$ and \textbf{(ii)} $1/B_{4,n} = \mc{O}_P(1)$, are sufficient to conclude that $1/B_{3,n} = \sqrt{B_{2,n}}$ is also stochastically bounded. 
            In particular, these two results imply that for any $\epsilon>0$, we can find a real number $1\leq m < \infty$, and two integers $n_1, n_2 <\infty$, such that the following statements hold: 
            \begin{align}
                \mathbb{P}(1/B_{4,n}> 2m) \leq \epsilon/2, \quad \text{and} \quad \mathbb{P}(|B_{3,n}-B_{4,n}|>m) \leq \epsilon/2. 
            \end{align}
            Hence, we have the following for any $n \geq n_\epsilon \defined \max\{n_1, n_2\}$: 
            \begin{align}
                \mathbb{P}\lp \frac{1}{B_{3,n}} > m \rp &\leq \mathbb{P}\lp \frac{1}{B_{4,n} - |B_{4,n}-B_{3,n}|} > m \rp \\
                & \leq \mathbb{P}\lp \frac{1}{B_{4,n}}> 2m \rp + \mathbb{P} \lp |B_{3,n} - B_{4,n}| > m \rp \leq \epsilon. 
            \end{align}
            Thus, we have shown that $1/B_{3,n}$ is stochastically bounded; that is,  for every $\epsilon>0$, there exists an $m<\infty$, such that  for all $n \geq n_{\epsilon}$, we have $\mathbb{P}\lp 1/B_{3,n} > m \rp \leq \epsilon$.

        \end{proof}

    \section{Power of the~cross-HSIC Test}
    \label{appendix:power} 
        In this section, we prove the results on consistency against fixed and local alternatives~(\Cref{prop:fixed-alternative} and ~\Cref{theorem:local-alternative}) of our cross-HSIC test, stated in~\Cref{sec:power}. The proofs of both of these results can be obtained from a more abstract result, identifying sufficient condtions for the cross-HSIC test to be consistent, which we state and prove in~\Cref{appendix:general-consistency}. Then, in the next two subsections, we use this general result to prove~\Cref{prop:fixed-alternative} and~\Cref{theorem:local-alternative}.

        \paragraph{Additional Notation.} Throughout this section, we will use shorthand notation for some commonly used terms. For any $1\leq i, j, \leq n$, we use $h_{ij}$ to represent $h(Z_i, Z_j) = \frac{1}{2} a_{ii} + a_{jj} - a_{ij} - a_{ij}$, use $\tildeh_{ij}$ for denoting the centered version of $h_{ij}$, i.e., $\tildeh_{ij} = h_{ij} - (\omega - \mu \nu)$. Recall that $\omega$, $\mu$ and $\nu$ denote the kernel mean embeddings of $P_{XY}$, $P_X$ and $P_Y$, and $a_{ij} = k(X_i, \cdot) \ell(Y_j, \cdot)$. Furthermore, recall the definition of  the cross-HSIC statistic, 
        \begin{align}
            \crosshsic = \langle f_1, f_2 \rangle, \quad \text{where } f_1 = \frac{1}{n(n-1)} \sum_{i \neq j} h_{ij}, \quad \text{and } f_2 = \frac{1}{n(n-1)} \sum_{t \neq u} h_{tu}. 
        \end{align}
        We will use $\tildef_1$ and $\tildef_2$ to denote the centered versions of $f_1$ and $f_2$ respectively; that is, $\tildef_1 = f_1 - (\omega - \mu \nu)$ and $\tildef_2 = f_2 - (\omega - \mu \nu)$. Finally, introduce the following term, for any $1 \leq i \leq n$, 
        \begin{align}
            A_i = \frac{1}{n-1} \sum_{j=1}^n h_{ij} = \frac{1}{n-1} \sum_{j=1, j\neq i}^n h_{ij}.  
        \end{align}
        As before, we use $\tildeA_i$ to denote the centered version of $A_i$, and use $\bar{A}_n$ to denote the average of the $A_1, \ldots, A_n$.

    \subsection{General Conditions for Consistency}    
    \label{appendix:general-consistency}

   To identify sufficient conditions for the consistency of the cross-HSIC test, $\Psi = \bone_{\cshsic>z_{1-\alpha}}$,  we first study this problem in a general scenario, in which, the distribution as well as the kernels can change with $n$. In particular, we consider a sequence of distributions $\{P_{XY,n}: n \geq 1\}$ and kernels $\{(k_n, \ell_n): n \geq 1\}$, and let $\data$ denote $2n$ \iid draws from $P_{XY,n}$, for $n \geq 1$.

           \begin{theorem}[General conditions for consistency]
            \label{theorem:general-consistency}
            Consider the independence testing problem with observations $\data = \{(X_i, Y_i): 1 \leq i \leq 2n\}$ drawn \iid from the distribution $P_{XY, n}$, with marginals $P_{X,n}$ and $P_{Y,n}$. Let $\gamma_n^2$ denote $\hsic(P_{XY,n}, \calK, \calL)$, and suppose there exists a non-negative sequence $\{\delta_n: n \geq 1\}$, such that $\lim_{n \to \infty} \delta_n = 0$, satisfying: 
            \begin{align}
                \lim_{n \to \infty} \frac{1}{n^2 \delta_n \gamma_n^4} \lp \mathbb{E}\lb \la \tildeh_{12}, \tildeh_{34} \ra^2  + \gamma_n^2 \lp \la \tildeh_{12}, \tildeh_{12} \ra + n \la \tildeh_{12}, \tildeh_{13} \ra\rp \rb \rp = 0. \label{eq:general-consistency}
            \end{align}
            Then, the test $\Psi = \bone_{\cshsic>z_{1-\alpha}}$ is consistent against the sequence of alternatives $\{P_{XY,n}: n \geq 1\}$. 
        \end{theorem}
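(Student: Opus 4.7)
The plan is to show $\cshsic \to \infty$ in probability, which since $\cshsic$ is a.s.~well-defined suffices for $\mathbb{P}(\cshsic > z_{1-\alpha}) \to 1$. We factor
\[
  \cshsic = \frac{\crosshsic}{\gamma_n^2}\cdot \frac{\sqrt{n}\,\gamma_n^2}{s_n},
\]
and establish (i) $\crosshsic/\gamma_n^2 \convprob 1$ and (ii) $s_n/(\sqrt n \gamma_n^2) \convprob 0$, after which the conclusion is immediate. Throughout, let $\theta=\omega-\mu\nu$, so $\|\theta\|^2=\gamma_n^2$, and write $\tildef_i = f_i-\theta$ and $\tildeh_{ij}=h_{ij}-\theta$. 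It is worth noting that condition~\eqref{eq:general-consistency} unpacks into three separate requirements: (A) $\mathbb{E}[\la\tildeh_{12},\tildeh_{34}\ra^2]=o(n^2\gamma_n^4\delta_n)$, (B) $\mathbb{E}\|\tildeh_{12}\|^2=o(n^2\gamma_n^2\delta_n)$, (C) $\mathbb{E}\la\tildeh_{12},\tildeh_{13}\ra=o(n\gamma_n^2\delta_n)$.

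For (i), expand $\crosshsic-\gamma_n^2 = \la\theta,\tildef_1+\tildef_2\ra+\la\tildef_1,\tildef_2\ra$. Because $\dataone\perp\datatwo$ all cross expectations vanish, so $\var(\crosshsic) = 2\mathbb{E}\la\theta,\tildef_1\ra^2 + \mathbb{E}\la\tildef_1,\tildef_2\ra^2 \leq 2\gamma_n^2\var(f_1) + [\var(f_1)]^2$ (the last step by Cauchy--Schwarz plus independence of $f_1,f_2$). A direct combinatorial expansion of $f_1=[n(n-1)]^{-1}\sum_{i\neq j} h_{ij}$, using $\mathbb{E}[\tildeh_{ij}]=0$ and $\mathbb{E}\la\tildeh_{12},\tildeh_{34}\ra=0$ on disjoint indices, gives $\var(f_1)\lesssim \mathbb{E}\|\tildeh_{12}\|^2/n^2 + \mathbb{E}\la\tildeh_{12},\tildeh_{13}\ra/n$, which by (B) and (C) is $o(\gamma_n^2\delta_n)$. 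Hence $\var(\crosshsic)=o(\gamma_n^4\delta_n)$, and Chebyshev yields (i).

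For (ii), let $A_i=(n-1)^{-1}\sum_{j\neq i}h_{ij}$ and $\bar A_i=\la A_i,f_2\ra$, so $\crosshsic=\bar A_\cdot$ and $s_n^2 = \tfrac{4(n-1)}{(n-2)^2}\sum_i(\bar A_i-\bar A_\cdot)^2$. Since the $\bar A_i$ are exchangeable,
\[
  \mathbb{E}\!\left[\sum_{i=1}^n(\bar A_i-\bar A_\cdot)^2\right] = n\var(\bar A_1) - n\var(\bar A_\cdot) \;\leq\; n\var(\bar A_1),
\]
which already bypasses the dependence among the $\bar A_i$'s. Decomposing $A_1=\theta+\tildeA_1$ and $f_2=\theta+\tildef_2$ with $A_1\perp f_2$, the same cancellation argument gives $\var(\bar A_1) = \mathbb{E}\la\theta,\tildef_2\ra^2 + \mathbb{E}\la\theta,\tildeA_1\ra^2 + \mathbb{E}\la\tildeA_1,\tildef_2\ra^2 \leq \gamma_n^2\var(f_2) + \gamma_n^2\var(A_1) + \var(A_1)\mathbb{E}\|f_2\|^2$. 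The new quantity $\var(A_1) = \mathbb{E}\|\tildeh_{12}\|^2/(n-1) + \tfrac{n-2}{n-1}\mathbb{E}\la\tildeh_{12},\tildeh_{13}\ra$ is $o(n\gamma_n^2\delta_n)$ by (B) and (C), so $\var(\bar A_1) = o(n\gamma_n^4\delta_n)$. Since $\tfrac{4(n-1)}{(n-2)^2}\cdot n\sim 4$, we obtain $\mathbb{E}[s_n^2]=o(n\gamma_n^4\delta_n)$, so Markov's inequality gives $s_n = o_P(\sqrt{n}\gamma_n^2\sqrt{\delta_n})$, which is (ii) (and even slightly more).

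Combining (i) and (ii), $\cshsic = (1+o_P(1))\cdot \Omega_P(\delta_n^{-1/2})\to\infty$ in probability, and consistency follows. The main technical hurdle is the jackknife variance analysis: the $\bar A_i$ are not independent (they share factors $h_{ij}$), so a naive sample-variance identity is unavailable; the exchangeability bound $\mathbb{E}[\sum(\bar A_i-\bar A_\cdot)^2]\leq n\var(\bar A_1)$ is what makes the computation tractable, and the careful bookkeeping of which cross terms vanish under $A_1\perp f_2$ is essential to produce the eventual cancellation against $\gamma_n^4$. Condition (A) is not logically required for consistency alone---the crude bound $\mathbb{E}\la\tildef_1,\tildef_2\ra^2\leq[\var(f_1)]^2$ from independence suffices---but it tightens the variance estimates and is invoked for the local-alternative analysis of \Cref{theorem:local-alternative}.
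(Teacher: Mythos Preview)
Your proof is correct and follows the same two-step skeleton as the paper: show $\crosshsic/\gamma_n^2 \convprob 1$ via Chebyshev (the paper's Lemma~\ref{lemma:general-consistency-2}) and show $\mathbb{E}[s_n^2] = o(n\gamma_n^4\delta_n)$ via Markov (the paper's Lemma~\ref{lemma:general-consistency-1}). The paper wraps these two facts by introducing the event $E_n=\{s_n^2<\mathbb{E}[s_n^2]/\delta_n\}$ and bounding the type-II error directly, whereas you factor $\cshsic$ as a product; these are equivalent. Your exchangeability identity $\mathbb{E}[\sum_i(\bar A_i-\bar A_\cdot)^2]=n\var(\bar A_1)-n\var(\bar A_\cdot)\le n\var(\bar A_1)$ is a clean replacement for the paper's chain $\mathbb{E}[\la A_i-\bar A_n,f_2\ra^2]\lesssim\mathbb{E}[\la A_1-A_2,f_2\ra^2]\lesssim\mathbb{E}[\la\tildeA_1,f_2\ra^2]$.

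The one substantive difference is in bounding the cross piece $\mathbb{E}\la\tildeA_1,\tildef_2\ra^2$. The paper first passes to $\mathbb{E}\la\tildeh_{12},\tildef_2\ra^2$ by Jensen and then expands $\tildef_2$, producing the term $\tfrac{1}{n}\mathbb{E}\la\tildeh_{12},\tildeh_{34}\ra^2$ and thereby \emph{using} condition~(A). You instead apply Cauchy--Schwarz plus independence to get $\mathbb{E}\la\tildeA_1,\tildef_2\ra^2\le\var(A_1)\var(f_2)$, which is controlled by (B) and (C) alone. Your remark that (A) is not needed for consistency per se is therefore correct and is a genuine (if small) sharpening of the paper's argument; the paper's route has the advantage that the final bound on $\mathbb{E}[s_n^2]/(n\delta_n\gamma_n^4)$ matches the hypothesis~\eqref{eq:general-consistency} term-for-term, which is convenient when specializing to the local-alternative setting in \Cref{theorem:local-alternative}.
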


        
        \begin{proof} 
            Recall that the test $\Psi = \bone_{\cshsic>z_{1-\alpha}}$ rejects the null if the statistic $\cshsic$ exceeds the $(1-\alpha)$-quantile of the standard normal distribution. Hence, its type-II error can be bounded above, as follows: 
            \begin{align}
                \mathbb{P}\lp \Psi = 0 \rp &= \mathbb{P}\lp \crosshsic <  \frac{z_{1-\alpha} s_n}{\sqrt{n}} \rp \\
                & \leq  \mathbb{P} \lp \crosshsic < z_{1-\alpha}\sqrt{ \mathbb{E}[s_n^2]/n\delta_n} \rp + \delta_n. \label{eq:general-consistency-0}
            \end{align}
            For the inequality in the above display, we introduce the event $E_n = \{ s_n^2 < \mathbb{E}[s_n^2]/\delta_n\}$, and note that, by Markov's inequality, we have $\mathbb{P}(E_n^c) \leq \delta_n$.  
            Recall that $\crosshsic$ and $s_n$ were defined in~\eqref{eq:cross-hsic-def1} and~\eqref{eq:sn2-def} respectively.

            We first note that under the conditions of~\Cref{theorem:general-consistency}, the expectation of $s_n^2$ grows at a rate smaller than $n\delta_n\gamma_n^4$. 
            \begin{lemma}
                \label{lemma:general-consistency-1}
                Under the conditions of~\Cref{theorem:general-consistency}, we have  $\lim_{n \to \infty} \mathbb{E}[s_n^2]/(n\delta_n \gamma_n^4) = 0$. 
            \end{lemma}
            As a consequence of this result, proved in~\Cref{proof:lemma-general-consistency-1},  we note that $\lim_{n \to \infty} z_{1-\alpha} \sqrt{ \mathbb{E}[s_n^2]/n\delta_n \gamma_n^4} \to 0$, which implies that for some finite $n_0$, we have $z_{1-\alpha} \sqrt{ \mathbb{E}[s_n^2]/n\delta_n} \leq \gamma_n^2/2$, for all $n \geq n_0$. Hence, for all $n \geq n_0$, we have the following, 
            \begin{align}
                \mathbb{P}\lp \Psi=0 \rp \leq \mathbb{P}\lp \crosshsic < \gamma_n^2/2 \rp  + \delta_n  = \mathbb{P}\lp \crosshsic - \gamma_n^2< -\gamma_n^2/2 \rp  + \delta_n. 
            \end{align}
            Since $\mathbb{E}[\crosshsic]= \gamma_n^2$, we have the following, by Chebyshev's inequality, 
            \begin{align}
                \mathbb{P}\lp \Psi=0\rp & \leq \frac{ 4\var(\crosshsic)}{\gamma_n^4} + \delta_n. 
            \end{align}
            By assumption, $\lim_{n \to \infty} \delta_n = 0$, and we complete the proof by showing that the first term in the right-hand-side of the above display also goes to zero. 
            
            \begin{lemma}
                    \label{lemma:general-consistency-2} 
                    Under the conditions of~\Cref{theorem:general-consistency}, we have $\lim_{n \to \infty} \frac{ \var(\crosshsic)}{\gamma_n^4} = 0$. 
            \end{lemma}
            The proof of this lemma is in~\Cref{proof:lemma-general-consistency-2}. 
        \end{proof}

        \subsubsection{Proof of~\Cref{lemma:general-consistency-1}}
        \label{proof:lemma-general-consistency-1}
            \begin{proof}
            With the notation introduced at the beginning of this section, we have the following: 
            \begin{align}
                \mathbb{E}[s_n^2] & \lesssim \mathbb{E} \lb \frac{1}{n} \sum_{i=1}^n  \la A_i - \bar{A}_n, f_2 \ra^2  \rb \label{eq:general-consistency-1} \\
                &  = \mathbb{E}\lb  \langle A_i - \bar{A}_n  , f_2 \rangle^2 \rb = \mathbb{E}\lb \frac{1}{n^2} \left \langle \sum_{j=1}^n A_i - A_j, f_2 \right\rangle^2  \rb  \label{eq:general-consistency-2}\\
                & \lesssim \mathbb{E}\lb  \langle A_1 - A_2, f_2 \rangle^2 \rb  = \mathbb{E}\lb  \langle \tildeA_1 - \tildeA_2, f_2 \rangle^2 \rb \label{eq:general-consistency-3} \\
                &\lesssim \mathbb{E}\lb \langle \tildeA_1, f_2 \rangle^2 \rb  = \mathbb{E}\lb \la \tildeA_1, \tildef_2 + (\omega - \mu \nu) \ra^2 \rb. \label{eq:general-consistency-4}
            \end{align}
            In the above display, 
            the first equality in \eqref{eq:general-consistency-2} uses the fact that $(A_i - \bar{A}_n)$ and $(A_j-\bar{A}_n)$ are equal in distribution.
            \eqref{eq:general-consistency-3} uses Cauchy-Schwarz inequality on the `cross-terms' in the expansion of $\langle \sum_{j=1}^n A_i-A_j, f_2 \rangle^2$. 
            The first term in \eqref{eq:general-consistency-4}  follows by using $(x+y)^2 \lesssim x^2 + y^2$, and the fact that $\tildeA_1$ and $\tildeA_2$ are equal in distribution. 
            
                Upper bounding the last term in~\eqref{eq:general-consistency-4}, we obtain 
                \begin{align}
                    \mathbb{E}[s_n^2] & \lesssim \mathbb{E}\lb \la \tildeA_{1}, \tildef_2\ra^2 + \la \tildeA_{1}, \omega - \mu \nu \ra^2 \rb \label{eq:general-consistency-5} \\
                    & \lesssim \mathbb{E}\lb \la \tildeh_{12}, \tildef_2\ra^2 + \la \tildeA_1, \omega - \mu \nu \ra^2 \rb \label{eq:general-consistency-5} \\
                    & = \term_1 + \term_2. \label{eq:general-consistency-6}
                \end{align}
                First, we upper bound $\term_1$ as follows: 
                \begin{align}
                    \term_1 &= \frac{1}{n^2(n-1)^2} \mathbb{E} \lb \la \tildeh_{12}, \sum_{t \neq u} \tildeh_{tu} \ra^2 \rb \\
                    & \lesssim \frac{1}{n^4} \lp n^2 \mathbb{E}\lb \la \tildeh_{12}, \tildeh_{34} \ra^2 \rb + n^3 \mathbb{E}\lb \la \tildeh_{12}, \tildeh_{34} \ra \la \tildeh_{12}, \tildeh_{35} \ra \rb \rp \\
                    & \lesssim \frac{1}{n}\mathbb{E}\lb \la \tildeh_{12}, \tildeh_{34} \ra^2 \rb. \label{eq:general-consistency-7}
                \end{align}
                
                Next, we can get an upper bound on $\term_2$ as follows: 
                \begin{align}
                    \term_2 & = \la \tildeA_1, \omega - \mu\nu \ra^2 
                     \stackrel{(a)}{\leq} \| \tildeA_1 \|^2 \|\omega - \mu \nu \|^2 = \| \tildeA_1 \|^2 \gamma_n^2 \\
                     & \lesssim \frac{\gamma_n^2}{n^2} \lp n \la \tildeh_{12}, \tildeh_{12} \ra + n^2 \la \tildeh_{12}, \tildeh_{13} \ra \rp   \\
                     & \lesssim \gamma_n^2 \lp \frac{\la \tildeh_{12}, \tildeh_{12} \ra}{n} + \la \tildeh_{12}, \tildeh_{13} \ra \rp.   \label{eq:general-consistency-8}
                \end{align}
                    
                Combining~\eqref{eq:general-consistency-7} and~\eqref{eq:general-consistency-8}, we get 
                \begin{align}
                    \mathbb{E}[s_n^2] \lesssim \frac{1}{n} \lp \mathbb{E}\lb \la \tildeh_{12}, \tildeh_{34} \ra^2 \rb + \gamma_n^2 \lp \la \tildeh_{12}, \tildeh_{12} \ra + n \la \tildeh_{12}, \tildeh_{13} \ra \rp \rp, 
                \end{align}
                which implies that 
                \begin{align}
                    \frac{\mathbb{E}[s_n^2]}{n \delta_n \gamma_n^4} \lesssim \frac{1}{n^2 \delta_n \gamma_n^4} \lp \mathbb{E}\lb \la \tildeh_{12}, \tildeh_{34} \ra^2 \rb + \gamma_n^2 \lp \la \tildeh_{12}, \tildeh_{12} \ra + n \la \tildeh_{12}, \tildeh_{13} \ra \rp \rp. \label{eq:general-consistency-9}  
                \end{align}
                By the assumptions of~\Cref{theorem:general-consistency}, the right-hand-side of~\eqref{eq:general-consistency-9} converges to $0$. 
            \end{proof}
        
        \subsubsection{Proof of~\Cref{lemma:general-consistency-2}}
            \label{proof:lemma-general-consistency-2}
            
            Recall that $\crosshsic$ can be rewritten as  
            \begin{align}
                \crosshsic &= \langle \tildef_1 + (\omega - \mu\nu), \tildef_2 + (\omega - \mu\nu)\rangle \\
                &= \la \tildef_1, \tildef_2  \ra  + \la \tildef_1, \omega - \mu \nu \ra +  \la \tildef_2, \omega - \mu \nu \ra + \gamma_n^2. \label{eq:general-consistency-10}
            \end{align}
            Using this, we can write the variance of the $\crosshsic$ statistic as 
            \begin{align}
                \var(\crosshsic) &= \mathbb{E}[(\crosshsic - \gamma_n^2)^2] \\
                &= \mathbb{E}[\langle  \tildef_1, \tildef_2 \rangle^2]+ \mathbb{E}[\langle \tildef_1, \omega - \mu\nu\rangle^2]+ \mathbb{E}[\langle \tildef_2, \omega - \mu\nu\rangle^2] \label{eq:general-consistency-11}\\
                & \lesssim \mathbb{E}[\|\tildef_1\|^2] \mathbb{E}[\|\tildef_2\|^2] + \gamma_n^2 \lp \mathbb{E}[\|\tildef_1\|^2] + \mathbb{E}[\|\tildef_2\|^2] \rp \label{eq:general-consistency-12}\\
                & \lesssim \mathbb{E}[\|\tildef_1\|^2] \lp \mathbb{E}[\|\tildef_1\|^2] + \gamma_n^2 \rp. \label{eq:general-consistency-13}
            \end{align}
            For the equality in~\eqref{eq:general-consistency-11}, we used~\eqref{eq:general-consistency-10}, along with the fact that all the `cross-terms' in expansion of $\mathbb{E}[(\crosshsic-\gamma_n^2)^2]$ have zero expectation.~\eqref{eq:general-consistency-12} follows by applying Cauchy-Schwarz inequality to all terms of~\eqref{eq:general-consistency-11}, while~\eqref{eq:general-consistency-13} uses the fact that $\tildef_1$ and $\tildef_2$ are equal in distribution. 
            
            Since $\tildef_1 = \frac{1}{n(n-1)} \sum_{i \neq j} \tildeh_{ij}$, we can upper bound its norm as follows: 
            \begin{align}
                \mathbb{E}[\|\tildef_1\|^2] &\lesssim \frac{1}{n^4}  \sum_{i \neq j} \sum_{r\neq s}\mathbb{E}[ \langle \tildeh_{ij}, \tildeh_{rs} \rangle]  \\
                &\lesssim \frac{1}{n^4} \lp  n^2 \mathbb{E}[\langle \tildeh_{12}, \tildeh_{12} \rangle ] + n^3 \mathbb{E}[\langle \tildeh_{12}, \tildeh_{13} \rangle] \rp \\
                &\lesssim \frac{1}{n^2} \lp  \mathbb{E}[\langle \tildeh_{12}, \tildeh_{12} \rangle ] + n \mathbb{E}[\langle \tildeh_{12}, \tildeh_{13} \rangle] \rp. \label{eq:general-consistency-14}
            \end{align}
            Combining~\eqref{eq:general-consistency-14} with~\eqref{eq:general-consistency-13}, we get 
            \begin{align}
                \frac{\var(\crosshsic)}{\gamma_n^4} & \lesssim \lp \frac{ \mathbb{E}[\langle \tildeh_{12}, \tildeh_{12} \rangle + n \langle \tildeh_{12}, \tildeh_{13} \rangle]}{n^2 \gamma_n^2} \rp, 
            \end{align}
            which converges to $0$ under the conditions of~\Cref{theorem:general-consistency}. 
            
    \subsection[Consistency against fixed alternatives]{Consistency against fixed alternatives~(\texorpdfstring{\Cref{prop:fixed-alternative}}{Theorem 9})}
    \label{proof:fixed-alternative}
        In the case of fixed alternatives, the term $\gamma_n^2 = \hsic(P_{XY}, k, \ell)$ does not change with $n$. Hence, to prove~\Cref{prop:fixed-alternative}, it suffices to verify~\eqref{eq:general-consistency} with $\gamma_n$ set to some fixed $\gamma>0$.  We proceed in the following steps.
        
        \emph{Verifying $\mathbb{E}[\langle \tildeh_{12}, \tildeh_{34}\rangle^2]<\infty$.} Using the fact that $\tildeh_{12} = h_{12} - (\omega - \mu \nu)$, we have the following: 
        \begin{align}
            \mathbb{E}\lb \langle \tildeh_{12}, \tildeh_{34} \rangle^2 \rb & \leq \mathbb{E}\lb \| \tildeh_{12} \|^2 \|\tildeh_{34} \|^2 \rb  = \mathbb{E}\lb \| \tildeh_{12} \|^2 \rb^2  \\
            & \lesssim \mathbb{E}\lb\lp  \| h_{12}\|^2 + \gamma^2  \rp\rb^2 \\
            & \lesssim \mathbb{E}\lb  \| h_{12}\|^2  \rb^2  + \gamma^4. \label{eq:proof-fixed-alt-1}
        \end{align}
        Hence it suffices to show that under the conditions of~\Cref{prop:fixed-alternative}, $\mathbb{E}[\|h_{12}\|^2]<\infty$, which we do in~\Cref{lemma:fixed-alternative-1} below. 
        
        \emph{Verifying $\mathbb{E}[\langle \tildeh_{12}, \tildeh_{12}\rangle]<\infty$.} Expanding this term, we have 
        \begin{align}
            \mathbb{E}[\langle \tildeh_{12}, \tildeh_{12} \rangle] & = \mathbb{E} \lb \|h_{12} - (\omega - \mu \nu) \|^2 \rb  \lesssim \mathbb{E}[\|h_{12}\|^2] + \gamma^2. \label{eq:proof-fixed-alt-2}
        \end{align}
        Again, this reduces to showing that $\mathbb{E}[\|h_{12}\|^2]<\infty$, which we do in~\Cref{lemma:fixed-alternative-1}.

        \emph{Verifying $\mathbb{E}[\langle \tildeh_{12}, \tildeh_{13}\rangle]<\infty$.} We again reduce this condition to~\Cref{lemma:fixed-alternative-1} as follows: 
        \begin{align}
            \mathbb{E}\lb \langle \tildeh_{12}, \tildeh_{13} \rangle \rb & \leq \mathbb{E}\lb \|\tildeh_{12} \| \|\tildeh_{13}\| \rb = \mathbb{E}\lb \sqrt{\|\tildeh_{12} \|^2 \|\tildeh_{13}\|^2} \rb \\
            & \stackrel{(i)}{\lesssim}  \mathbb{E}\lb \|\tildeh_{12}\|^2 + \|\tildeh_{13}\|^2 \rb  
             \lesssim \gamma^2 +  \mathbb{E}\lb \|h_{12} \|^2 \rb . \label{eq:proof-fixed-alt-3}
        \end{align}
        In the above display,~(i) follows from an application of the AM-GM inequality; $\sqrt{x^2 y^2} \leq (x^2+y^2)/2$.

        \begin{lemma}
            \label{lemma:fixed-alternative-1} For $(X, Y)$ drawn according to $P_{XY}$, if $  \mathbb{E}[k(X, X) \ell(Y, Y)] + \mathbb{E}[k(X, X)]\mathbb{E}[\ell(Y, Y)]< \infty$, then  we have $\mathbb{E}[\|h_{12}\|^2]<\infty$. 
        \end{lemma}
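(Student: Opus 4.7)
The plan is to directly bound $\|h_{12}\|^2$ pointwise by a sum of products of diagonal kernel values, and then take expectations using the independence of $(X_1, Y_1)$ and $(X_2, Y_2)$.

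First, I would rewrite
\begin{align}
    \|h_{12}\|^2 = \tfrac{1}{4} \bigl\| (\phi(X_1) - \phi(X_2)) \otimes (\psi(Y_1) - \psi(Y_2)) \bigr\|_{\mathcal{K} \otimes \mathcal{L}}^2 = \tfrac{1}{4} \|\phi(X_1) - \phi(X_2)\|_{\mathcal{K}}^2 \, \|\psi(Y_1) - \psi(Y_2)\|_{\mathcal{L}}^2,
\end{align}
using the fact that the norm of a tensor product factorizes. Next I would expand each factor via the reproducing property: $\|\phi(X_1) - \phi(X_2)\|_{\mathcal{K}}^2 = k(X_1,X_1) + k(X_2,X_2) - 2k(X_1,X_2)$. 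Cauchy--Schwarz in the RKHS gives $|k(X_1,X_2)| \le \sqrt{k(X_1,X_1)\, k(X_2,X_2)} \le \tfrac{1}{2}(k(X_1,X_1) + k(X_2,X_2))$, so
\begin{align}
    \|\phi(X_1) - \phi(X_2)\|_{\mathcal{K}}^2 \lesssim k(X_1,X_1) + k(X_2,X_2),
\end{align}
and analogously for $\ell$ and $\psi$.

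Multiplying these bounds gives (up to an absolute constant)
\begin{align}
    \|h_{12}\|^2 \lesssim \bigl( k(X_1,X_1) + k(X_2,X_2) \bigr) \bigl( \ell(Y_1,Y_1) + \ell(Y_2,Y_2) \bigr).
\end{align}
Taking expectations and using the fact that $(X_1,Y_1) \perp (X_2,Y_2)$, the four cross terms reduce to two types: the ``diagonal'' terms of the form $\mathbb{E}[k(X,X)\ell(Y,Y)]$ (when both indices match) and the ``off-diagonal'' ones of the form $\mathbb{E}[k(X,X)]\,\mathbb{E}[\ell(Y,Y)]$. Hence
\begin{align}
    \mathbb{E}[\|h_{12}\|^2] \lesssim \mathbb{E}[k(X,X)\ell(Y,Y)] + \mathbb{E}[k(X,X)]\,\mathbb{E}[\ell(Y,Y)],
\end{align}
which is finite by hypothesis. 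There is no real obstacle here; the only subtle point is remembering that the tensor-product norm factorizes and that Cauchy--Schwarz controls the off-diagonal kernel values by the diagonal ones, which is precisely what reduces the required moment condition to the stated one.
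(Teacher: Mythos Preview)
Your proof is correct and follows essentially the same approach as the paper: both bound $\|h_{12}\|^2$ pointwise by diagonal kernel values and then take expectations using independence of $(X_1,Y_1)$ and $(X_2,Y_2)$. Your version is slightly cleaner, using the tensor-product norm factorization $\|u\otimes v\|=\|u\|\,\|v\|$ directly, whereas the paper first writes $2h_{12}=a_{11}+a_{22}-a_{12}-a_{21}$ with $a_{ij}=\phi(X_i)\otimes\psi(Y_j)$ and bounds through $\|a_{11}\|^2+\|a_{12}\|^2$; the resulting moment bound is the same.
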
 
        \begin{proof}
            Recall that we have $2h_{12} = a_{11} + a_{22} - a_{12} - a_{21}$, where $a_{ij} = k(X_i, \cdot)\ell(Y_j, \cdot)$. Thus, on expanding $\|h_{12}\|^2$, we have 
            \begin{align}
                \|h_{12}\|^2 & \simeq \| a_{11} + a_{22} - a_{12} - a_{21}\|^2 \\
                & \leq \|a_{11} - a_{12} \|^2 + \|a_{22} - a_{21} \|^2 \\
                & \simeq \|a_{11} - a_{12} \|^2 \leq \|a_{11}\|^2 + \|a_{12}\|^2 \\
                & = k(X_1, X_1) \ell(Y_1, Y_1) + k(X_1, X_1) \ell(Y_2, Y_2). 
            \end{align}
            This implies that 
            \begin{align}
                \mathbb{E}[\|h_{12}\|^2] & \leq \mathbb{E}[k(X_1, X_1) \ell(Y_1, Y_1) + k(X_1, X_1) \ell(Y_2, Y_2)]  \\
                & =  \mathbb{E}[k(X, X) \ell(Y, Y)] + \mathbb{E}[k(X, X)]\mathbb{E}[\ell(Y, Y)], 
            \end{align}
            where $(X,Y)$ are drawn according to $P_{XY}$. This completes the proof. 
        \end{proof}
        
    \subsection[Type-I error and consistency against local alternatives]{Type-I error and consistency against local alternatives~(\texorpdfstring{\Cref{theorem:local-alternative}}{Theorem 10})}
    \label{proof:local-alternative}
        To prove this result, we will verify the conditions required by~\Cref{theorem:null-dist-1} and~\Cref{theorem:general-consistency} to prove the type-I error control and consistency respectively. For verifying these conditions, we will need to use some facts about the Gaussian kernel that were derived by~\citet{li2019optimality}. We collect the required properties below. 
        \begin{fact}
        \label{fact:gaussian-kernel-computations-null}
            Suppose $Z_i = (X_i, Y_i) \sim P_X \times P_Y$ for $i=1,2,3,4$ be independent random variables, with $X_i$ and $Y_i$ taking values in $\mathbb{R}^{d_1}$ and $\mathbb{R}^{d_2}$ respectively. Assume that $P_X$ and $P_Y$ admit densities $p_X$ and $p_Y$ respectively, with $p_X \in \Sobolev(M_1)$ and $P_Y \in \Sobolev(M_2)$ with $M_1 \times M_2 = M$. Let $k_n(x, x') = \exp\lp -c_n \|x-x'\|_2^2\rp$ and $\ell_n(y, y') = \exp (-c_n\|y-y'\|_2^2)$ denote Gaussian kernels  on $\mathbb{R}^{d_1}$ and $\mathbb{R}^{d_1}$ respectively, with $c_n = o(n^{4/d})$. Then, we have the following: 
            \begin{align}
                &\mathbb{E}_{P_X}\lb \tildek_n^2(X_1, X_2) \rb \asymp c_n^{-d_1/2}, \quad \text{and} \quad \mathbb{E}_{P_Y}\lb \tildel_n^2(Y_1, Y_2) \rb \asymp c_n^{-d_2/2}, \label{eq:gaussian-kernel-null-1} \\
                &\mathbb{E}_{P_X} \lb \tildek_n^4(X_1, X_2)\rb \asymp c_n^{-d_1/2}, \quad \text{and} \quad \mathbb{E}_{P_Y} \lb \tildel_n^4(Y_1, Y_2)\rb \asymp c_n^{-d_2/2}, \label{eq:gaussian-kernel-null-2}\\
                & \mathbb{E}_{P_X}\lb \tildek_n^2(X_1, X_2)\tildek_n^2(X_1, X_3) \rb \lesssim c_n^{-3d_1/4}, \quad \text{and} \quad \mathbb{E}_{P_Y}\lb \tildel_n^2(Y_1, Y_2)\tildel_n^2(Y_1, Y_3) \rb \lesssim c_n^{-3d_2/4}. \label{eq:gaussian-kernel-null-3}
            \end{align}
            In the expressions above, $\asymp$ and $\lesssim$ hide multiplicative factors depending on $d_1, d_2, d$ and $M$. 
        \end{fact}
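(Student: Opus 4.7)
The plan is to derive the three moment estimates by exploiting the self-reproducing structure of the Gaussian kernel under convolution and by using the Sobolev smoothness of $p_X, p_Y$ to control the cross terms that arise from centering. Since $X$ and $Y$ play symmetric roles under independence, I would prove the $k_n$-statements and obtain the $\ell_n$-statements by the same argument with $(d_1, M_1)$ replaced by $(d_2, M_2)$. The key identity I would use throughout is the Gaussian reproducing rule
\begin{equation}
k_n(x,y)^p = \exp(-pc_n\|x-y\|^2) = \left(\frac{\pi}{pc_n}\right)^{d_1/2}\phi_{1/(2pc_n)}(x-y),
\end{equation}
where $\phi_\sigma$ denotes the density of $N(0,\sigma^2 I_{d_1})$; this immediately gives $\int k_n(x,y)^p\,dy = (\pi/(pc_n))^{d_1/2}$ and, for a Sobolev density $p_X$, quantitative control on $p_X\ast\phi_\sigma - p_X$.

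For the second moment in \eqref{eq:gaussian-kernel-null-1}, I would expand $\tilde k_n(X_1,X_2) = k_n(X_1,X_2) - \mu_n(X_1) - \mu_n(X_2) + \|\mu_n\|_{L^2(P_X)}^2$ where $\mu_n(x) = \mathbb E[k_n(x,X)]$, square it, and evaluate each of the sixteen resulting expectations. The dominant term $\mathbb E[k_n^2(X_1,X_2)]$ reduces via the identity above to $(\pi/(2c_n))^{d_1/2}\int p_X(x)(\phi_{1/(4c_n)}\ast p_X)(x)\,dx$, which is $\Theta(c_n^{-d_1/2})$ once one shows $(\phi_\sigma\ast p_X)(x)\to p_X(x)$ in an appropriate sense as $\sigma\to 0$. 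Every other term in the expansion involves at least one factor of $\mu_n(x)$, i.e., an \emph{extra} convolution of $p_X$ with a Gaussian; since $\mu_n(x) = (\pi/c_n)^{d_1/2}(\phi_{1/(2c_n)}\ast p_X)(x) = O(c_n^{-d_1/2})$ uniformly, each such term carries an additional factor of $c_n^{-d_1/2}$ and is therefore of strictly lower order. Matching upper and lower bounds yields the exact scaling $c_n^{-d_1/2}$. The fourth-moment bound in \eqref{eq:gaussian-kernel-null-2} is handled identically: $k_n^4(x,y) = \exp(-4c_n\|x-y\|^2)$ is still Gaussian with bandwidth $1/(4c_n)$, so the leading term $\mathbb E[k_n^4(X_1,X_2)]$ again equals $(\pi/(4c_n))^{d_1/2}$ times a convolution integral of order $1$, and all sixteen cross terms introduced by centering are again suppressed by factors of $\mu_n$.

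For the mixed moment \eqref{eq:gaussian-kernel-null-3}, the structure is different because $X_2, X_3$ are conditionally independent given $X_1$, so I would first rewrite
\begin{equation}
\mathbb E[\tilde k_n^2(X_1,X_2)\tilde k_n^2(X_1,X_3)] = \mathbb E_{X_1}\!\left[\big(\mathbb E[\tilde k_n^2(X_1,X_2)\mid X_1]\big)^2\right].
\end{equation}
The inner conditional expectation, after expanding the square and applying the Gaussian identity, is a sum whose leading piece is $\int k_n^2(X_1,y)p_X(y)\,dy = (\pi/(2c_n))^{d_1/2}(\phi_{1/(4c_n)}\ast p_X)(X_1)$, a quantity of order $c_n^{-d_1/2}$. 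Squaring gives $c_n^{-d_1}$, which is \emph{smaller} than the claimed $c_n^{-3d_1/4}$, so the claim holds trivially from the leading term; the surviving cross terms involve pointwise products of $\mu_n$ with convolved kernels, and a careful bookkeeping (each extra factor of $\mu_n$ contributes $O(c_n^{-d_1/2})$ while $\phi_{1/(4c_n)}\ast p_X$ is $O(1)$) shows they are all dominated by $c_n^{-3d_1/4}$.

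The main obstacle I anticipate is not any single computation but the uniform control of the correction terms $\mu_n - (\pi/c_n)^{d_1/2}p_X$ in a way that invokes the Sobolev norm $M_X$ cleanly. The natural route is Plancherel: write $(\phi_\sigma\ast p_X - p_X)(x)$ via its Fourier transform $(\widehat\phi_\sigma(\omega)-1)\widehat p_X(\omega)$, bound $|\widehat\phi_\sigma(\omega)-1| \lesssim (\sigma|\omega|)^{\beta}\wedge 1$, and use $\int (1+|\omega|^2)^\beta|\widehat p_X(\omega)|^2\,d\omega \le M_X^2$ to obtain $\|\phi_\sigma\ast p_X - p_X\|_{L^2} = O(\sigma^{\beta})$ uniformly over the Sobolev ball. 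Plugging this into each of the cross-term integrals via Cauchy--Schwarz, together with the condition $c_n = o(n^{4/d})$ ensuring these error terms vanish relative to the leading contributions, gives all three estimates with constants depending only on $d_1, d_2, d, M$. Essentially these computations parallel those of \citet{li2019optimality}, and the result could also be obtained by direct citation of their Lemmas on Gaussian-kernel moments.
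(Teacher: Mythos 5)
The paper never proves this statement: it is imported wholesale from \citet{li2019optimality} (``we will need to use some facts about the Gaussian kernel that were derived by \citet{li2019optimality}''), so your closing remark that one could simply cite their lemmas is in fact exactly what the authors do, and any self-contained derivation like yours is necessarily a different route. Your starting points are sound — the identity $\exp(-pc_n\|x-y\|^2) = (\pi/(pc_n))^{d_1/2}\phi_{1/(2pc_n)}(x-y)$, the expansion of the doubly centered kernel, and the conditional-independence factorization $\mathbb{E}[\widetilde{k}_n^2(X_1,X_2)\widetilde{k}_n^2(X_1,X_3)] = \mathbb{E}_{X_1}\bigl[\bigl(\mathbb{E}[\widetilde{k}_n^2(X_1,X_2)\mid X_1]\bigr)^2\bigr]$ are all correct.

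However, there is a genuine gap where your argument leans on the claim that $\mu_n(x) = (\pi/c_n)^{d_1/2}(\phi_{1/(2c_n)}\ast p_X)(x) = O(c_n^{-d_1/2})$ \emph{uniformly in $x$}, i.e.\ that $\|\phi_\sigma \ast p_X\|_\infty = O(1)$ uniformly in the bandwidth. This is not implied by $p_X \in \mathcal{W}^{\beta,2}(M_1)$ unless $\beta > d_1/2$: an $L^2$/low-order-Sobolev density may be unbounded, in which case $(\phi_\sigma\ast p_X)(x)$ can diverge pointwise as $\sigma \to 0$. This matters most for the mixed-moment bound. After squaring the conditional expectation and integrating over $X_1$, the leading contribution is
\begin{align}
c_n^{-d_1}\int p_X(x)\,\bigl(\phi_{1/(4c_n)}\ast p_X\bigr)^2(x)\,dx ,
\end{align}
and the best estimate available from $\|p_X\|_{L^2} \lesssim M_1$ alone is, via Young's inequality ($\|\phi_\sigma\ast p_X\|_{L^4}\lesssim \sigma^{-d_1/4}\|p_X\|_{L^2}$) and Cauchy--Schwarz, of order $c_n^{-d_1}\cdot c_n^{d_1/4} = c_n^{-3d_1/4}$. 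In other words, the exponent $3d_1/4$ is exactly the content of the third bound, not slack: your conclusion that the claim ``holds trivially'' because the leading term is $c_n^{-d_1}$ implicitly assumes a bounded density (with a constant $\|p_X\|_\infty$ not controlled by $M$), and without that assumption the asserted $c_n^{-d_1}$ order is simply false in general. The same $L^2$-based device (Young/Cauchy--Schwarz rather than sup-norm control of $\mu_n$) is what is needed to show the centering corrections such as $\mathbb{E}[k_n(X_1,X_2)k_n(X_1,X_3)] = (\pi/c_n)^{d_1}\int p_X(\phi_{1/(2c_n)}\ast p_X)^2 \lesssim c_n^{-3d_1/4}$ are $o(c_n^{-d_1/2})$ in the second- and fourth-moment expansions; with that repair your outline for the first two displays goes through. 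Two smaller points: the constant term in your expansion of $\widetilde{k}_n(X_1,X_2)$ should be $\mathbb{E}_{X,X'}[k_n(X,X')] \asymp c_n^{-d_1/2}$, not $\|\mu_n\|_{L^2(P_X)}^2 \asymp c_n^{-d_1}$ (the orders differ even though the conclusion is unaffected), and the matching lower bounds in the two $\asymp$ statements require an explicit lower bound on $\int p_X(\phi_\sigma\ast p_X) \to \|p_X\|_{L^2}^2$, which you mention only in passing.
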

        
        The above stated conditions will be used to show the type-I error control by the cross-HSIC test in~\Cref{proof:local-alternative-type-I}. We now state the properties required for the proof of consistency. 
        \begin{fact}
        \label{fact:gaussian-kernel-computations-alt} 
        Suppose $Z_i = (X_i, Y_i) \sim P_{XY}$, for $i=1,2,3$, and assume that $P_{XY}, P_X$ and $P_Y$ have smooth densities $p_{XY}, p_X$ and $p_Y$ respectively.  Then, we have the following: 
        \begin{align}
            &\max \big( \mathbb{E}_{P_{XY}}[k_n(X, X)\ell_n(Y, Y)], \; \mathbb{E}_{P_{XY}}[k_n(X, X)] \mathbb{E}_{P_{XY}}[\ell_n(Y, Y)] \big) \lesssim M^2 c_n^{-d/2},  \label{eq:guassian-kernel-alt-1} \\
            \text{and} \quad & \gamma_n^2 = \mathrm{HSIC}(P_{XY}, k_n, \ell_n) \gtrsim c_n^{-d/2} \|p_{XY} - p_X \times p_Y\|_{L^2}^2.  \label{eq:gaussian-kernel-alt-2}
        \end{align}
        \end{fact}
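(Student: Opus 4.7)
The plan is to prove both parts using Fourier-analytic computations with Gaussian kernels, which are standard in the MMD/HSIC analysis and mirror the approach of Fact~\ref{fact:gaussian-kernel-computations-null} and the results in \citet{li2019optimality}. The two key inputs are: (i) the explicit Fourier transform of the Gaussian kernel $k_n(x,x') = \exp(-c_n\|x-x'\|^2)$, namely $\hat{k}_n(\xi) = (\pi/c_n)^{d_1/2}\exp(-\|\xi\|^2/(4c_n))$ with respect to the translation variable $z = x-x'$ (and analogously $\hat{\ell}_n$); and (ii) the Sobolev assumption $p_X \in \Sobolev(M_X)$, which is equivalent to $\int(1+\|\xi\|^2)^\beta|\hat{p}_X(\xi)|^2 d\xi \leq M_X^2$ and in particular yields $\|p_X\|_{L^2}^2 \leq M_X^2$ (similarly for $p_Y$).

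For part (i), I would first interpret the expressions (consistent with the pairwise usage in Fact~\ref{fact:gaussian-kernel-computations-null}) as double expectations over independent copies $(X_1, Y_1), (X_2, Y_2) \sim P_{XY}$, since $k_n(X,X) = 1$ literally is trivial for Gaussians. Plancherel's identity and the convolution theorem then yield
\[
\mathbb{E}[k_n(X_1, X_2)] = \int \hat{k}_n(\xi)|\hat{p}_X(\xi)|^2 d\xi \leq (\pi/c_n)^{d_1/2}\|p_X\|_{L^2}^2 \leq M_X^2(\pi/c_n)^{d_1/2},
\]
and analogously for $\ell_n$ on the $Y$-marginal; multiplying these two bounds gives the factored part of~\eqref{eq:guassian-kernel-alt-1}. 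For the joint expectation, the same argument carried out in the product space $\R^{d_1} \times \R^{d_2}$ yields the upper bound $(\pi/c_n)^{d/2}\|p_{XY}\|_{L^2}^2$, which matches $M^2 c_n^{-d/2}$ up to a constant provided $\|p_{XY}\|_{L^2}^2 \lesssim M^2$ (automatic under the null and requiring an implicit joint smoothness on $p_{XY}$ under the alternative).

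For part (ii), I would use the kernel-MMD representation $\gamma_n^2 = \mathrm{MMD}^2(P_{XY}, P_X \times P_Y; k_n \otimes \ell_n)$ established in Section~\ref{sec:preliminaries}. For translation-invariant Gaussian kernels this equals, via Plancherel,
\[
\gamma_n^2 = \lp\tfrac{\pi}{c_n}\rp^{d/2}\!\!\int\!\int\exp\!\lp-\tfrac{\|\xi_1\|^2+\|\xi_2\|^2}{4c_n}\rp\big|\hat{p}_{XY}(\xi_1,\xi_2) - \hat{p}_X(\xi_1)\hat{p}_Y(\xi_2)\big|^2 d\xi_1 d\xi_2.
\]
To obtain the lower bound $\gtrsim c_n^{-d/2}\|p_{XY} - p_X p_Y\|_{L^2}^2$, I would split the integration at the frequency threshold $\|\xi_1\|^2+\|\xi_2\|^2 = Rc_n$ for a large constant $R$. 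On the low-frequency region the Gaussian factor exceeds $e^{-R/4}$, so the integral there dominates $e^{-R/4}(\pi/c_n)^{d/2}$ times the low-frequency $L^2$ mass of the Fourier-transformed difference. On the high-frequency tail, Sobolev smoothness implies that the Fourier mass is at most $O((Rc_n)^{-\beta}M^2)$; choosing $R$ large enough that this is a small fraction of $\|p_{XY}-p_Xp_Y\|_{L^2}^2$ yields the claim by Parseval.

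The main obstacle is that the definition of $\altclass$ in~\eqref{eq:def-alt-class} only constrains the marginals $p_X$ and $p_Y$ to be Sobolev, not the joint $p_{XY}$ or the difference $p_{XY} - p_X p_Y$ that appears in the tail bound. Resolving this will require either (a) augmenting $\altclass$ with a joint Sobolev bound on $p_{XY}$ itself, at which point the product rule in Sobolev norms gives $p_X p_Y \in \Sobolev(O(M))$ and hence (by the triangle inequality) $p_{XY} - p_X p_Y \in \Sobolev(O(M))$ as needed; or (b) a more delicate frequency-decomposition argument that only exploits the smoothness of the product $p_X p_Y$ and absorbs the rough part of $p_{XY}$ differently. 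Once the Sobolev structure is in place, the computation is a standard frequency-cutoff exercise, and the chosen rate $c_n \asymp n^{4/(d+4\beta)}$ in Theorem~\ref{theorem:local-alternative} is precisely tuned so that the high-frequency tail is of lower order.
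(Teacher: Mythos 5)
You should first note that the paper does not actually prove Fact~\ref{fact:gaussian-kernel-computations-alt}: both displays are imported verbatim from \citet{li2019optimality} (``facts about the Gaussian kernel that were derived by...''), so there is no in-paper argument to compare against. Your Plancherel/frequency-cutoff route is precisely the standard computation underlying that citation, and the parts of it you carry out are sound: for the product-of-marginals term in \eqref{eq:guassian-kernel-alt-1}, either your Fourier bound or a direct Cauchy--Schwarz in the translation variable gives $\mathbb{E}[k_n(X_1,X_2)]\leq (\pi/c_n)^{d_1/2}\|p_X\|_{L^2}^2\lesssim M_X^2c_n^{-d_1/2}$ and likewise for $\ell_n$, and your Fourier representation of $\gamma_n^2$ for the product Gaussian kernel is correct up to convention constants.

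The two caveats you flag are genuine, but they are defects of the statement as printed rather than of your reasoning, and your diagnosis of how to repair them is the right one. First, since $k_n(X,X)=\ell_n(Y,Y)=1$ for Gaussian kernels, the left side of \eqref{eq:guassian-kernel-alt-1} as written is identically $1$, so the display can only be meant with independent copies (off-diagonal arguments), as you read it; under that reading the joint-expectation term requires $\|p_{XY}\|_{L^2}\lesssim M$, which the marginal-only Sobolev assumption in \eqref{eq:def-alt-class} does not supply under the alternative. Second, for \eqref{eq:gaussian-kernel-alt-2} your cutoff at $\|\xi_1\|^2+\|\xi_2\|^2\asymp Rc_n$ yields only $\gamma_n^2\gtrsim c_n^{-d/2}\bigl(\|p_{XY}-p_Xp_Y\|_{L^2}^2-CM^2c_n^{-\beta}\bigr)$; an unconditional bound of the stated form cannot follow from smoothness alone, since a difference whose Fourier mass sits at frequencies much larger than $\sqrt{c_n}$ makes the Gaussian weight arbitrarily small. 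The inequality is therefore valid only when $p_{XY}-p_Xp_Y$ lies in a (joint) Sobolev ball and the separation dominates the tail, i.e.\ $\Delta_n^2c_n^{\beta}\to\infty$ --- exactly what the tuning $c_n\asymp n^{4/(d+4\beta)}$ and $\Delta_n n^{2\beta/(d+4\beta)}\to\infty$ in Theorem~\ref{theorem:local-alternative} provides. Your option (a) is how \citet{li2019optimality} proceed: they assume joint Sobolev smoothness of $p_{XY}$, and $p_Xp_Y$ is automatically jointly smooth because its Fourier transform factorizes, so the difference inherits the Sobolev bound. With that hypothesis made explicit (and the off-diagonal reading of \eqref{eq:guassian-kernel-alt-1}), your plan closes; as a proof of the Fact verbatim under the paper's stated assumptions, the step you flagged is a real gap, but it is one inherited from the statement itself rather than introduced by your argument.
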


        \subsubsection{Type-I error bound}
        \label{proof:local-alternative-type-I}
            To show that the cross-HSIC test controls the type-I error at level $\alpha$ asymptotically, we verify that the two conditions of~\Cref{assump:null-main} are satisfied for the kernels considered in~\Cref{theorem:local-alternative}.  In particular, using the fact that $\tildeg(z_1, z_2) = \tildek(x_1, x_2) \times \tildel(y_1, y_2)$ for $z_i = (x_i, y_i)$ and $i=1,2$; we note that it suffices to verify the following conditions under the null: 
            \begin{align}
                & \lim_{n \to \infty}\; \frac{1}{n^2} \frac{ \mathbb{E}[\tildek(X_1, X_2)^4]}{ \mathbb{E}[\tildek(X_1, X_2)^2]^2 } \;  \frac{\mathbb{E}[\tildel(Y_1, Y_2)^4] }{\mathbb{E}[\tildel(Y_1, Y_2)^2]^2  } = 0,  \label{eq:local-typeI-1}\\
                & \lim_{n \to \infty}\; \frac{1}{n} \frac{ \mathbb{E}[\tildek(X_1, X_2)^2 \tildek(X_1, X_3)^2]}{ \mathbb{E}[\tildek(X_1, X_2)^2]^2 } \;  \frac{\mathbb{E}[\tildel(Y_1, Y_2)^2 \tildel(Y_1, Y_3)^2] }{\mathbb{E}[\tildel(Y_1, Y_2)^2]^2  } = 0,  \label{eq:local-typeI-2}\\
                &\lim_{n \to \infty}\; \frac{\lambda_{1,n}^2}{\sum_{i=1}^{\infty} \lambda_{i,n}^2} = 0. \label{eq:local-typeI-3}
            \end{align}
            We proceed in the following steps: 
            
            \textbf{Step 1: verification of~\eqref{eq:local-typeI-1}.} Using the bounds stated in~\eqref{eq:gaussian-kernel-null-1} and~\eqref{eq:gaussian-kernel-null-2}, we obtain 
            \begin{align}
                \frac{ \mathbb{E}[\tildek(X_1, X_2)^4]}{ \mathbb{E}[\tildek(X_1, X_2)^2]^2 } \lesssim \frac{ c_n^{-d_1/2}}{(c_n^{-d_1/2})^2}, \quad \text{and} \quad  \frac{ \mathbb{E}[\tildel(Y_1, Y_2)^4]}{ \mathbb{E}[\tildel(Y_1, Y_2)^2]^2 } \lesssim \frac{ c_n^{-d_2/2}}{(c_n^{-d_2/2})^2}, 
            \end{align}
            which implies that 
            \begin{align}
                 \frac{1}{n^2} \frac{ \mathbb{E}[\tildek(X_1, X_2)^4]}{ \mathbb{E}[\tildek(X_1, X_2)^2]^2 } \;  \frac{\mathbb{E}[\tildel(Y_1, Y_2)^4] }{\mathbb{E}[\tildel(Y_1, Y_2)^2]^2  } \lesssim \frac{1}{n^2} \; \frac{ c_n^{-d/2}}{(c_n^{-d/2})^2}  = \frac{c_n^{d/2}}{n^2} \to 0, 
            \end{align}
            as required in~\eqref{eq:local-typeI-1}.
        
            \textbf{Step 2: verification of~\eqref{eq:local-typeI-2}.} 
            Considering the $\tildek$ dependent term of~\eqref{eq:local-typeI-2}, we note that~\eqref{eq:gaussian-kernel-null-2} and~\eqref{eq:gaussian-kernel-null-3} together imply the following bound: 
            \begin{align}
                \frac{ \mathbb{E}[\tildek(X_1, X_2)^2 \tildek(X_1, X_3)^2]}{ \mathbb{E}[\tildek(X_1, X_2)^2]^2 } \lesssim \frac{c_n^{-3d_1/4}}{(c_n^{-d_1/2})^2} = c_n^{d_1/4}. 
            \end{align}
            Similarly, the $\tildel$ dependent term is upper bounded by $c_n^{d_2/4}$, reducing the condition to 
            \begin{align}
                \frac{1}{n} \frac{ \mathbb{E}[\tildek(X_1, X_2)^2 \tildek(X_1, X_3)^2]}{ \mathbb{E}[\tildek(X_1, X_2)^2]^2 } \;  \frac{\mathbb{E}[\tildel(Y_1, Y_2)^2 \tildel(Y_1, Y_3)^2] }{\mathbb{E}[\tildel(Y_1, Y_2)^2]^2  } \lesssim \frac{ c_n^{d/4}}{n} = \lp \frac{c_n}{n^{4/d}} \rp^{d/4}.  
            \end{align}
            Since $c_n = o(n^{4/d})$, the result follows.

            \textbf{Step 3: verification of~\eqref{eq:local-typeI-3}.} 
            We first note that the condition in~\eqref{eq:local-typeI-3} is equivalent to \begin{align}
                 \lim_{n \to \infty}\; \frac{ \mathbb{E}\lb \mathbb{E}[\tildeg(Z_1, Z_2)\tildeg(Z_1, Z_3) | Z_2, Z_3] \rb }{\mathbb{E}[\tildeg(Z_1, Z_2)^2]^2}  \to 0. 
            \end{align}
            Now, the denominator term satisfies: 
            \begin{align}
                 \mathbb{E}[\tildek(X_1,X_2)^2] \times \mathbb{E}[\tildel(X_1,Y_2)^2] \asymp c_n^{-d_1/2} \times c_n^{-d_2/2} = c_n^{-d/2}, 
            \end{align} 
            which implies that it suffices to show 
            \begin{align}
                c_n^{d}\, \mathbb{E}\lb \mathbb{E}[\tildeg(Z_1, Z_2)\tildeg(Z_1, Z_3) | Z_2, Z_3] \rb \to 0. 
            \end{align}
            For Gaussian kernels with $c_n = o(n^{4/d})$, \citet{li2019optimality} showed that the above condition is true, in the course of proving their Theorem~1. 
             
             Hence, we have verified all the requirements for the limiting null distribution of $\cshsic$ to be $N(0,1)$, which in turn, implies that the cross-HSIC test controls the type-I error at level-$\alpha$ asymptotically. 
             
        \subsubsection{Consistency} 
        \label{proof:local-alternative-consistency} 
            To show the consistency of the cross-HSIC test against local alternatives, we need to show that 
            \begin{align}
                 &\lim_{n \to \infty}  \; D_n \equiv D_n(P_{XY}, k, \ell) = 0, \\
                 \text{where} \quad &    D_n(P_{XY}, k, \ell) \defined 
                 \frac{1}{n^2 \delta_n \gamma_n^4} \lp \mathbb{E}\lb \la \tildeh_{12}, \tildeh_{34} \ra^2  + \gamma_n^2 \lp \la \tildeh_{12}, \tildeh_{12} \ra + n \la \tildeh_{12}, \tildeh_{13} \ra\rb \rp \rp. 
            \end{align}
             Following the bounds derived in the proof of~\Cref{prop:fixed-alternative} in~\Cref{proof:fixed-alternative}, we have        
             \begin{align} 
                \delta_n D_n \lesssim \frac{1}{n^2\gamma_n^4} \bigg( \mathbb{E}[\|h_{12}\|^2]^2 + \gamma_n^4 +  \gamma_n^2(1 + n)(\mathbb{E}[\|h_{12}\|^2] + \gamma_n^2)  \bigg).  \label{eq:proof-local-alt-1}
             \end{align}
             In the above display, we used~\eqref{eq:proof-fixed-alt-1},~\eqref{eq:proof-fixed-alt-2} and~\eqref{eq:proof-fixed-alt-3} to upper bound the three terms involved in the definition of~$D_n$. Now, from~\Cref{lemma:fixed-alternative-1}, we know that $\mathbb{E}[\|h_{12}\|^2] \leq \mathbb{E}[k(X, X)\ell(Y, Y)] + \mathbb{E}[k(X, X)] \mathbb{E}[\ell(Y, Y)]$. Now, for the case of Gaussian kernels, this term is further upper bounded as follows, using~\eqref{eq:guassian-kernel-alt-1}: 
             \begin{align}
             \max \big( \mathbb{E}[k(X, X)\ell(Y, Y)], \; \mathbb{E}[k(X, X)] \mathbb{E}[\ell(Y, Y)] \big) \lesssim M^2 c_n^{-d/2}. \label{eq:proof-local-alt-2}
             \end{align}
            The final component of the proof is the fact that under the conditions of~\Cref{theorem:local-alternative}, we also have the following bound on the true HSIC value using~\eqref{eq:gaussian-kernel-alt-2}: 
            \begin{align}
                \gamma_n^2 \gtrsim c_n^{-d/2} \|p_{XY} - p_X \times p_Y\|_{L^2}^2 > c_n^{-d/2} \Delta_n^2.  \label{eq:proof-local-alt-3}
            \end{align}
            Plugging~\eqref{eq:proof-local-alt-2} and~\eqref{eq:proof-local-alt-3} into~\eqref{eq:proof-local-alt-1}, we get 
            \begin{align}
                D_n \delta_n &\lesssim \frac{ \mathbb{E}[\|h_{12}\|^2]^2}{n^2 \gamma_n^4} + \frac{ \mathbb{E}[\|h_{12}\|^2]}{n \gamma_n^2} +   \frac{1}{n^2} + \frac{1}{n} \\
                & \lesssim \frac{M^4 \nu_{n}^{-d}}{n^2 c_n^{-d} \Delta_n^4}  + \frac{ M^2 c_n^{-d/2}}{n c_n^{-d/2} \Delta_n^2} + \frac{1}{n} \\
                & \lesssim \frac{1}{\lp n^{1/2} \Delta_n \rp^{4}} + \frac{1}{\lp n^{1/2} \Delta_n\rp^2} \lesssim \frac{1}{\lp n^{1/2} \Delta_n \rp^{2}}. \label{eq:proof-local-alt-4}
            \end{align}
             Introduce the term $\beta' = \frac{1}{2} - \frac{2\beta}{d + 4\beta} = \frac{d}{2(d+4\beta)}>0$, and note that~\eqref{eq:proof-local-alt-4} implies  
            \begin{align}
                D_n \lesssim \frac{1}{\delta_n n^{2\beta'} \lp \Delta_n n^{2\beta/(d+4\beta)} \rp^2}. 
            \end{align}
            By selecting $\delta_n = n^{-2\beta''}$ for any $0<\beta''<\beta'$, and using the assumptions  that \textbf{(i)} $ \lim_{n \to \infty} \Delta_n n^{2\beta/(d + 4\beta)} =\infty$, and \textbf{(ii)} $\|p_{XY} - p_X \times p_Y\|_{L^2} > \Delta_n$ for all $P_{XY} \in \altclass$ with density $p_{XY}$; we have  
            \begin{align}
                \lim_{n \to \infty} \sup_{P_{XY} \in \altclass} D_n = 0. 
            \end{align}
            By~\Cref{theorem:general-consistency}, the above condition implies the required consistency against smooth local alternatives of our cross-HSIC test.

  	
    
\end{document}